   \newcommand{\algorithmtwoeFlavor}{}
\title{Strategic Cost Selection in Participatory Budgeting}
\author {
    Piotr Faliszewski\textsuperscript{\rm 1},
    Łukasz Janeczko\textsuperscript{\rm 1},
    Andrzej Kaczmarczyk\textsuperscript{\rm 1}, \\
    Grzegorz Lisowski\textsuperscript{\rm 1},
    Piotr Skowron\textsuperscript{\rm 2},
    Stanisław Szufa\textsuperscript{\rm 1,3}\vspace{0.3cm}\\
    \textsuperscript{\rm 1} AGH University, Kraków, Poland\\
    \textsuperscript{\rm 2} University of Warsaw, Warsaw, Poland\\
    \textsuperscript{\rm 3}CNRS, LAMSADE, Université Paris Dauphine - PSL,
		Paris, France\\
    {\small
    \{faliszew,ljaneczk,andrzej.kaczmarczyk,glisowski,szufa\}@agh.edu.pl,}\\
		{\small
    p.skowron@mimuw.edu.pl}
}
\date{}
\newtheorem{definition}[section]{Definition}
\newtheorem{example}{Example}
\newcommand{\BibTeX}{\rm B\kern-.05em{\sc i\kern-.025em b}\kern-.08em\TeX}
\definecolor{myblue}{rgb}{0,0,0.6}
\definecolor{myred}{rgb}{0.6,0,0}
\tikzset{fontscale/.style = {font=\relsize{#1}}
}
\newcommand{\appendixproof}[3]{%
  \gappto{\appendixProofs}
  {
		\subsection{Proof of~{#1}~\ref{#2}}\label{proof:#2}
    #3
  }
}
\newcommand{\appendixcorrectness}[3]{%
  \gappto{\appendixProofs}
  {
		\subsection{Correctness of~{#1}~\ref{#2}}\label{proof:#2}
    #3
  }
}
\apptocmd{\sloppy}{\hbadness 10000\relax}{}{}
\apptocmd{\appendix}{\hbadness 10000\relax}{}{}
\DeclareMathOperator*{\pos}{pos}
\DeclareMathOperator*{\topp}{top}
\newcommand{\textbasicAV}{\ensuremath{\text{BasicAV}}}
\DeclareMathOperator*{\operAvover}{AV/cost}
\newcommand{\textAVover}{\ensuremath{\text{AV/Cost}}}
\DeclareMathOperator*{\seqPhrag}{\ensuremath{\text{Phragm\'en}}}
\newcommand{\Phragmen}{\ensuremath{\text{Phragm\'en}}}
\newcommand{\Mes}{\ensuremath{\text{MES}}}
\newcommand{\MesApr}{{MES\text{-}Apr}}
\newcommand{\MesCost}{\ensuremath{\text{{MES\text{-}Cost}}}}
\newcommand{\MESP}{\ensuremath{\text{MES\text{-}Cost/Ph}}}
\newcommand{\MESAP}{\ensuremath{\text{MES\text{-}Apr/Ph}}}
\newcommand{\NE}{{\ensuremath{\text{NE}}}}
\newcommand{\textAVoverNE}{\textAVover\text{-}\NE}
\newcommand{\PhragmenNE}{\ensuremath{\Phragmen\text{-}\NE}}
\newcommand{\MesAprNE}{\MesApr\text{-}\NE}
\newcommand{\MesCostNE}{\MesCost\text{-}\NE}
\newcommand{\Plurality}{Plurality}
\newcommand{\PartyList}{Party\text{-}List}
\newcommand{\General}{Unrestricted}
\newcommand{\baseline}{\ensuremath{d}}
\newcommand{\budget}{\ensuremath{B}}
\newcommand{\cost}{\ensuremath{\mathit{cost}}}
\newcommand{\stratprof}{\ensuremath{\mathbf{c}}}
\newcommand{\appstrat}{\ensuremath{\mathbf{ap}}}
\newcommand{\party}{{\ensuremath{\mathit{party}}}}
\newcommand{\reals}{{\mathbb{R}}}
\newcommand{\phtime}{{\ensuremath{\mathit{time}}}}
\newcommand{\response}{{\ensuremath{\mathit{br}}}}
\newtheorem{remark}{Remark}
\newtheorem{observation}{Observation}
\newtheorem{proposition}{Proposition}
\newtheorem{theorem}{Theorem}
\newtheorem{corollary}{Corollary}
\newtheorem{claim}{Claim}
\crefname{claim}{Claim}{Claims}
\DeclareRobustCommand{\abbrevcrefs}{%
  \Crefname{theorem}{Thm.}{Thms.}%
  \Crefname{example}{Ex.}{Exs.}%
  \Crefname{proposition}{Pr.}{Prs.}%
  \Crefname{corollary}{Cor.}{Cors.}%
  \Crefname{claim}{Cl.}{Cls.}
  \Crefname{equation}{Eq.}{Eqs.}
}
\DeclareRobustCommand{\Shcref}[1]{{\abbrevcrefs\Cref{#1}}}
\begin{document}

\maketitle

\begin{abstract} 
  We study strategic behavior of project proposers in the context of
  approval-based participatory budgeting (PB). In our model we assume
  that the votes are fixed and known and the proposers want to set as
  high project prices as possible, provided that their projects get
  selected and the prices are not below the minimum costs of their
  delivery. We study the existence of pure Nash equilibria (NE) in
  such games, focusing on the \textAVover{}, \Phragmen{}, and Method
  of Equal Shares rules.
  Furthermore, we report an
  experimental study of strategic cost selection on real-life PB
  election data.
\end{abstract}

\section{Introduction}

A certain city decided to implement %
participatory
budgeting~\cite{cab:j:participatory-budgeting,goe-kri-sak-ait:c:knapsack-voting,rey-mal:t:pb-survey}.
The city council fixed the budget %
and invited the citizens to propose projects that might be
implemented, letting them know that the funding decisions will be made
by voting. Soon, three groups of activists formed, one focused on
cycling infrastructure, one committed to making the city greener, and
one dedicated to improving the quality of life for senior
citizens. Each group decided to submit a single project, but they
quickly realized that they have many options to choose from. For
example, cycling enthusiasts could propose repainting the markings on
a few bike paths, which would be very cheap, or they could request one
or more new bike paths, which would cost more, or---with the full
available budget---they could greatly improve the whole biking
infrastructure while also offering free bikes for rent. In other
words, they could spend any amount of money on cycling (at least if it
were above a certain necessary minimum) and the more they could get,
the better a project they could offer. Not surprisingly, the other
groups reached similar conclusions. But how much money could they ask
for, realistically?

If the city had a long history of participatory budgeting elections,
the activists could assess the level of support for their causes based
on past experience. In our case, they simply ran a survey. While they
could not ask about the support for specific projects---after all,
they were not sure what to propose yet---they found out that many
people would simply vote for biking-, ecology-, or
seniors-oriented projects, irrespecitve of their costs; they simply
wanted improvements on these fronts (or even several of them).
The survey showed that biking enthusiasts have
the largest vote share, but other groups also had strong support.

With this information in hand, choosing a project's cost largely
depends on the voting rule used. For example, if the city used the
$\textbasicAV$ rule,\footnote{Also known as
  GreedyAV~\cite{boe-fal-jan-kac:c:pb-robustness} or
  GreedCost~\cite{rey-mal:t:pb-survey}.} which greedily selects those
projects that are approved by the largest numbers of voters
 (provided they still fit in the remaining budget),
then the biking enthusiasts
could have their dream project funded.
On the other hand, if the city were to use a proportional rule, such
as the Method of Equal Shares
(\Mes)~\cite{pet-sko:c:welfarism-mes,pet-pie-sko:c:pb-mes} or
$\Phragmen$~\cite{bri-fre-jan-lac:c:phragmen,los-chr-gro:c:phragmen-pb},
then %
each group would have to
analyze how the other ones might act, and choose their strategies
based on that.
Our goal is to analyze this game-theoretic nature of project cost
selection under various participatory budgeting rules (in particular,
our work is
different from that of
\citet{azi-guj-pad-suz-vol:pb-pooling}, who %
consider how voters pool their funds; %
see also the work of \citet{wag-mei:pb:pb-vcg}).

\paragraph{The Game.}
We analyze the following scenario. The sets of projects and voters are
fixed and each voter indicates which projects he or she approves (this
information is common knowledge). %
Each project is controlled by a different proposer choosing its cost
so that it is as high as possible, while ensuring that the project is
selected.\footnote{As in the introductory example, we view the notion
  of a project broadly. By choosing a lower cost means proposing a
  smaller-scale project and choosing a higher cost means a proposing a
  larger-scale one, with possibly extra features and higher quality.}
However, each project also has a certain delivery cost (i.e., the
lowest cost under which it can be reasonably implemented) and the
proposers prefer costs that are at least as high.
Importantly, whether a voter approves a project or not, does not
depend on its cost.
The projects are chosen according to a given 
rule, such as $\textbasicAV$, $\textAVover$, $\Phragmen$, or $\Mes$
(see \Cref{sec:prelim} for details). In other words, we consider a
game where project proposers (or, for simplicity, the projects) are
the players, project costs are their strategies, and costs of selected
projects (minus their delivery costs) are their payoffs. We analyze
whether these games have pure Nash equilibria and, if so, what costs
are reported under these equilibria.
Studying strategic cost selection is also a natural direction in market analysis and
is present, e.g.,\ in the context of the
Hotelling-Downs model~(e.g.,\ in the work of \citet{eiselt2011equilibria}).

Naturally,
our game presents a simplified view of reality. However, due to
abstracting away from excessive complexities, our study led to finding
interesting, and yet technically challenging to identify, basic
phenomena. By this our model lays the foundations and sets expectations
for more sophisticated scenarios. Multiple natural extensions of our
games will become apparent, after we explain below our most important
assumptions and their limitations:

\begin{description}[left=0pt .. 1ex, itemindent=!]
\item[The set of projects is fixed.]  In real-life PB there typically is a
	period of time during which
  one can propose projects and their costs. During this period,
  proposers are not necessarily aware of other submitted projects, but
  if two proposers do realize that their projects are similar, they
  might, for example, merge them or present them as less
	related. By fixing the set of project, our model is not capable of covering
	such dynamics. 
  
\item[Votes are common knowledge.] While analyzing imperfect or probabilistic
	knowledge scenarios is beyond the scope of our model, in reality, neither the
	accurate identities of the voters nor their votes are available to project
	proposers at the time of submitting the projects. At best, one might suspect
	what level of support a project would get based on past experience or 
	polls.

\item[Project costs do not affect approvals.] Our games are not suited to
	reflect that some voters might base their approval decisions on project costs
	and, for example, refuse to vote for projects that are either too cheap or too
	expensive. 
	However, notably, such an effect is debatable and seems to
        depend on a voting rule in
        use~\citep{goe-kri-sak-ait:c:knapsack-voting,gel-goe:c:voting-methods-in-pb}.
        Further, performing the \emph{value-for-money} analysis by
        voters is emprically cognitively costly; see e.g.,\,the
        experimental study of \citet{fai-ben-gal:c:pb-formats}.
  
\end{description}

Relaxing any of our 
assumptions 
would give rise to a more general class of games. As much as such
models are interesting, they go far beyond the the scope of our
paper. For example, allowing the proposers to cooperate, would require
to apply techniques and solution concepts from coalitional game
theory, whereas dropping the common knowledge assumption incurs
modeling the uncertainty of information.  As we initiate the study of
strategic cost selection in participatory budgeting, we feel it is
best to start with a clean, simple model, to obtain a baseline.

\paragraph{Our Contributions.}
We initiate a study of strategic behavior of project proposers in the context of
participatory budgeting. We do so by defining a new class of cost games and
analyzing the existence of pure Nash equilibria under various participatory
budgeting rules and
types of votes (such as plurality ones, where
each voter approves a single project,
or party-list, and unrestricted ones,
which have increasingly more complex structure), under various delivery
costs (considering
the cases in which
they are equal to zero or are
arbitrary), and under various tie-breaking orders.
Among others, we find that $\Mes$ with cost utilities always has equilibria irrespective of
tie-breaking,
as opposed to $\textAVover$, $\Phragmen$, and
$\Mes$ with approval utilities.

We supplement our theoretical study with an experimental analysis of
Nash equilibria in real-life scenarios from
Pabulib~\citep{fal-fli-pet-pie-sko-sto-szu-tal:c:pabulib}. One of our
findings is that $\Phragmen$ and $\Mes$ based on approval utilities
are very different from $\Mes$ based on cost utilities.  The former
two rules lead to a smoother distribution of costs, with many cheap
projects, and the latter to fewer, more expensive projects.

\section{Preliminaries}\label{sec:prelim}

\paragraph{Participatory Budgeting.}
We define a \emph{PB instance} (also sometimes referred to as a \emph{PB election}) as a tuple
$E=(P,V,\budget,\cost)$, where $P=\{p_1, \dots, p_m\}$ is a
set of \emph{projects},
$V= \{v_1, \dots, v_n \}$ is a
set of \emph{voters}, $\budget \in \reals_+$ is the available \emph{budget},
and $\cost \colon P \rightarrow \reals_+$ is a function specifying the
\emph{cost} of each project.  Each voter~$v_i$ casts a nonempty
\emph{approval ballot} $A(v_i) \subseteq P$, containing the set of
projects that he or she \emph{approves} (see, for example, the work
of~\citet{fai-ben-gal:c:pb-formats} for a discussion of other ballot
formats).
Note that a voter may approve projects whose total cost exceeds the
available budget.  We often refer to the voters and their approval
ballots as either an \emph{approval profile} or a \emph{preference profile}.
We extend the $A(\cdot)$ notation so that for a project $p_i$,
$A(p_i)$ is the set of voters that approve $p_i$. Then, $|A(p_i)|$ is
the \emph{approval score} of $p_i$. For each project $p_i$, we assume
that $|A(p_i)| \geq 1$.  Given a subset of projects $P'$, we let
$\cost(P')$ be their total cost, that is,
$\cost(P') = \sum_{p' \in P'}\cost(p')$.

Further, each PB instance comes with a tie-breaking order $\succ$ over
the projects, used by the PB rules to resolve internal ties.
While tie-breaking can have significant influence on 
voting~\cite{obr-elk:c:random-ties-matter,obr-elk-haz:c:ties-matter}
and PB~\cite{boe-fal-jan-kac:c:pb-robustness}, and
can happen fairly often in small
elections~\cite{jan-fal:c:ties-multiwinner}, in large-enough PB
instances we do not expect many of them (see, for example, the work of~\citet{xia:c:probability-of-ties} 
for a discussion of the
likelihood of ties in large ordinal elections).

\paragraph{Participatory Budgeting Rules.}
A \emph{PB rule} is a function~$f$ that for some PB
instance~$E=(P,V,\budget,\cost)$ outputs a set~$f(E) \subseteq P$
of projects, with total cost not exceeding the budget. We refer to
the projects in $f(E)$ as \emph{winning}  (or, alternatively, as
\emph{selected} or \emph{funded}).  Note that our rules are resolute,
that is, their outcomes are unique.  We focus on the following rules, denoting by~$E = (P,V,\budget,\cost)$ the instance we consider:

\begin{description}[left=0pt .. 1ex, itemindent=!]
\item[\textbasicAV{}.] It starts with $W = \emptyset$ and considers
  all the projects in the order of their nonincreasing approval scores
  (with ties broken using $\succ$), inserting a
  considered project $p$ into $W$ if
  $\cost(W \cup \{p\}) \leq \budget$. When done, it outputs~$W$.

\item[\textAVover{}.]
  It is like
  \textbasicAV{}, except that for each project $p$ it computes
  its approval-to-cost ratio~$\nicefrac{|A(p)|}{\cost(p)}$ and %
  considers the projects in the nonincreasing order of these values.

\item[\Phragmen.]
  {\Phragmen} starts with $W = \emptyset$ and fills it as
  follows. Initially, the voters have empty virtual bank accounts, but
  they continuously earn money at the rate of one unit of budget per one unit of
  time. When there is a project $p$ such that the voters
  who approve it have $\cost(p)$ funds and the project is within the
  remaining budget (that is, $\cost(W \cup \{p\}) \leq \budget$), these
  voters \emph{purchase} it, that is, $p$ is included in $W$, the bank
  accounts of voters in $A(p)$ are reset to zero, and $p$ is removed
  from consideration. If $\cost(W \cup \{p\}) > \budget$, then~$p$ is
  removed from consideration without being included in~$W$. If
  several projects
  could be purchased simultaneously, the
  rule picks one using the tie-breaking order. The rule stops and
  outputs $W$ when all projects are removed from
  consideration.\footnote{As we assume that each project is approved
    by at least one voter, upon completion there is no unfunded
    project that could be included in the outcome without exceeding
    the budget.}

\item[Method of Equal Shares (\MesCost).]
    
	First, each voter receives the same
	amount~$\nicefrac{\budget}{|V|}$ of money.
  Then we let $W = \emptyset$ and proceed iteratively: Within each
  iteration, for each project $p$ not in $W$ we compute its
  affordability coefficient $\alpha_p$ as the smallest number such
  that the following holds ($b_i$ is the money that voter $v_i$
  currently has):
  \begin{equation}\label{eq:mes}
    \textstyle
    \sum_{v_i \in A(p)} \min( b_i, \alpha_p \cdot \cost(p)) = \cost(p).
  \end{equation}
  If no such value exists (that is, the voters approving $p$ cannot
  afford it) then we set $\alpha_p = \infty$.  If $\alpha_p = \infty$
  for all projects not in $W$, then
  we terminate and output~$W$. Otherwise, we choose project $p'$ with
  the lowest affordability coefficient (using the tie-breaking order,
  if needed), include~$p'$ in~$W$, and take
  $\alpha_{p'} \cdot \cost(p')$ money from %
  each voter in $A(p')$ (or all the remaining funds, if the voter had
  less than $\alpha_{p'}\cdot \cost(p')$).  Due to the use of
  $\cost(p)$ in Equation~\eqref{eq:mes}, this variant of $\Mes$ is
  also called \emph{$\Mes$ with cost utilities}.\footnote{We note that
    under the current definition $\alpha_p$ is unique and so finding
    the minimum such value is not needed. However, we maintain this
    formulation for consistency with the literature.}

\item[{\Mes} with Approval Utilities (\MesApr).] This rule, which
  was introduced by the same authors as \MesCost{}, works in the same
  way, except that it replaces Equation~\eqref{eq:mes} with:
  \begin{equation}
    \textstyle
    \sum_{v_i \in A(p)} \min( b_i, \alpha_p ) = \cost(p).
  \end{equation}
  So, while in $\MesCost$ the affordability coefficients are
  between~$0$ and $1$, under \MesApr~they can be as large as
  $\budget$.

\end{description}

\begin{remark}\label{rem:mes}
  Both considered variants of $\Mes$ %
  might output a set of projects that can be extended without exceeding
  the budget. Following \citet{pet-pie-sko:c:pb-mes}, in our
  experiments we use \emph{\Phragmen{} completion}: When a $\Mes$ variant %
  finishes, we extend its output by running
  $\Phragmen$ with voters' bank accounts initiated with their then-current amounts of money.
  This defines the $\MESP$ and $\MESAP$ rules. %
\end{remark}

Many authors also study other PB rules (see,
for example, the works of~\citet{goe-kri-sak-ait:c:knapsack-voting,fal-tal:c:pb-framework,sre-bha-nar:c:pb-maximin});
for an overview we point to the works of \citet{rey-mal:t:pb-survey}
and \citet{lac-sko:b:approval-survey} (the latter regards multiwinner
elections, where projects have unit costs).
$\textbasicAV$ is commonly used in practice, $\MesCost$ also was
recently used by several cities (see https://equalshares.net),
whereas the other
rules are mostly studied theoretically.

\paragraph{Special Approval Profiles.}
From time to time we focus on \emph{plurality} profiles, where each voter
approves exactly one project, and on \emph{party-list} profiles, where
the projects are grouped into ``parties'' and each voter approves all
projects from a single party. Some cities require plurality profiles
(such as Wrocław, Poland; see the datasets in
Pabulib~\citep{fal-fli-pet-pie-sko-sto-szu-tal:c:pabulib}) whereas the
party-list ones are interesting theoretically.

\begin{definition}\label{def:partylist}
  Consider a set of projects~$P$ and a voter collection~$V$ with
  approval ballots over~$P$. We say that these voters have: (1)
 \emph{plurality preferences}, if $|A(v_i)| = 1$ for each voter $v_i$, (2) \emph{party-list preferences}, if either $A(v_i) = A(v_j)$ or
    $A(v_i) \cap A(v_j) = \emptyset$ for each two voters $v_i$ and $v_j$.

\end{definition}
For a party-list profile and a project $p$, by $\party(p)$ we mean the
set of projects approved by the same voters as $p$.
\Cref{fig:typesex} illustrates the types of preferences.

\begin{figure}[t]
  \centering
  \scalebox{0.4}{\begin{tikzpicture}
      
    \draw (0,0) rectangle (1,1);
    \draw (1.2,0) rectangle (2.2,1);
    \draw (2.4,0) rectangle (4.4,1);
    \draw[dotted, thick] (0.5,-0.2) -- (0.5,1.2);
    \draw[dotted, thick] (1.7,-0.2) -- (1.7,1.2);
    \draw[dotted, thick] (2.9,-0.2) -- (2.9,1.2);
    \draw[dotted, thick] (3.9,-0.2) -- (3.9,1.2);
    \node[anchor=north, fontscale=2] at (0.5,-0.2) {\Huge $v_1$};
    \node[anchor=north, fontscale=2] at (1.7,-0.2) {\Huge $v_2$};
    \node[anchor=north, fontscale=2] at (2.9,-0.2) {\Huge $v_3$};
    \node[anchor=north, fontscale=2] at (3.9,-0.2) {\Huge $v_4$};
    \node[anchor=east, fontscale=2] at (0.8,1.4) {\Huge $p_1$};
    \node[anchor=east, fontscale=2] at (1.9,1.4) {\Huge $p_2$};
    \node[anchor=east, fontscale=2] at (3.7,1.4) {\Huge $p_3$};
  \end{tikzpicture}}
  ~
  \hspace{2em}
  \scalebox{0.4}{\begin{tikzpicture}
      
    \draw (0,0) rectangle (1.9,1);
    \draw (2.1,0) rectangle (3.1,1);
    \draw (0,1.5) rectangle (1.9,2.5);
    
    \draw[dotted, thick] (0.5,-0.2) -- (0.5,2.7);
    \draw[dotted, thick] (1.5,-0.2) -- (1.5,2.7);
    \draw[dotted, thick] (2.6,-0.2) -- (2.6,1.2);
    \node[anchor=north, fontscale=2] at (0.5,-0.2) {\Huge $v_1$};
    \node[anchor=north, fontscale=2] at (1.5,-0.2) {\Huge $v_2$};
    \node[anchor=north, fontscale=2] at (2.6,-0.2) {\Huge $v_3$};
    \node[anchor=east, fontscale=2] at (0,0.5) {\Huge $p_1$};
    \node[anchor=east, fontscale=2] at (0,2) {\Huge $p_2$};
    \node[anchor=east, fontscale=2] at (4.05,0.5) {\Huge $p_3$};

  \end{tikzpicture}}
  ~
  \hspace{2em}
  \scalebox{0.4}{\begin{tikzpicture}
      
    \draw (0,0) rectangle (3,1);
    \draw (0,1.5) rectangle (1,2.5);
    \draw (2,1.5) rectangle (3,2.5);
    \draw[dotted, thick] (0.5,-0.2) -- (0.5,2.7);
    \draw[dotted, thick] (1.5,-0.2) -- (1.5,1.2);
    \draw[dotted, thick] (2.5,-0.2) -- (2.5,2.7);
    \node[anchor=north, fontscale=2] at (0.5,-0.2) {\Huge $v_1$};
    \node[anchor=north, fontscale=2] at (1.5,-0.2) {\Huge $v_2$};
    \node[anchor=north, fontscale=2] at (2.5,-0.2) {\Huge $v_3$};
    \node[anchor=east, fontscale=2] at (0.0,0.5) {\Huge $p_3$};
    \node[anchor=east, fontscale=2] at (0.0,2) {\Huge $p_1$};
    \node[anchor=east, fontscale=2] at (2.0,2) {\Huge $p_2$};
  \end{tikzpicture}}
\caption{Examples of plurality (left), party list (middle), and
  unrestricted (right) preferences. Projects are depicted as
  boxes. Each voter approves those projects that are drawn directly
  above.}
\label{fig:typesex}
\end{figure}
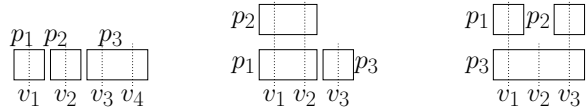

\section{Participatory Budgeting Cost Games}

In this section we define our main object of study,
\emph{participatory budgeting cost games} (\emph{PB games}).  Formally, a
PB game is a tuple~$(P,V,\budget,\baseline)$, where $P$ is a set of
projects, $V$ is a collection of voters with approval preferences over
the projects from $P$, $\budget$ is the available budget, and
$\baseline \colon P \rightarrow \reals_+$ is a function that
associates each project with its minimal \emph{delivery} cost (see the
description of the payoffs in the next paragraph). We assume that for each
project $p_i$, $\baseline(p_i) \leq \budget$.
In this game, the projects are the players and each of them needs to
report its cost. That is, a \emph{strategy profile} is a tuple
$\stratprof=(c_1, \dots, c_n)$, with a cost $c_i \in \reals_+$ for each
project $p_i$. As is standard, we write $(\stratprof_{-i},c')$ to
denote a strategy profile that is identical to~$\stratprof$ except
that project $p_i$ reports cost $c'$. We often
write $\stratprof(p_i)$ to denote the cost reported by $p_i$ under
strategy profile~$\stratprof$ (and we use strategy profiles as $\cost$
functions in PB instances).

Let us fix a PB rule $f$ and a PB game
$G = (P,V,\budget,\baseline)$. For a strategy profile $\stratprof$,
the associated PB instance is
$E(\stratprof) = (P,V,\budget,\stratprof)$ and the \emph{payoff} of
each project $p_i \in P$ is:
\[
u_{i}(\stratprof) = \begin{cases}
  \stratprof(p_i) - \baseline(p_i) & \text{ if $p_i \in f\big(E(\stratprof)\big)$,}\\
  0                             & \text{ otherwise}.
\end{cases}
\]
The interpretation %
is as follows: Each project $p$ has a minimal delivery cost
$\baseline(p)$, which is the lowest price under which  it
can be implemented. Yet, with more money the project can be better and the proposer
prefers this.
Our utility costs are defined to implement this preference. In
particular, the proposers do not receive funds equal to the utility
values.
As a special case, we %
often consider %
delivery costs %
equal to zero.

For a PB rule $f$ and a PB game $G = (P,V,\budget,\baseline)$, we
are interested whether this game has Nash equilibria (NE).  We say that a strategy profile $\stratprof$ for this game is a
\emph{Nash equilibrium} under $f$ (is an $f$-NE) if for every
project $p_i \in P$ and every cost $c' \in \reals_+$ it holds that
$u_{i}(\stratprof) \geq u_i(\stratprof_{-i},c')$. In other words, no project has
a \emph{profitable deviation}, that is, it cannot benefit by reporting a
different cost than that chosen in an $f$-NE (as long as no other project
changes its reported cost).

\begin{example}\label{ex:initial}
  Take $\textAVover$ and a PB game with projects
  $p_1$, $p_2$, voters $v_1, \ldots, v_5$, budget $10$, 
  and zero delivery costs.
  We have $A(p_1) = \{v_1,v_2\}$ and $A(p_2) = \{v_3,v_4,v_5\}$, so
  this is a plurality profile. Then, the strategy profile~$\stratprof$ where
  $\stratprof(p_1) = 4$ and $\stratprof(p_2) = 6$ is an NE. Indeed,
  according to $\textAVover$, under these costs $p_1$ and $p_2$ are
  tied, because
  $\nicefrac{|A(p_1)|}{\stratprof(p_1)} =
  \nicefrac{|A(p_2)|}{\stratprof(p_2)} = \nicefrac{1}{2}$, and both
  projects are selected under any tie-breaking order. If either of
  the projects reports a lower cost, it gets selected, but its utility
  decreases. If it %
  reports a higher cost,
  it is not selected and its utility drops to zero.
\end{example}

From time to time we use specific \emph{approval-to-delivery-cost}~(A/D)
tie-breaking orders, which favor projects with larger ratios of their
approvals to delivery costs. So, $\succ$ is A/D if projects with zero
delivery costs are ranked highest, while
$\nicefrac{A(p_i)}{\baseline(p_i)} >
\nicefrac{A(p_j)}{\baseline(p_j)}$ implies $p_i \succ p_j$ if
$\baseline(p_i) \cdot \baseline(p_j) >0$.  Since some projects may
have equal approval-to-delivery cost ratios, there may be several
different A/D tie-breaking orders in a given setting.

\begin{table}
  \newcommand{\foralltbs}{\raisebox{1px}{\ensuremath{\blacksquare}}}
	\newcommand{\forsometbs}{{\color{gray!40}\raisebox{1px}{\ensuremath{\blacksquare}}}}
	\newcommand{\noforsometbs}{\raisebox{1px}{\ensuremath{\circledcirc}}}
	\newcommand{\noforalltbs}{\raisebox{1px}{\ensuremath{\square}}}
 \centering \setlength{\tabcolsep}{2.75pt}
        \scalebox{.8}{\begin{tabular}{r|cccccc} ballots &
            \multicolumn{2}{c}{\Plurality} &
            \multicolumn{2}{c}{\PartyList}
                        & \multicolumn{2}{c}{\General}\\

		& $\baseline \equiv 0$ & $\textrm{arb. } \baseline$
		& $\baseline \equiv 0$ & $\textrm{arb. } \baseline$
		& $\baseline \equiv 0$ & $\textrm{arb. } \baseline$\\\midrule
		
		\textbasicAV{} & \foralltbs{} & \foralltbs{} 
                & \foralltbs{} & \foralltbs
                & \foralltbs{} & \foralltbs\\[-4pt]

								& \tiny\Shcref{prop:gc-ne-always-exists}
								& \tiny\Shcref{prop:gc-ne-always-exists}
								& \tiny\Shcref{prop:gc-ne-always-exists}
								& \tiny\Shcref{prop:gc-ne-always-exists}
								& \tiny\Shcref{prop:gc-ne-always-exists}
								& \tiny\Shcref{prop:gc-ne-always-exists}\\[1ex]

		\textAVover{} & \foralltbs
									& \forsometbs{} \noforsometbs{}
                  & $\foralltbs^\dagger$
								  & \forsometbs{} \noforsometbs{} 
                  & $\foralltbs^\dagger$
									& \forsometbs{} \noforsometbs{}  \\[-4pt]

								&\tiny\Shcref{prop:GC-zerobase-NE} 
                &\tiny\Shcref{thm:avcost-ne-exists}, \tiny\Shcref{ex:noSeqPhrag}
								&\tiny\Shcref{prop:GC-zerobase-NE}
								&\tiny\Shcref{thm:avcost-ne-exists}, \Shcref{ex:noSeqPhrag}
								&\tiny\Shcref{prop:GC-zerobase-NE}
								&\tiny\Shcref{thm:avcost-ne-exists}, \tiny\Shcref{ex:noSeqPhrag}\\[1ex]

                \Phragmen
                & $\foralltbs^\dagger$
							  & \forsometbs{} \noforsometbs{}
                & \foralltbs{}  
								& \forsometbs\hspace{-5px}?\hspace{3px}\noforsometbs{} 
                & \noforalltbs{}
								& \noforalltbs{} \\[-4pt]

	            	&\tiny\Shcref{cor:exclusiveNo0}
								&\tiny\Shcref{cor:exclusiveNo0}
								&\tiny\Shcref{result:phragparty}
								&\tiny\Shcref{cor:exclusiveNo0}
                &\tiny\Shcref{pro:phrag-laminar-no-ne}
								&\tiny\Shcref{pro:phrag-laminar-no-ne}
                \\[1ex]
        
		\MesApr  & \foralltbs{} 
						 & \foralltbs{} 
						 & \foralltbs{} 
						 & \forsometbs{} \noforsometbs{} 
						 & \noforalltbs{} 
						 & \noforalltbs{}\\[-4pt] 

             & \tiny\Shcref{thm:mesAprNEPlurality} 
             & \tiny\Shcref{thm:mesAprNEPlurality}
             & \tiny\Shcref{thm:mesAprNEPartyListZeroDeliveryCost}
						 & \tiny\Shcref{thm:mesAprNEPartyListDeliveryCostSpecificTBOrder}, \Shcref{ex:mesAprNoNEPartyList}  
						 & \tiny\Shcref{pro:mesAprNoNEGeneralRegardlessOfTB}
						 & \tiny\Shcref{pro:mesAprNoNEGeneralRegardlessOfTB}\\[1ex] 
  
  	\MesCost & \foralltbs
						 & \foralltbs
						 & \foralltbs
		         & \foralltbs
		         & \foralltbs
		         & \foralltbs\\[-4pt]

  	& \tiny\Shcref{thm:mesCostNEGeneral}
						 & \tiny\Shcref{thm:mesCostNEGeneral}
						 & \tiny\Shcref{thm:mesCostNEGeneral}
						 & \tiny\Shcref{thm:mesCostNEGeneral}
						 & \tiny\Shcref{thm:mesCostNEGeneral}
						 & \tiny\Shcref{thm:mesCostNEGeneral}

                      \end{tabular}}

	\caption{
          Our results regarding the existence of Nash equilibria.
          By ``${\baseline \equiv 0}$''
          and ``${\normalfont arb. }\ {\baseline}$'' we mean zero and
          arbitrary delivery costs, respectively.
          Symbols~\foralltbs{} and \forsometbs{} indicate that an NE
          always exists for an arbitrary or
					an A/D
          tie-breaking order, respectively. Symbol~\noforalltbs{} marks cases for
          which there is a game without NE for all tie-breaking
          orders.  \noforsometbs{} means that
          there is a game and a tie-breaking order for which there is
          no NE (we omit it if \noforalltbs{} applies). Results
          labeled with~$\dagger$ extend from zero delivery costs to the case where  for each project~${p}$ we have ${\baseline(p) \leq \appstrat(p)}$, where
          ${\appstrat}$ is the corresponding approval-proportional profile. ``?'' means a conjecture.%
        }                   \label{tab:results}%
\end{table}

\section{Existence of Equilibria}

Our theoretical results, summarized in \Cref{tab:results}, focus on the
existence of NE in PB games.
We mostly show that either:
(1)~There always is an equilibrium, for any tie-breaking order;
(2)~There always is an equilibrium for some %
A/D
tie-breaking order; or
(3)~There is a PB game with no equilibrium for any
tie-breaking order.
When such statements are difficult to
show, we seek instances with no equilibria %
for some tie-breaking order. Such an observation is weaker
than %
a result of type~(3), is incomparable to a result of type~(2), and
implies that a result of type~(1) does not exist.  Pragmatically, it
is most important to know whether a result of type~(1) holds or not in
a given setting; the other types of results indicate various levels of
bad news that we may be facing regarding the existence of Nash
equilibria. %

We are interested in strategy profiles that are proportional in some
way (for a discussion of fairness in PB see, for example, the works of
\citet{bri-for-lac-mal-pet:c:pb-proportionality} and
\citet{mal-rey-end-lac:c:fair-pb}, as well as the works defining
$\Mes$ and $\Phragmen$). Below we define a basic type of such a
profile, where project costs are proportional to the numbers of
approvals they get.

\begin{definition}[Approval-Proportional
  Strategy Profile]\label{def:cgwiazdka}
  Consider a PB game %
  with projects~$P = \{p_1, p_2, \ldots, p_m\}$. We say that strategy
  profile~$\appstrat$ is \emph{approval-proportional} if for each project $p$
  it holds that
  $ \appstrat(p) = \budget \cdot \frac{|A(p)|}{ \sum_{p' \in P}
    |A(p')|}$.
\end{definition}

Under plurality preferences, the approval-proportional profile is a
natural solution for rules focused on proportionality. For
more involved settings, other profiles may be more appropriate (see,
for instance, $\MesCost$ and \Cref{thm:mesCostNEGeneral}).

\subsection{\textbasicAV}

We start our analysis by considering $\textbasicAV$.
There is a simple equilibrium where the project with most
approvals (preferred in the tie-breaking order, if there is more than
one such project) requests the full budget amount.

\begin{proposition}\label{prop:gc-ne-always-exists}
  For each PB game %
  and each internal
  tie-breaking order $\succ$, there is a $\textbasicAV$-NE where the project
  with most approvals (best with respect to $\succ$) 
  reports cost $\budget$.%
\end{proposition}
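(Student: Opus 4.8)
The plan is to exhibit one explicit strategy profile and verify directly that it is an equilibrium. The crucial structural observation I would make first is that, because approvals are fixed and do not depend on the reported costs, the order in which $\textbasicAV$ \emph{considers} projects---nonincreasing approval score, ties broken by $\succ$---is completely independent of the strategy profile. Hence there is a single fixed consideration order for the entire game, and the project with the most approvals that is best with respect to $\succ$, call it $p^*$, is always the first project examined, no matter what any project reports.

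With this in hand, I would take the profile $\stratprof$ in which every project reports cost $\budget$. Under $\stratprof$, $\textbasicAV$ examines $p^*$ first; since $\cost(\{p^*\}) = \budget \le \budget$, it is selected and the full budget is consumed. Every project $p$ considered afterwards then fails the test $\cost(W \cup \{p\}) \le \budget$, because its reported cost $\budget$ added to the already-committed $\budget$ exceeds the budget. Thus $W = \{p^*\}$ and $p^*$ is the unique winner, reporting exactly $\budget$ as required.

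To confirm that $\stratprof$ is an $\textbasicAV$-NE, I would treat the two kinds of players separately. For $p^*$: it is selected at cost $\budget$, so its payoff is $\budget - \baseline(p^*)$, which is the maximum payoff $p^*$ could ever attain, since any selected project has cost at most $\budget$; hence $p^*$ has no profitable deviation. For any other project $p$: in any unilateral deviation $(\stratprof_{-p}, c')$ the reports of the other projects are unchanged, so $p^*$ is still examined first and still monopolizes the budget, leaving remaining budget $0$ when $p$ is reached. Therefore $p$ can be selected only by reporting $c' = 0$, which by the payoff definition yields $-\baseline(p) \le 0$, and otherwise it is rejected with payoff $0$; either way $p$ cannot improve on its current payoff of $0$.

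I do not expect a genuine obstacle here: the argument is short and rests entirely on the single observation that the consideration order is cost-independent, which pins $p^*$ to the front and makes its monopolization of the budget robust to every unilateral deviation. The only point requiring a word of care is the edge case in which a non-top project reports cost $0$ and is thus technically selectable even after the budget is exhausted; there I would simply note, via the payoff definition, that being selected at cost $0$ gives payoff $-\baseline(p) \le 0$, so such a move is never profitable.
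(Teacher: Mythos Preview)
Your proof is correct and follows essentially the same approach as the paper's: both arguments hinge on the fact that the top-approval project is always considered first by $\textbasicAV$ and can therefore absorb the entire budget, leaving no room for any other project to profit from a deviation. Your version is simply more explicit—you pin down the full profile (everyone reports $\budget$) and handle the $c'=0$ boundary case, whereas the paper's proof leaves the non-top projects' costs unspecified and only mentions upward deviations for them.
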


\appendixproof{Proposition}{prop:gc-ne-always-exists}{%
\begin{proof}
  Take a PB game and a strategy profile $\stratprof$ as described
  in the statement, and let $p$ be the project such that
  $\stratprof(p) = \budget$. If a project other than $p$ increases
  its cost, it still is not selected so this is not a beneficial
  move. If $p$ decreases its cost then its utility drops, and if it
  increases its cost then it is not funded. In either case, its
  utility decreases. So, $\stratprof$ is a $\textbasicAV$-NE.
\end{proof}
}

In fact, we can additionally observe that the most popular project reports the cost
$\budget$ (and is selected) under every \textbasicAV-$\textit{NE}$. Hence,
\Cref{prop:gc-ne-always-exists} indicates that proposers of
projects  receiving many approvals have incentive to overstate their
costs. Furthermore, proposers of smaller, less popular projects,
have an incentive to bundle them together (indeed, in some cities one can
observe projects of the form ``improvement of infrastructure in areas
$X$, $Y$, and $Z$,'' which suggests such bundling).  So,
unsurprisingly, $\textbasicAV$ equilibria in our games are 
disproportional, highlighting known deficiencies of this rule.

\subsection{{\textAVover}}

Some %
shortcomings of  $\textbasicAV$ are %
remedied by $\textAVover$, for which
the approval-proportional strategy is an equilibrium.  To show this,
we note that in an NE, if the delivery costs are limited, %
all projects are funded, using the whole budget. Then, $\appstrat$ is the only equilibrium profile.

\begin{proposition}\label{pro:NE-uses-full-budget}
  Take a PB game %
  and the corresponding
  strategy profile $\appstrat$.  If $\baseline(p) \leq \appstrat(p)$ for each project
  $p$, then if a
  profile $\stratprof$ is an $\textAVover$-NE for this game, then
  (1)~$\sum_{p \in P} \stratprof(p) = \budget$, and (2) $\textAVover$
  funds all the projects.%
\end{proposition}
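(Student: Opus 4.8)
The plan is to isolate one structural fact about $\textAVover$ and then read off both conclusions from the equilibrium conditions it forces. Throughout, write $\sigma = \sum_{p' \in P} |A(p')|$, so that $\appstrat(p) = \budget\cdot|A(p)|/\sigma$ and, under $\appstrat$, every project has the \emph{same} approval-to-cost ratio $\sigma/\budget$. The key tool I would prove first is a \textbf{funding guarantee}: for any project $p$ and any fixed costs of the others, if $p$ reports a cost $c' \le \appstrat(p)$ then $p$ is funded. This is a budget-accounting argument at the moment $\textAVover$ examines $p$: a project $q$ is examined before $p$ only if $|A(q)|/\stratprof(q) \ge |A(p)|/c'$, i.e.\ $\stratprof(q) \le c'\cdot|A(q)|/|A(p)|$, so the total cost committed before $p$ is reached is at most $\tfrac{c'}{|A(p)|}\big(\sigma-|A(p)|\big)$; the budget still available is therefore at least $\budget - \tfrac{c'}{|A(p)|}(\sigma-|A(p)|)$, which is $\ge c'$ exactly when $c' \le \budget|A(p)|/\sigma = \appstrat(p)$. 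Hence $p$ fits and is selected.

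This lemma immediately pins down the shape of any $\textAVover$-NE $\stratprof$. First, every funded project $q$ must report $\stratprof(q) \ge \appstrat(q)$: otherwise raising its cost to $\appstrat(q)$ keeps it funded (lemma) at a strictly higher cost, hence a strictly higher payoff. Second, every \emph{unfunded} project $u$ must report $\stratprof(u) > \appstrat(u)$, since $\stratprof(u) \le \appstrat(u)$ would make it funded by the lemma. Next I would show the funded set uses the whole budget: letting $q_0$ be the funded project $\textAVover$ adds last, a sufficiently small increase of $\stratprof(q_0)$ does not cross any other project's ratio, so the processing order and every other project's fate are unchanged, all previously funded projects stay funded, and $q_0$ itself stays within budget as long as the increase does not exceed the current slack. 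Thus if the funded total were below $\budget$, project $q_0$ would have a profitable raise, a contradiction; so the funded total equals $\budget$.

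To finish, I rule out unfunded projects. Suppose $u$ is unfunded. By the first fact every funded project has ratio $\le \sigma/\budget$, so the funded set splits into $F_0$ (ratio exactly $\sigma/\budget$, i.e.\ $\stratprof(q)=\appstrat(q)$) and $F_+$ (ratio strictly below, i.e.\ $\stratprof(q)>\appstrat(q)$). Writing $P$ as the disjoint union of $F_0$, $F_+$, and the unfunded set $U$, and using $\sum_{p\in P}\appstrat(p)=\budget$, the budget that would be committed before $u$ when $u$ deviates to $\appstrat(u)+\eta$ for tiny $\eta>0$ is exactly $\sum_{q\in F_0}\appstrat(q)$, leaving room $\budget-\sum_{F_0}\appstrat(q)=\sum_{F_+}\appstrat(q)+\sum_{U}\appstrat(q)$. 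Because the funded total is $\budget$, the set $F_+$ is nonempty (if $F_+=\emptyset$ the funded total would be $\sum_{F_0}\appstrat<\budget$), so this room strictly exceeds $\appstrat(u)$; moreover a small $\eta$ keeps $u$ ranked just below all of $F_0$ but above every lower-ratio project, so the room computation is exact. Hence $u$ can be funded at a cost strictly above $\baseline(u)\le\appstrat(u)$, a profitable deviation contradicting equilibrium. Therefore all projects are funded, which is (2); and since the funded total is $\budget$, this yields $\sum_{p\in P}\stratprof(p)=\budget$, which is (1).

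I expect the main obstacle to be the boundary case $\baseline(u)=\appstrat(u)$. There the naive deviation ``report $\appstrat(u)$'' only matches the payoff $0$ of staying unfunded, so it does not by itself break equilibrium; this is precisely why the budget-exhaustion step and the nonemptiness of $F_+$ matter, as they manufacture the \emph{strict} slack $\sum_{F_+}\appstrat+\sum_U\appstrat>\appstrat(u)$ that lets $u$ be funded at a cost strictly above $\appstrat(u)$. The remaining bookkeeping, choosing $\eta$ small enough that $u$ crosses no other project's ratio and verifying that the members of $F_0$ remain funded once $u$ steps behind them (including tie-breaking among equal-ratio projects), is routine but should be carried out with care.
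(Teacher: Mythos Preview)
Your funding-guarantee lemma is correct and is the right engine for the argument; the paper uses exactly this bound implicitly in its Case~2. However, your Step~4 as written does not go through. You assert that a sufficiently small increase of $\stratprof(q_0)$ ``does not cross any other project's ratio, so the processing order \dots\ [is] unchanged.'' This fails whenever some \emph{unfunded} project $p'$ is tied with $q_0$ (same ratio) and $q_0 \succ p'$: any positive increase drops $q_0$'s ratio strictly below $p'$'s, so $p'$ is now considered before $q_0$, and since $p'$ sees strictly more remaining budget than before (namely $\stratprof(q_0)$ more), it can become funded and crowd $q_0$ out. A two-project example makes this concrete: $|A(q_0)|=3$, $|A(p')|=8$, $\budget=10$, $\stratprof(q_0)=3$, $\stratprof(p')=8$, $q_0\succ p'$; here $q_0$ is the last funded project with slack $7$, yet every increase of $q_0$ makes $p'$ jump ahead, get funded, and exclude $q_0$. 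Since your Step~5 invokes Step~4 to conclude $F_+\neq\emptyset$, the circularity is genuine.

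The fix is to swap the dependency. Your Step~5 computation already shows, \emph{without} using Step~4, that any unfunded $u$ can profitably deviate to $\appstrat(u)+\eta$ whenever $F_+\neq\emptyset$ or $|U|\ge 2$ (the room is $\sum_{F_+}\appstrat+\sum_{U}\appstrat$, which strictly exceeds $\appstrat(u)$ in either case). The only residual case is $F_+=\emptyset$ and $U=\{u\}$; there $q_0\in F_0$ has no unfunded project tied with it (since $u$'s ratio is strictly below $\sigma/\budget$), so your Step~4 argument is valid \emph{in that case} and yields the contradiction. This reorganisation makes the proof complete. The paper avoids the issue by a direct case split on $\sum_p\stratprof(p)$ versus $\budget$ and by first arguing that all funded projects share a common ratio; your lemma-first route is cleaner once the Step~4/Step~5 dependency is untangled.
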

\newcommand{\won}{{\mathit{won}}}
\newcommand{\lost}{{\mathit{lost}}}

\begin{proof}[Proof Sketch]
Let each project $p$ have $\baseline(p) \leq \appstrat(p)$ and let~$\stratprof$
be a Nash equilibrium. Then, it must be that $\sum_{p \in P}\stratprof(p) \geq \budget$. Let
$\sum_{p \in P}\stratprof(p) > \budget$. As $\stratprof$ is an equilibrium, all
of the funded projects are chosen with the same score and their costs amount to
$\budget$, while being higher than those reported in $\appstrat$. So, some
non-funded project $p$ could be funded with the cost higher than
$\baseline(p)$\,---\,contradiction. Hence, $\sum_{p \in P} \stratprof(p) =
\budget$. Point~$(2)$ follows from the exhaustivity of $\textAVover$.
\end{proof}
\appendixproof{Proposition}{pro:NE-uses-full-budget}{%
\begin{proof}
  Let the notation be as in the statement of the proposition. In the beginning
  we observe that if $\sum_{p \in P}\stratprof(p) < \budget$, then
  $\textAVover$ selects all the projects. Hence, the project
  considered last benefits by reporting a higher cost (so that the sum
  of the reported costs is $\budget$). Thus such a
  strategy profile~$\stratprof$ is not an equilibrium.

  Next, let us assume that $\sum_{p \in P}\stratprof(p) >
  \budget$. Let $P_\won$ and $P_\lost$ be sets of projects that,
  respectively, are and are not funded. By our assumption, we know
  that $P_\lost$ must be nonempty, and we also note that $P_\won$ must be
  nonempty.
  Naturally, $\sum_{p \in P_\won}\stratprof(p) \leq \budget$.
  For each two projects
  $p_i$ and $p_j$ in $P_\won$, it must be the case that
  $\nicefrac{|A(p_i)|}{\stratprof(p_i)} =
  \nicefrac{|A(p_j)|}{\stratprof(p_j)}$.  For example, if we had
  $\nicefrac{|A(p_i)|}{\stratprof(p_i)} >
  \nicefrac{|A(p_j)|}{\stratprof(p_j)}$ then $\textAVover$ would
  consider $p_i$ prior to~$p_j$ and, consequently, it would be
  beneficial for $p_i$ to increase its reported cost by a small-enough
  amount so that it would still be considered (and selected) prior to
  $p_j$. This would contradict the assumption that $\stratprof$ is an
  equilibrium.
  Next, since $\sum_{p \in P}\appstrat(p) = \budget$, there must be
  some project $p'$ in $P_{\won}$ such that
  $\stratprof(p') > \appstrat(p') \geq \baseline(p')$ (otherwise, if each project $p' \in P_\won$ reported at most $\appstrat(p')$, then any project $p \in P_\lost$ would be selected after reducing its cost to $\appstrat(p)$, so \stratprof{} could not have been \NE{}). Further, by
  definition of the approval-proportional profile, for every project
  $p \in P$, we have that $\nicefrac{|A(p)|}{\appstrat(p)}$ is the
  same value and, so, for every project $q \in P_\lost$ it holds that
  $\nicefrac{|A(q)|}{\appstrat(q)} = \nicefrac{|A(p')|}{\appstrat(p')}
  > \nicefrac{|A(p')|}{\stratprof(p')}$.  This means that every
  project $q \in P_\lost$ can improve its utility by reporting cost
  $\appstrat(q)$ and being selected. This contradicts the assumption
  that $\stratprof$ is an equilibrium.

  Thus it must be the case that
  $\sum_{p \in P}\stratprof(p) = \budget$, which implies that
  $\textAVover$ selects all projects. 
\end{proof}%
}

\begin{proposition}\label{prop:GC-zerobase-NE}
  Irrespective of a tie-breaking order, 
  $\appstrat$ is the only $\operAvover$-NE if  
  $\baseline(p) \leq \appstrat(p)$ for all
  $p \in P$.
\end{proposition}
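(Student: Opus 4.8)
The plan is to establish the two halves of the statement separately: first that $\appstrat$ is an $\textAVover$-NE, and then that it is the only one, working with an arbitrary fixed tie-breaking order throughout. For the first half I would note that under $\appstrat$ every project has the same approval-to-cost ratio $\nicefrac{\sum_{p'\in P}|A(p')|}{\budget}$ and the reported costs sum exactly to $\budget$, so all projects are tied, all are funded, and each project $p$ earns the nonnegative payoff $\appstrat(p)-\baseline(p)$ (nonnegative by the hypothesis $\baseline(p)\le\appstrat(p)$). I would then rule out deviations: if $p$ lowers its cost, its ratio only grows, so it remains funded but earns strictly less; if $p$ raises its cost to some $c'>\appstrat(p)$, its ratio drops strictly below the common value, so it is considered after every other project, which together consume $\budget-\appstrat(p)$ and leave only $\appstrat(p)<c'$ of budget, so $p$ is rejected and its payoff falls to $0$. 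Since neither move is profitable, $\appstrat$ is an equilibrium regardless of tie-breaking.

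For uniqueness I would invoke \Cref{pro:NE-uses-full-budget}, which under the hypothesis $\baseline(p)\le\appstrat(p)$ guarantees that any $\textAVover$-NE $\stratprof$ spends the whole budget and funds every project. The crux is then to show that in such a profile all projects share a common ratio $\nicefrac{|A(p)|}{\stratprof(p)}$. Once this is known, writing $r$ for the common value gives $\stratprof(p)=\nicefrac{|A(p)|}{r}$; summing over $P$ and using $\sum_{p\in P}\stratprof(p)=\budget$ forces $r=\nicefrac{\sum_{p'\in P}|A(p')|}{\budget}$, whence $\stratprof(p)=\appstrat(p)$ for every $p$, i.e., $\stratprof=\appstrat$.

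To prove the common-ratio claim I would argue by contradiction. Suppose the ratios are not all equal, let $S\subsetneq P$ be the (nonempty, proper) set of projects attaining the maximum ratio, and let $q$ be the project of $S$ that the rule considers first. Consider $q$ raising its cost by a small $\epsilon>0$: its ratio drops just below the maximum, so $q$ is now considered immediately after the remaining projects of $S$. These have total cost $\cost(S)-\stratprof(q)$, are all funded, and leave slack $\budget-\cost(S)+\stratprof(q)$. Because $S$ is a proper subset while all projects are funded with total cost exactly $\budget$, we have $\cost(S)<\budget$; hence for $\epsilon<\budget-\cost(S)$ the remaining budget still exceeds $\stratprof(q)+\epsilon$, so $q$ stays funded at the higher cost and strictly increases its payoff, contradicting that $\stratprof$ is an NE. Therefore all ratios coincide.

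The step I expect to be the main obstacle is precisely this common-ratio claim: one must track how the greedy consideration order and the resulting set of funded projects change when $q$ perturbs its cost, and verify that $q$ survives the perturbation. The observation that makes it go through is that a proper maximum-ratio set leaves strictly positive budget slack ($\cost(S)<\budget$), which absorbs a small cost increase by the first-considered member of $S$. By comparison, the equilibrium verification in the first half and the closing arithmetic are routine.
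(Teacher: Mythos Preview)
Your proposal is correct and follows essentially the same approach as the paper: verify that $\appstrat$ is an equilibrium by checking upward and downward deviations, then use \Cref{pro:NE-uses-full-budget} to reduce uniqueness to showing all ratios coincide, and obtain a contradiction via a small cost increase for a highest-ratio project. The only cosmetic difference is that the paper picks any two projects with distinct ratios and lets the earlier one deviate while staying ahead of the later one, whereas you isolate the entire maximum-ratio tier $S$ and let one of its members drop just below it; your version makes the remaining-budget accounting slightly more transparent, but the two arguments are interchangeable.
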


\appendixproof{Proposition}{prop:GC-zerobase-NE}{%
\begin{proof}
	Consider a PB game~$(P, V, \budget, \baseline)$ such that for every~$p_i \in P$ it is true that~$\baseline(p) \leq \appstrat(p)$.

	We argue that strategy~$\stratprof = \appstrat$
    is a \textAVover-NE.
    Note that the sum of the reported costs in~$\stratprof$ is
    exactly~$\budget$ and all projects are funded. Hence,
    for each player, decreasing the cost leads 
    to a utility loss. Hence, a profitable deviation could only be
    through increasing the reported cost.
    Towards contradiction, let us fix some player~$p_i \in
	\appstrat$ and assume that reporting a cost~$c'_i > c_i$ leads to a better payoff.
	This means that $p_i$ is funded in the modified election~$E' = (P, V, \budget,(\stratprof_{-i}, c_i') )$
 , that is, $p_i \in \operAvover(E')$.
	Additionally, since by \Cref{pro:NE-uses-full-budget} we know that~$\sum_{j \in |P|} c_j = \budget$, it
	immediately follows that $c_i' + \sum_{j \in |P|\setminus\{i\}} c_j >
	\budget > \sum_{j \in |P|\setminus\{i\}} c_j$. It is also the case that for
	each~$j \in |P|$, $\nicefrac{|A(P_j)|}{c_j} > \nicefrac{|A(P_i)|}{c'_i}$.
	Hence, in particular, in the run of~$\operAvover$ for~$E'$, all projects are
	considered before $p_i$. Since all of these project are selected, by the time the
	procedure starts considering~$p_i$, the remaining budget is smaller
	than~$c_i'$, so $p_i \not\in \operAvover(E')$; a contradiction. Note that because
	all projects that are funded are always considered by the procedure in the
	same time, the result is independent of the tie-breaking order~$\succ$.

	We now argue that every strategy profile $\stratprof{}$ other than
	$\appstrat$ is not a $\operAvover$-NE. To this end, we take such a profile
	$\stratprof{}$ and assume towards contradiction that it is a $\operAvover$-NE.
	Observe that since $\stratprof{} \neq \appstrat$,
	there exist two distinct projects~$p_i$ and~$p_j$ such that
	$\nicefrac{|A(p_i)|}{c_i} \neq \nicefrac{|A(p_j)|}{c_j}$. Hence one of the
	projects is considered earlier than the other; we assume without loss of
	generality that $p_i$ is considered before~$p_j$.
	Notably, as \stratprof{} is (by our assumption) a $\operAvover$-NE, by
	\Cref{pro:NE-uses-full-budget}, we know that $\{p_i, p_j \} \subseteq \operAvover(\stratprof{})$.
	This implies, however, that $p_i$ can report a slightly higher price~$c_i'$
	and still be selected by~$\operAvover{}$ thus obtaining a better payoff. Formally,
	there exists a $c_i' > c_i$ resulting in the profile $(c_i',
	\stratprof{}_{-i})$ in which~$\nicefrac{|A(P_i)|}{c_i'} <
	\nicefrac{|A(P_j)|}{c_j}$. So, $p_i$ is (still) considered by $\operAvover$
	before~$c_j$ and the deviation is small enough (that is, $c_i'-c_i < c_j$)
	to guarantee that $p_i\in \operAvover(( \stratprof{}_{-i}, c_i'))$. So,
	\stratprof{} is not a $\operAvover$-NE; a contradiction. Again, due
	to~\Cref{pro:NE-uses-full-budget}, in each Nash equilibrium, all projects are
	selected to be funded. Hence, our proof works for every possible tie-breaking.
\end{proof}%
}

If (some of) the delivery costs are higher than the costs implied by
the approval-proportional strategy, then the
situation becomes more complicated. As we show, it is no longer true that an 
NE exist for all internal tie-breaking orders.

\begin{example}\label{ex:noSeqPhrag}
  Take a PB game~$(P, V, \budget, \baseline)$ with plurality
  ballots, where $P=\{p_1, p_2 \}$, $|A(p_1)| = |A(p_2)|= 5$,
  $\budget = 10$, $\baseline(p_1) = 0$,
  $\baseline(p_2) = 6$, and the tie-breaking order is $p_2 \succ p_1$.
  Assume that $\stratprof$ is an
	\textAVover{}-NE. Consider two cases:
  (1) If $\stratprof(p_2) \geq 6$, then either 
    (a)~$\stratprof(p_1) \geq \stratprof(p_2)$, where it is
    better for $p_1$ to report a cost lower than $\stratprof(p_2)$
    (otherwise $p_2$ is selected before $p_1$ and there is no budget
    left for $p_1$), or
    (b)~$\stratprof(p_1) < \stratprof(p_2)$ and it is better for
    $p_1$ to increase its cost (keeping it below
    $\stratprof(p_2)$).
  (2) If $\stratprof(p_2) < 6$, then either $p_2$ is chosen and it
    prefers to report cost~$6$ (otherwise its
    payoff is $\stratprof(p_2)-\baseline(p_2) < 0$), or $p_2$ is
    not selected, meaning that
    $\stratprof(p_1) < \stratprof(p_2)$ and it is better for
    $p_1$ to increase its cost (keeping it below
    $\stratprof(p_2)$).
    Yet, if $p_1 \succ p_2$, then $(6,6)$ %
    is an equilibrium, 
    where only $p_1$ is funded: Both projects would decrease their
    payoff by lowering costs and they would not benefit by increasing
    their costs ($p_1$ %
    would no longer be selected, and $p_2$'s payoff would not change).
\end{example}

Yet, there always is a tie-breaking order yielding an NE. %

\begin{theorem}\label{thm:avcost-ne-exists}
	For each PB game %
	and A/D tie-breaking order
    there is an~\textAVoverNE{} %
		computable in polynomial time.
\end{theorem}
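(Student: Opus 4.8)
The plan is to prove existence constructively: given the game and the fixed A/D order $\succ$, I would build one profile and verify it is an \textAVoverNE{}, with the construction running in a single near-linear sweep. First I would record two structural facts to steer the design. A project that is not funded has payoff $0$, and I may let it report its delivery cost $\baseline(p)$, so that it participates in the \textAVover{} run at the A/D ratio $\nicefrac{|A(p)|}{\baseline(p)}$; a funded project, on the other hand, must report the largest cost at which it is still selected. Moreover, echoing \Cref{pro:NE-uses-full-budget}, I would aim for a profile that spends the entire budget: full-budget usage is exactly what stops a funded project from profitably raising its price, and (as \Cref{ex:noSeqPhrag} shows for the wrong tie-breaking) it is the failure of this that destroys equilibria.

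The construction sweeps the projects in nonincreasing A/D order (which is $\succ$) and fixes a threshold ratio $\rho$. The most efficient projects are made winners and priced at the threshold, i.e.\ a winner $p$ reports $\nicefrac{|A(p)|}{\rho}$ so that it lands exactly at ratio $\rho$; a pivotal project whose delivery cost is too large to fit in the budget left after these winners is declared a loser and reports $\baseline(p)$ (this also places it at ratio $\rho$, but it cannot fit); and the budget still unspent is distributed among the remaining winners, which therefore sit at strictly lower ratios and are considered only after the whole ``wall'' of ratio-$\rho$ projects. The natural value is $\rho=\max\!\big(\nicefrac{\sum_{p\in W}|A(p)|}{\budget},\ \nicefrac{|A(q)|}{\baseline(q)}\big)$ for the set $W$ of threshold winners and the pivotal project $q$, the quantity that already surfaces when \Cref{prop:GC-zerobase-NE} is pushed past the regime $\baseline(p)\le\appstrat(p)$. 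Because the top project always has $\baseline(p)\le\budget$, the sweep always classifies at least one winner, so it terminates, and it may be iterated on the projects lying below the pivotal one.

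Verification handles the two possible deviations. If a funded project raises its report, its A/D ratio drops below $\rho$ and pulls a more efficient blocker ahead of it --- another ratio-$\rho$ winner or the pivotal loser reporting $\baseline(\cdot)$; since the budget is (about to be) fully committed, the freed money is recaptured by that blocker and the deviator no longer fits, so its payoff drops to $0$. The lowest-ratio winners cannot raise their reports either, simply because the higher-ratio projects already exhaust the budget. If instead a non-funded project reports anything above its delivery cost, its ratio stays at most $\rho$ while its cost grows, so by the time it is considered nothing fits --- for the pivotal project this is exactly the statement that its delivery cost exceeds the budget remaining after the ratio-$\rho$ winners. Here the A/D order does the decisive work: at the ratio-$\rho$ ties it forces the genuine winners and the harmless pivotal loser to be scanned before any project that could otherwise slip in.

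The hard part will be the cost assignment, because its three demands are coupled: the efficient winners must be capped so they cannot over-ask, the budget must be fully spent so that no winner can reprice itself into a gap and seize free money (the very effect that rules out the naive ``proportional-on-a-prefix'' profiles), and the pivotal project must remain unfundable. A project's maximal safe cost depends on the others' costs, so these cannot be chosen independently; in particular one must decide carefully which winner absorbs the leftover budget, since absorbing it into a project that is needed as part of the blocking wall would let the pivotal project sneak in. The key technical claim is that the A/D sweep untangles this circularity in one pass --- the wall of higher-ratio projects is pinned down before the leftover-absorbing winners are priced, and the pivotal project's delivery-cost report is simultaneously consistent with ratio $\rho$ and with not fitting. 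Once this is proved the NE inequalities follow immediately, and since the sweep does $O(m)$ arithmetic operations the equilibrium is produced in polynomial time.
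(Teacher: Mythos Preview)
Your proposal follows essentially the same construction as the paper: sweep the projects in A/D order, price each batch of winners at a common threshold ratio (your $\rho$ is exactly the paper's $1/T$), leave the pivotal project at its delivery cost as a blocker, and iterate on the leftover budget; the verification via ``raising a winner lets the blocker in'' is likewise the paper's argument. One small correction to your framing: the resulting equilibrium need \emph{not} spend the full budget (cf.\ \Cref{result:phragsmallcost}), so what actually prevents a funded project from profitably raising its price is the pivotal blocker sitting at the same ratio, not full-budget usage per se---your verification paragraph already uses the blocker correctly, so only the opening motivation needs adjusting.
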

\begin{proof}[Proof Sketch]
  The algorithm that we propose computes the claimed profile~$\stratprof$. 
	It runs in iterations. Starting from the situation
	where each project reports its delivery cost, in each iteration the algorithm
	selects a group of projects that ``underreport'' their costs compared to their support. Then, it increases their reported costs as
	much as possible, with the projects remaining funded. Finally, if the
	funded projects after the update do not reach the budget, the procedure is
	repeated for the remaining projects whose prices have not been increased so
	far. These projects, albeit worse regarding the delivery-cost-to-support
	ratio, can still benefit from the increase. In the end the
	procedure outputs an equilibrium profile.  The full proof is in the appendix.
\end{proof}

\appendixproof{Theorem}{thm:avcost-ne-exists}{%
  \begin{algorithm}[t]\caption{Finding a Nash equilibrium for $\operAvover$.%
  }\label{alg:NE-existence}
  \ifdefempty{\algorithmtwoeFlavor}{%
   \SetCommentSty{scriptsize}
  	\KwData{set~$P = \{p_1, p_2, \ldots, p_m\}$ of projects with the delivery costs
  	function~$\baseline{}$, set~$V$ of voters with their ballots,
    budget~$\budget$, A/D tie-breaking~$\succ$.}
    \KwResult{profile~$\stratprof$
  		 that is a $\operAvover$-NE
  	 under tie-breaking~$\succ$.}
    \BlankLine
  
    $\budget^* \leftarrow \budget$\tcp*{remaining budget}
  
    $\stratprof \leftarrow (\baseline(p_1), \baseline(p_2), \ldots,
		\baseline(p_m)) = \stratprof_0$ \label{alg:init-strategy}
    \tcp*{initial strategy}
  	\tcp{let $t(p)$, for each $p \in P$, be $\nicefrac{\baseline(p)}{A(p)}$; } 
  
    \tcp{note that $\succ$ is nondecreasing w.r.t.\, values of $t$}
   
		\tcp{$\pos_X(i)$ denotes the top $i$th project of order~$\succ$ restricted
		to~$X \subseteq P$} 
  	$P_\textrm{p} \leftarrow P$\tcp*{prospective
  	projects}\label{alg:init-prospective}
  
  	\While(\tcp*[f]{are there projects to consider?}){$P_\textrm{p} \neq \emptyset$}{
  		$P' \leftarrow P_\textrm{p}$\tcp*{$P'$ is a helper variable}
  		$\baseline(\pos_{P'}(0)) \leftarrow -\inf$\tcp*{guardian ``fake'' value}
  		$\baseline(\pos_{P'}(|P'|+1)) \leftarrow +\inf$\tcp*{guardian ``fake'' value}
  		$k \leftarrow$ maximum integer $x \leq |P'|$ such that
  		$t(\pos_{P'}(x))\sum_{i \leq x-1} A(\pos_{P'}(i))
  		\leq \budget^* < t(\pos_{P'}(x+1))\sum_{i \leq x+1} A(\pos_{P'}(i))$\label{alg:selection-k}\;
  		$T \leftarrow$ maximum $x \leq
  		\nicefrac{\baseline(\pos_{P'}(k+1))}{|A(\pos_{P'}(k+1))|}$ such that
  		$x\sum_{i \in [k]} A(\pos_{P'}(i)) \leq \budget^*$\label{alg:computing-t}\;
  		\ForEach(\tcp*[f]{update~$\stratprof{}$ and $\budget^*$}){$i \leq k$}{
  			$c(\pos_{P'}(i)) \leftarrow T \cdot A(\pos_{P'}(i))$%
  			\;\label{alg:setting-t}
    	  $\budget^* \leftarrow \budget^* - c(\pos_{P'}(i))$%
  			\;\label{alg:updating-budget}
  			$P_\textrm{p} \leftarrow P_\textrm{p} \setminus
  			\{\pos_{P'}(i)\}$\;\label{alg:updating-Pp}
      }
  		\lIf{$k < |P'|$}{$P_\textrm{p} \leftarrow P_\textrm{p} \setminus \{\pos_{P'}(k+1)\}$}
  		$P_\textrm{p} \leftarrow \{p \in P_\textrm{p}: c(p) \leq
  		\budget^*\}$\;\label{alg:too-exp}
    }
  	\Return{$\stratprof$}\;
  
   }%
   {
  \renewcommand{\algorithmiccomment}[1]{\hfill {\scriptsize // #1}}
  \renewcommand{\algorithmicrequire}{\textbf{Data:}}
  \renewcommand{\algorithmicensure}{\textbf{Result:}}
  \begin{algorithmic}[1]
    \REQUIRE set~$P = \{p_1, p_2, \ldots, p_m\}$ of projects with the delivery costs
  	function~$\baseline{}$, set~$V$ of voters with their ballots,
    budget~$\budget$.
    
    \ENSURE profile~$\stratprof$ and tie-breaking~$\succ$ such that $\stratprof$
    	is both a $\seqPhrag$-NE and a $\operAvover$-NE under tie-breaking~$\succ$.
    
    \STATE  $\budget^* \gets \budget$ \COMMENT{remaining budget}
  
    \STATE $\stratprof \gets (\baseline(p_1), \baseline(p_2), \ldots,
  	\baseline(p_m)) = \stratprof_0$
    \COMMENT{initial strategy}
  
    \COMMENT{let $t(p)$, for each $p \in P$, be $\nicefrac{\baseline(p)}{A(p)}$\hfill}
   
    \STATE $\succ \gets$ arbitrary order of $P$ w.r.t. increasing values of~$t(p)$
    
    \COMMENT{$\pos_X(i)$ refers to $p \in X \subseteq P$ on pos.\,$i$ of the
   	suborder of~$\succ$ limited to~$X$} 
  	
    $P_\textrm{p} \gets \{p \in P: c(p) \leq \budget^*\}$
    \COMMENT{prospective projects}
  
    \WHILE[are there projects to consider?]{$P_\textrm{p} \neq \emptyset$}
      \STATE $P' \gets P_\textrm{p}$\COMMENT{$P'$ is a helper variable}
  	\STATE $\baseline(\pos_{P'}(0)) \gets -\inf$\COMMENT{guardian ``fake'' value}
  	\STATE $\baseline(\pos_{P'}(|P'|+1)) \gets +\inf$\COMMENT{guardian ``fake'' value}
  	\STATE $k \gets$ maximum integer $x \leq |P'|$ such that\\
  	\ \ \ $t(\pos_{P'}(x))\sum_{i \leq x-1} A(\pos_{P'}(i)) \leq \budget^*
  	< t(\pos_{P'}(x+1))\sum_{i \leq x+1} A(\pos_{P'}(i))$
      \STATE $T \gets$ maximum $x \leq
  		\nicefrac{\baseline(\pos_{P'}(k+1))}{|A(\pos_{P'}(k+1))|}$\\
      \ \ \ such that
  		$x\sum_{i \in [k]} A(\pos_{P'}(i)) \leq \budget^*$
      \FOR[update~$\stratprof{}$ and $\budget^*$]{$i=1$ \TO $k$}
        \STATE $c(\pos_{P'}(i)) \gets T \cdot A(\pos_{P'}(i))$
    	 \STATE  $\budget^* \gets \budget^* - c(\pos_{P'}(i))$
  	   \STATE $P_\textrm{p} \gets P_\textrm{p} \setminus
  			\{\pos_{P'}(i)\}$
      \ENDFOR
      \IF{$k < |P'|$}
        \STATE $P_\textrm{p} \gets P_\textrm{p} \setminus \{\pos_{P'}(k+1)\}$
      \ENDIF
  	\STATE $P_\textrm{p} \gets \{p \in P_\textrm{p}: c(p) \leq
  		\budget^*\}$\label{alg:too-exp}
    \ENDWHILE
    \RETURN $\stratprof$
  \end{algorithmic}
  }%
  \end{algorithm}%
\begin{proof}
	Our \Cref{alg:NE-existence} computes the claimed~\textAVoverNE{}.

	Let us fix a PB game $G = (P, V, B, d)$ and some corresponding A/D
	tie-breaking $\succ$, as specified in the theorem statement. We first 
	introduce helpful notation and discuss~$\succ$ in more detail. Then, we proceed
	with presenting a high-level description of~\Cref{alg:NE-existence} that finds
	the claimed~\textAVoverNE{}, which we refer to as~$\stratprof$. Eventually, we
	prove the correctness of the algorithm using~\Cref{claim:central-alg} and
	conclude with proving~\Cref{claim:central-alg} itself.

  Recall that by definition $\succ$ orders the projects~$p_i$
	nonincreasingly according to approval-to-delivery-cost ratios
	$\nicefrac{|A(p_i)|}{\baseline(p_i)}$.
	Crucially, $\succ$ is always compatible with the order in which~$\textAVover$
	considers the projects assuming they report their delivery costs, that is,
	where for all~$p_i \in P$, $c(p_i) = \baseline(p_i)$. In what follows, for
	each set of projects~$X \subseteq P$, each~$p_i \in X$, and the corresponding
	suborder~$\succ_X$ of~$\succ$, we write~$\pos_X(p_i)$ to denote the position
	of~$p_i$ in~$\succ_X$. Analogously, for some natural number~$x \leq |X|$, we
	denote by~$\topp_X(x)$ the set of top~$x$ projects according to~$\succ_X$.

	\Cref{alg:NE-existence} constructs the $\textAVover$~$\stratprof$ iteratively,
	starting from strategy~$\stratprof_0$ in which each project reports its
	delivery cost (\Cref{alg:init-strategy}). In each iteration of the while
	loop, the algorithm selects a group of projects that ``underreport'' their
	costs compared to the support that they get. As the next step, the algorithm
	increases the reported costs of these projects in strategy~$\textAVover$ as
	much as possible ensuring that the projects remain funded. If the funded
	projects after this update do not exhaust the budget, the procedure is
	repeated for the remaining projects, whose prices have not been increased so
	far. Such projects, albeit worse regarding the delivery-cost-to-support ratio,
	can still benefit from the increase. The algorithm outputs~$\stratprof$ if one
	of the updates step lead to the situation in which the whole budget is used or
	when there is no more projects underreporting their costs.

	In the following more detailed description of~\Cref{alg:NE-existence}, we use
	the notation as specified in the pseudocode and whenever the value~$k$ in an
	iteration of the loop is smaller than~$|P'|$, we define $p^* = \pos_{P'}(k+1)$
	(assuming the value of~$P'$ from the corresponding iteration).  Central
	to~\Cref{alg:NE-existence} is the while loop. Importantly, due to our basic
	assumption on the delivery costs of the projects (i.e.,\,that for each~$p \in
	P$, $\baseline(p) \leq \budget$), initially set~$P_\textrm{p}$ is not empty
	(\Cref{alg:init-prospective}), so the while loop runs at least once.
	\Cref{alg:computing-t} of the algorithm guarantees that in each iteration the
	loop sets the values of $T$ and~$k$ to:
	\begin{align*}
		\tag{S1}\label{eq:stop-one}
		&k < |P'| \;\;\textrm{and}\;\; T = \nicefrac{\baseline(p^*)}{A(p^*)} \;\;\textrm{or}\\
		\tag{S2}\label{eq:stop-two}
		&k < |P'| \;\;\textrm{and}\;\; T \cdot \sum_{p_j \in \topp_{P'}(k)} |A(p_j)| =
		\budget^*,
		\;\; \textrm {or}\\
		\tag{S3}\label{eq:stop-three}
		&k = |P'| \;\;\textrm{and}\;\; T \cdot \sum_{p_j \in \topp_{P'}(k)} |A(p_j)| =
		\budget^*.\\
	\end{align*}
	The aforementioned case distinction is crucial for the following claim.
	\begin{claim}\label{claim:central-alg}
		Right after executing~\Cref{alg:too-exp} in each iteration of the while loop
		of~\Cref{alg:NE-existence}, it jointly holds that:
	  \begin{enumerate}
	  	\item all projects from~$\topp_{P'}(k)$ are funded under strategy
	  		profile~$\stratprof$ and none of them has a profitable deviation
	  		for~$\stratprof$;\label{enum:first}
	  	\item either $k = |P'|$ or there exists a project~$p^* = \pos_{P'}(k+1)$ that
	  		is not funded and has no profitable deviation for
	  		profile~$\stratprof$;\label{enum:mid}
	  	\item no project from~$\topp_{P'}(k)$ has a profitable deviation for each
	  		strategy profile~$\stratprof'$ that differs from~$\stratprof$ only by
	  		strategies of projects in~$P_\textrm{p}$ in a way that, for each~$p\in
	  		P_\textrm{P}$, $\stratprof'(p) \geq \stratprof(p)$;\label{enum:second}
	  	\item either $k = |P'|$ or the project~$p^* = \pos_{P'}(k+1)$ has no profitable
	  		deviation for each strategy profile~$\stratprof'$ that differs
	  		from~$\stratprof$ only on strategies of projects in~$P_\textrm{p}$ such
	  		that, for each~$p\in P_\textrm{P}$, $\stratprof'(p) \geq
	  		\stratprof(p)$;\label{enum:last}
	  	\item no project in~$P \setminus P_\textrm{p}$ has a profitable deviation
	  		in~$\stratprof$.\label{enum:ultim}
	  \end{enumerate}
  \end{claim}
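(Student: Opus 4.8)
The plan is to prove the five statements jointly as an invariant of the while loop, by induction on its iterations. Throughout I fix notation as follows: since $\succ$ is an A/D order, the projects are sorted by nondecreasing $t(p) = \nicefrac{\baseline(p)}{|A(p)|}$; all projects of $\topp_{P'}(k)$ are assigned the single rate $T$, so under $\stratprof$ they share the common $\textAVover$ ratio $\nicefrac{1}{T}$ and form a tied block; and, by \Cref{alg:computing-t}, $T \leq t(p^*)$ in every iteration, which makes the rates nondecreasing across iterations. Hence the projects funded in earlier iterations are considered by $\textAVover$ before the current block, and the current block before $p^*$ and the still-prospective projects. I would also record the three exclusive stopping conditions~\eqref{eq:stop-one}--\eqref{eq:stop-three}: the leftover $L = \budget^* - T\sum_{p \in \topp_{P'}(k)}|A(p)|$ is $0$ under (S2) and (S3), whereas under (S1) we have $T = t(p^*)$ and the maximality of $k$ in \Cref{alg:selection-k} gives $\budget^* < T\sum_{p \in \topp_{P'}(k+1)}|A(p)|$, i.e., $0 < L < \baseline(p^*)$.

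For statements~\ref{enum:first} and~\ref{enum:mid} I would first note that funding is immediate: the block is tied at ratio $\nicefrac{1}{T}$, fits within $\budget^*$ by the choice of $T$, and sits behind the already-funded projects, so all of its members are selected, while $p^*$ is excluded either because its ratio is strictly below $\nicefrac{1}{T}$ (under (S2)) or because under (S1) it ties the block but its cost $\baseline(p^*)$ exceeds the leftover. A downward deviation keeps a block project funded but lowers its payoff, so it is never profitable. The crux is the upward deviation: if a block project $p$ reports $c' > T|A(p)|$, its ratio drops below $\nicefrac{1}{T}$, so $\textAVover$ now processes the rest of the block, the earlier-funded projects, and the higher-ratio projects $p^*$ (and the immediately following prospective projects) before $p$. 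I would show that these higher-priority projects absorb the budget that $p$ frees: under (S2)/(S3) there is no leftover, so the budget available when $p$ is finally considered is at most its old cost $T|A(p)| < c'$; under (S1), $p^*$ ties the block and is now considered before the cheapened $p$, and since $L < \baseline(p^*)$ the freed budget together with the leftover is insufficient to both fund the intervening higher-priority projects and re-admit $p$ at a price exceeding $T|A(p)|$. The analogous accounting shows $p^*$ cannot profitably enter, as it would have to report a cost below $\baseline(p^*)$, yielding a negative payoff, establishing statement~\ref{enum:mid}.

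Statements~\ref{enum:second} and~\ref{enum:last} strengthen the above to profiles $\stratprof'$ in which the prospective projects report at least their current (delivery) costs. Here I would prove a monotonicity lemma: raising the reported cost of a prospective project only moves it later in the $\textAVover$ order, so it can only free budget for, or leave unchanged, the set of projects preceding the current block. I would then argue that the budget bound of the previous paragraph is already attained in the worst case analyzed there, so no upward deviation of a block project (or of $p^*$) becomes profitable under any such $\stratprof'$. Statement~\ref{enum:ultim} then closes the induction: the projects of $P \setminus P_\textrm{p}$ were fixed in earlier iterations, and the versions of statements~\ref{enum:second}--\ref{enum:last} proved \emph{in those iterations} guarantee that they have no profitable deviation for any profile in which the then-prospective projects report at least their delivery costs; the current $\stratprof$ is such a profile, because the algorithm only ever raises costs above delivery costs, so the guarantee transfers verbatim.

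The main obstacle is the upward-deviation analysis behind statements~\ref{enum:first} and~\ref{enum:second}. The budget bookkeeping has to be carried out separately for the three stopping conditions, and it is exactly the double inequality defining $k$ in \Cref{alg:selection-k}, the cap $T \leq t(p^*)$ of \Cref{alg:computing-t}, and the too-expensive filter in \Cref{alg:too-exp} that make the freed budget provably too small to re-admit a deviating block project at a strictly higher price: under (S1) the leftover is below $\baseline(p^*)$ and $p^*$ (or the next prospective projects) seizes priority over the now-cheaper-per-approval $p$, while under (S2)/(S3) the budget is already exhausted. Showing that this quantitative bound is uniform across the three cases and, crucially, survives the cost increases required by statements~\ref{enum:second}--\ref{enum:last}, is the technically delicate heart of the argument.
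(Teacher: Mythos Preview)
Your plan is essentially the paper's own argument: induction over the iterations of the while loop, case split on the three stopping conditions (S1)--(S3), the same upward-deviation bookkeeping for \Cref{enum:first} (the block is tied at rate~$T$, and under (S2)/(S3) the budget is exhausted while under (S1) the leftover $L<\baseline(p^*)$ is absorbed once $p^*$ gains priority), and the same ``prospective projects are considered after the fixed ones'' observation to lift \Cref{enum:first,enum:mid} to \Cref{enum:second,enum:last}. The paper phrases \Cref{enum:second,enum:last} as a direct order-preservation argument rather than a separate monotonicity lemma, but the content is identical.

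One small omission: in \Cref{enum:ultim} you only account for the projects frozen in \emph{earlier} iterations, invoking the prior-iteration versions of \Cref{enum:second,enum:last}. But $P\setminus P_\textrm{p}$ after the current iteration also contains the projects just removed by \Cref{alg:too-exp} (those whose delivery cost exceeds the updated $\budget^*$). The paper treats these explicitly: they sit after $\topp_{P'}(k)$ in $\succ$, so by the time $\textAVover$ reaches them only $\budget^*$ remains, which is below their delivery cost; hence no profitable deviation. This is a routine addition to your plan, but it should be stated.
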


	Note that~\Cref{alg:NE-existence} ends when executing~\Cref{alg:too-exp}
	makes $P_\textrm{p}$ empty. So, by~\Cref{enum:ultim}
	of~\Cref{claim:central-alg} the algorithm returns profile~$\stratprof$ which
	is a Nash equilibirum. Clearly, in each iteration~$k>0$, so at least one
	element is removed from~$P_\textrm{p}$ in every iteration of the loop.
	Consequently, the algorithm always ends and thus it is correct. To conclude
	the whole proof it remains to show that~\Cref{claim:central-alg} indeed holds.

	\begin{proof}[Proof of~\Cref{claim:central-alg}]
	We provide an inductive argument over the iterations of the while
	loop~of~\Cref{alg:NE-existence}. For the sake of the argument's simplicity,
	instead of thinking of~$\textAVover$ considering projects in nonincreasing
	order of approval-to-cost ratio, we say that it considers projects in
	nondecreasing order of cost-to-approval ratio. We note that these two
	interpretations are equivalent.
	We take the first iteration of the loop as the base case and subsequently show
	that~\Cref{enum:first,enum:mid,enum:second,enum:last,enum:ultim} hold.

	\emph{\Cref{enum:first}.}
	We first show that each $p \in \topp_{P'}(k)$ is funded in
	election~$E(\stratprof{})$. Let us denote as $t(p)$ the ratio
	$\nicefrac{\stratprof(p)}{|A(p)|}$. Due to~\Cref{alg:computing-t} and the
	following
	foreach loop, we know that every project in~$\topp_{P'}(k)$ is tied to be
	considered by~$\textAVover$ due to the same cost-to-approval ratio~$T$. So,
	if~$k = |P'|$, then all projects in $P_\textrm{p} = P$ are
	tied for consideration. Since, by~\Cref{alg:computing-t} all project costs sum
	up to~$\budget^* = \budget$, the claim holds.
	We follow assuming that~$k<|P'|$. In this
	case, \Cref{alg:computing-t} shows that $p^*$ has cost-to-approval ratio~$t(p^*)
	\geq T$. Recall that~$\succ$ orders projects nondecreasingly with respect to
	their cost-to-approval ratio assuming~$\stratprof_0$ and that $\stratprof$
	differs from~$\stratprof_0$ only in strategies of projects in~$\topp_{P'}(k)$.
	Hence, $p^*$~precedes in~$\stratprof$ each project except of those
	in~$\topp_{P'}(k)$ and all projects in~$\topp_{P'}(k)$ are considered
	before~$p^*$ (the latter holds even in the case of~$t(p^*) = T$ due
	to~$\succ$). Eventually, \Cref{alg:computing-t}~guarantees that the cost of
	all projects in~$\topp_{P'}(k)$ does not exceed the budget.
	Knowing that, we show that no project in~$\topp_{P'}(k)$ has a profitable
	deviation in~$\stratprof$. To prove it by contradiction, let us
	assume that some player~$p_i \in \topp_{P'}(k)$ has a profitable deviation
	by reporting cost~$c_i' > c_i$; the opposite case of~$c_i' < c_i$
	trivially does not yield a payoff improvement for~$p_i$. As a result, $p_i$
	has to be funded in the corresponding election~$E((\stratprof{}_{-i}, c_i'))$.
	In this election, $p_i$ has cost-to-approval ratio~$t' =
	\nicefrac{c_i'}{|A(p_i)|} > T$. We further split our analysis into the three
	cases from~\Cref{eq:stop-one,eq:stop-two,eq:stop-three}.

	\noindent
	Assuming~\Cref{eq:stop-one}, project~$p^*$ is also funded in
	election~$E'$, as it is considered before~$p_i$ due to $t' > T$.  Hence, the
	funded projects cost at least 
	\begin{align*}
	T \cdot \sum_{\substack{p_j \in	\topp_{P'}(k) \\ p_j \neq p_i}} |A(p_j)| +
	t'|A(p_i)| + T|A(p^*)| > \\
 T \cdot \sum_{p_j \in	\topp_{P'}(k+1)} |A(p_j)|.
 \end{align*}
	However, due to~\Cref{alg:selection-k}, we know that $\budget^* = \budget < T
	\sum_{p_j \in \topp_{P'}(k+1)} |A(p_j)|$, which gives the contradiction with the
	fact that~$p_i$ is funded.

  \noindent
	Suppose~\Cref{eq:stop-two} or~\Cref{eq:stop-three} hold. Then, given our
	assumption that~$p_i$ is funded, the total cost of funded projects is 
 \begin{align*}
	T \cdot \sum_{\substack{p_j \in	\topp_{P'}(k)\\ p_j \neq p_i}} |A(p_j)| +
	t'|A(p_i)| > \\ T \sum_{p_j \in \topp_{P'}(k)} |A(p_j)| = B^* =  B,
\end{align*}
	Here, the equality is due to the assumption of~\Cref{eq:stop-two}
	or~\Cref{eq:stop-three}; which yields the sought contradiction.

	\emph{\Cref{enum:mid}.} If~$k = |P'|$, then the statement
	trivially holds. Otherwise, let us consider $p^*$ in the light
	of~\Cref{eq:stop-one}. Due to~\Cref{alg:selection-k}, we have that 

 \begin{align*}
	\budget < T \sum_{p_j
	\in \topp_{P'}(k+1)} |A(p_j)| = \\
 T \sum_{p_j \in \topp_{P'}(k)} |(A(p_j))| + \baseline(p^*).
\end{align*}
	Hence, since all projects in~$\topp_{P'}(k)$ are funded, $p^*$ is not funded.
	Naturally, if $p^*$ reports a cost bigger
	than~$\baseline(p^*)$ it will not be funded even more, whereas reporting a
	cost lower than~$\baseline(p^*)$ results in a worse payoff, which finishes the
	argument for, for~\Cref{eq:stop-one}. By the condition
	of~\Cref{eq:stop-two}, it immediately holds
	that all projects~$\topp_{P'}(k)$ use up the whole budget.
	As a result, again, $p^*$ is not funded and, analogously to the case
	of~\Cref{eq:stop-one}, $p^*$ has no profitable deviation. We already observed
	that~\Cref{enum:mid} trivially holds for~$k = |P'|$, which
	subsumes~\Cref{eq:stop-three}.

	\emph{\Cref{enum:last,enum:second}.}
	Observe that assuming strategy~$\stratprof$, all projects in $P_\textrm{p}$
	are considered after all projects outside of~$P_\textrm{p}$ in
	election~$E(\stratprof)$. This is because order~$\succ$
	is compliant with the order of considering the projects by~\textAVover{} and
	our modifications
	to the initial strategy profile~$\stratprof$ do not change the order of
	considering projects (note
	that we let modified projects be tied for consideration with cost-to-approval
	ratio~$T$).
	Furthermore, note that~\Cref{alg:NE-existence} never decreases the reported
	costs of the projects and never considers the same project twice, so the
	projects in~$P_\textrm{p}$ will
	never be considered before the other ones as a result of further modifications
	of~$\stratprof$.
	So, no modification of the
	profile~$\stratprof{}$ that increases the costs for projects in~$P_\textrm{p}$
	can influence the decision made for projects in~$\topp_{P'}(k)$
	or~$\topp_{P'}(k+1)$ (depending on whether~$k<|P'|$), as the latter are
	considered by~\textAVover{} earlier than projects in~$P_\textrm{p}$.

	\emph{\Cref{enum:ultim}.}
	Note that $P \setminus P_\textrm{p}$ consists only of the following projects:
	(i) those removed in the foreach loop, that is, $\topp_{P'}(k)$; (ii) $p^*$ if
	$k < |P'|$; and (iii) those removed in~\Cref{alg:too-exp}. Regarding the first
	two groups, we have already shown that they do not have a profitable deviation
	for~$\stratprof$. Let $\hat{p}$ be $\topp_{P'}(k)$ if $k < |P'|$ or,
	otherwise, let $\hat{p}$ be $p^*$. The last group~$Y$ consists of these
	projects~$p \in P'$, for which $\hat{p} \succ p$ and whose~$\baseline(p) >
	\budget^*$. Hence, by~$\succ$, these projects are considered after all
	projects in~$\topp_{P'}(k) \cup \{\hat{p}\}$. However, projects in~$\topp_{P'}(k)$
	are selected to be funded, which leaves exactly~$\budget^*$ remaining budget.
	So there is not enough budget left to fund any project in~$Y$, even if it
	reports its delivery cost. Hence, no project in~$Y$ has a profitable
	deviation.

	Thus, we have established the base case and we move on to the induction step.
	Consider $x>1$ and the $x$-th iteration of the while loop, assuming that the
	base case claims are met for the $(x-1)$-th iteration.  Due to the assumptions
	for the $(x-1)$-th iteration together with the fact that our algorithm never
	changes the order~$\succ$ in which~\textAVover{} considers projects and that
	it never considers the same project twice, we can ignore all projects
	processed in previous $(x-1)$ iterations of the while loop.  Clearly, the
	ignored projects will have no impact on the current loop iteration except for decreasing
	the value of the variable~$B^*$ representing the remaining budget.
	Consequently, the arguments
	for~\Cref{enum:first,enum:mid,enum:second,enum:last} carry on without changes
	for the $x$th iteration of the while loop. 
	The claim from~\Cref{enum:ultim}, however, needs more attention.
	Let~$Q^{(x-1)} = P \setminus P_\textrm{p}^{(x-1)}$ be the set of players
	without profitable deviations after the $(x-1)$-th iteration of the loop. Note
	that~$P' = P_\textrm{p}^{(x-1)}$. We denote by~$R$ the set of projects, that
	are removed from the initial state of~$P_\textrm{p}$ by
	\Cref{alg:updating-Pp,alg:too-exp}. That is, $R$ consists of all these
	projecst
	that---due to~\Cref{enum:first,enum:mid,enum:second,enum:last} and the
	argument for~\Cref{enum:ultim} in the base case---have no profitable deviation
	in~$\stratprof$. So, we have that~$P_\textrm{p} = P'
	\setminus R$. We now consider the set~$Q
	= P \setminus P_\textrm{p}$ of the projects for which we need to show that
	they have no profitable deviation. Putting the bits
	together, we observe the following:
	\begin{align*}
	Q = &P \setminus P_\textrm{p} = P \setminus (P' \setminus R) = \\
     &P \setminus
	(P_\textrm{p}^{(x-1)} \setminus R) = (P \cap R) \cup (P \setminus
	P_\textrm{p}^{(x-1)}).
	\end{align*}
	Since~$P \cap R = R$, we obtained that $Q = R \cup Q^{(x-1)}$ thus showing
	that the $x$-th iteration extends the set of project that do not have a
	profitable deviation with a collection of project that have no profitable
	deviation either. As a result, we proved~\Cref{enum:ultim} for the $x$-th
	iteration.%
  \renewcommand{\qed}{\hfill$\blacksquare$}%
  \end{proof}%
  \renewcommand{\qed}{\hfill$\square$}%
	Having proven~\Cref{claim:central-alg}, we completed the argument for the
	correctness of~\Cref{alg:NE-existence} and~\Cref{thm:avcost-ne-exists}.
\end{proof}
}

We conclude our analysis of \textAVover{} by noting that, as
\Cref{ex:noSeqPhrag} shows, even for plurality preferences the costs of projects
selected in an equilibrium can be far from using the entire budget. In fact,
they can be arbitrarily far from it.

\begin{proposition}\label{result:phragsmallcost}
	For each integer $\gamma >1$ there is a PB game %
	with plurality preferences and $\textAVover$-NE profile~$\stratprof$ with
	the total cost of projects funded under $\textAVover$ at most
	$\nicefrac{\budget}{\gamma}$.
\end{proposition}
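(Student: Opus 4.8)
The plan is to generalize the single blocking gadget that drives \Cref{ex:noSeqPhrag}, tuning its parameters so that the unique funded project is pinned to report a cost of exactly $\nicefrac{\budget}{\gamma}$. Concretely, I would take a plurality game on two projects $p_1, p_2$ with $|A(p_1)| = 1$ and $|A(p_2)| = \gamma$ (say, one voter approving $p_1$ and $\gamma$ voters approving $p_2$), an arbitrary budget $\budget$, delivery costs $\baseline(p_1) = 0$ and $\baseline(p_2) = \budget$, and tie-breaking $p_1 \succ p_2$. I then claim that the profile $\stratprof$ with $\stratprof(p_1) = \nicefrac{\budget}{\gamma}$ and $\stratprof(p_2) = \budget$ is an $\textAVover$-NE in which only $p_1$ is funded, so that the total cost of funded projects equals $\nicefrac{\budget}{\gamma}$, matching the bound with equality.

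First I would compute the approval-to-cost ratios under $\stratprof$: both equal $\nicefrac{\gamma}{\budget}$, so $p_1$ and $p_2$ are tied, and by $p_1 \succ p_2$ the rule $\textAVover$ considers $p_1$ first, funds it at $\nicefrac{\budget}{\gamma}$, and is then left with $\budget - \nicefrac{\budget}{\gamma} < \budget = \stratprof(p_2)$, so $p_2$ does not fit. This fixes the funded set as $\{p_1\}$ and its total cost as $\nicefrac{\budget}{\gamma}$.

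The core of the argument is verifying that no project has a profitable deviation. For $p_2$ the key observation---and the reason I set $\baseline(p_2) = \budget$---is that any funded cost is at most $\budget$, so $p_2$ can never be funded strictly above its delivery cost and hence can never obtain positive utility; its payoff of $0$ is already optimal for every report. For $p_1$ I would argue in two directions: lowering its cost keeps it considered first and funded but strictly lowers its payoff, whereas raising its cost above $\nicefrac{\budget}{\gamma}$ makes its ratio drop below $\nicefrac{\gamma}{\budget}$, so that $p_2$ is now considered first, is funded at cost $\budget$, exhausts the entire budget, and leaves $p_1$ unfunded with payoff $0$. Hence $\nicefrac{\budget}{\gamma}$ is exactly $p_1$'s best response against $\stratprof(p_2) = \budget$, and $\stratprof$ is an equilibrium.

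I expect the only delicate point to be the boundary behaviour at the tie $\nicefrac{|A(p_1)|}{\stratprof(p_1)} = \nicefrac{|A(p_2)|}{\stratprof(p_2)}$: the construction relies on $p_1$ winning this tie (so it is funded at the threshold cost) while any upward deviation by $p_1$ strictly loses the tie to the full-budget project $p_2$. Making the threshold and the blocking simultaneously tight is precisely what forces the choices $|A(p_2)| = \gamma$ and $\baseline(p_2) = \budget$, which together cap $p_1$'s attainable cost at $\nicefrac{\budget}{\gamma}$ and yield the claimed bound.
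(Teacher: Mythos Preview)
Your construction is correct and slightly cleaner than the paper's. Both proofs use the same blocking idea: a project with high delivery cost and many approvers threatens the cheap project(s), pinning their reported costs at the tie point. The paper instantiates this with three projects---two single-voter projects $p_1,p_2$ with zero delivery cost reporting $\nicefrac{1}{2\gamma}$ each, and a blocker $p_3$ with $|A(p_3)|=20\gamma-1$ and $\baseline(p_3)=\budget-\nicefrac{1}{2\gamma}$---so that the funded total is $\nicefrac{1}{\gamma}$, strictly below $\nicefrac{\budget}{\gamma}$. You collapse the two cheap projects into one and push the blocker's delivery cost all the way to $\budget$, which simplifies the parameter tuning and hits the bound $\nicefrac{\budget}{\gamma}$ with equality. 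The only thing your version gives up is that the paper's blocker has $\baseline(p_3)<\budget$ strictly, so its example survives even under the stricter convention $\baseline(p)<\budget$; your choice $\baseline(p_2)=\budget$ is admissible under the paper's stated assumption $\baseline(p)\le\budget$, but is a knife-edge case. If you wanted to avoid that, you could set $\baseline(p_2)=\budget-\varepsilon$ and $|A(p_2)|$ large enough that the tie point for $p_1$ still lands below $\nicefrac{\budget}{\gamma}$, recovering essentially the paper's three-parameter construction.
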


\appendixproof{Proposition}{result:phragsmallcost}{%
\begin{proof}

  Take a natural $\gamma > 1$ and consider a PB game
  $(P,V,\budget,\baseline)$ where $P= \{p_1, p_2, p_3 \}$,
  $\budget = 10$, $\baseline(p_1) = \baseline(p_2) = 0$, and
  $\baseline(p_3) = 10 - \frac{1}{2 \gamma}$.  The voters have
  plurality ballots such that $|A(p_1)|= |A(p_2)| = 1$ and
  $A(p_3) = 20 \gamma -1$.  We assume tie-breaking order
  $p_1 \succ p_2 \succ p_3$.
  We claim that strategy profile $\stratprof$ such that
  $\stratprof(p_1) = \stratprof(p_2) = \frac{1}{2 \gamma}$ and
  $\stratprof(p_3) = 10 - \frac{1}{2 \gamma}$ is a $\Phragmen$-NE.

  First, we see that if the projects reports costs as in $\stratprof$
  then $\Phragmen$ selects $p_1$ and $p_2$.  Indeed, we see that at
  time moment $\frac{1}{2\gamma}$ the voters supporting each of the
  projects have exactly as much money as is need to purchase them. Due
  to tie-breaking, the singleton voters supporting $p_1$ and $p_2$ buy
  these projects and, then, there is not enough budget left for $p_3$
  and the rule finishes.

  Second, we observe that no project can benefit by changing its
  strategy under $\stratprof$. Indeed, if either $p_1$ or $p_2$
  decreased their cost, they would obtain lower payoff, and if either
  of them increased their cost, $p_3$ would be funded instead and the
  payoff of the project that increased its cost would drop to zero.
  If $p_3$ decreased its cost then it would be selected, but its
  payoff would become negative (due to the delivery cost), and if it
  increased its cost then its payoff would remain zero.
  Consequently, $\stratprof$ is $\Phragmen$-NE.

  Finally, we have
  $\stratprof(p_1) + \stratprof(c_2) = \nicefrac{1}{\gamma}$, and as
  $B = 10$, this is less than~$\nicefrac{\budget}{\gamma}$. This
  concludes the proof.
\end{proof}%
}

\subsection{{\Phragmen}}

It is easy to verify that for plurality ballots $\textAVover$ and
$\Phragmen$ are identical, that is, for every PB instance with plurality
ballots (and the same internal tie-breaking) they output the same
projects.
Thus, the next corollary translates results from the previous section
to the case of $\Phragmen$.

\begin{corollary}\label{cor:exclusiveNo0}
  For each PB game %
  with plurality ballots, (a)~if $\baseline(p) \leq \appstrat(p)$ for
  all~$p \in P$, then $\appstrat$ is the
  only $\Phragmen$-NE for every tie-breaking order;
  (b) otherwise, for %
	an A/D
	tie-breaking
  order there is a~$\Phragmen$-NE (and there are PB games
  and tie-breaking orders with no~$\Phragmen$-NE).
\end{corollary}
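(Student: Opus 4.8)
The plan is to leverage the stated equivalence between $\textAVover$ and $\Phragmen$ on plurality profiles, which reduces the entire corollary to results already established for $\textAVover$. First I would make the equivalence explicit: for a plurality profile, every voter approves exactly one project, so the voter supporting $p$ accumulates money at rate one, and at time $\nicefrac{\cost(p)}{|A(p)|}$ the supporters of $p$ collectively hold exactly $\cost(p)$. Hence $\Phragmen$ purchases projects in nondecreasing order of $\nicefrac{\cost(p)}{|A(p)|}$ (subject to the remaining budget), which is precisely the nonincreasing order of the approval-to-cost ratio $\nicefrac{|A(p)|}{\cost(p)}$ that $\textAVover$ uses; ties are resolved identically by $\succ$ in both rules. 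Since the payoff of a project depends only on whether it is selected and on its reported cost, and the two rules select exactly the same set under every cost profile, the two games have identical best responses and therefore identical sets of Nash equilibria.

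With the equivalence in hand, part~(a) follows directly: under the hypothesis $\baseline(p) \leq \appstrat(p)$ for all $p$, \Cref{prop:GC-zerobase-NE} states that $\appstrat$ is the unique $\textAVover$-NE for every tie-breaking order, so it is the unique $\Phragmen$-NE for every tie-breaking order as well. For the existence half of part~(b), I would invoke \Cref{thm:avcost-ne-exists}, which guarantees that for every PB game and every A/D tie-breaking order there is an $\textAVover$-NE; transporting this through the equivalence yields a $\Phragmen$-NE for an A/D tie-breaking order. The nonexistence half of part~(b)\,---\,the claim that there exist plurality PB games and tie-breaking orders with no $\Phragmen$-NE\,---\,is witnessed by \Cref{ex:noSeqPhrag}, whose analysis shows that under the order $p_2 \succ p_1$ no $\textAVover$-NE exists; by the equivalence the same game has no $\Phragmen$-NE under that order.

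The one point requiring care, and the main obstacle, is making sure the equivalence is stated at the right level of generality, namely for \emph{all} cost profiles rather than just for the fixed profile under consideration. Nash-equilibrium reasoning compares a profile against all unilateral deviations $(\stratprof_{-i}, c')$, each of which is itself a plurality PB instance; so I must argue that $\textAVover$ and $\Phragmen$ coincide on every such instance, which the rate-based argument above indeed delivers since it holds for arbitrary reported costs. Once this is secured, the transfer of best responses, and hence of the equilibrium sets, is immediate and purely formal. I would therefore spend the bulk of the proof cleanly verifying the equivalence and then cite \Cref{prop:GC-zerobase-NE}, \Cref{thm:avcost-ne-exists}, and \Cref{ex:noSeqPhrag} for the three constituent claims of the corollary.
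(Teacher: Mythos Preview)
Your proposal is correct and matches the paper's approach exactly: the paper simply observes (in the sentence immediately preceding the corollary) that $\textAVover$ and $\Phragmen$ coincide on every plurality instance with the same tie-breaking, and then reads off the corollary from \Cref{prop:GC-zerobase-NE}, \Cref{thm:avcost-ne-exists}, and \Cref{ex:noSeqPhrag}. Your added care in spelling out the rate-based argument and in stressing that the equivalence must hold for \emph{all} cost profiles (so that best responses, not just outcomes at a single profile, transfer) is a welcome elaboration but not a departure from the paper.
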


Naturally, \Cref{result:phragsmallcost} also holds for~\Phragmen{}.
So, there are games with equilibria for which $\Phragmen$ funds
projects whose total cost is an arbitrarily
small fraction of the budget. 
Next, we focus
on more involved preference profiles.
On the positive side, for party-list ballots (and zero delivery
costs) we always have unique equilibria, for each
tie-breaking order.  Intuitively, %
each project $p$ reports cost proportional to the number of its
approvals, divided by the size of its party.

\begin{proposition}\label{result:phragparty}
  For every PB game %
  with party-list
  ballots and zero delivery costs, there is the unique $\PhragmenNE$
  $\stratprof$, where for every project $p$ we have
  $\stratprof(p) = B \cdot \frac{|A(p)|}{|V| \cdot |\party(p)|}$.
\end{proposition}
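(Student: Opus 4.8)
The plan is to prove both that the stated profile $\stratprof$ is a $\Phragmen$-NE and that it is the only one. Throughout I use the fact that under party-list ballots every project $p$ has the same approver set as all projects in $\party(p)$, so the voters of a party form an isolated group: only they pay for the party's projects, and whenever they buy one, all of their (equal) accounts are reset to zero. Hence, if a party has $|A(p)|$ voters and all of its projects cost $x$, the party buys one project every $\nicefrac{x}{|A(p)|}$ units of time and is finished at time $\nicefrac{|\party(p)|\cdot x}{|A(p)|}$. For the claimed profile this finishing time equals $\nicefrac{\budget}{|V|}$ for every party simultaneously, and the grand total cost is exactly $\budget$ (each party spends $\budget\cdot\nicefrac{|A(p)|}{|V|}$, and the party sizes sum to $|V|$).

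First I would verify that $\stratprof$ is an NE. Since the total cost is exactly $\budget$, the running cost never exceeds $\budget$, so $\Phragmen$ funds every project and each $p$ gets payoff $\stratprof(p)>0$. Lowering any cost keeps the project funded but shrinks its payoff. Raising $p$'s cost above $\stratprof(p)$ delays $p$ to a time strictly after $\nicefrac{\budget}{|V|}$ (its party's equally-priced projects are still bought first, and $p$ then needs extra accumulation), by which point every other project is funded and only $\stratprof(p)$ of budget remains; as the new cost exceeds this, the budget test fails and $p$ is dropped. So no deviation is profitable, and the argument is insensitive to tie-breaking because all purchase times are fixed by the reported costs.

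For uniqueness I would establish a chain of necessary conditions on any $\Phragmen$-NE $\stratprof'$. (i) \emph{Every project is funded}: an unfunded project could report an arbitrarily small cost, become ready almost immediately, and be bought while ample budget remains, gaining positive payoff. (ii) \emph{The total cost equals $\budget$}: otherwise the globally last-bought project could raise its cost by a small amount and still be bought within budget. (iii) \emph{Within each party all costs are equal}: if some project $q$ were strictly cheaper than its party's most expensive project $p$, then $q$ could raise its cost toward (but below) that of $p$; the party now overspends, but since purchases proceed cheapest-first and $q$ is still bought before $p$, it is $p$ that loses funding while $q$ stays funded at a higher cost. (iv) \emph{All parties finish at the same time}: if a party $\ell_0$ finished strictly before the latest-finishing party $\ell^*$, then the last project of $\ell_0$ could raise its cost by a small $\delta$ and still finish before $\ell^*$; the extra $\delta$ exhausts the budget exactly at $\ell^*$'s final (globally last) purchase, dropping that project while the deviator stays funded with payoff raised by $\delta$. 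Writing the common finishing time as $T$, condition (iv) forces each project of a party to cost $\nicefrac{T\cdot|A(p)|}{|\party(p)|}$; summing over parties gives total cost $T\cdot|V|$, which by (ii) equals $\budget$, so $T=\nicefrac{\budget}{|V|}$ and $\stratprof'=\stratprof$.

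The main obstacle is making the displacement arguments in (iii) and (iv) precise: one must track the running budget through the interleaved, reset-driven purchase schedule and confirm that the project losing funding is the intended one (the party-maximum in (iii), the last purchase of the latest party in (iv)) while the deviator stays funded. This requires choosing the deviation small enough that the purchase order and the identity of the displaced project are unchanged, and checking that the running cost stays at most $\budget$ until the single, now-rejected, final purchase. A secondary care point is that all these conditions, as well as the existence argument, must hold for \emph{every} tie-breaking order; this follows because purchase times are determined by reported costs and the constructed deviations always make the deviator strictly precede the displaced project, so ties never affect the outcome.
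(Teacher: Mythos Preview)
Your proof is correct and follows essentially the same approach as the paper: verify that $\stratprof$ funds every project and that no unilateral deviation helps, then for uniqueness establish that in any equilibrium every project is funded, the costs sum to $\budget$, costs within a party are equal, and all parties finish at the same time. The only notable difference is that you derive the first two uniqueness conditions directly, whereas the paper appeals to \Cref{pro:NE-uses-full-budget}; your version is slightly more self-contained, and your closing remarks correctly flag the one place (the displacement arguments in (iii)--(iv)) where a little care with $\epsilon$ is needed, though note that in (iii) you only need that the deviator $q$ stays funded, not that the displaced project is specifically the party-maximum~$p$.
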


\appendixproof{Proposition}{result:phragparty}{%
\begin{proof}
  Consider a PB game $G$ and the strategy profile $\stratprof$ as defined in the
  statement of the proposition.  We claim that $\stratprof$ is a
  $\seqPhrag$-NE. First, we observe that under $\stratprof$
  $\seqPhrag$ selects all the projects, so it is not beneficial for
  any of them to report a lower cost. On the other hand, if some
  project $p$ reports a higher cost, this project is not selected by
  $\seqPhrag$. To see why this is the case, consider some project $p$.
  Under $\stratprof$, the total cost of the projects in $\party(p)$ is
  $\nicefrac{B\cdot|A(p)|}{|V|}$. Since each of the $|A(p)|$ voters
  supporting these projects earns one unit of money per one unit of
  time, altogether they earn this money in time $\nicefrac{B}{|V|}$.
  This value is independent of $p$ so, under $\stratprof$, the last
  project of each party is funded at the same time. Hence, if some
  project increased its price, it would be considered even later and,
  by that time, there would not be enough budget left. Hence, it is
  never beneficial to increase a cost and so, $\stratprof$ is a $\seqPhrag$-NE.

  It remains to show that $\stratprof$ is the unique $\PhragmenNE$ for
  our game. To this end, let $\stratprof'$ be some arbitrary
  equilibrium for~$G$.  By \Cref{pro:NE-uses-full-budget}, we know
  that under $\stratprof'$ the reported costs of all projects sum up
  to $\budget$ and all projects are funded. Next, we observe that for
  each project $p$, all projects from $\party(p)$ report the same
  cost. Indeed, if there were two projects, $p'$ and
  $p'' \in \party(p)$, such that
  $\stratprof'(p') < \stratprof'(p'')$, then it would be beneficial
  for $p'$ to report a higher cost (but below $\stratprof'(p'')$), so
  that $\seqPhrag$ would still consider and fund it prior to $p''$
  (for which, then, there would not be enough budget left).

  Finally, if there are two projects, $p$ and $q$, such that
  $\party(p) \neq \party(q)$, then, under $\stratprof'$, the last
  project from $\party(p)$ and the last project from $\party(q)$ are
  selected by $\seqPhrag$ at the same time. Indeed, if this were not
  the case, then it would be beneficial for the one selected earlier
  to report higher cost (but so that it still is selected at an
  earlier time than the other project).

  Altogether, the only strategy profile that satisfies the properties
  described above is $\stratprof$ as defined in the statement of the~proposition.
\end{proof}%
}

We conjecture that one can adapt the algorithm
behind~\Cref{thm:avcost-ne-exists} to compute in polynomial-time a tie-breaking
order and a corresponding~\PhragmenNE{} for party-list profiles and arbitrary
delivery costs. The reported cost for each party should be spread equally to all
members (as in~\Cref{result:phragparty}). Yet, so far, the respective
correctness proof remains elusive.

Unfortunately, party-list preferences with zero delivery costs are
where the good news end.  Indeed, there are instances with no
equilibria, even for very few candidates and
voters. %

\begin{proposition}\label{pro:phrag-laminar-no-ne}
  There is a PB game %
  with six projects and six voters
  and zero delivery costs, for which there is no \PhragmenNE{}
  irrespective of the %
  tie-breaking order.
\end{proposition}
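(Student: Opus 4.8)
The plan is to exhibit one concrete six-voter, six-project instance with zero delivery costs whose supports \emph{overlap} in a deliberately unbalanced way, and then rule out equilibria by a purchase-timing argument. The guiding intuition comes from \Cref{ex:noSeqPhrag}: whenever two projects compete in \Phragmen{}, a project often wants to raise its cost just enough to still be purchased slightly \emph{before} a rival that shares some of its voters, and there is no optimal such cost (the feasible set is open), so exactly one of the two relative orderings is unstable. A single such pair can be ``rescued'' by an appropriate tie-breaking order (as in \Cref{ex:noSeqPhrag}, where the other order admits the NE $(6,6)$). The key idea for a \emph{robust} non-existence result is therefore to arrange the overlaps so that these undercutting tensions form a \emph{cycle} among several mutually overlapping groups: for any total tie-breaking order over the projects, at least one link of the cycle is realized in its unstable orientation, so some project always has a profitable deviation.

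First I would record the basic structural facts that any \PhragmenNE{} $\stratprof$ of the instance must satisfy. I would prove a \Phragmen{}-analogue of \Cref{pro:NE-uses-full-budget}: with zero delivery costs, no NE can leave budget unspent, since the project purchased last has slack (leftover budget and no subsequent purchase draining its supporters) and could raise its reported cost, and, more generally, every \emph{funded} project is purchased at the latest virtual time consistent with it still being bought. This ``tightness'' characterization pins each funded project's cost to the purchase time of whichever rival would otherwise claim its shared supporters' money, and it reduces the continuum of profiles to finitely many \emph{funding patterns} (which subset of projects is selected, and in which relative order the shared voters are drained).

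Next, for each funding pattern I would derive the contradiction. Using the money-per-unit-time accounting of \Phragmen{}, the tight purchase-time conditions over-determine the costs: because the chosen supports share voters cyclically, the times at which the groups must be ``just affordable'' cannot be made mutually consistent, so some funded project still has room to raise its cost, or some excluded project can profitably enter by reporting a small cost just below the level at which its supporters' accounts get drained. The crucial point is that each such deviation is witnessed by a \emph{strict} inequality on purchase times and does not rely on winning any tie; hence the same deviation survives under every tie-breaking order. I would organize the case analysis so that the finitely many funding patterns are grouped by which cycle link is active, and in each group the active link supplies the deviation.

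The hard part is the joint quantification: ruling out a \emph{continuum} of strategy profiles \emph{and} all tie-breaking orders at once, including ``defensive'' profiles in which some project is deliberately priced out to stabilize the rest. The tightness characterization is what tames the continuum, but the genuine obstacle is designing the six supports so that (i) the instance is asymmetric enough that no tie-breaking order can simultaneously orient every cycle link stably, while (ii) the deviations remain strict and available in \emph{every} funding pattern, including those with fewer funded projects. Concretely, I expect to spend most of the effort verifying that the numerical choices of \budget{} and the overlap sizes make the tightness equations inconsistent for all patterns, and confirming that the ``undercut'' deviations stay within the remaining budget; this is the step where a symmetric or nested (purely laminar, non-cyclic) choice of supports would instead \emph{admit} an NE, so the construction must break that symmetry carefully.
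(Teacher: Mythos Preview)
Your plan has a concrete misconception that would send you in the wrong direction. You assert that ``a symmetric or nested (purely laminar, non-cyclic) choice of supports would instead \emph{admit} an NE,'' and you therefore aim for a cyclic, asymmetric overlap pattern. In fact the paper's instance is both symmetric and laminar: it consists of two disjoint, identical blocks, each with three voters $v_1',v_2',v_3'$ and three projects, where $p_1'$ is approved by $v_1'$ alone, $p_2'$ by $v_2'$ alone, and $p_3'$ by all of $v_1',v_2',v_3'$ (and the primed block is duplicated as a double-primed block on disjoint voters). No cycle of overlaps is present; the non-existence is driven by the \emph{interaction between the two disjoint copies}, not by cyclic sharing within one.

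The paper's argument is also rather different in shape from your ``tightness equations over funding patterns'' outline. It first shows that in any \PhragmenNE{} the purchase times satisfy $\phtime(p_1')=\phtime(p_2')=\phtime(p_3')$ within each block (else the earlier one could raise its cost), and then that the two blocks' common times must coincide (else the faster block's $p_3$-project can raise its cost, exploiting the slower block's delay). Once all six times are equal, a short case split on the tie-breaking \emph{within one block} finishes it: if $p_1'$ (or $p_2'$) beats $p_3'$ in the tie-break, then after $p_1'$ is bought the supporters of $p_3'$ need extra time, during which $p_2'$'s lone voter keeps earning, so $p_2'$ can profitably raise its cost; if $p_3'$ beats both, then $p_3'$ itself can raise its cost because the double-primed block's remaining purchases still take time. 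The deviations are strict in every branch, so the tie-breaking order is irrelevant.

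So the gap is twofold: you have no concrete instance, and the heuristic you are using to search for one (``avoid laminar, introduce a cycle'') is exactly backwards for this proposition. A cyclic construction in the spirit you describe may well also work---indeed the paper's \MesApr{} counterexample in \Cref{pro:mesAprNoNEGeneralRegardlessOfTB} is a four-project cycle---but for \Phragmen{} the laminar two-copy instance is both simpler and the one the paper actually uses.
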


\begin{proof}[Proof Sketch]
  Let our project set be $P = P' \cup P''$, where
  $P' = \{p'_1,p'_2,p'_3\}$ and $P'' =
  \{p''_1,p''_2,p''_3\}$. Similarly, the set of voters is
  $V = V' \cup V''$, where $V' = \{v_{1}',v_{2}',v_{3}'\}$ and
  $V'' = \{v''_1, v''_2,v''_3\}$.  Project $p'_1$ is approved by
  $v'_1$, $p'_2$ by $v'_2$, and $p'_3$ by all
  voters in $V'$. The approvals for projects in $P''$ are analogous,
  but come from voters in $V''$ (hence our instance can be illustrated
  by two copies on the right-hand side picture from
  \Cref{fig:typesex} with $v_2$ and $v_3$ swapped).

  We first
  show that in all equilibria all projects report such costs that at
  the first time some voters can make a purchase, there is a %
  tie between all of them. If $p'_1 \succ p'_3$, then $p'_2$ benefits by reporting a  higher cost
  (as $\Phragmen$ funds $p'_1$ and the supporters of $p'_3$
  need time to collect money for it; meanwhile, $v'_2$
  also gets more money  %
  to spend on $p'_2$). The
  cases where $p'_2 \succ p'_3$, or either $p''_1$ or
  $p''_2$ is preferred to $p''_3$ are symmetric.
  If $p'_3$ is preferred to  other members of $P'$ and $p''_3$ %
  to other members of $P''$, then $p'_3$
  (symmetrically, $p''_3$) prefers to increase the price, as after $p'_1$, $p'_2$, and $p''_3$ are simultaneously funded, the time needed by $v''_1$ and $v''_2$ to collect money
  for $p''_1$ and $p''_2$ suffices for voters in $V'$ to collect more
  than enough money for $p'_3$. The full proof is in the appendix.
\end{proof}

\appendixproof{Proposition}{pro:phrag-laminar-no-ne}{%
\begin{proof}
  Let our project set be $P = P' \cup P''$, where
  $P' = \{p'_1,p'_2,p'_3\}$ and $P'' =
  \{p''_1,p''_2,p''_3\}$. Similarly, the set of voters is
  $V = V' \cup V''$, where $V' = \{v_{1'},v_{2'},v_{3'}\}$ and
  $V'' = \{v_{1''},v_{2''},v_{3''}\}$.  The approvals are as follows (see also
  \Cref{fig:phrag-no-ne} for illustration):
  \begin{enumerate}
  \item $p'_1$ is approved by $v_{1'}$, $p'_2$ is approved by $v_{2'}$, and
    $p'_3$ is approved by all the voters in $V'$.
  \item The approvals for the projects in $P''$ are analogous, except
    that they are approved by the voters from $V''$.
  \end{enumerate} 
  We set the delivery costs $\baseline$ to be zero for every project, and
  we set the budget $\budget$ to be $1$ (the exact value will be
  irrelevant as we will operate on times when $\Phragmen$ reaches the
  costs of particular projects rather than on directly on these
  costs). We claim that under $\Phragmen$ there are no Nash equilibria
  for the thus defined PB game $G = (P,V,\budget,d)$.

  \begin{figure}[t]\centering
  \newcommand{\robotface}[2]{
    \draw (#1+0,0) rectangle (#1+3,1);
    \draw[dashed] (#1+1,0) rectangle (#1+2,1);
    \draw (#1+0,1.5) rectangle (#1+1,2.5);
    \draw (#1+2,1.5) rectangle (#1+3,2.5);
    \draw[dotted] (#1+0.5,-0.2) -- (#1+0.5,2.7);
    \draw[dotted] (#1+1.5,-0.2) -- (#1+1.5,1.2);
    \draw[dotted] (#1+2.5,-0.2) -- (#1+2.5,2.7);
    \node[anchor=north] at (#1+0.5,-0.2) {$v_1#2$};
    \node[anchor=north] at (#1+1.5,-0.2) {$v_3#2$};
    \node[anchor=north] at (#1+2.5,-0.2) {$v_2#2$};
    \node[anchor=east] at (#1+0.2,0.5) {$p#2_3$};
    \node[anchor=east] at (#1+0.2,2) {$p#2_1$};
    \node[anchor=east] at (#1+2.2,2) {$p#2_2$};
  }
  
  \begin{tikzpicture}[scale=0.55]
    \robotface{0}{'}
    \robotface{5}{''}
  \end{tikzpicture}
  \caption{Illustration of the PB game from the proof of
    \Cref{pro:phrag-laminar-no-ne}. The projects are depicted as
    boxes. Each voter approves those projects that are drawn directly
    above him or her (and are crossed by the dotted line).}\label{fig:phrag-no-ne}
  
\end{figure}

  For the sake of contradiction, let us assume that there is
  $\PhragmenNE$ for $G$ and let it be $\stratprof^*$. For each project
  $p \in P$, let $\phtime(p)$ be the time moment (in the sense of the
  $\Phragmen$ rule) when the voters approving $p$ would collect
  exactly $\stratprof^*(p)$ amount of money (assuming that neither of
  these voters spends it on any other projects in between; hence
  $\phtime(p) = \nicefrac{\stratprof^*(p)}{|A(p)|}$). We make the
  following observations:
  \begin{enumerate}
  \item All projects in $P$ are funded for the reported costs
    $\stratprof^*$ (if some project were not funded, it would prefer
    to report a small nonzero cost that would be at most equal to
    $\budget$ and that would ensure that $\Phragmen$ considers it
    first).
  \item It must be the case that $\phtime(p'_1) = \phtime(p'_2)$.
    Indeed, if we had $\phtime(p'_1) < \phtime(p'_2)$ then it would be
    beneficial for $p'_1$ to slightly increase its cost, but so that
    $\phtime(p'_1)$ would still be below $p'_2$. Then, irrespective in
    what order are the other projects funded, $p'_1$'s voter would
    collect enough money to purchase $p'_1$ before the $p'_2$'s voter
    would, and $p'_1$ would be bought (since $p'_2$ would not be
    selected at this time yet, there would be enough budget left for
    this). This would contradict that $\stratprof^*$ is an
    equilibrium. The case $\phtime(p'_2) < \phtime(p'_1)$ is
    symmetric.

  \item It must be the case that $\phtime(p'_3) = \phtime(p'_1)$ and,
    hence, also equal to $\phtime(p'_2)$. Indeed, if we had
    $\phtime(p'_3) < \phtime(p'_1) = \phtime(p'_2)$ then it would be
    beneficial for $p'_3$ to report slightly higher cost, so that
    $\phtime(p'_3)$ would still be smaller than
    $\phtime(p'_1) = \phtime(p'_2)$, yet $p'_3$'s cost would not have
    increased by more than the total costs of $p'_1$ and $p'_2$. Then
    $p'_3$ would still be selected before $p'_1$ and $p'_2$ and there
    would be sufficient amound of budget left for it. This would
    contradict that $\stratprof^*$ is an equilibrium.  On the other
    hand, if $\phtime(p'_3) > \phtime(p'_1) = \phtime(p'_2)$ then it
    would be beneficial, e.g., for $p'_1$ to report slightly higher
    cost, but ensuring that $\phtime(p'_1) < \phtime(p'_3)$. Indeed,
    if originally we have $\phtime(p'_1) < \phtime(p'_3)$, then $p'_1$
    is funded before $p'_3$ by $\Phragmen$. After the increase, $p'_1$
    would still be purchased before $p'_3$ and, thus, there would
    still be sufficient budget for it. This would contradict that
    $\stratprof^*$ is an equilibrium.
  \item By a reasoning analogous to the one given above, it must be the
    case that $\phtime(p''_1) = \phtime(p''_2) = \phtime(p''_3)$.
  \end{enumerate}
  Given the above observations, we set:
  \begin{align*}    
   \phtime'  &= \phtime(p'_1) = \phtime(p'_2) = \phtime(p'_3), \text{ and} \\
   \phtime'' &= \phtime(p''_1) = \phtime(p''_2) = \phtime(p''_3).
  \end{align*}
  Note that this implies that
  $\stratprof^*(p'_1) = \stratprof^*(p'_2) = \phtime'$,
  $\stratprof^*(p''_1) = \stratprof^*(p''_2) = \phtime''$,
  $\stratprof^*(p'_3) = 3\cdot \phtime'$, and
  $\stratprof^*(p''_3) = 3\cdot \phtime''$.
  We claim that $\phtime' = \phtime''$. Indeed, let us consider what
  happens if $\phtime' < \phtime''$ (the case where
  $\phtime'' < \phtime'$ is symmetric). There are two cases two
  consider (the second case also splits into two).

  \paragraph{Case~A}
  First, let us assume that at least one of $p'_1$ and $p'_2$ is
  preferred to $p'_3$ by the tie-breaking order (for the sake of
  specificity, let this be $p'_1$). Then it is beneficial for $p'_2$
  to slightly increase its cost. To see why this is the case, let us
  consider how $\Phragmen$ operates after this increase. At time
  $\phtime(p'_1) = \phtime(p'_3)$, the voters have enough funds to
  purchase either $p'_1$ or $p'_3$ (and not enough to purchase $p'_2$,
  who increased its cost). The rule selects $p'_1$ due to
  tie-breaking. Consequently, voter $1'$ pays for $p'_1$ and his or
  her virtual bank account is reset to zero. Voters $2'$ and $3'$
  still have $\phtime'$ amount of money. The cost of $p'_3$ is
  $3 \cdot \phtime'$, so voters in $N'$ will have collected enough
  money for it after further $\nicefrac{1}{3} \phtime'$ amount of
  time. However, if $p'_2$ increased its cost from $\phtime'$ to an
  amount a bit below $\nicefrac{4}{3} \cdot \phtime'$ then its voter
  will have collected this amount earlier, and $p'_2$ will be
  purchased before $p'_3$. This contradicts the fact that
  $\stratprof^*$ is an equilibrium.

  \paragraph{Case~B$'$}
  The second case is that $p'_3$ is preferred by tie-breaking to both
  $p'_1$ and $p'_2$. Then it is beneficial for $p'_3$ to slightly
  increase its cost. Again, let us consider how $\Phragmen$ operates
  after such a change. If $p''_3$ is selected prior to both $p''_1$
  and $p''_2$, then at time $\phtime''$ (after the purchase of
  $p''_3$) the voters have the following amounts of money:
  \begin{enumerate}
  \item Voter $3'$ has $\phtime''$ amount of money and the remaining
    voters in $N'$ have nonzero amounts of money (specifically, each
    of them has $\phtime''-\phtime'$, because they paid for $p'_1$ and
    $p'_2$ at time $\phtime'$).
  \item All the voters in $V''$ have empty bank accounts. 
  \end{enumerate}
  Hence, only after another $\phtime''$ amount of time will the voters
  in $N''$ have enough money to purchase $p''_1$ and $p''_2$. Yet, at
  this time the voters in $N'$ would, altogether, have more than
  $4 \phtime''$. So, if $p'_3$ increased its cost to be between
  $3\cdot \phtime'$ and $4\cdot\phtime'$ (which is smaller than
  $4\cdot \phtime''$), then $p'_3$ would be funded before $p''_1$ and
  $p''_2$, and there would be sufficient amount of budget left for
  this.

  \paragraph{Case~B$''$} On the other hand, if at least one of $p''_1$
  and $p''_2$ is preferred to $p''_3$ by tie-breaking at time
  $\phtime''$, then both $p''_1$ and $p''_2$ are funded at that time
  (because after one of them is selected due to tie-breaking, the
  voters in $V''$ no longer have enough money to purchase $p''_3$, but
  they do have enough for the other one among $p''_1$ and
  $p''_2$). Consequently, after $p''_1$ and $p''_2$ are purchased at
  time $\phtime''$, only projects $p'_3$ and $p''_3$ are not selected
  yet and the voters have the following amounts of money:
  \begin{enumerate}
  \item Voter $v_{3'}$ has $\phtime''$ amount of money and the remaining
    voters in $V'$ have nonzero amounts of money (specifically, each
    of them has $\phtime''-\phtime'$, because they paid for $p'_1$ and
    $p'_2$ at time $\phtime'$).
  \item Voter $v_{3''}$ has $\phtime''$ amount of money and the remaining
    voters in $V''$ have no money.
  \end{enumerate}
  Since voters in $V'$ have, together, more money than those in $V''$,
  but voters from both groups (jointly) earn money at the same rate
  (as $|V'| = |V''|$), if $p'_3$ increased its cost to be slightly
  below that of $p''_3$, it will be funded before $p'_3$.
  Consequently, if $\stratprof^*$ is an equilibrium then we must have
  $\phtime' = \phtime''$.\bigskip

  Finally, it suffices to show that the assumption
  $\phtime' = \phtime''$ also leads to no equilibrium.
  W.l.o.g., we
  assume that $p'_3$ precedes both $p'_1$ and $p'_2$ in the
  tie-breaking order, and that $p''_3$ precedes both $p''_1$ and
  $p''_2$ (if this were not the case, then the arguments given in
  Case~A above would still show that $\stratprof^*$ is not an
  equilibrium). However, now we can see that, e.g., it is beneficial
  for $p'_3$ to slightly increase its cost. This follows by the same
  reasoning as given in Case~B$'$. Hence, we have reached a
  contradiction. Consequently, under $\Phragmen$ there is no Nash equilibrium
  in our game.
\end{proof}%
}

\subsection{Method of Equal Shares}
The mechanics of~\Mes{} might appear similar to those of \Phragmen{}, as all these rules implement the %
approach in which voters
buy the projects they approve. However, contrary to~\Phragmen{}, in both types of \Mes{} voters have limited amounts of budget to spend. Hence, the
approval scores of projects give %
upper-bounds on the maximum costs that the projects might
report. That is, since each voter gets an equal share of the budget,
projects that cost more than the total budget of their supporters
would never be funded. %

\begin{observation}\label{obs:mesBound} For every PB game %
  and a strategy profile~\stratprof{}, if there is a
  project~$p_i \in P$ such that
  $\stratprof(p_i) > \budget \cdot \nicefrac{|A(i)|}{|V|}$, then
  $p_i$ is not funded under $\Mes$. %
\end{observation}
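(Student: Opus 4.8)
The plan is to bound the total amount of money that the supporters of $p_i$ can ever contribute and to show that this bound falls short of $\stratprof(p_i)$. The crucial structural fact I would exploit is that in both variants of $\Mes$ each voter starts with exactly $\nicefrac{\budget}{|V|}$ units of money and the rule only ever takes money away from voters (when they pay for a selected project), never adds any. Hence at every point of the execution the current budget $b_j$ of each voter $v_j$ satisfies $b_j \leq \nicefrac{\budget}{|V|}$, and therefore in every iteration $\sum_{v_j \in A(p_i)} b_j \leq |A(p_i)| \cdot \nicefrac{\budget}{|V|}$.

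First I would recall that for $p_i$ to be selected in some iteration it must have a finite affordability coefficient $\alpha_{p_i}$, i.e., its defining equation must admit a solution. For $\MesCost$ this is Equation~\eqref{eq:mes}, whose left-hand side $\sum_{v_j \in A(p_i)} \min(b_j, \alpha_{p_i} \cdot \cost(p_i))$ is nondecreasing in $\alpha_{p_i}$ and, as $\alpha_{p_i} \to \infty$, approaches $\sum_{v_j \in A(p_i)} b_j$; the same holds for the $\MesApr$ equation, whose left-hand side $\sum_{v_j \in A(p_i)} \min(b_j, \alpha_{p_i})$ likewise has supremum $\sum_{v_j \in A(p_i)} b_j$. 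Thus in either variant the largest value the left-hand side can reach over all $\alpha_{p_i}$ is exactly the total current budget of the supporters of $p_i$.

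Combining these two observations, in any iteration in which $p_i$ is still unselected we have $\sum_{v_j \in A(p_i)} b_j \leq |A(p_i)| \cdot \nicefrac{\budget}{|V|} < \stratprof(p_i) = \cost(p_i)$, where the strict inequality is the hypothesis of the statement. Hence the left-hand side of the affordability equation stays strictly below $\cost(p_i)$ for every finite $\alpha_{p_i}$, so no solution exists and $\alpha_{p_i} = \infty$. Since $\Mes$ only selects projects with finite affordability coefficients, $p_i$ is never chosen, which is exactly the claim.

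I do not expect any real obstacle here; the only point that requires a little care is to present the argument uniformly for both $\MesCost$ and $\MesApr$, which amounts to noting that in both cases the supremum of the relevant left-hand side equals the total current budget of $A(p_i)$. Once that is in place, the monotonicity of voters' budgets makes the bound immediate.
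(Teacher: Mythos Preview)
Your argument is correct and is precisely the intended one: the paper does not give a formal proof of this observation at all, offering only the one-line justification that ``since each voter gets an equal share of the budget, projects that cost more than the total budget of their supporters would never be funded.'' Your write-up is simply a careful unpacking of that sentence, uniformly for both $\MesCost$ and $\MesApr$, and needs no changes.
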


For %
\MesCost{}, we provide positive results. Specifically, we show that a
\MesCostNE{} always exists regardless of the tie-breaking
order, even for arbitrary delivery costs, it is polynomial-time
computable, and uses the entire budget (at least if all the projects
have sufficiently low delivery costs).

\begin{theorem}\label{thm:mesCostNEGeneral}
For each PB game, %
there is a
polynomial-time computable \MesCostNE{} profile $\stratprof$, 
unique up to the actions of the projects not selected by~\MesCost{}
under~\stratprof{}.
\end{theorem}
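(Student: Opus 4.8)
The plan is to construct $\stratprof$ by a greedy simulation of \MesCost{} and then verify the equilibrium conditions it must satisfy. I process the projects in the order in which \MesCost{} selects them, maintaining voter balances that start at $\nicefrac{\budget}{|V|}$. When it becomes a project $p$'s turn (i.e.,\ $p$ attains the smallest affordability coefficient among the projects still in play, with ties broken by $\succ$), I raise $\stratprof(p)$ to the largest cost for which $p$ remains the next selected project, deduct the corresponding payments from the balances of $A(p)$, and continue. Two monotonicity facts drive the construction: (i)~as the run proceeds the total balance of any fixed voter set only decreases, so the most a project can ever extract is bounded by the current balance of its approvers (a dynamic version of \Cref{obs:mesBound}); and (ii)~increasing a single project's reported cost only increases its affordability coefficient, hence never reorders the selections made before it. A project whose largest sustainable cost falls below its delivery cost $\baseline(p)$ prefers to stay unfunded; such a project ``opts out'' (reports a prohibitive cost), which only frees its approvers' budget for the remaining projects. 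I iterate until no remaining project is both affordable and willing to be funded.

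Next I would argue that the resulting $\stratprof$ is a \MesCostNE{}. For a selected project, lowering its cost strictly lowers its payoff, so the only candidate deviation is upward. If a selected $p$ raises its cost past $\stratprof(p)$, fact~(ii) preserves all earlier selections, so at $p$'s former turn its coefficient now exceeds that of some competitor, which is selected instead; the core claim is that this competitor (together with the ones it in turn triggers) necessarily drains enough of $A(p)$'s budget that, at every later moment, $A(p)$ can no longer collectively afford the inflated cost, so $p$ is never funded and its payoff drops to $0$. For an unselected project I argue symmetrically that the projects preceding it in the construction have already consumed its approvers' budget below any cost that would make entry profitable, so every reported cost leaves it either unfunded or funded at a loss; hence any cost keeping it out is a best response.

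For uniqueness up to the unselected projects, I would run an exchange argument: in any \MesCostNE{}, each funded project must report exactly its maximum sustainable cost (otherwise it deviates upward, by fact~(ii)) and cannot report more (otherwise the draining argument leaves it unfunded), so the selection order and the funded costs are forced to coincide with those produced by the construction, while the unfunded projects' costs are unconstrained. Polynomial running time follows because each round finalizes at least one project and each affordability coefficient is computable by sorting the relevant balances; and when every $\baseline(p)$ is small enough that no project opts out, the process halts only once the balances are exhausted, so $\stratprof$ then uses the entire budget.

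The main obstacle is the draining step inside the selected-project case: proving that an inflated project can never be re-funded later\,---\,uniformly over all tie-breaking orders, and in the presence of opt-outs that reshuffle which competitors actually appear\,---\,is the technically delicate point, since a naive bound controls $A(p)$'s budget only at $p$'s original turn rather than at every subsequent moment. I expect to establish it via a monotone invariant on the total balance of $A(p)$ throughout the remainder of the run, exploiting that the competitors selected ahead of the inflated $p$ share, and hence deplete, the budget of its supporters.
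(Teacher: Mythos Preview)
Your greedy construction is exactly the paper's, but you are missing the one observation that makes \MesCost{} special and collapses both the well-definedness issue and the ``draining'' obstacle you highlight at the end. Under \MesCost{}, whenever all voters with positive balance hold the same amount~$b$ (which is the case initially, with $b=\nicefrac{\budget}{|V|}$), the affordability coefficient of any affordable project~$p$ is $\alpha_p = 1/|A(p)\cap V'|$, \emph{independent of the reported cost}: with uniform balances, Equation~\eqref{eq:mes} becomes $|A(p)\cap V'|\cdot\alpha_p\cdot\cost(p)=\cost(p)$ whenever $\cost(p)\le |A(p)\cap V'|\cdot b$. Hence the selection order of \MesCost{} is determined by approval counts (within the current $V'$) alone, not by the costs. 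This removes the circularity in your description---``the order in which \MesCost{} selects them'' is fixed before the costs are---and it pins down ``the largest cost for which $p$ remains next selected'' as precisely the total balance $|A(p)\cap V'|\cdot b$ of its approvers. Paying this amount drains $A(p)\cap V'$ to zero, so the equal-balance invariant persists and the argument iterates.

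With this in hand, your fact~(ii) is true but misleading: for \MesCost{} the coefficient does not grow continuously with the cost, it is constant until it jumps to~$\infty$ (this is exactly where \MesCost{} differs from \MesApr{}). Consequently the obstacle you flag is not delicate at all. If a selected $p=p^*_i$ inflates its cost above $\stratprof(p)=|A(p)\cap V'_i|\cdot b$, the first $i{-}1$ selections are unchanged (their $\alpha$-values do not involve $p$'s cost), and at stage~$i$ the total balance of $A(p)$ is exactly $\stratprof(p)<\stratprof(p)+\varepsilon$, so $p$ is already unaffordable---before any competitor drains anything. Since balances are nonincreasing, $p$ remains unaffordable for the rest of the run. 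No sharing of supporters with later competitors is needed, and no invariant beyond ``balances never go up'' is required; the uniqueness and polynomial-time parts then follow as you sketch.
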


\begin{proof}[Proof sketch] Let~$p$ be the project with the
  largest number of approvals (and preferred in the tie-breaking
  order, should there be other projects with the same number of
  approvals). We observe that~$p$ can always request the full amount of money that its
  voters have at the beginning of the execution of $\MesCost$; the
  rule would fund~$p$ irrespective of the costs of the other projects. Thus,
  $p$~reports this amount and is funded (if the amount is below
  its delivery cost, $p$~reports the delivery cost and is
  not funded). The project has no reason to report less, and reporting
  more would cause it to not be funded. Then, we disregard all the
  voters that approve it and repeat the reasoning for the remaining
  projects and voters. We put the full proof in the appendix.
  \end{proof}

\appendixproof{Theorem}{thm:mesCostNEGeneral}{%
\begin{proof}
	Let $(P, V, \budget, \baseline)$ be a \MesCost{} PB game.

	We provide an iterative algorithm that computes~\stratprof{}. In each
	iteration, the algorithm fixes selected values of~\stratprof{}, drops the
	corresponding projects from further consideration and deletes the voters that
	have no budget left. The algorithm finishes when there is no more projects to
	consider. Before laying out the details, let us recall that in~\MesCost{},
	each voter gets the equal share of the budget and cannot spend on the projects
	more than their entitlement. 

	Our algorithm constructs the strategy~\stratprof{} step by step maintaining
	the collection~$V'$ of voters to consider and the collection $P'$ of projects to
	consider.
	The algorithm starts with setting~$V' = V$, $P' = P$
	and proceeds as follows:
  \begin{enumerate} \label{alg:mesCostNEAlgorithm}
		\item \label{stp:thm:mesCostNEGeneral:get-rid-of-impossible-to-select}
			Remove from~$P'$ all projects~$p_i \in P'$ for which~$|A(p_i) \cap V'| \cdot \nicefrac{\budget}{|V|}$ is lower than $\baseline(p_i)$ or equal to zero and let their strategy
			be~$\stratprof(p_i) = \baseline(p_i)$.
		\item \label{stp:thm:mesCostNEGeneral:selecting-project}
			For each project $p_j \in P'$, let $\alpha_j = \nicefrac{1}{|A(p_j)
			\cap V'|}$ (due to
			Step~\ref{stp:thm:mesCostNEGeneral:get-rid-of-impossible-to-select} we
			avoid dividing by zero) and let $p^*$ be the project~$p_j \in P'$ with the
			minimum $\alpha_j$-value (in case of a tie, select the first one in the
			tie-breaking order).
		\item \label{stp:thm:mesCostNEGeneral:strategy-fixing}
			Set the strategy~$\stratprof(p^*)$ to $|A(p^*) \cap V'| \cdot
			\nicefrac{\budget}{|V|}$, that is, such that~$p^*$ reports the cost equal
			to the total budget of its supporters in~$V'$.
  	\item \label{stp:thm:mesCostNEGeneral:spend-budget}
			Remove all supporters of~$p^*$ from $V'$ and remove~$p^*$ from~$P'$.
		\item Repeat Steps
			\ref{stp:thm:mesCostNEGeneral:get-rid-of-impossible-to-select} to
			\ref{stp:thm:mesCostNEGeneral:spend-budget} until $P'$ is empty. 
	\end{enumerate}
	
	Our algorithm is based on the following useful claim about the decisions made
	by~\MesCost{}.
	\begin{claim}\label{claim:approval-proportional-alpha}
		Let us fix some stage of \MesCost{} in which every voter with a nonzero
		budget has the same value of budget and let~$V'$ be these voters.
		Then, for each $\alpha'$-affordable project~$p'$ it holds that
		that $\alpha' = \nicefrac{1}{|A(p') \cap V'|}$.
	\end{claim}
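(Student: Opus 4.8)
The plan is to unwind the defining equation for the affordability coefficient and then exploit the single structural hypothesis we are handed: at the stage in question, all budgets that can matter are equal. Write $b$ for the common budget shared by the voters in $V'$ and set $k = |A(p') \cap V'|$. Recall that, by Equation~\eqref{eq:mes}, $\alpha'$ is the smallest number satisfying $\sum_{v_i \in A(p')} \min(b_i, \alpha' \cdot \cost(p')) = \cost(p')$. The first step is to discard the irrelevant voters: every $v_i \in A(p') \setminus V'$ has zero budget by the definition of $V'$, so $\min(b_i, \alpha' \cdot \cost(p')) = 0$ and such voters drop out of the sum. The defining equation therefore reduces to $\sum_{v_i \in A(p') \cap V'} \min(b, \alpha' \cdot \cost(p')) = \cost(p')$, a sum of $k$ \emph{identical} terms.

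The heart of the argument is that $\cost(p')$ occurs on both sides and cancels. I would study the function $g(\alpha) = k \cdot \min(b, \alpha \cdot \cost(p'))$, which is continuous and nondecreasing, grows linearly (slope $k \cdot \cost(p')$) on the range where $\alpha \cdot \cost(p') \leq b$, and is constant equal to $kb$ beyond that point. Since $p'$ is $\alpha'$-affordable, i.e.\ $\alpha' < \infty$, the target value $\cost(p')$ is actually attained, which forces $\cost(p') \leq kb$; otherwise even the maximal value $kb$ of $g$ would fall short and $\alpha'$ would be $\infty$. Solving $g(\alpha) = \cost(p')$ in the linear region, the equation $k \alpha \cdot \cost(p') = \cost(p')$ yields $\alpha = \nicefrac{1}{k}$, and at this value $\alpha \cdot \cost(p') = \nicefrac{\cost(p')}{k} \leq b$ precisely because $\cost(p') \leq kb$, so the candidate does lie in the (closed) linear region and is a genuine root.

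Finally I would appeal to the \emph{smallest}-value clause of the definition to conclude $\alpha' = \nicefrac{1}{k} = \nicefrac{1}{|A(p') \cap V'|}$. This clause is only needed in the boundary case $\cost(p') = kb$: there $g$ reaches $\cost(p')$ exactly at $\alpha = \nicefrac{1}{k}$ and then stays flat, so the entire ray $[\nicefrac{1}{k}, \infty)$ solves the equation, yet taking the smallest root still returns $\nicefrac{1}{k}$. In the strict case $\cost(p') < kb$ the root is unique and equals $\nicefrac{1}{k}$ outright.

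I expect the only delicate point to be the bookkeeping of the $\min$ operators, namely verifying that at the computed $\alpha'$ no voter is \emph{strictly} capped (or is capped exactly at the boundary when $\cost(p') = kb$). This is exactly where the hypothesis that all voters in $V'$ share the same budget pulls its weight: equal budgets mean the voters approving $p'$ in $V'$ are either all capped or all uncapped, collapsing what could otherwise be a case analysis over subsets of capped voters into the single clean computation above.
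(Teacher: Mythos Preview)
Your proposal is correct and follows essentially the same approach as the paper: both arguments reduce the defining equation of $\alpha'$-affordability by discarding zero-budget voters and exploiting that the remaining $k$ voters share a common budget, so that the equation collapses to $k\,\alpha'\,\cost(p') = \cost(p')$ and hence $\alpha' = \nicefrac{1}{k}$. The paper's proof is a terse one-liner, while you additionally handle the feasibility condition $\cost(p') \leq kb$ and the boundary case explicitly, which is more careful but not substantively different.
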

	The claim is implied by the definition of~\MesCost. Observe that if a
	project~$p$ is $\alpha$-affordable, then the voters approving it have enough
	budget to buy it and $\alpha$ is the minimum such~$x$ that $|A(p') \cap V'|
	\cdot x \cdot \cost(p) \leq \cost(p)$. The sought $\alpha$ is then clearly
	$\nicefrac{1}{|A(p') \cap V'|}$.

	Let us denote by~$(p^*_1, p^*_2, \ldots, p^*_k)$ the projects considered in the
	respective iterations $1$ to $k$ of
	Steps~\ref{stp:thm:mesCostNEGeneral:get-rid-of-impossible-to-select}-\ref{stp:thm:mesCostNEGeneral:spend-budget}.
	In what follows we show that in each Nash equilibrium \MesCost{} selects
	exactly projects~$(p^*_1, p^*_2, \ldots, p^*_k)$ in the given order and
	that~$\stratprof$ is a Nash equilibrium. We apply induction over the stages
	of~\MesCost{}.

	For the base case, we argue that
	$p^*_1$ has to be selected first by~\MesCost{} and has to
	play~$\stratprof(c^*_1)$ in every Nash equilibirum. Assume for contradiction
	that some other project~$p'$ is selected first instead of~$p^*_1$ in a Nash
	equilibrium. Due to~\Cref{claim:approval-proportional-alpha} and
	since~\MesCost{} selects $\alpha$-affordable projects starting from those with
	the smallest~$\alpha$, it follows that one of the three holds: (i)
	$\nicefrac{1}{|A(p')|} < \nicefrac{1}{|A(p^*_1)|}$, (ii)
	$\nicefrac{1}{|A(p')|} = \nicefrac{1}{|A(p^*_1)|}$ and $p'$ is preferred
	to~$p^*_1$ by the tie-breaking, or (iii) $p^*_1$ reports a cost greater than
	the budget of its supporters. Cases (i) and (ii) yield a clear contradiction
	to Step~\ref{stp:thm:mesCostNEGeneral:selecting-project}, which
	defines~$p^*_1$. In Case (iii), reporting cost~$\stratprof'(p^*_1) =
	\baseline(p^*_1) + \epsilon \leq \budget \cdot \nicefrac{|A(p^*_1)|}{|V|}$
	(such an~$\epsilon$ always exists due
	to~Step~\ref{stp:thm:mesCostNEGeneral:get-rid-of-impossible-to-select}) is a
	profitable deviation for~$p^*_1$---a contradiction. Knowing that~$p^*_1$ is
	selected first in every Nash equilibrium, we observe that if~$p^*_1$ reports a
	cost~$\stratprof'(p^*_1) < \budget \cdot \nicefrac{|A(p^*_1)|}{|V|}$, then
	reporting exactly~$\stratprof(p^*_1) = \budget \cdot
	\nicefrac{|A(p^*_1)|}{|V|}$ is always a profitable deviation, which confirms
	that in every Nash equilibrium $p^*_1$'s strategy is $\stratprof(p^*_1)$ and
	it is selected first by~\MesCost{}.
	
	We move on to the inductive step and thus consider stage~$i$ of~\MesCost{} in
	which project~$p^*_i$ is selected. Imporantly, as we have shown that
	player~$p^*_1$ has to report cost~$\stratprof(p^*_1)$ equal to the total
	budget of~$p^*_1$ supporters, then we can assume that at the~$i$-th stage we
	have only voters who either spend all their budget or who did not spend their
	budget at all; we denote the latter group by~$V'$. As a result,
	\Cref{claim:approval-proportional-alpha} holds at the $i$-th stage, which we
	consider. Thanks to this, we apply an analogous exchange argument
	(pretending that voters outside of~$V'$ do not exist) to that for the
	base case to show that indeed~$p^*_i$ has to be selected at the $i$-th stage.
	Then, we directly repeat the argument regarding the optimal strategy
	for~$p^*_1$ applying it to~$p^*_i$. Eventually, we have shown that in each
	Nash equilibrium for~\MesCost{} projects~$(p^*_1, p^*_2, \ldots, p^*_k)$ are
	selected and report, respectively, costs~$(\stratprof(p^*_1),
	\stratprof(p^*_2), \ldots, \stratprof(p^*_k))$.
\end{proof}%
}

For %
\MesApr{}, the situation is more complicated, as an equilibrium may
not exist even for a party-list profile with only three projects and
one voter. Nevertheless, it always exists for plurality ballots and
for party-list ballots with zero delivery costs (the proofs are in the appendix).

\begin{theorem}\label{thm:mesAprNEPlurality}
For every PB game with plurality ballots, there is a polynomial-time computable
\MesAprNE{} regardless of the internal tie-breaking order and delivery costs. 
\end{theorem}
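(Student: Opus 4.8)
The plan is to exploit the defining feature of plurality ballots---that the supporter sets $A(p)$ are pairwise disjoint---to collapse the game into a collection of independent single-project decisions. First I would show that, under $\MesApr$ with plurality preferences, whether a project $p$ is funded depends only on its own reported cost. Since every voter in $A(p)$ approves $p$ and nothing else, no purchase of any other project ever debits her account; hence throughout the run each such voter holds exactly her initial endowment $\nicefrac{\budget}{|V|}$ right up until (if ever) $p$ itself is bought. Substituting $b_i = \nicefrac{\budget}{|V|}$ and $|A(p)|$ identical terms into the $\MesApr$ affordability equation, the coefficient of $p$ equals $\nicefrac{\stratprof(p)}{|A(p)|}$ when $\stratprof(p) \leq M_p := \budget \cdot \nicefrac{|A(p)|}{|V|}$, and is $\infty$ otherwise. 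Because buying one project leaves every other project's supporters untouched, no affordable project is ever rendered unaffordable; combined with the ``only if'' direction supplied by \Cref{obs:mesBound}, this yields the clean characterization that $\MesApr$ funds $p$ \emph{exactly when} $\stratprof(p) \leq M_p$, independently of the other projects' costs and of the tie-breaking order.

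Given this threshold characterization, I would exhibit the equilibrium $\stratprof^*$ explicitly: for each project $p$ set $\stratprof^*(p) = M_p$ if $M_p \geq \baseline(p)$, and otherwise set $\stratprof^*(p) = \baseline(p)$, a value that strictly exceeds $M_p$. Projects of the first kind are funded with nonnegative payoff $M_p - \baseline(p)$, and projects of the second kind are not funded and collect payoff $0$.

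It remains to verify that $\stratprof^*$ is a $\MesAprNE$. Because funding is a function of a project's own cost alone, any deviation by $p$ leaves the outcomes of all other projects unchanged, so it suffices to check that each project already plays a best response. For a project with $M_p \geq \baseline(p)$, lowering the cost keeps it funded but shrinks the payoff, whereas raising it above $M_p$ makes it unfunded with payoff $0 \leq M_p - \baseline(p)$; thus reporting $M_p$ is optimal. For a project with $M_p < \baseline(p)$, every funded cost $c' \leq M_p$ yields the negative payoff $c' - \baseline(p)$, so remaining unfunded at payoff $0$ is optimal. Hence no project has a profitable deviation, and $\stratprof^*$ is an equilibrium for every internal tie-breaking order and all delivery costs; it is computable in polynomial time, as forming it only requires evaluating the single quantity $M_p$ for each project. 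The one step demanding genuine care---and the crux of the whole argument---is the disjointness-based independence claim: one must argue that the iterative, order-sensitive machinery of $\MesApr$ degenerates, under plurality, to the simple threshold test $\stratprof(p) \leq M_p$. Once that is in place, the equilibrium verification is a routine best-response check.
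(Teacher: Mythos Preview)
Your proposal is correct and takes essentially the same approach as the paper: both exploit the disjointness of supporter sets under plurality ballots to conclude that each project can secure at most $M_p = \budget \cdot \nicefrac{|A(p)|}{|V|}$ independently of the others, and then set $\stratprof^*(p) = M_p$ when $M_p \geq \baseline(p)$ and $\stratprof^*(p) = \baseline(p)$ otherwise. Your write-up is simply more explicit than the paper's in spelling out the independence argument and the best-response verification.
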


\appendixproof{Theorem}{thm:mesAprNEPlurality}{
\begin{proof}
  Let $(P, V, \budget, \baseline)$ be a \MesApr{} PB game with
  plurality ballots.  In \MesApr{}, like in every \Mes{} rule, each
  voter receives $\nicefrac{\budget}{|V|}$ money for the whole
  election process.  As each voter approves only one project, each
  project $p_i$ can request at most
  $M(p_i) = \nicefrac{|A(p_i)| \cdot \budget}{|V|}$ from its
  supporters, and will report this cost as they would not spend
  any money on other projects.  Thus, if $\baseline(p_i) \leq M(p_i)$,
  project $p_i$ reports $M(p_i)$ and gets selected with its maximum possible
  cost.  Otherwise, if $\baseline(p_i) > M(p_i)$,  $p_i$ cannot
  be selected with cost covering  $\baseline(p_i)$, so $p_i$
  reports $\baseline(p_i)$ and is not selected.
\end{proof}
}

\begin{theorem}\label{thm:mesAprNEPartyListZeroDeliveryCost}
For every PB game %
with party-list ballots and zero
delivery costs, there is a polynomial-time computable \MesAprNE{}
regardless of the internal tie-breaking order. 
\end{theorem}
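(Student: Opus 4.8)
The plan is to exhibit exactly the profile from \Cref{result:phragparty} and verify that it is a \MesAprNE{} for \emph{every} tie-breaking order. That is, I would set $\stratprof(p) = \budget \cdot \frac{|A(p)|}{|V|\cdot|\party(p)|}$ for each project $p$, which splits each party's total budget equally among its projects. Fix a party whose $n$ voters jointly approve a common set of $\ell$ projects; under $\stratprof$ each such project reports the same cost $c = \frac{n\budget}{|V|\ell}$, and the party's aggregate budget is $\frac{n\budget}{|V|} = \ell c$. The central observation, which is the \MesApr{} analogue of \Cref{claim:approval-proportional-alpha}, is the following: because all voters of a party approve precisely the same projects, they are charged simultaneously and identically throughout the run, so they always share a common remaining budget $b$; hence a not-yet-bought project $p$ of that party has affordability coefficient $\alpha_p = \frac{\stratprof(p)}{|A(p)|}$ whenever its party can still afford it (that is, $\stratprof(p) \le |A(p)| \cdot b$), and $\alpha_p = \infty$ otherwise. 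Crucially, this value is independent of the current $b$ as long as $p$ remains affordable.

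I would then show that $\stratprof$ funds all projects and spends the whole budget. Under $\stratprof$ every project of an $\ell$-project party has the identical coefficient $\frac{c}{n} = \frac{\budget}{|V|\ell}$, so \MesApr{} considers the projects grouped by the party value $\ell$ (parties sharing a coefficient are interleaved by tie-breaking, which is harmless since purchases in one party do not affect another party's voters). Within a single party the $\ell$ projects remain affordable all the way to the last one, since the party's budget $\ell c$ exactly covers its $\ell$ equally priced projects; thus all projects are bought and, summing over parties, the entire budget $\budget = \sum_j \ell_j c_j$ is used. For the equilibrium condition, note that (as delivery costs are zero) a project's payoff equals its reported cost when selected and $0$ otherwise. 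A downward deviation keeps $p$ funded but strictly lowers its payoff. An upward deviation to $c' > c$ strictly raises $\alpha_p$ above the common coefficient of its $\ell-1$ party-mates, so all of them are considered before $p$ and consume $(\ell-1)c$ of the party budget $\ell c$, leaving only $c < c'$ available to $p$; hence $p$ becomes unaffordable and is dropped, giving payoff $0$. So no deviation is profitable.

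The conclusion holds for every tie-breaking order because a deviating project strictly loses its coefficient comparison against its party-mates, so tie-breaking never rescues it; and $\stratprof$ is plainly computable in polynomial time. The step requiring the most care is the upward-deviation argument: since \MesApr{} recomputes coefficients dynamically and the raised project may well be affordable at the outset, I must argue that it is nonetheless always considered strictly after all of its party-mates and that, by the time they are funded, the party's residual budget has dropped below $c'$ --- and that this holds regardless of what the other parties purchase in the interim and regardless of how ties are broken. The clean way to package this is the budget-counting observation that the party's total reported cost after the deviation, $(\ell-1)c + c'$, exceeds its budget $\ell c$, while $p$'s strictly largest coefficient forces it to be the one project left unfunded.
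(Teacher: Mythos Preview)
Your proposal is correct and follows essentially the same approach as the paper: you exhibit the identical equal-split profile $\stratprof(p) = \frac{|A(p)|\cdot\budget}{|V|\cdot|\party(p)|}$ and argue that an upward deviation pushes the deviating project to the end of its party's consideration order, where the party's residual budget no longer covers it. Your write-up is in fact more careful than the paper's on two points the paper glosses over---that the \MesApr{} coefficient $\alpha_p = \stratprof(p)/|A(p)|$ is invariant in $b$ while the project remains affordable, and that the strict inequality $c'>c$ makes the argument tie-breaking-independent---but the underlying idea is the same.
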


\appendixproof{Theorem}{thm:mesAprNEPartyListZeroDeliveryCost}{%
\begin{proof}
    Let $(P, N, \budget, \baseline)$ be a \MesApr{} PB game with party-list
		ballots and zero delivery costs.
    
    At the beginning, each voter receives $\nicefrac{\budget}{|V|}$ money.
    In a party-list profile, as every two voters have either the same preferences or the disjoint ones, $A(p_i) = A(\party(p_i))$.
    The total money the supporters of $\party(p_i)$ have is $M(\party(p_i)) = |A(p_i)| \cdot \nicefrac{\budget}{|V|} = \frac{|A(p_i)| \cdot \budget}{|V|}$. 
    Therefore, the projects in $\party(p_i)$ need to somehow distribute this money between them in such a way that no project would complain and change its cost to obtain better outcome.
    
		Since we have cardinal utilities, the $\alpha$-value of project $p_i$ is
		initially $\alpha_1(p_i) = \frac{\stratprof(p_i)}{|A(\party(p_i))|}$.
    For this reason, \MesApr{} will start from the cheapest project in the party to the most expensive one (following tie-breaking order in case of a tie) and equally take the budget from the party supporters to fund the next project unless the cost exceeds the money they still have.
    
    This means that each project should ask for the equal part of the money of party supporters, that it, $\frac{M(\party(p_i))}{|\party(p_i)|}$. 
    The project asking for more money would be moved to the end of the order when its supporters have insufficient money to fund it, so no project would increase its cost. 
    Asking for less money is pointless if it is already funded.

    For this reason, for the profile \stratprof{} where each project says $\stratprof(p_i) = \frac{M(\party(p_i))}{|A(\party(p_i)|}$, we obtain \MesAprNE{}.
\end{proof}%
}

Now we show an example with only three projects and one voter for which
\MesAprNE{} does not exist (recall that each profile with one voter is
also a party-list profile).

\begin{example}\label{ex:mesAprNoNEPartyList}
  Take a PB game~$G$ with one voter approving projects $p_1$, $p_2$,
  and $p_3$. Let $\budget=6$, $\baseline(p_1) = 3$,
  $\baseline(p_2) = 0$, and $\baseline(p_3) = 0$, with
  $p_1 \succ p_2 \succ p_3$.  We show that there is no \MesApr-NE
  in~$G$ (the full proof is in the appendix).

  Sketch of an argument: Suppose that $\stratprof$ is an NE. Then
  $\stratprof(p_1) \geq 3$ (otherwise, as it is selected, $p_1$
  improves by reporting $3$) and
  $\stratprof(p_2), \stratprof(p_3) < \stratprof(p_1) = 3 =
  \nicefrac{B}{2}$ (otherwise, the more expensive one is not
  funded). Since $\stratprof(p_2) + \stratprof(p_3) < B$, $p_2$ may
  report $\nicefrac{(\stratprof(c_1) + \stratprof(c_2))}{2}$ and still
  be selected, so $\stratprof$ is not an NE.

   \appendixcorrectness{Example}{ex:mesAprNoNEPartyList}{
      Suppose towards contradiction that $\stratprof$ is such an
  equilibrium. W.l.o.g.,let
  $\stratprof(p_1) \geq \baseline(p_1) = 3$.  Otherwise, if $p_1$ would be selected under these costs (and, hence, received a negative
	payoff), then it would prefer to report $\baseline(p_1)$ and receive utility
	0.
  If it was
  not funded under $\stratprof$, then 
  $(\stratprof_{-i}, \baseline(p_1))$
  also
  would be an equilibrium. 
  Next, we observe that 
  $\stratprof(p_1) = 3$. Indeed, if $p_1$ reported a value greater
  than $3$, then one of the projects could obtain a better payoff: If
  $p_2$ or $p_3$ reported a value greater or equal to $\stratprof(p_1)$,
  then it would benefit by reporting a smaller one; if $p_2$ and
  $p_3$ reported values that sum up to more than $\budget$, then at
  least one of them would benefit by reporting a smaller cost, and if
  they reported values that sum up to at most $\budget$ then either
  (a)~one of them would benefit by reporting a larger cost, or (b) if they
  both reported $3$, $p_1$ would benefit by reporting~$3$.
  Hence, we have that $\stratprof(p_1) = 3 = \nicefrac{\budget}{2}$. Then,
  both $\stratprof(p_2) > 0$ and $\stratprof(p_3) > 0$ (the project
  reporting $0$ would benefit by reporting a larger
  number). Further, we have that
  $\stratprof(p_2) < \stratprof(p_1)$ and
  $\stratprof(p_3) < \stratprof(p_1)$. Indeed, if either $p_2$ or
  $p_3$ reported value greater or equal to $\stratprof(p_1)$ then it would not be selected and, hence, would benefit by
  reporting a smaller cost. So, 
  $\stratprof(p_2) + \stratprof(p_3) < \budget$.  But
  then either of them would benefit by reporting a
  higher cost (so that the sum of their costs would not exceed
  $\budget$). Thus $\stratprof$ is not a \MesApr-NE.

}
\end{example}

We note that the problem in Example~\ref{ex:mesAprNoNEPartyList} is
partially caused by the delivery costs and the tie-breaking order. If
$p_1$ were behind $p_2$ and $p_3$ in the tie-breaking order, then
$p_2$ and $p_3$ would have reported cost~$3$, which would be a
\MesAprNE{}. Now we show that for a party-list instance we can
efficiently compute an internal tie-breaking order and a
corresponding~\MesAprNE{} (the proof is in the
appendix). %

\begin{theorem}\label{thm:mesAprNEPartyListDeliveryCostSpecificTBOrder}
	For every PB game with party-list ballots and some corresponding A/D
	tie-breaking order, there is a~\MesAprNE{} computable in polynomial-time.
\end{theorem}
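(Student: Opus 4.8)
The plan is to exploit the fact that, under \MesApr{}, party-list instances decompose into independent single-party games. First I would argue that the projects of two distinct parties never interact: their supporters approve disjoint project sets and, since \MesApr{} (as defined here) imposes no global budget check beyond the per-voter accounts, the money of one party's supporters can only ever be spent on that party's projects. Consequently, whether a project~$p$ is funded depends solely on the reported costs inside~$\party(p)$, on $|A(p)|$, and on the budget $M = |A(p)|\cdot\nicefrac{\budget}{|V|}$ available to its supporters, and is independent of the strategies in other parties and of the global tie-breaking order beyond its restriction to~$\party(p)$. Restricted to a single party, an A/D order ranks the projects by increasing delivery cost (all projects of one party share the same approval score), and we are free to fix the intra-party order among equal-delivery-cost projects arbitrarily. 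Hence a profile is a global \MesAprNE{} exactly when it induces a Nash equilibrium in every single-party game, and it suffices to solve one party at a time.

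Next I would analyze a single party with budget~$M$ and projects sorted as $\baseline_1 \le \cdots \le \baseline_\ell$. Within the party \MesApr{} considers the projects in nondecreasing order of reported cost (ties broken by increasing delivery cost) and funds them greedily while the party budget lasts; a project reported at cost~$c$ is funded iff the projects considered before it leave at least~$c$ of~$M$ unspent. I would define the water-filling threshold $\tau^*$ as the smallest $\tau \ge 0$ with $|\{i : \baseline_i \le \tau\}|\cdot \tau \ge M$ (which exists, is unique, and is computable in polynomial time), and let every project with $\baseline_i \le \tau^*$ report~$\tau^*$ while every project with $\baseline_i > \tau^*$ reports~$\baseline_i$. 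The funded projects are then the first $k = \lfloor M/\tau^*\rfloor$ of the $\tau^*$-reporters in the A/D order, each obtaining payoff $\tau^* - \baseline_i \ge 0$. This specializes to the equal split of \Cref{thm:mesAprNEPartyListZeroDeliveryCost} when all delivery costs vanish.

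To verify the equilibrium I would check three groups of players. A funded project cannot gain by lowering its cost (it stays funded with smaller payoff); and it cannot gain by raising its cost above~$\tau^*$, because it is then pushed behind the remaining reporters, which already exhaust the budget down to a leftover strictly below~$\tau^*$ (here I use $k=\lfloor M/\tau^*\rfloor$). A non-reporting project ($\baseline_i > \tau^*$) can be funded only at a cost exceeding~$\tau^*$, i.e.\ behind the same wall, hence never profitably; reporting anything cheaper yields a negative payoff.

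The main obstacle, and the heart of the argument, is the unfunded reporters. I would prove the structural claim that every unfunded reporter has delivery cost \emph{exactly}~$\tau^*$: if the first unfunded reporter had $\baseline < \tau^*$, then a threshold slightly below~$\tau^*$ (but still above that delivery cost) would already satisfy the defining inequality $|\{i : \baseline_i \le \tau\}|\cdot\tau \ge M$, since at such a~$\tau$ at least $k+1$ projects are active and $(k+1)\tau^* > M$; this contradicts the minimality of~$\tau^*$. Once $\baseline_i = \tau^*$ is established, such a project obtains payoff~$0$ no matter what it reports---any cost giving positive payoff must exceed $\tau^* = \baseline_i$ and is therefore unfunded behind the wall---so it has no profitable deviation. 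This is exactly the place where the freedom to choose an A/D order is essential: by \Cref{ex:mesAprNoNEPartyList} a careless order destroys all equilibria, whereas the A/D order forces the lower-delivery-cost projects to the front of each tie, which is what makes the wall argument go through. Assembling the per-party equilibria and noting that every step (sorting, computing $\tau^*$, assigning costs) is polynomial completes the proof.
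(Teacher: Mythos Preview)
Your proposal is correct and follows essentially the same approach as the paper: both decompose the game into independent single-party subgames (using that \MesApr{} has no global budget check, so parties never interact), and within each party both compute a common reporting threshold and let projects above it report their delivery costs. Your water-filling threshold~$\tau^*$ is exactly the paper's $\min(M/y,\baseline(p_k))$ obtained by iteratively peeling off the highest-delivery-cost project until $y\cdot\baseline_y\le M$; the resulting profiles are identical. Your explicit structural lemma---that every unfunded reporter must have $\baseline_i=\tau^*$---is a clean way to handle the boundary case that the paper treats more implicitly through its removal process, and it makes transparent why the A/D order is essential (cf.\ \Cref{ex:mesAprNoNEPartyList}). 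One tiny imprecision: when all reporters are funded ($r=k$, so $r\tau^*=M$) and a funded project deviates upward, the leftover after the other $k-1$ reporters is exactly~$\tau^*$, not strictly below; your conclusion still holds since the deviator needs strictly more than~$\tau^*$.
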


\appendixproof{Theorem}{thm:mesAprNEPartyListDeliveryCostSpecificTBOrder}{%
\begin{proof}
		Let $(P, V, \budget, \baseline)$ be a \MesApr{} PB game with party-list
		ballots.  Recall that at the beginning each voter receives
		$\nicefrac{\budget}{|V|}$ money.  Then, the total money the supporters of
		$\party(p_i)$ have is $M(\party(p_i)) = |A(p_i)| \cdot
		\nicefrac{\budget}{|V|} = \frac{|A(p_i)| \cdot \budget}{|V|}$. 

		Let~$\succ$ be the A/D tie-breaking order from the theorem statement.
		Note that for the case of party-list ballots $\succ$ orders the projects
		within the same party nondescendingly by the delivery costs. That is,
		within the same party cheaper projects preferred to the more expensive
		ones. In case of two projects within the same party with equal delivery
		costs, we assume without loss of generality that the project with a lower
		index is preferred (we can always relabel projects to achieve this
		condition). Notably, $\succ$ does not impose any specific order with
		respect to the delivery costs over any two projects of two different
		parties.
   
    Since in \MesApr{} we consider cardinal utilities, the $\alpha$-value of project $p_i$ is at the beginning: $\alpha_1(p_i) = \frac{\stratprof(p_i)}{|A(\party(p_i))|}$.
    For this reason, \MesApr{} will start from the cheapest project in the party to the most expensive one (following the specified tie-breaking order in case of a tie) and equally take the budget from the party supporters to fund the next project unless the cost exceeds the money they still have.
    
    Now we have to consider two cases:
    \begin{itemize}
        \item The last in the tie-breaking order project $p_j \in party(p_i)$ has delivery cost $\baseline(p_j)$ not exceeding $\frac{M(\party(p_i))}{|\party(p_i)|}$. Note that this case is similar to the one in Theorem \ref{thm:mesAprNEPartyListZeroDeliveryCost}, i.e., each project submits $\frac{M(\party(p_i))}{|\party(p_i)|}$, gets funded, and no project benefits from changing its cost.
        
        \item The last in the tie-breaking order project $p_j \in \party(p_i)$ has delivery cost $\baseline(p_j)$ exceeding $\frac{M(\party(p_i))}{|\party(p_i)|}$. 
        In such a case, if every project submitted cost $\baseline(p_j)$, then $p_j$ would not be selected as it is last in the tie-breaking order and cannot profitably decrease its cost. 
        For this reason, we set the cost of $p_j$ to be $\baseline(p_j)$, remove $p_j$ from consideration, and repeat this procedure as long as the product of the number of yet-remained projects and the delivery cost of the last in the tie-breaking project exceeds $M(\party(p_i))$.
        Suppose now that the procedure stopped and project $p_k$ was the last removed one. 
        Then all $y$ yet-remaining projects would say cost $min(\nicefrac{M(\party(p_i))}{y}, \baseline(p_k))$ and be given this funding.
        Saying more than $\baseline(p_k)$ would result in being overtaken by  project $p_k$ and thrown out of the outcome due to insufficient budget.
        Saying more than $\nicefrac{M(\party(p_i))}{y}$ would result in getting the last among remaining projects and asking for more money than left in the budget at that timestamp.
    \end{itemize}

    The combination of \stratprof{} profiles calculated in the above way for all parties is a \MesAprNE{} for the given instance. 
\end{proof}%
}

One might wonder whether there always is a tie-breaking order leading
to a \MesAprNE{}. Unfortunately, as in the case of \Phragmen{}, there
is a small game (depicted in the appendix)
with no \MesAprNE{}
irrespective of the
tie-breaking order.

\begin{proposition}\label{pro:mesAprNoNEGeneralRegardlessOfTB}
  There are games with zero delivery costs, for which there is no
  \MesAprNE{} regardless of the tie-breaking order, even if we have
  only $4$ projects and $16$ votes.
\end{proposition}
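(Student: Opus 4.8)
The plan is to exhibit one explicit instance and show that every strategy profile admits a profitable deviation. Throughout I fix zero delivery costs, so a funded project's payoff equals its reported cost. Before touching the instance, I would record two \emph{instance-independent} necessary conditions for any \MesAprNE{}. \textbf{(All projects are funded.)} If some project $p$ were unfunded under a candidate equilibrium $\stratprof$, then, since $\baseline(p)=0$ and $p$ has a supporter holding the full initial endowment $\nicefrac{\budget}{|V|}$ at the start of the rule, $p$ could deviate to a tiny cost $\varepsilon>0$; its affordability coefficient would then be $\approx \varepsilon/|A(p)|$, strictly below that of every other, positively priced project, so \MesApr{} would fund $p$ first, yielding payoff $\varepsilon>0$ in place of $0$ — a contradiction. \textbf{(Saturation.)} By \Cref{obs:mesBound} each cost is capped by $\budget\cdot\nicefrac{|A(p)|}{|V|}$, and in an equilibrium every funded project must be \emph{blocked}: unable to raise its reported cost and remain funded. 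A project is blocked only if, at the moment \MesApr{} considers it, its supporters are tapped out to the degree that nudging its cost upward (hence later in the selection order) would render it unaffordable.

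Next I would construct the instance: four projects and sixteen voters with a deliberately \emph{asymmetric, non-party-list} pattern of shared supporters and zero delivery costs. The design goal is that the pairwise overlaps cannot be ``balanced'': for every ordering of the four projects and every assignment of blocked costs consistent with \MesApr{}'s $\alpha$-mechanics, some supporter is left with unspent money that an overlapping funded project could still capture. Symmetric overlaps such as a $4$-cycle of shared pairs, or a triangle of pairwise-sharing projects, do admit equilibria — there the symmetric profile in which every supporter pays exactly their endowment is blocked — so the instance must genuinely break this symmetry; I would pin it down by analyzing the small overlapping profiles in which the saturation system is infeasible for \emph{all} orders. The choices $|V|=16$ and an integer budget are made so that endowments and the induced affordability coefficients are exact rationals, keeping the bookkeeping clean.

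The heart of the argument is then a case analysis that is universal over tie-breaking orders. Using the ``all funded'' condition, \MesApr{} funds the four projects in some sequence; by \Cref{obs:mesBound} together with saturation, this sequence and the reported costs are forced into finitely many shapes. For each admissible funding order I would run \MesApr{} on the candidate profile, track the depletion of each voter's account after every purchase, and locate a project $p$ whose already-funded overlapping predecessors leave at least one of $p$'s supporters with positive money. That slack lets $p$ raise its cost slightly while remaining affordable when it is reached, strictly increasing its payoff and contradicting equilibrium. The instance is engineered so that such a slack project exists for \emph{every} order — the obstruction is cyclic, in that no order drains all four projects at once.

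The main obstacle is precisely this universality over tie-breaking orders coupled with \MesApr{}'s sequential, budget-dependent selection: raising a cost both changes a project's payoff and reshuffles the $\alpha$-ordering (hence who drains whom), so one must verify that the deviation keeps the deviator funded in the \emph{new} execution, not merely in the old one. Reducing the number of orders via the instance's relabeling symmetries helps, but the depletion bookkeeping showing that unavoidable slack persists in every case is the technical crux, and I expect essentially all the work to concentrate there.
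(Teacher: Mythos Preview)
Your high-level plan (exhibit an instance, use the ``all funded'' and ``blocked'' conditions, then do a case analysis over funding orders) matches the paper's. However, your construction step contains a concrete error that sends you in the wrong direction: you explicitly discard the symmetric $4$-cycle, asserting that ``the symmetric profile in which every supporter pays exactly their endowment is blocked.'' In fact, the paper's instance \emph{is} precisely a symmetric $4$-cycle---four projects, each approved by five voters, three exclusive and one shared with each of the two cyclic neighbours---and this instance has no \MesAprNE{} under any tie-breaking.

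Your intuition fails because \MesApr{} is sequential. In the $4$-cycle with all costs equal to $4b$ (where $b=\budget/16$), after the first project is purchased its two shared voters are partially depleted, which strictly lowers the $\alpha$-values of its neighbours relative to the opposite project; that opposite project is therefore bought second, and now the two remaining projects each have \emph{two} depleted supporters and total remaining support $3b+2\cdot(b/5)=17b/5<4b$, so they are unaffordable. Hence the ``symmetric profile'' is not even one in which all four projects are funded, let alone an equilibrium. More generally, for the $4$-cycle there is no cost vector that simultaneously saturates all sixteen voters in the order \MesApr{} actually uses; the cyclic overlap pattern forces the kind of residual slack you describe, but for the symmetric instance rather than despite it.

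So the missing idea is not an asymmetric construction but the realisation that the symmetric $4$-cycle already works. The subsequent case analysis is also more delicate than you indicate: the profitable deviations the paper finds are not small nudges that preserve the funding order, but substantial cost increases that push the deviator from first-funded to last-funded while still leaving its supporters with enough money (e.g., moving from $\tfrac{7b}{2}$ to $\tfrac{36b}{10}$). Your ``slack lets $p$ raise its cost slightly while remaining affordable when it is reached'' understates this; you must track the \emph{new} execution after the deviation, in which several other projects may be purchased before the deviator.
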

\appendixproof{Proposition}{pro:mesAprNoNEGeneralRegardlessOfTB}{%
\begin{proof}
    We create four projects $p_1, \ldots, p_4$ and $16$ voters $v_1, \ldots, v_{16}$. Project $p_1$ is approved by voters $v_1, \ldots, v_5$; project $p_2$ by voters $v_{13}, \ldots, v_{16}, v_1$; project $p_3$ by voters $v_5, \ldots, v_9$; and project $p_4$ by voters $v_9, \ldots, v_{13}$. We set the budget to be some positive integer $\budget$ and delivery costs to be $\baseline \equiv 0$. Please note that we do not specify tie-breaking as we prove nonexistence of \MesAprNE{} in any tie-breaking.
    
    \begin{figure}[t]
         \centering
         \scalebox{0.8}{
            \begin{tikzpicture}
                [->,shorten >=1pt,auto,node distance=1.2cm, semithick]
                \node[shape=circle,draw=black, fill=black] (AB)  {};
                \node[shape=circle,draw=black, fill=black, below of=AB] (A2)  {};
                \node[shape=circle,draw=black, fill=black, below of=A2] (A3)  {};
                \node[shape=circle,draw=black, fill=black, below of=A3] (A4)  {};
                \node[shape=circle,draw=black, fill=black, below of=A4] (AD)  {};
                
                \node[shape=circle,draw=black, fill=black, right of=AB] (B2)  {};
                \node[shape=circle,draw=black, fill=black, right of=B2] (B3)  {};
                \node[shape=circle,draw=black, fill=black, right of=B3] (B4)  {};
                \node[shape=circle,draw=black, fill=black, right of=B4] (BC)  {};
                
                \node[shape=circle,draw=black, fill=black, below of=BC] (C2)  {};              
                \node[shape=circle,draw=black, fill=black, below of=C2] (C3)  {};
                \node[shape=circle,draw=black, fill=black, below of=C3] (C4)  {};
                \node[shape=circle,draw=black, fill=black, below of=C4] (CD)  {};

                \node[shape=circle,draw=black, fill=black, right of=AD] (D2)  {};
                \node[shape=circle,draw=black, fill=black, right of=D2] (D3)  {};
                \node[shape=circle,draw=black, fill=black, right of=D3] (D4)  {};
                \node[shape=circle,draw=black, fill=black, right of=D4] (CD)  {};
                
                \node[draw,  dashed, fit=(AB) (AD) ](FIt1) {};
                \node[draw,  dashed, fit=(AB) (BC) ](FIt2) {};
                \node[draw,  dashed, fit=(BC) (CD) ](FIt3) {};
                \node[draw,  dashed, fit=(AD) (CD) ](FIt4) {};
                
                \node[shape=circle, above of =FIt1, xshift = -3em, yshift= -3.5em, fontscale=2] (X1)  {$p_2$};
                
                \node[shape=circle, above of =FIt2, xshift = -0em, yshift= -1.5em, fontscale=2] (X2)  {$p_1$};
                \node[shape=circle, right of =FIt3, xshift = -0em, yshift= -0em, fontscale=2] (X3)  {$p_3$};
                \node[shape=circle, below of =FIt4, xshift = -0em, yshift= -0em, fontscale=2] (X4)  {$p_4$};
            
            \end{tikzpicture}}
         \caption{Approval sets of projects $p_1, p_2, p_3$, and $p_4$. Here, each vertex represents a single voter.}\label{fig:mesAprNoNECircle}
    \end{figure}

    For a better visualization of the instance, please look at the Figure \ref{fig:mesAprNoNECircle}. In particular, one can see that the instance is symmetric (each project is approved by three voters supporting only it and by two voters, one shared with each neighbour).

    Now we will show that there is no \MesAprNE{} for this instance. Suppose towards contradiction that some profile $\stratprof$ is a \MesAprNE{}.

    At the beginning, each voter receives $b = \frac{\budget}{|V|} = \frac{\budget}{16}$ money. 
    Therefore, a) no project says cost exceeding $5b = \frac{5\budget}{16}$ (otherwise its supporters would never have enough money to buy it, and b) no project says less than $3b = \frac{3\budget}{16}$ (as each project has three supporters that contribute only to it, it is nonoptimal to say less than they have).
    Further, as each project has zero delivery costs, each project must be selected in \MesAprNE{}, otherwise an unselected project could have decreased its cost to $0$ and be selected.
    For this reason, $\sum_{i=1}^{4}{\stratprof(p_i)} \leq B$.
    
    For the sake of brevity, for yet-unselected project $p_i$ in iteration $j$, we denote $\alpha_j(p_i)$ as minimum value such that project $p_i$ is $\alpha_j(p_i)$-affordable according to \MesApr{} rule (or infinity, if such a value does not exist). 
    By $\epsilon \in \reals_+$ we mean some very small positive real number such that adding it does not change the given strict inequality sign.
    
    Due to symmetry, w.l.o.g. we can assume that 1) project $p_1$ says the lowest cost (i.e., $\stratprof(p_1) \leq \stratprof(p_2), \stratprof(p_3), \stratprof(p_4)$) and wins ties if there is more than more project with this cost as well as 2) $\stratprof(p_2) \leq \stratprof(p_3)$ and $p_2$ wins ties with $p_3$. If it was not the case, we can shift or rotate projects and perform analogous reasoning.

    Before we move on, let us exclude some cases in which $\stratprof$ cannot be \MesAprNE{}. 
    
    We point out that if there is no project of the same cost as $p_1$, then $p_1$ could have increased its cost by some positive number $\epsilon$ while still being considered first and getting more, so $\stratprof$ could not have been \MesAprNE{} in this case.
    Therefore, $p_1$ has the same cost either as 1) $p_4$, or as 2) $p_2$ (please note $\stratprof(p_1) = \stratprof(p_3)$ implies $\stratprof(p_1) = \stratprof(p_2)$ due to assumption that $\stratprof(p_2) \leq \stratprof(p_3)$). 

    With our assumptions, project $p_1$ is selected in the first iteration and each of its supporters pays $\frac{\stratprof(p_1)}{5}$ for $p_1$. As $b - \frac{\stratprof(p_i)}{5} \leq \frac{\budget}{16} - \frac{\frac{3\budget}{16}}{5} = \frac{\frac{2\budget}{16}}{5} < \frac{\frac{3\budget}{16}}{5} \leq \frac{\stratprof(p_j)}{5}$ for any $p_i, p_j \in P$, voter $v_1$ has insufficient money to contribute the equal share $\frac{\stratprof(p_2)}{5}$ to purchase $p_2$.
    Therefore, in the second iteration, $\alpha_2(p_4) = \frac{\stratprof(p_4)}{5}$, $\alpha_2(p_2) = \frac{\stratprof(p_2) - (b - \frac{\stratprof(p_1)}{5})}{4}$, and $\alpha_2(p_3) = \frac{\stratprof(p_3) - (b - \frac{\stratprof(p_1)}{5})}{4} \geq \alpha_2(p_2)$.
    Please note that $\alpha_2(p_2) = \frac{\stratprof(p_2) - (b - \frac{\stratprof(p_1)}{5})}{4} \geq \frac{\stratprof(p_1) - (\frac{\budget}{16} - \frac{\stratprof(p_1)}{5})}{4} = \frac{6\stratprof(p_1)}{20} - \frac{\budget}{64} = \frac{4\stratprof(p_1)}{20} + (\frac{2\stratprof(p_1)}{20} - \frac{\budget}{64}) \geq \frac{\stratprof(p_1)}{5} + (\frac{2 \cdot \frac{3\budget}{16}}{20} - \frac{\budget}{64}) = \frac{\stratprof(p_1)}{5} + \frac{6\budget-5\budget}{16 \cdot 20} = \frac{\stratprof(p_1)}{5} + \frac{\budget}{16 \cdot 20} > \frac{\stratprof(p_1)}{5} = \alpha_1(p_1)$. 
    Thus, as $\alpha_2(p_2) > \alpha_1(p_1)$, project $p_4$ would regret saying the same cost as project $p_1$. Indeed, selecting $p_1$ in the first iteration significantly increases $\alpha_2(p_2)$ and $\alpha_2(p_3)$, so if $p_4$ said $\stratprof(p_4) = \stratprof(p_1)$, then it would benefit from slightly increasing the cost by $\epsilon$ (it would still be before $p_2$, but it would earn more). 
    For this reason, it cannot be the case that $\stratprof(p_1) = \stratprof(p_4)$, so due to the former reasoning we know that $\stratprof(p_1) = \stratprof(p_2)$ and $\stratprof(p_1) < \stratprof(p_4)$.

    So we know that: $\stratprof(p_1) = \stratprof(p_2)$, $\stratprof(p_1) <
		\stratprof(p_4)$, and $\stratprof(p_2) \leq \stratprof(p_3)$. We have two cases to consider:
    \begin{itemize}
        \item Project $p_4$ is selected in the second iteration. 
            It means that $\alpha_2(p_4) \leq \alpha_2(p_2)$. 
            We observe that it must be $\alpha_2(p_4) = \alpha_2(p_2)$, otherwise it would be beneficial for $p_4$ to slightly increase its cost by $\epsilon$ in such a way that $p_4$ is still considered before $p_2$ and gains more.
            
            In this case, as $p_2$ and $p_3$ are approved by disjoint sets of voters and both $p_1$ and $p_4$ took the same amount of money from the same number of their supporters, both $p_2$ and $p_3$ should request for the same amount of money, that is, all money that their supporters have. 
            Thus $\stratprof(p_3) = \stratprof(c_2) = 3 \cdot b + (b - \frac{\stratprof(p_1)}{5}) + (b - \frac{\stratprof(p_4)}{5}) = 5b - \frac{\stratprof(p_1)}{5} - \frac{\stratprof(p_4)}{5}$ and $\stratprof(p_4) = 5 \cdot (5b - \frac{\stratprof(p_1)}{5} - \stratprof(p_2)) = 25b - 5\stratprof(p_2) - \stratprof(p_1)$.
            
            However, we know that $\stratprof(p_1) = \stratprof(p_2)$, so $\stratprof(p_1) = \stratprof(p_2) = \stratprof(p_3) = 5b - \frac{\stratprof(p_1)}{5} - \frac{\stratprof(p_4)}{5}$, which implies that $\stratprof(p_4)  = 25b - 5\stratprof(p_2) - \stratprof(p_1) = 25b - 6\stratprof(p_1)$.
            Therefore, $\alpha_2(p_4) = \alpha_2(p_2)$ means that $\frac{\stratprof(p_4)}{5} = \frac{\stratprof(p_2) - (b - \frac{\stratprof(p_1)}{5})}{4} = \frac{6\stratprof(p_1)}{20} - \frac{b}{4}$, which is equivalent to $\frac{25b - 6\stratprof(p_1)}{5} = \frac{6\stratprof(p_1)-5b}{20} \Rightarrow 100b - 24\stratprof(p_1) = 6\stratprof(p_1) - 5b \Rightarrow 30\stratprof(p_1) = 105b \Rightarrow \stratprof(p_1) = \frac{7b}{2}$. Then we have $\stratprof(p_1) = \stratprof(p_2) = \stratprof(p_3) = \frac{7b}{2}$ and $\stratprof(p_4) = 25b - 6\stratprof(p_1) = 25b - 21b = 4b$.
            
            Now imagine that project $p_1$ changes its cost from $\stratprof(p_1) = \frac{7b}{2}$ to $\stratprof'(p_1) = \frac{36b}{10}$.
            Naturally, projects $p_2$ and $p_3$ are selected in first two iterations so voters $v_1, \ldots, v_5$ supporting $p_1$ are left with $b - \frac{\stratprof(c_2)}{5}, b, b, b, b - \frac{\stratprof(p_3)}{5}$ money respectively and they have $b - \frac{\stratprof(c_2)}{5} + b + b + b + b - \frac{\stratprof(p_3)}{5} = 5b - 2 \cdot \frac{7b}{10} = \frac{36b}{10}$ money in total which they can only spend on project $p_1$ as they disapprove $p_4$.
            So $p_1$ gets selected with $\stratprof'(p_1) = \frac{36b}{10} > \frac{7b}{2} = \stratprof(p_1)$ which indicates that $\stratprof$ could not be \MesAprNE{}.
            
        \item Project $p_2$ is selected in the second iteration.
            It means that $\alpha_2(p_2) \leq \alpha_2(p_4)$. 
            We observe that it must be $\alpha_2(p_2) = \alpha_2(p_4)$, otherwise it would be beneficial for $p_2$ to slightly increase its cost in such a way that $p_2$ is still considered before $p_4$ and gains more (please note that even if $p_3$ would jump before $p_2$, it takes money from the disjoint set of voters from $p_2$'s supporters, so it does not disprove this argument).
            
            By performing analogous calculations to the previous case and using $\stratprof(p_1) = \stratprof(p_2)$ we obtain that $\alpha_2(p_2)= \frac{\stratprof(p_2) - (b - \frac{\stratprof(p_1)}{5})}{4} = \frac{6\stratprof(p_1)}{20} - \frac{b}{4}$, so $\alpha_2(p_4) = \alpha_2(p_2)  \Rightarrow  \frac{\stratprof(p_4)}{5} = \frac{6\stratprof(p_1)}{20} - \frac{b}{4}  \Rightarrow  \stratprof(p_4) = \frac{6\stratprof(p_1)-5b}{4}$.

            After $p_1$ and $p_2$ are selected, voter $v_1$ has $0$ money left, voters $v_2, \ldots, v_5$ have $b - \frac{\stratprof(p_1)}{5}$ money left, voters $v_{13}, \ldots, v_{16}$ have $b - \alpha_2(p_2) = b - (\frac{6\stratprof(p_1)}{20} - \frac{b}{4}) = \frac{25b-6\stratprof(p_1)}{20}$ money left, and voters $v_6, \ldots, v_{12}$ have $b$ money left.

            We need to consider two cases:
            \begin{itemize}
                \item $p_3$ is selected before $p_4$.
                    Then $\stratprof(p_3) = \stratprof(p_2)$, otherwise (if $\stratprof(p_2) < \stratprof(p_3)$) $p_2$ would increase its cost by $\epsilon$ in such a way that $\stratprof(p_2) + \epsilon < \stratprof(p_3)$ and still be selected in the second iteration, but with more money.
                    Because $\stratprof(p_3) = \stratprof(p_2) = \stratprof(p_1)$ and the supporters of $p_2$ and $p_3$ are disjoint and symmetric, after selecting $p_3$, voter $v_5$ be left with no money whereas voters $v_6, \ldots, v_9$ with $b - \alpha_3(p_3) = b - \alpha_2(p_2) = \frac{25b-6\stratprof(p_1)}{20}$.
                    Further, in the fourth iteration, the supporters of $p_4$ have together $3b + 2 \cdot \frac{25b-6\stratprof(p_1)}{20} = \frac{30b+25b-6\stratprof(p_1)}{10} = \frac{55b-6\stratprof(p_1)}{10}$, so $p_4$ asks for the money they still have in total. 
                    
                    Therefore, $\stratprof(p_4) = \frac{6\stratprof(p_1)-5b}{4} = \frac{55b-6\stratprof(p_1)}{10}  \Rightarrow  30\stratprof(p_1)-25b = 110b-12\stratprof(p_1)  \Rightarrow  42\stratprof(p_1) = 135b  \Rightarrow  \stratprof(p_1) = \frac{135b}{42}$.
                    Thus $\stratprof(p_4) = \frac{6\stratprof(p_1)-5b}{4} = \frac{6 \cdot \frac{135b}{42} - 5b}{4} = \frac{135b - 35b}{4 \cdot 7} = \frac{100b}{28} = \frac{150b}{42}$.

                    Imagine what happens if $p_1$ increases its cost from $\stratprof(p_1) = \frac{135b}{42}$ to $\stratprof'(p_1) = \frac{156b}{42}$. 
                    Then $p_2$ gets selected first with cost $\stratprof(p_2) = \frac{135b}{42}$ and each of its supporters pays $\frac{\stratprof(p_2)}{5} = \frac{\frac{135b}{42}}{5} = \frac{27b}{42}$. Next, each of $p_3$'s supporters will pay $\frac{27b}{42}$ to purchase $p_3$ (voters supporting $p_2$ and $p_3$ are disjoint). After that, regardless whether $p_4$ is selected before $p_1$ or not, the supporters of $p_1$ have $3b + 2 \cdot (b - \frac{27b}{42}) = \frac{3 \cdot 42 + 2 \cdot 15b}{42} = \frac{156b}{42}$, so the supporters of $p_1$ have enough money to purchase $p_1$. 
                    We showed that $p_1$ could improve its utility by changing its cost, so the proposed \stratprof{} is not \MesAprNE{} in this case.

                \item $p_4$ is selected before $p_3$. 
                    It means that $\alpha_3(p_4) \leq \alpha_3(p_3)$.
                    As we know how much money each voter has in the third iteration, $\alpha_3(p_4) = \frac{\stratprof(p_4)-\frac{25b-6\stratprof(p_1)}{20}}{4} = \frac{20\stratprof(p_4)-25b+6\stratprof(p_1)}{80}$ and $\alpha_3(p_3) = \frac{\stratprof(p_3)-\frac{\stratprof(p_1)}{5}}{4} = \frac{20\stratprof(p_3)-4\stratprof(p_1)}{80}$.
                    After inserting the exact values we obtain that $\alpha_3(p_4) \leq \alpha_3(p_3)  \Rightarrow  \frac{20\stratprof(p_4)-25b+6\stratprof(p_1)}{80} \leq \frac{20\stratprof(p_3)-4\stratprof(p_1)}{80}  \Rightarrow  20\stratprof(p_3) \geq 20\stratprof(p_4)-25b+10\stratprof(p_1)  \Rightarrow  \stratprof(p_3) \geq \stratprof(p_4) + \frac{\stratprof(p_1)}{2} - \frac{5b}{4}$.
                    However, $\stratprof(p_3) \geq \stratprof(p_4) + \frac{\stratprof(p_1)}{2} - \frac{5b}{4} > \stratprof(p_4) + \frac{3b}{2} - \frac{5b}{4} = \stratprof(p_4) + \frac{b}{4} > \stratprof(p_1) + \frac{b}{4} = \stratprof(p_2) + \frac{b}{4}$, so $\stratprof(p_3)$ is significantly greater than $\stratprof(p_2)$.
                    In other words, even when voter $v_9$ pays a greater share to buy $p_4$ and is left with less money than $\frac{\stratprof(p_4)}{5}$, there is still enough money to purchase $p_3$. 
                    Therefore, $p_2$ can increase its cost from $\stratprof(p_2) = \stratprof(p_1)$ to $\stratprof(p_3)-\epsilon$ and lose in the second iteration with $p_4$, but still be funded before $p_3$ and obtain more money. 
                    For this reason, $\stratprof$ is not \MesAprNE{}.
            \end{itemize}

    \end{itemize}

    We showed that in no case \MesAprNE{} exists for this instance. The remaining tie-breaking order are completely analogous due to the instance's symmetry.
    
    It means that for the given instance there is no tie-breaking for which \MesAprNE{} exists.
\end{proof}%
}

\MesApr{} and  \Phragmen{} might appear similar, as we have guarantees of NE existence
only for very restricted domains, examples with nonexisting NE for
given delivery costs, %
and small examples with no NE regardless of the
tie-breaking order. However, despite optimizing a similar function
(i.e., minimizing the maximum amount of money one voter pays to
buy a project), \MesApr{} and \Phragmen{} differ in terms of
both the NE existence and the obtained equilibria. For example,
there are instances in
which symmetric projects (i.e.,\,projects that have exactly the same sets of
supporters) ask for different amounts in a \MesAprNE{}, but report identical
costs under $\Phragmen$. The proof is in the appendix.

\begin{theorem}\label{thm:mesAprUnsymmetricCostsForSymmetricProjects}
  In some instances symmetric projects (w.r.t. their supporters) submit different costs in a~\MesAprNE{}.
\end{theorem}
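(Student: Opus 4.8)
The plan is to exhibit a single concrete instance together with one strategy profile $\stratprof$ that is a $\MesAprNE{}$ and in which two projects with identical supporter sets report different costs; since the statement is existential, I only need to verify that this one profile is an equilibrium, not that every equilibrium has this shape. A guiding first observation is that the witnessing instance must be \emph{genuinely non-party-list}: by \Cref{thm:mesAprNEPartyListZeroDeliveryCost}, in a pure party-list profile the members of a party split their party's money equally and hence report \emph{equal} costs, so two symmetric projects there never differ. Thus, in the instance I build, the shared supporters of the two twins $p$ and $q$ must also approve one or two \emph{auxiliary} projects (together with outside voters), and they must do so \emph{asymmetrically}, so that different supporters of $\{p,q\}$ are drained by different amounts before the twins are purchased.

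The mechanism I would exploit is the interaction between the cap $\min(b_i,\alpha)$ in \MesApr{} and \emph{unequal} budgets among the shared supporters. With zero delivery costs, every funded project in a $\MesAprNE{}$ sits at its \emph{brink}: it reports the largest cost for which it is still selected (reporting more loses funding, cf.\ \Cref{obs:mesBound}; reporting less only lowers the payoff). Once the supporters of $\{p,q\}$ hold unequal balances, the map from a project's reported cost to its affordability coefficient $\alpha$ is piecewise linear with a kink at the balance where the poorer shared supporters cap out. Consequently the twin that \MesApr{} processes \emph{first} extracts money under one regime (all shared voters still paying a common $\alpha$), while the twin processed \emph{later} faces an effectively reduced payer set (some shared voters already exhausted) and so a \emph{different} brink cost. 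This is exactly what breaks the equal-cost symmetry forced in the party-list case, and it is the phenomenon I would turn into the example.

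Concretely, I would take a handful of voters, let $p$ and $q$ be approved by a common set $S$, and add auxiliary projects approved by proper subsets of $S$ (plus outside voters), tuned so that \MesApr{} processes them in an order such as auxiliary--$p$--$q$ (or auxiliary--$p$--auxiliary--$q$). I would then choose costs placing each project at its brink, read off $\stratprof(p)\neq\stratprof(q)$, and check the relevant deviations: the auxiliary projects staying at their brinks, the more expensive twin (processed last on a depleted pool) being unable to profitably raise, and, crucially, the cheaper twin being unable to profitably raise. The main obstacle is precisely this last check together with overall consistency. In two-supporter attempts the requirements fight one another: making the cheaper twin's small upward deviation \emph{unprofitable} forces that deviation to push it behind a competitor which then drains the shared pool below the raised cost, yet the \emph{same} drainage tends to leave the more expensive twin unfundable at equilibrium, collapsing the construction. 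Resolving this seems to require enough shared supporters (and a sharp enough cap) that a deviation by one twin \emph{irreversibly} reroutes money to a competitor while the equilibrium order still leaves the other twin barely affordable; finding the small instance that threads this needle is where the real work lies. I would close by recording the intended contrast: under \Phragmen{} the purchase time of a project depends only on the continuous accumulation of its supporters' money, which is identical for two projects with the same supporters, so \Phragmen{} necessarily assigns such projects \emph{equal} costs.
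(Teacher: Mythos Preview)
Your proposal is not a proof: you lay out a plan, identify the obstacles, and then explicitly stop at ``finding the small instance that threads this needle is where the real work lies.'' Since the theorem is purely existential, \emph{all} of the content is in exhibiting and verifying one concrete instance; a plan without the instance is no proof at all.

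More importantly, your plan is aimed at a harder target than the one the paper actually hits. You read ``symmetric projects (w.r.t.\ their supporters)'' as \emph{twins} with literally identical approval sets $A(p)=A(q)$, and then correctly argue that any witnessing instance must be non-party-list and must contain auxiliary projects that asymmetrically drain the shared supporters. The paper, however, uses ``symmetric'' in the \emph{automorphism} sense: its two witnesses $p_1$ and $p_2$ have \emph{disjoint} singleton supporter sets $\{v_1\}$ and $\{v_2\}$, and are symmetric only because the map $v_1\!\leftrightarrow\! v_2$, $p_1\!\leftrightarrow\! p_2$ is an automorphism of the instance. (Yes, this does not match the paper's own parenthetical ``i.e.,\ projects that have exactly the same sets of supporters''; the proof, not the gloss, is what matters.) Under this weaker reading the construction is tiny: three voters $v_1,v_2,v_3$, three projects with $A(p_1)=\{v_1\}$, $A(p_2)=\{v_2\}$, $A(p_3)=\{v_1,v_2,v_3\}$, zero delivery costs, and tie-breaking $p_1\succ p_2\succ p_3$. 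The profile $\stratprof=(\tfrac{7B}{36},\tfrac{8B}{36},\tfrac{21B}{36})$ is a $\MesAprNE$: in round~1 $p_1$ and $p_3$ tie at $\alpha=\tfrac{7B}{36}$ and $p_1$ is bought; in round~2 $p_2$ and $p_3$ tie at $\alpha=\tfrac{8B}{36}$ and $p_2$ is bought; in round~3 $p_3$ exhausts all remaining money. Any unilateral increase by $p_1$ or $p_2$ lets $p_3$ jump ahead and leaves the deviator's sole supporter with too little to cover the raised cost; any increase by $p_3$ leaves it short after $p_1$ and $p_2$ are bought.

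Note what does the work in the paper's example: it is \emph{only} the tie-breaking order that breaks the $p_1\!\leftrightarrow\!p_2$ symmetry, not any asymmetric auxiliary overlap as in your plan. Because $p_1$ is bought first, $p_3$'s threshold in round~2 rises to exactly $\tfrac{8B}{36}$, which lets $p_2$ sit one notch higher than $p_1$ while still (just) beating $p_3$ via tie-breaking. This sidesteps entirely the ``threading the needle'' difficulty you describe. If you insist on the identical-supporter reading, you are proving a strictly stronger statement than the paper does, and you will indeed need a more delicate construction; but for the theorem as the paper proves it, the three-voter example above is the whole argument.
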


\appendixproof{Theorem}{thm:mesAprUnsymmetricCostsForSymmetricProjects}{%
\begin{proof}
     Our example is in fact the example from Proposition \ref{pro:phrag-laminar-no-ne} narrowed to prim voters and projects.

    We have three voters $v_1, v_2$, and $v_3$ as well as three projects $p_1, p_2$, and $p_3$.
    Projects $p_1$ and $p_2$ are approved by, respectively, voters $v_1$ and $v_2$, while project $p_3$ is approved by all three voters.
    We set the budget to be some positive integer denoted as $\budget$ and the delivery costs of these projects to be $\baseline \equiv 0$.
    We set the tie-breaking between projects to be $p_1 \succ p_2 \succ p_3$.
    For the sake of brevity, for yet-unselected project $p_i$ in iteration $j$, we denote $\alpha_j(p_i)$ as minimum value such that project $p_i$ is $\alpha_j(p_i)$-affordable according to \MesApr{} rule (or infinity, if such a value does not exist). 

    Let $(\stratprof(p_1) = \frac{7\budget}{36}, \stratprof(p_2) = \frac{8\budget}{36}, \stratprof(p_3) = \frac{21\budget}{36})$ be the costs the projects say. We will show that it is \MesAprNE{}.

    At the beginning, each of three voters receives $\frac{\budget}{3} = \frac{12\budget}{36}$ money.

    In the first iteration, $\alpha_1(p_1) = \frac{\nicefrac{7\budget}{36}}{1} = \frac{7\budget}{36}, \alpha_1(p_2) = \frac{\nicefrac{8\budget}{36}}{1} = \frac{8\budget}{36}, \alpha_1(p_3) = \frac{\nicefrac{21\budget}{36}}{3} = \frac{7\budget}{36}$, so $p_1$ and $p_3$ are tied, while $p_2$ has greater $\alpha$-value and is skipped for now. Thus, due to tie-breaking, we select $p_1$ and voter $v_1$ pays $\frac{7\budget}{36}$ for $p_1$.

    In the second iteration, $v_1$ has $\frac{5\budget}{36}$ money while both $v_2$ and $v_3$ have $\frac{12\budget}{36}$. Thus, since the budget of $v_2$ did not change, $\alpha_2(p_2) = \alpha_1(p_2) = \frac{8\budget}{36}$. In order to purchase $p_3$, voter $v_1$ would need to spend all left money whereas $v_2$ and $v_3$ would need to pay more to make up $v_1$'s insufficient money. More precisely, $\alpha_2(p_3) = \frac{\nicefrac{21\budget}{36} - \nicefrac{5\budget}{36}}{2} = \frac{\nicefrac{16\budget}{36}}{2} = \frac{8\budget}{36}$. So projects $p_2$ and $p_3$ are tied and we select $p_2$ due to tie-breaking (by taking money from $v_2$).

    Finally, in the third iteration, we will select $p_3$ as its supporters have $\frac{5\budget}{36} + \frac{4\budget}{36} + \frac{12\budget}{36} = \frac{21\budget}{36} = \stratprof(p_3)$, all voters will use use all their money for it.

    We showed that each project is selected with the proposed costs so no project benefits from lowering its costs. Next, $p_3$ will not increase its cost as its supporters would not have enough money to buy it in the last iteration. Further, $p_2$ will not increase its cost as it would lose with $p_3$ in the second iteration and its only supporter $v_2$ would pay $\frac{8\budget}{36}$ for $p_3$, leaving insufficient $\frac{4\budget}{36}$ for $p_2$. Analogously, $p_1$ will not increase its cost as it would lose with $p_3$ in the first iteration and its only supporter $v_1$ would pay $\frac{7\budget}{36}$ for $p_3$, leaving insufficient $\frac{5\budget}{36}$ for $p_1$.

    For this reason, $\stratprof = (\stratprof(p_1) = \frac{7\budget}{36}, \stratprof(p_2) = \frac{8\budget}{36}, \stratprof(p_3) = \frac{21\budget}{36})$ is a \MesAprNE{} for the given instance.

    One can show in an analogous way that $\stratprof' = (\stratprof'(p_1) = \frac{8\budget}{36}, \stratprof'(p_2) = \frac{7\budget}{36}, \stratprof'(p_3) = \frac{21\budget}{36})$ is also \MesAprNE{} for the given instance.
\end{proof}%
}

\section{Experiments}

\begin{table}[t]\centering
 \begin{tabular}{r | c c c c c c }
          Instance & $|P|$ & $|V|$ & Budget & \hspace{-0.14cm} Vote Len. \hspace{-0.1cm} & Rule\\
          \midrule
         Wesola & 29 & 1182 & \multicolumn{1}{r}{~$1 011$k PLN} & 7.87 &\textbasicAV \\
         Kleine Wereld & 52 & 426 & \multicolumn{1}{r}{250k EUR} & 11.93  & \textbasicAV\\
          \midrule%
    \end{tabular}
  \caption{PB instances analyzed in our experiments. ``Vote
    Len.'' means the average number of approvals in a ballot. ``Rule''
    shows the rule that was used in the original election.}
     \label{tab:pb_details}
 \end{table}

We move on to the experimental analysis of our games for real-life PB
instances from
Pabulib~\citep{fal-fli-pet-pie-sko-sto-szu-tal:c:pabulib}. We focus on two instances, one held in 2022 in Wesola (a district of
Warsaw, Poland) and one held in 2019 in Kleine Wereld (a part of %
the Noord District of Amsterdam, The Netherlands). A few details
regarding these instances are available in \Cref{tab:pb_details}.  We
have also analyzed many other instances from Pabulib. We show some of
them in \Cref{app:experiments}. While all of them lead to similar
conclusions, the instances we chose are 
particularly illustrative.

We perform two experiments. First, we take a given PB
instance and the original costs reported for the projects.
Then, for each project we compute its best response.
Second, we seek equilibiria in the games defined by our
instances (with zero delivery costs).

\newcommand\marginplotwidth{4.1cm}
\newcommand\gameplotwidth{4.1cm}

\begin{figure}[t!]
     \centering
     \includegraphics[width=\marginplotwidth]{./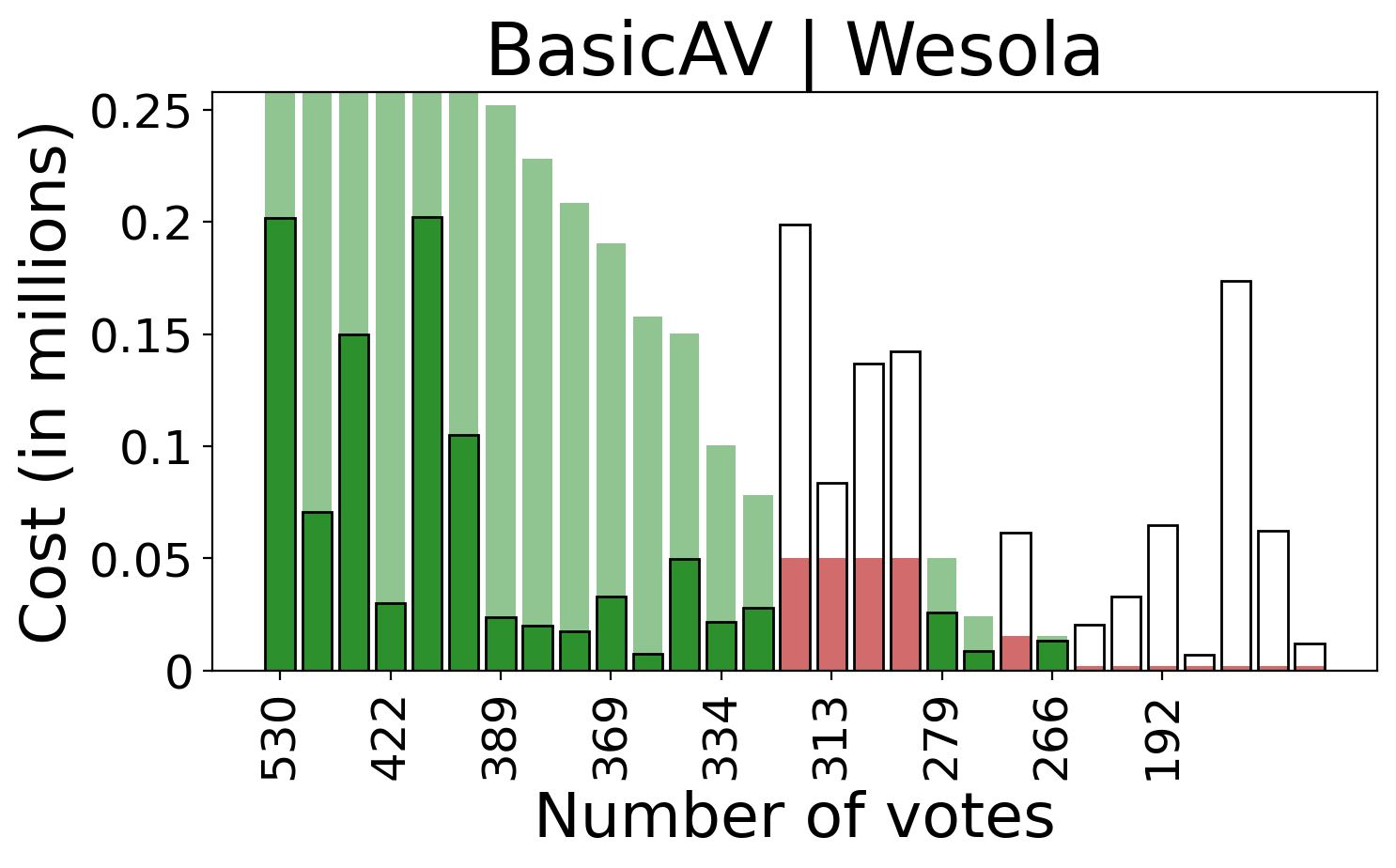}%
         \hspace{0mm}
    \includegraphics[width=\marginplotwidth]{./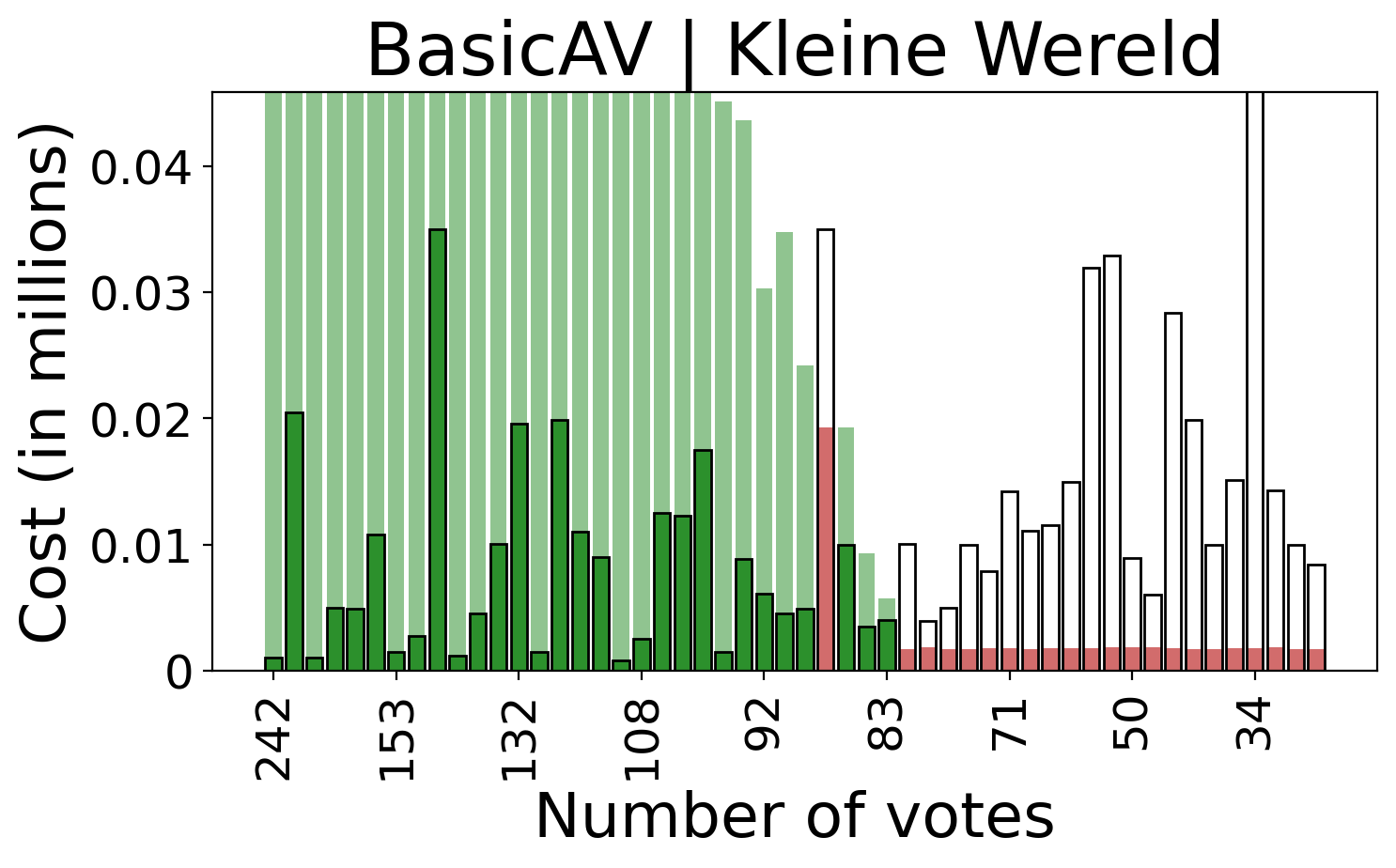}
   
   \includegraphics[width=\marginplotwidth]{./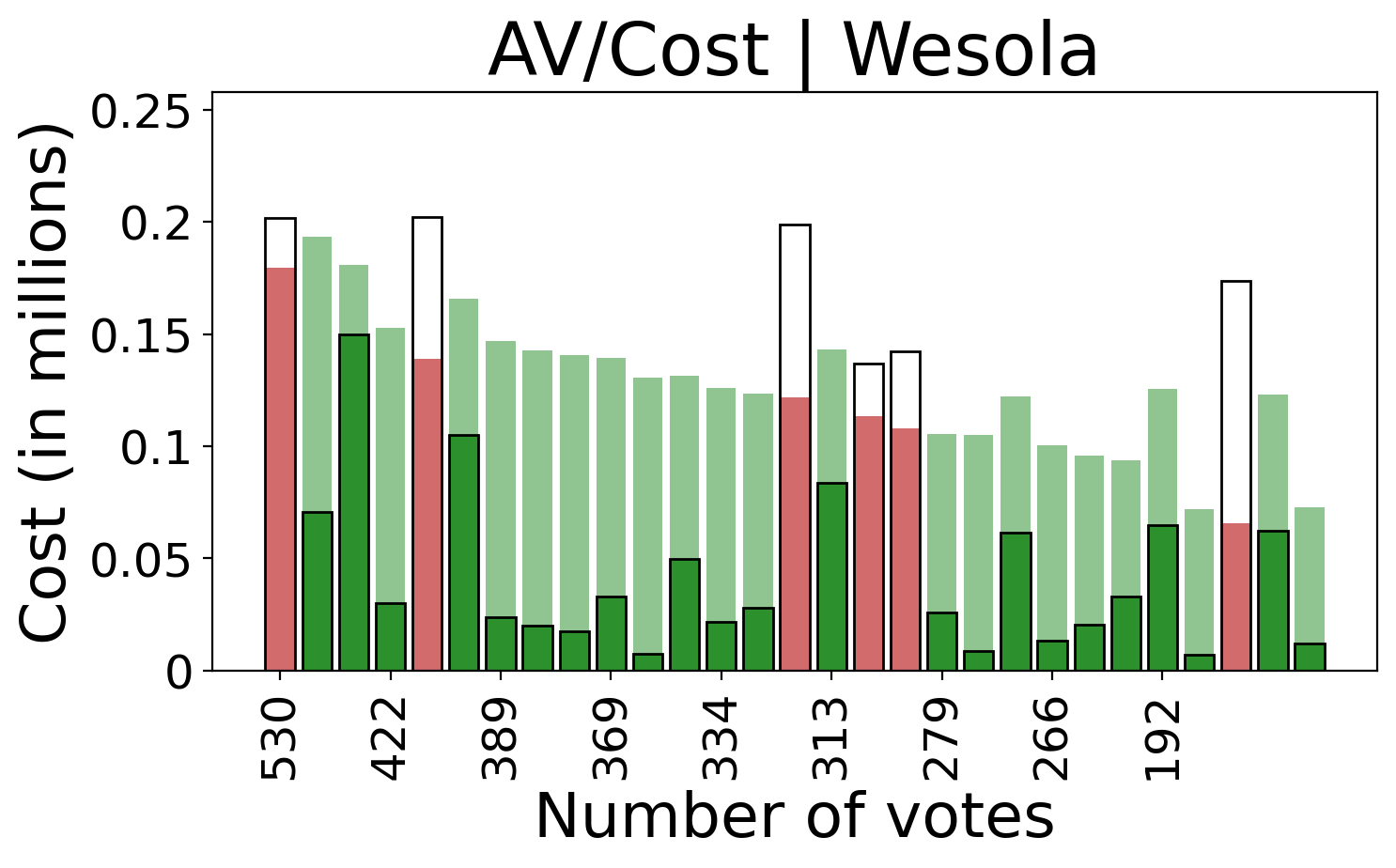}%
         \hspace{0mm}
    \includegraphics[width=\marginplotwidth]{./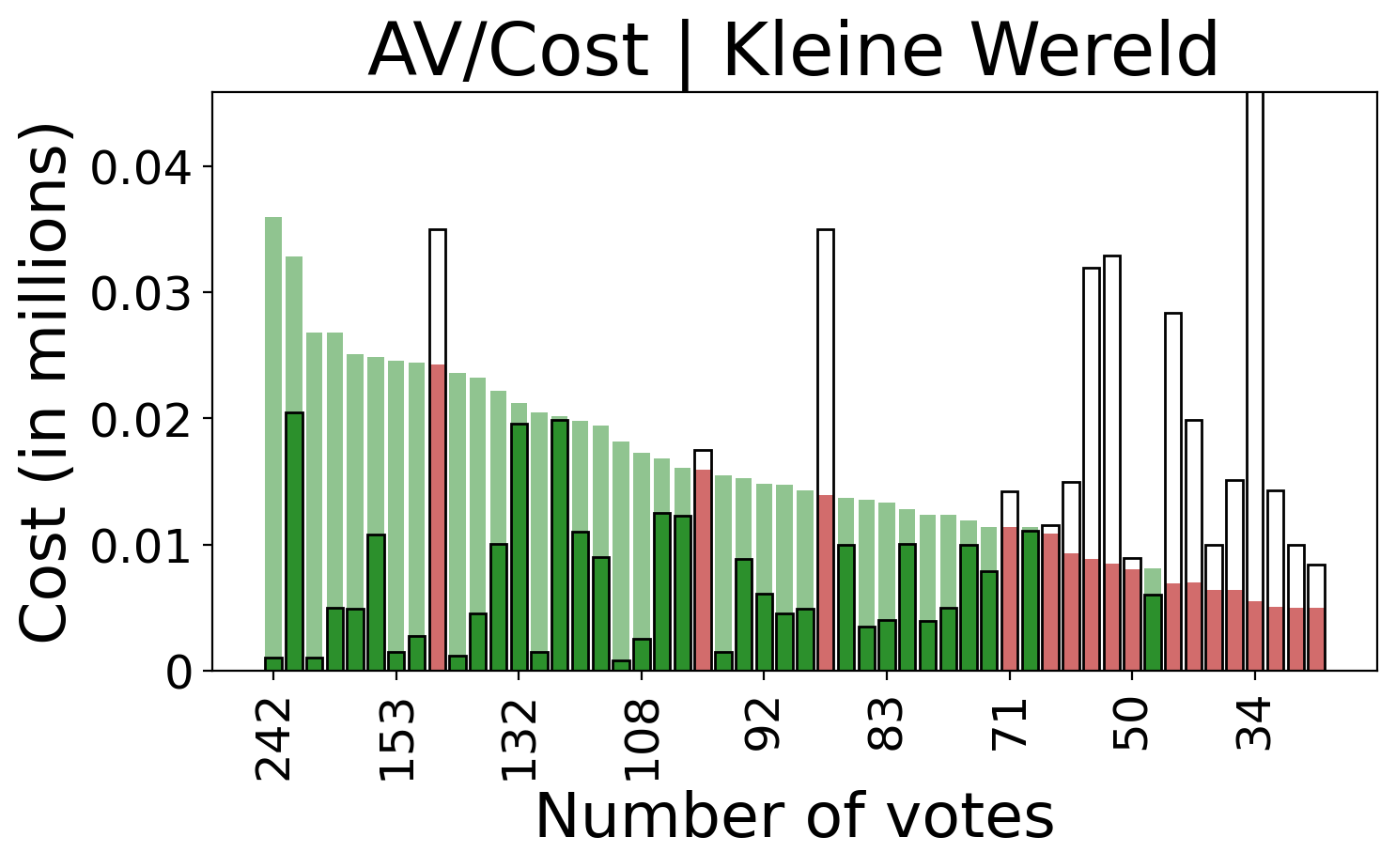}
    
    \includegraphics[width=\marginplotwidth]{./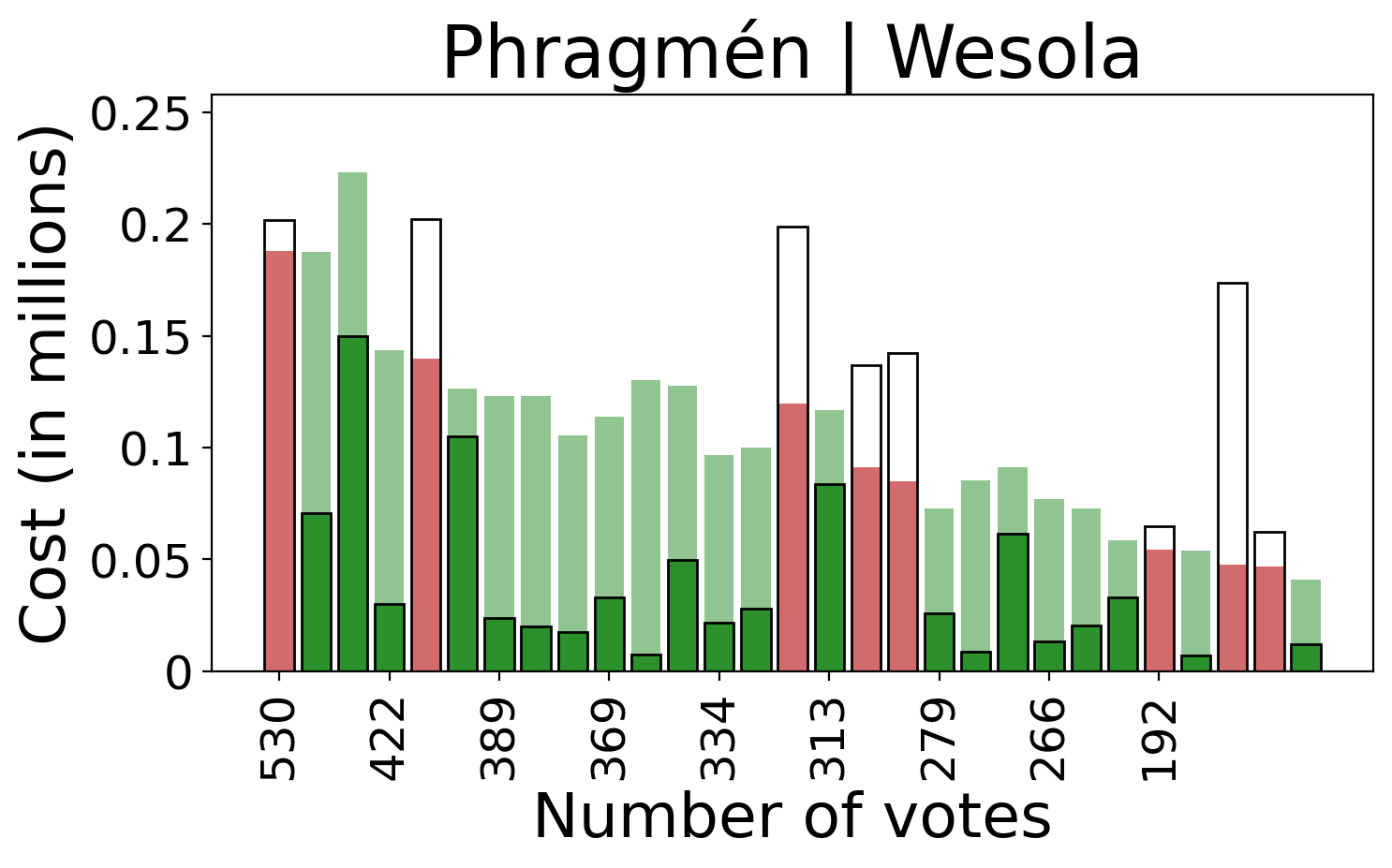}%
         \hspace{0mm}
    \includegraphics[width=\marginplotwidth]{./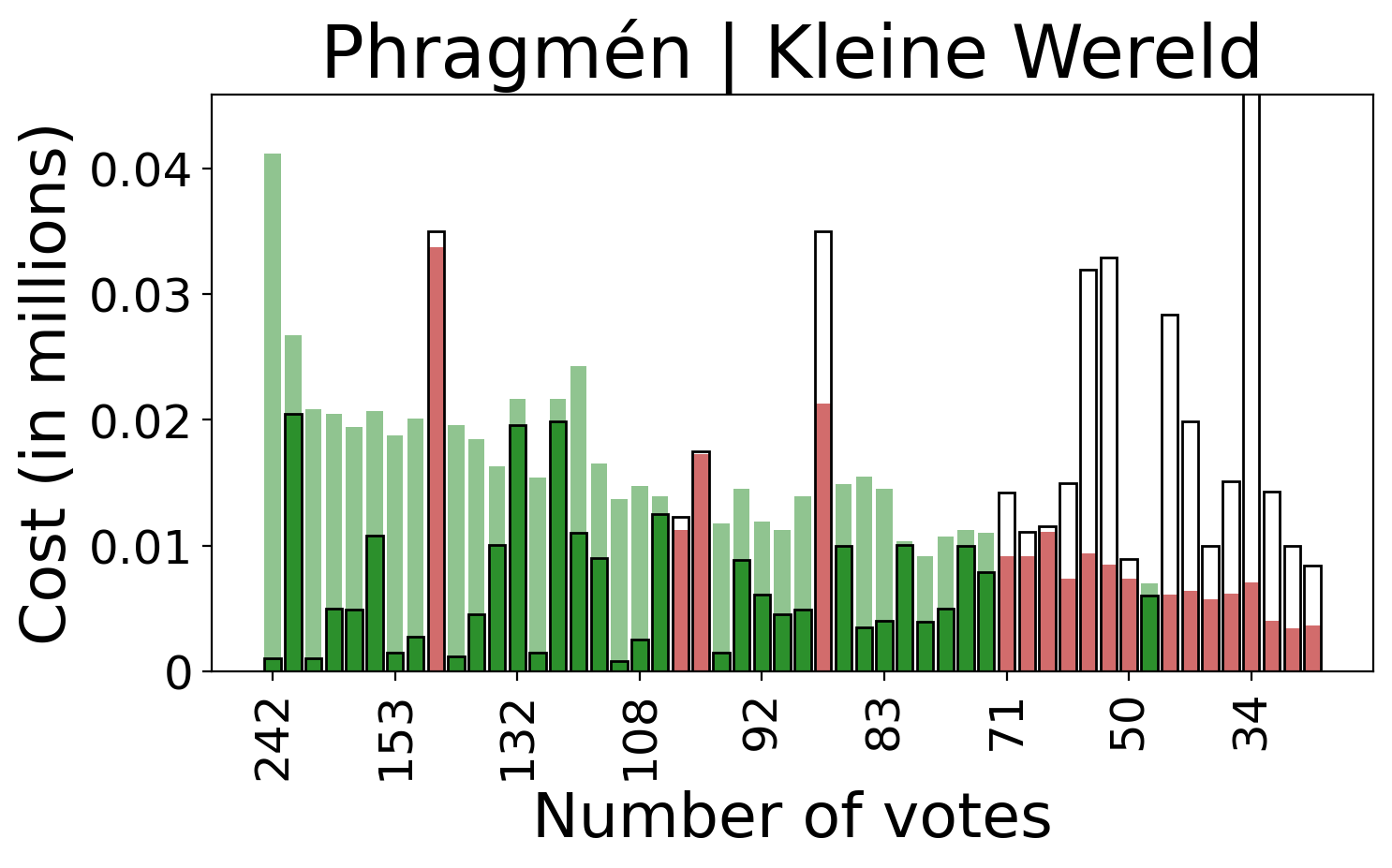}

    \includegraphics[width=\marginplotwidth]{./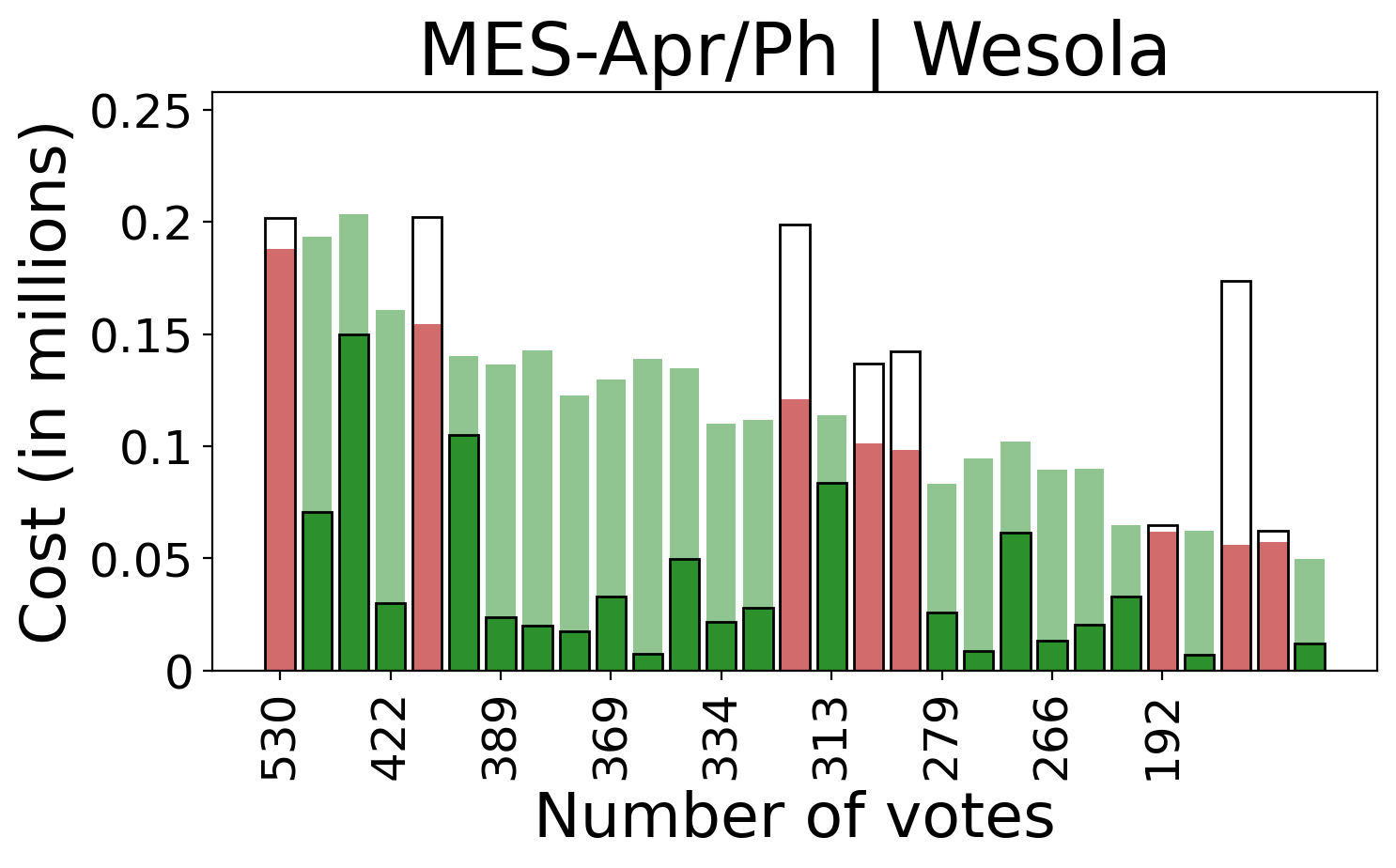}%
         \hspace{0mm}
    \includegraphics[width=\marginplotwidth]{./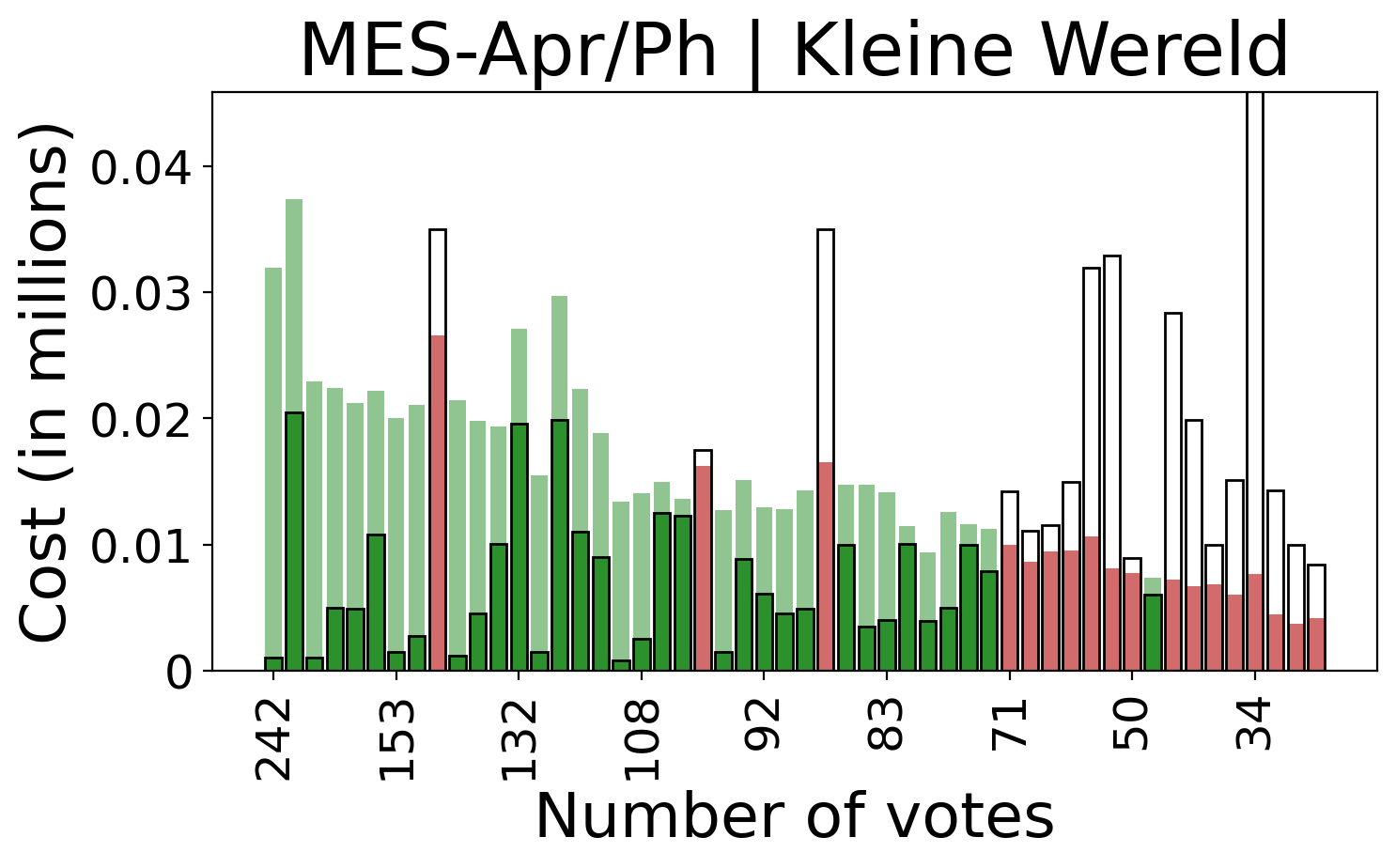}
    
    \includegraphics[width=\marginplotwidth]{./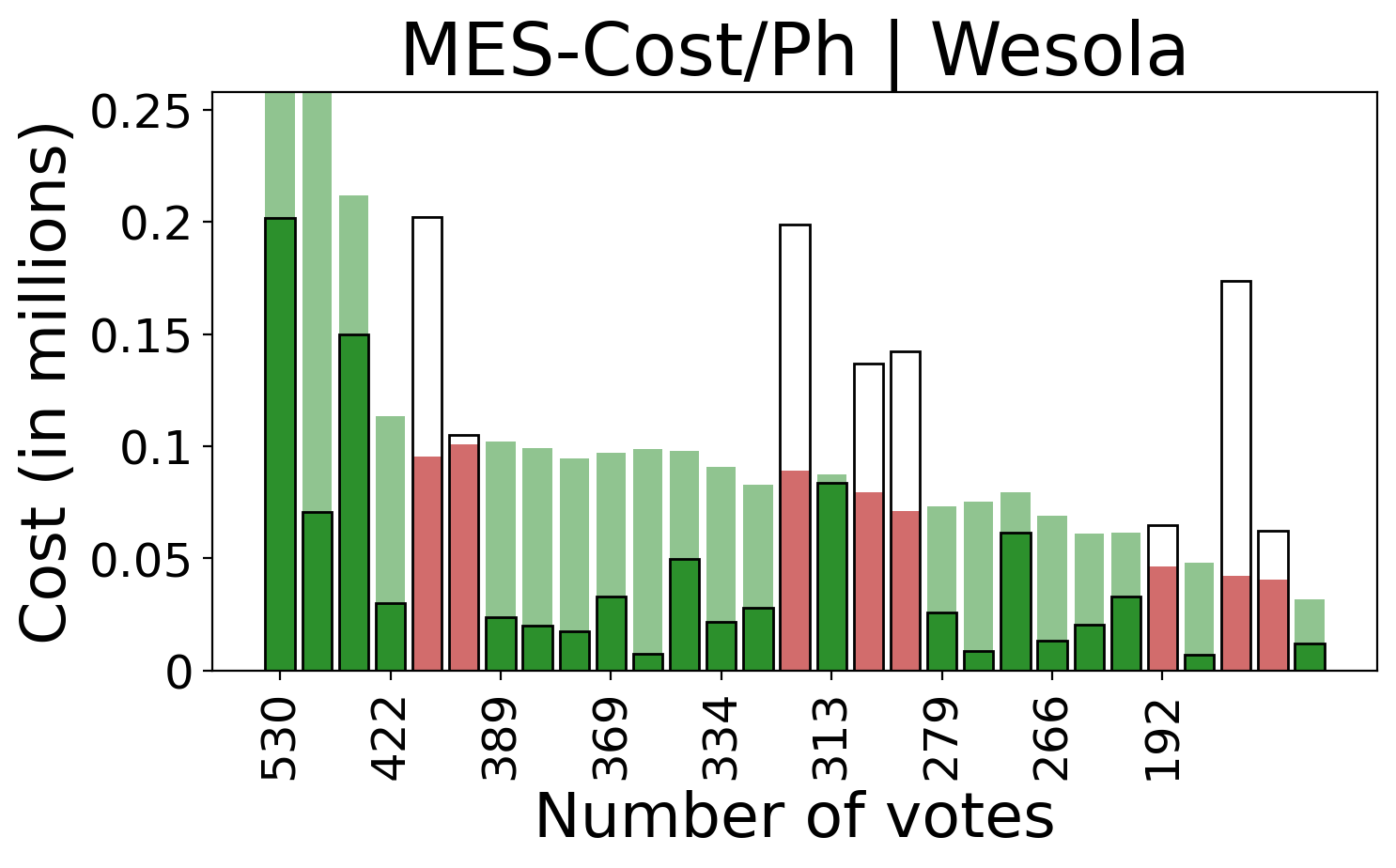}%
          \hspace{0mm}
    \includegraphics[width=\marginplotwidth]{./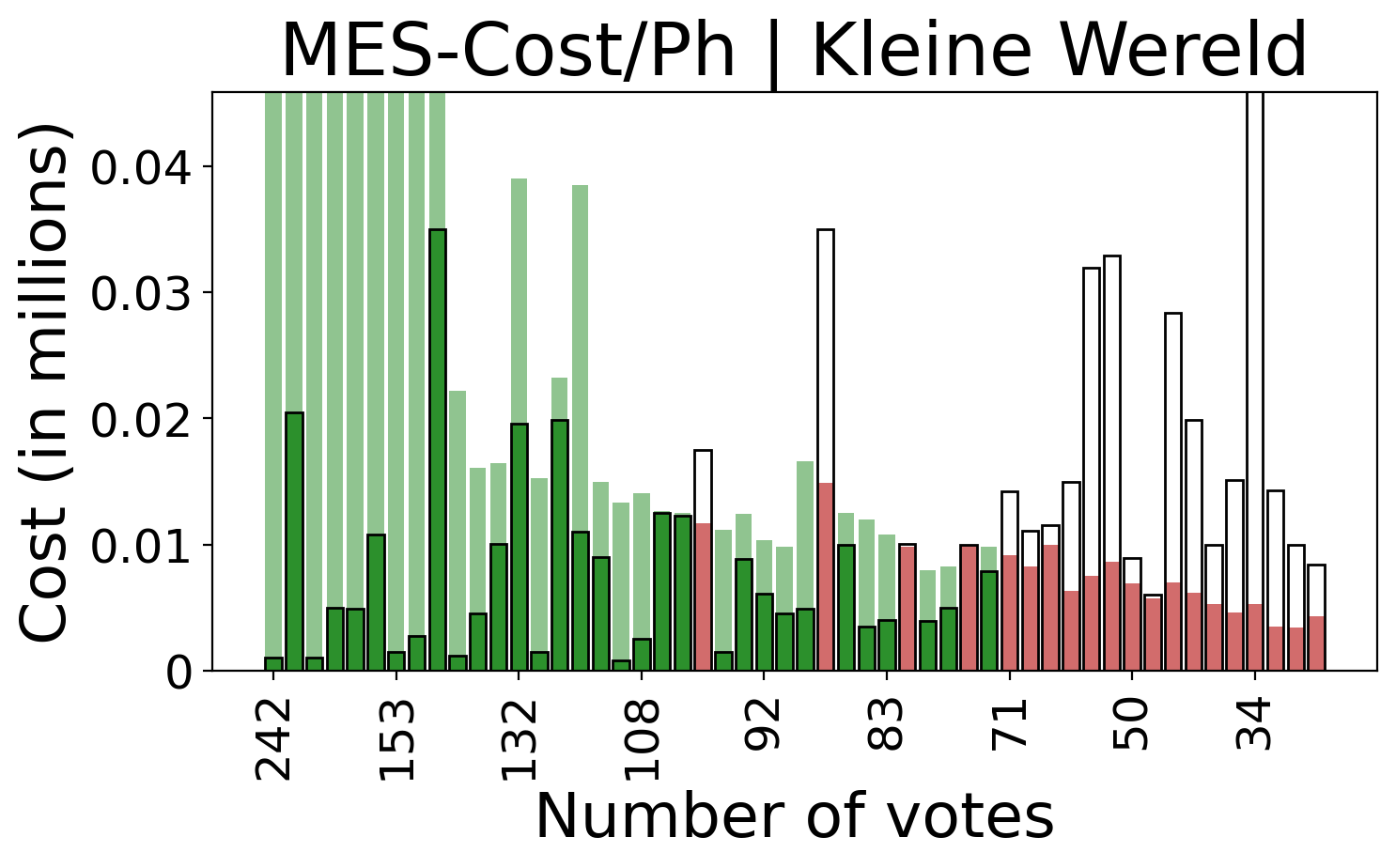}
    
    \caption{Winning and losing margins in real-life PB. Each bar
      shows a single project (the projects are ordered by their
      approval score, shown on the ${x}$~axis). Green and red ones
      give best responses (where green means a winning margin and red
      means a losing one).
      Black
	    rectangles show the original costs, with budget $1 011$k PLN for Wesola and 250k EUR for Kleine Wereld.}\label{fig:margins} 
\end{figure}

\subsection{Winning and Losing Margins}

Take a PB instance $E$ %
and a PB rule~$f$. Intuitively, for each project $p$ its \emph{best response}
$\response(p)$ is the highest cost such that if we use it instead of
$\cost(p)$, then~$p$ is (still) funded under~$f$. We define it as the supremum of the set of costs under which
$p$ is funded. Then, if~$p$ is winning in~$E$ under~$f$, then we let
its \emph{winning margin} be $\response(p) - \cost(p)$, and if it is
losing then we let its losing margin be $\cost(p) - \response(p)$.

In~\Cref{fig:margins} we show the winning and losing margins (and,
hence, the best responses) for all the projects from our two PB
instances.  For $\textbasicAV$, we observe very high winning margins
for the projects with the largest numbers of approvals, and close-to-zero
best responses for the remaining ones (this phenomenon is particularly
visible in Kleine Wereld).  In principle, under $\textbasicAV$ if a
project receives few votes even making it cheap will not make it
funded, unless it is so cheap that it is selected simply because
nothing else could fit within the budget left.  For $\textAVover$ the
best responses roughly correspond to the numbers of approvals each
project got; the some, roughly, holds for $\Phragmen$ and $\MESAP$.
The margins %
for $\MESP$ are significantly different. In
particular, for $\MESP$, several most-approved projects usually have
much higher winning margins than the rest, unlike for
$\textAVover$, $\Phragmen$ and $\MESAP$.

The main general observation is that unless a project received
relatively few approvals, then typically its proposers could have
reported a much higher cost. This suggests that they should choose
reasonable project costs that get their projects completed, rather
than minimize the costs as much as possible, with the hope that this
would improve their chances.
Yet, of course, there are projects where lowering their costs could
get them funded (see, e.g, \Cref{fig:margins} and the first project
for Wesola under $\textAVover$, $\Phragmen$, or $\MESAP$).

\subsection{Finding Equilibria Using Dynamics}
In the second experiment, our goal was to compute (approximate)
equilibria for our two instances under our rules. While for
$\textbasicAV$, $\textAVover$, and $\MESP$ we could compute the
equilibria using results from the previous section,\footnote{For
  $\MESP$ this follows from the fact that, for our instances, the
  equilibrium for $\MesCost$ is exhaustive.} this would not be
possible for $\Phragmen$ and $\MESAP$. Instead,  we simulate
certain dynamics (for the rules where we can compute equilibria
directly, our dynamics give nearly identical results, so we expect
that the results are also meaningful for the other rules, albeit with
no guarantees\footnote{For $\Phragmen$ and $\MESAP$ there may be no
  NE. We do not have an immediate way of checking this, but the
  results suggest that the profiles that we find are, at least, close
  to being equilibria.}).

Given a PB instance, our dynamics go as follows.  First, each proposer
reports the same cost as was originally chosen for his or her
project. %
Then, in each iteration, one of them, selected uniformly at random,
either slightly increases or decreases his or her project's
cost. Specifically, the proposer chooses a number $x$ between $0$ and
$\cost/10$ uniformly at random (where $\cost$ is the current project's
cost) and if their project was losing in the previous iteration, then
the proposer decreases its cost by $x$, and if it was winning, then he
or she increases its cost by $x$ (but if this action would have caused
the project to lose, then the proposer does not change the project's
cost).
We expect to converge to an NE, if one exists, after
running sufficiently many iterations.

In~\Cref{fig:games} we present the results of the dynamics, that is, the
obtained strategy profiles, after $10 \ 000$ iterations. Under
$\textbasicAV$, as expected, all the budget goes to the project
with the most votes. Under $\textAVover$, every project ends up with a
cost proportional to its support. Like in the previous
experiment, $\Phragmen$ and $\MESAP$ produce similar results,
while those of $\MESP$ remain different. For $\textbasicAV$,
$\textAVover$, and $\MESP$ most of the projects ``reached'' the costs
predicted by the equilibrium.\footnote{%
	In the equilibrium,
  all
	projects should be winning,
	so we should observe no red bars. Their existence
	means that the equilibrium was not reached within 10 000
  iterations.}

The main conclusion from this experiment (supported also by
simulations for other PB instances; see \Cref{app:experiments}) is
that under $\textAVover$, $\Phragmen$, and $\MESAP$ the proposers are
incentivized to use costs that mostly reflect the number of approvals
they receive. Under $\MESP$, most strongly supported projects, as well
as those that are supported by many voters who do not approve other,
more popular projects, can request notably higher costs. Consequently,
in an equilibrium the $\MESP$ rule funds fewer, more expensive projects than
$\textAVover$, $\Phragmen$, and $\MESAP$. We view this observation as
one of the more important take-home messages from our work, which can
be used to favor $\MESP$ in practice (while funding some cheap
projects is important, funding only such projects is logistically
problematic for cities, as it requires significant effort in
coordinating their implementation).

\begin{figure}[t!]
     \centering
     
     \includegraphics[width=\gameplotwidth]{./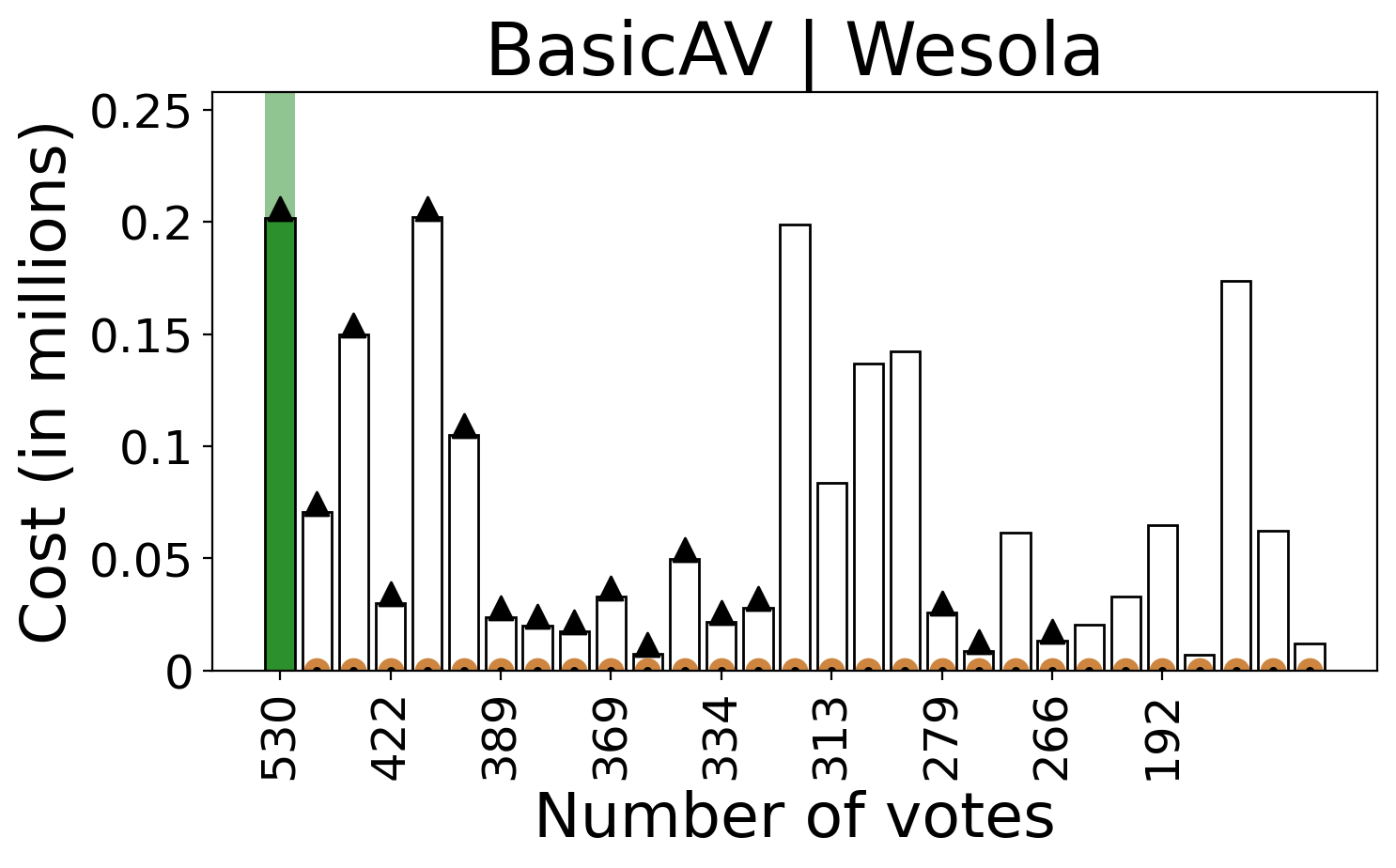}%
          \hspace{0mm}
     \includegraphics[width=\gameplotwidth]{./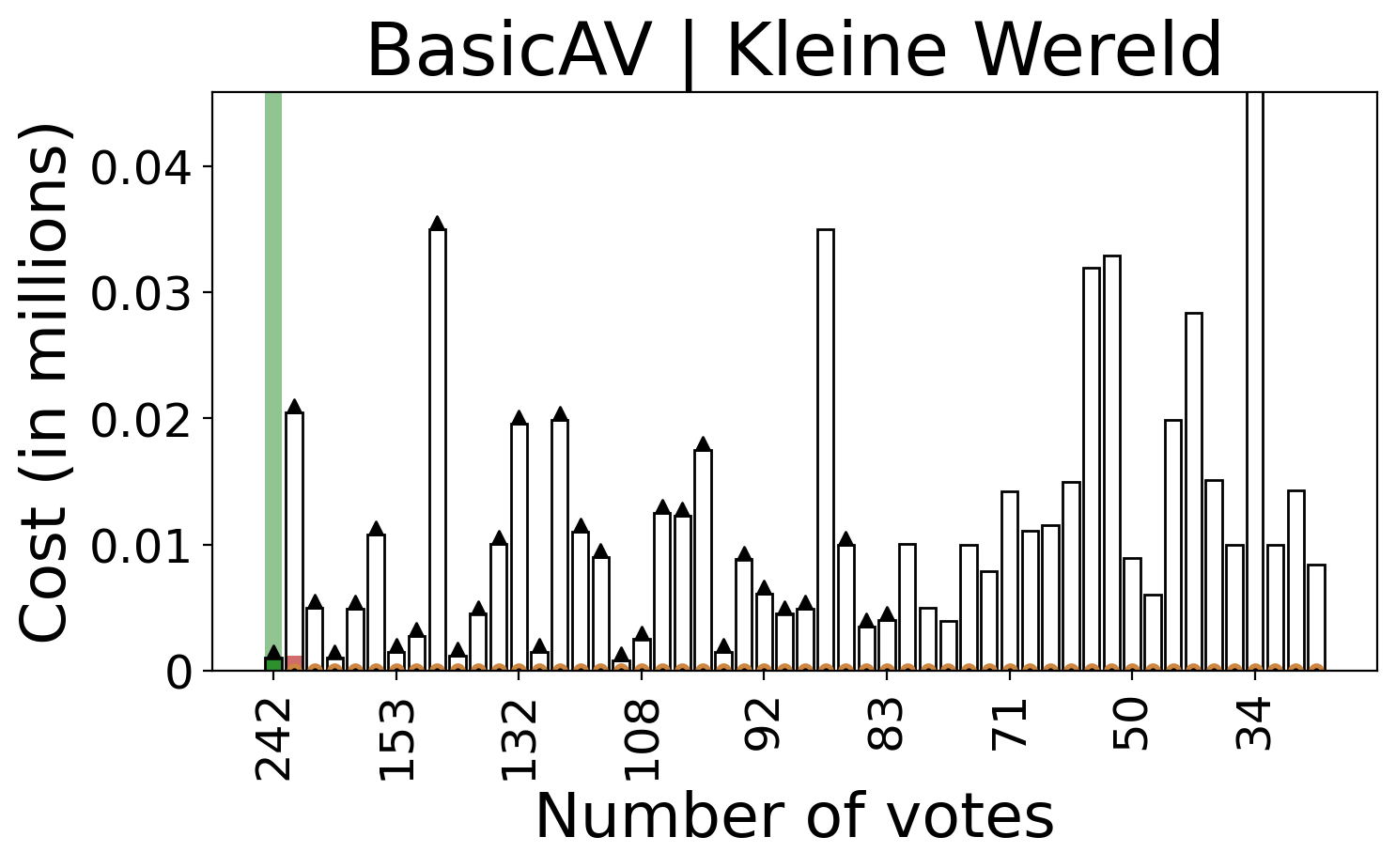}%
         
    \includegraphics[width=\gameplotwidth]{./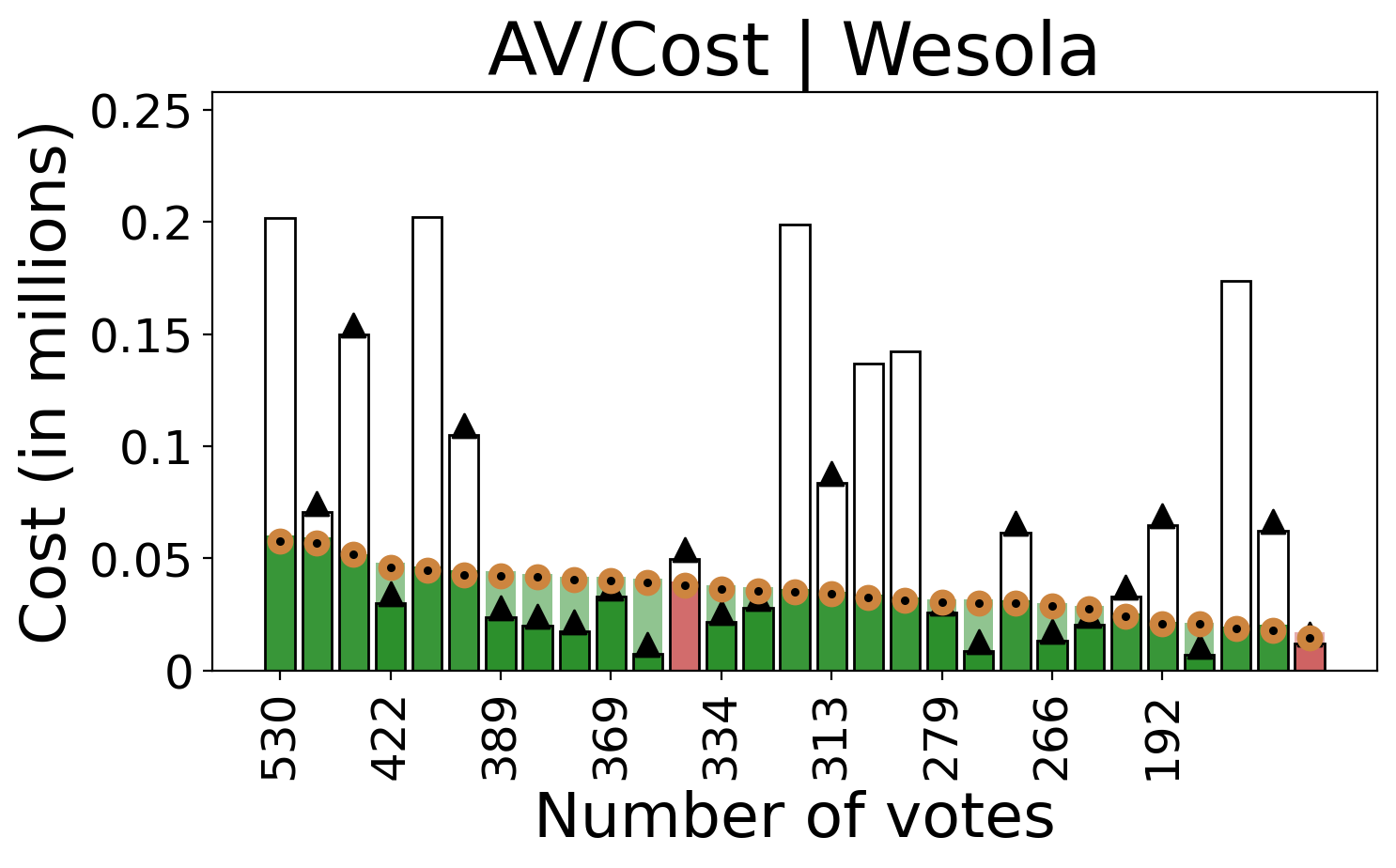}%
              \hspace{0mm}
     \includegraphics[width=\gameplotwidth]{./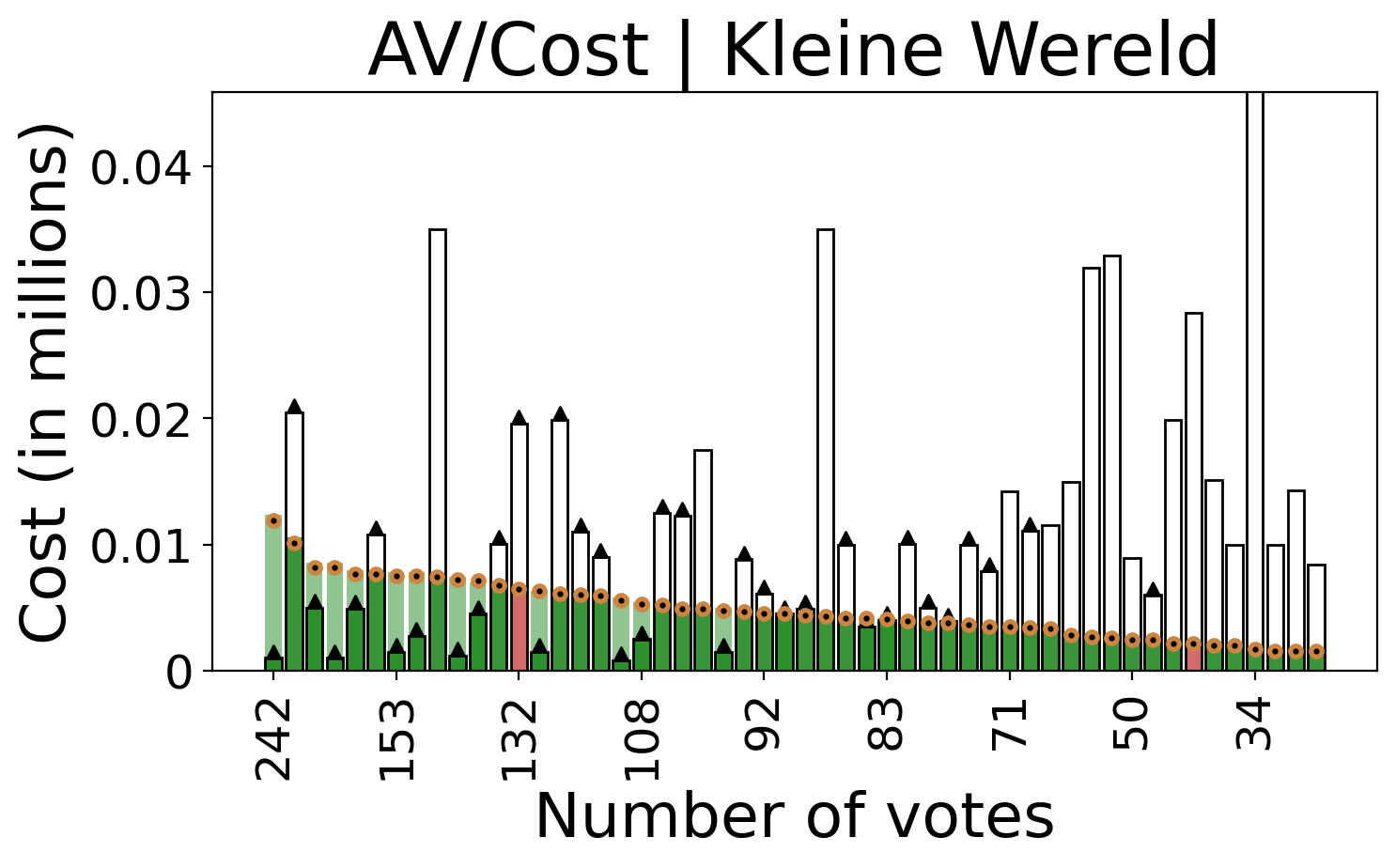}%
     
     \includegraphics[width=\gameplotwidth]{./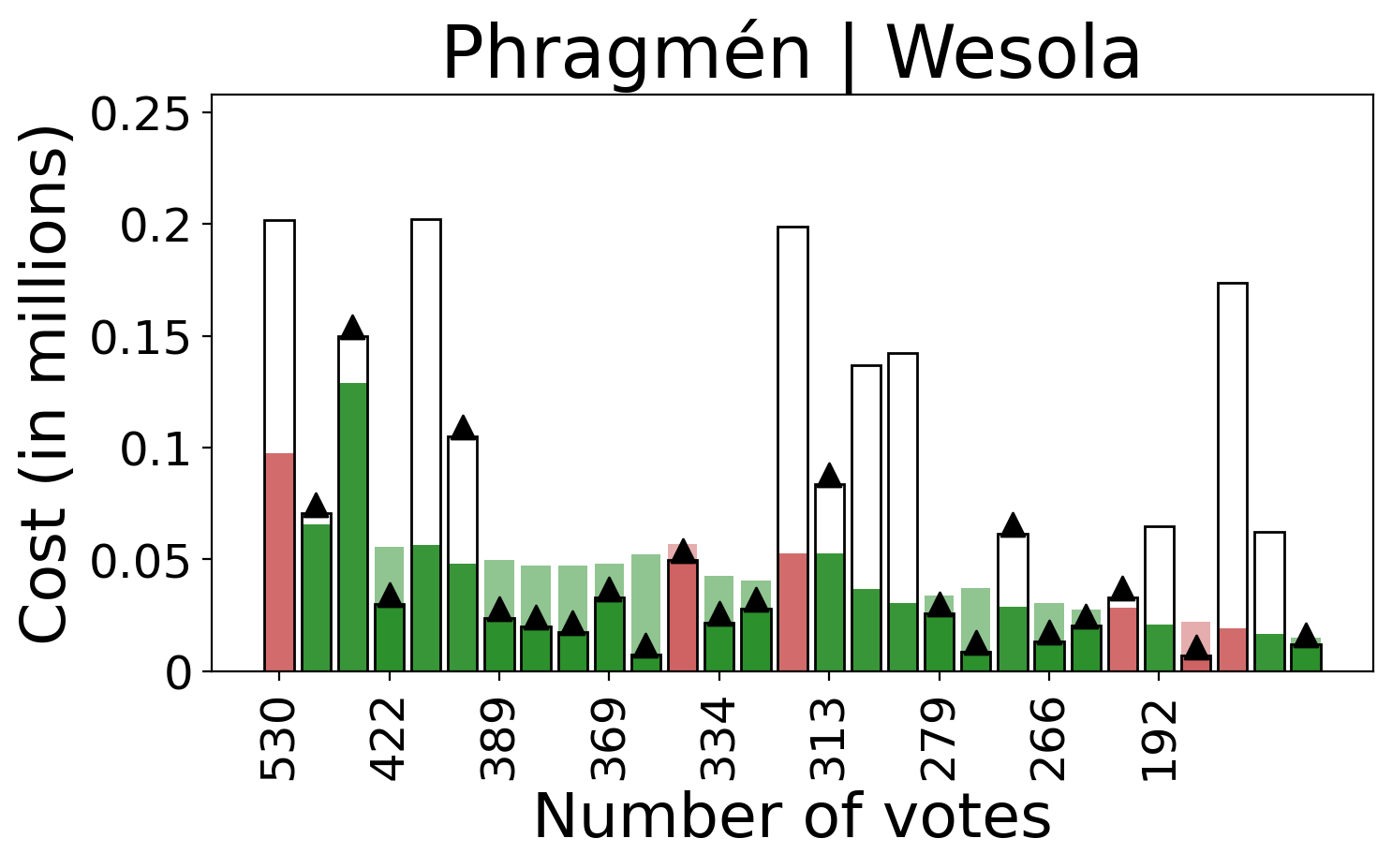}%
              \hspace{0mm}
     \includegraphics[width=\gameplotwidth]{./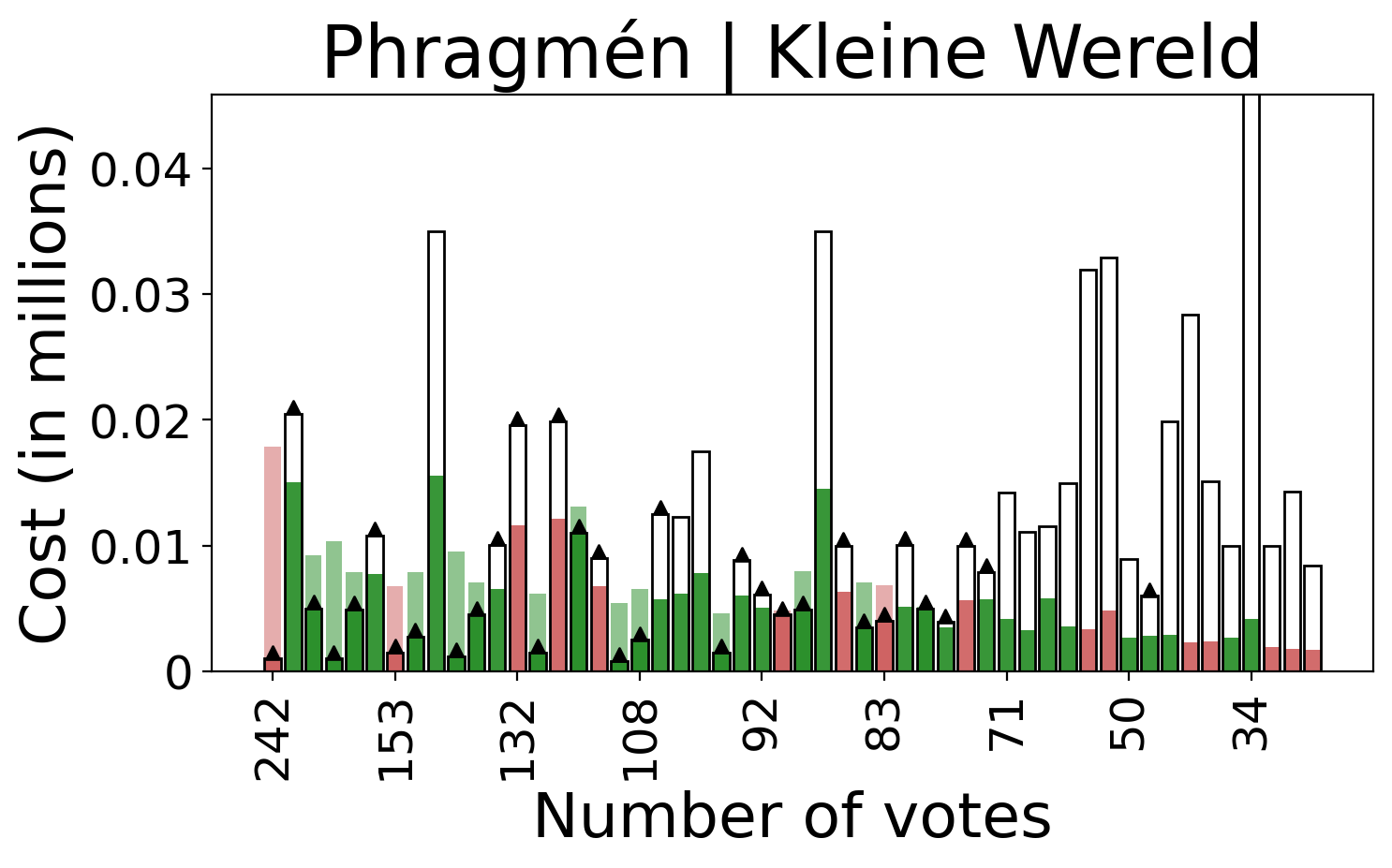}
    
    \includegraphics[width=\gameplotwidth]{./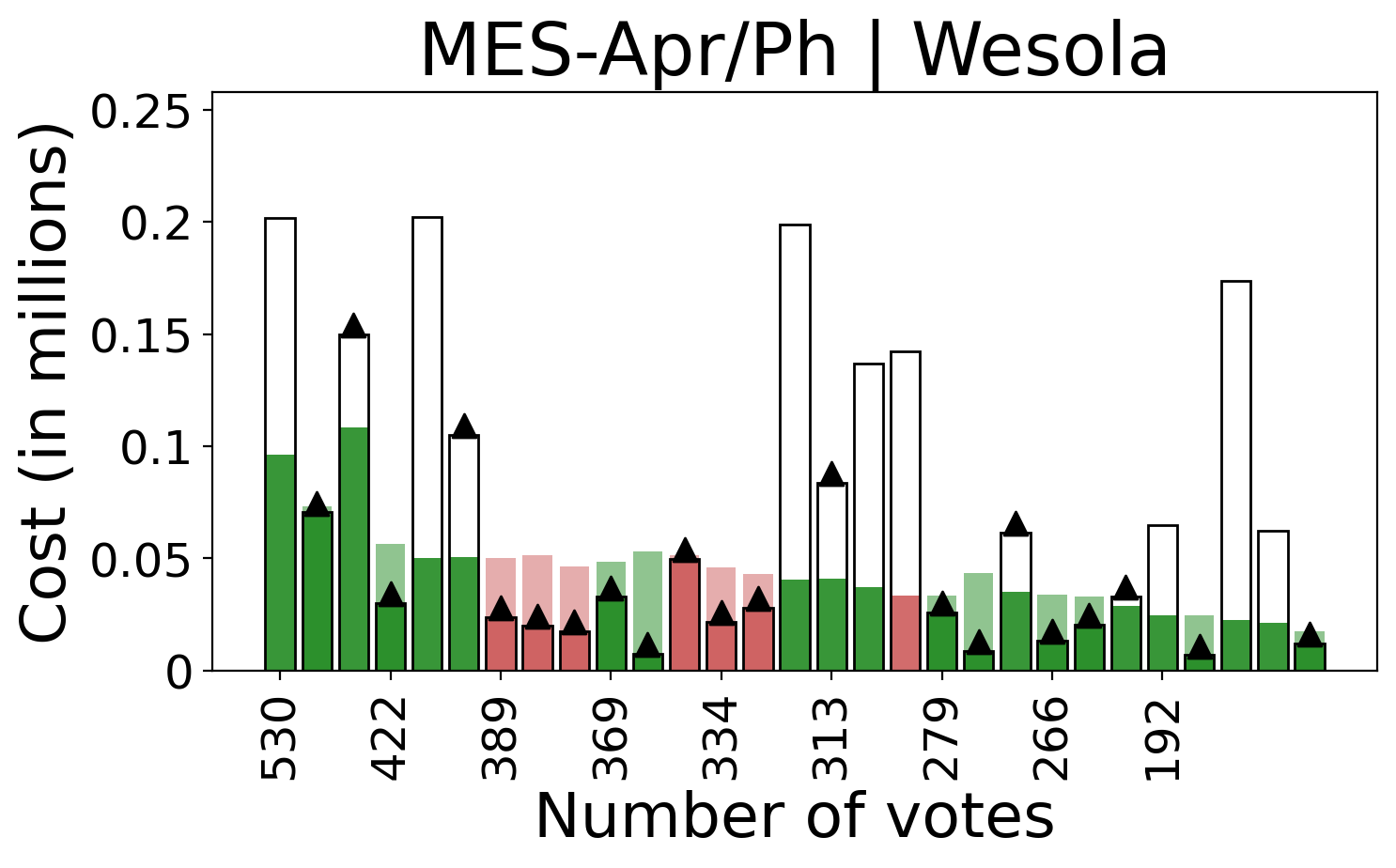}%
              \hspace{0mm}
    \includegraphics[width=\gameplotwidth]{./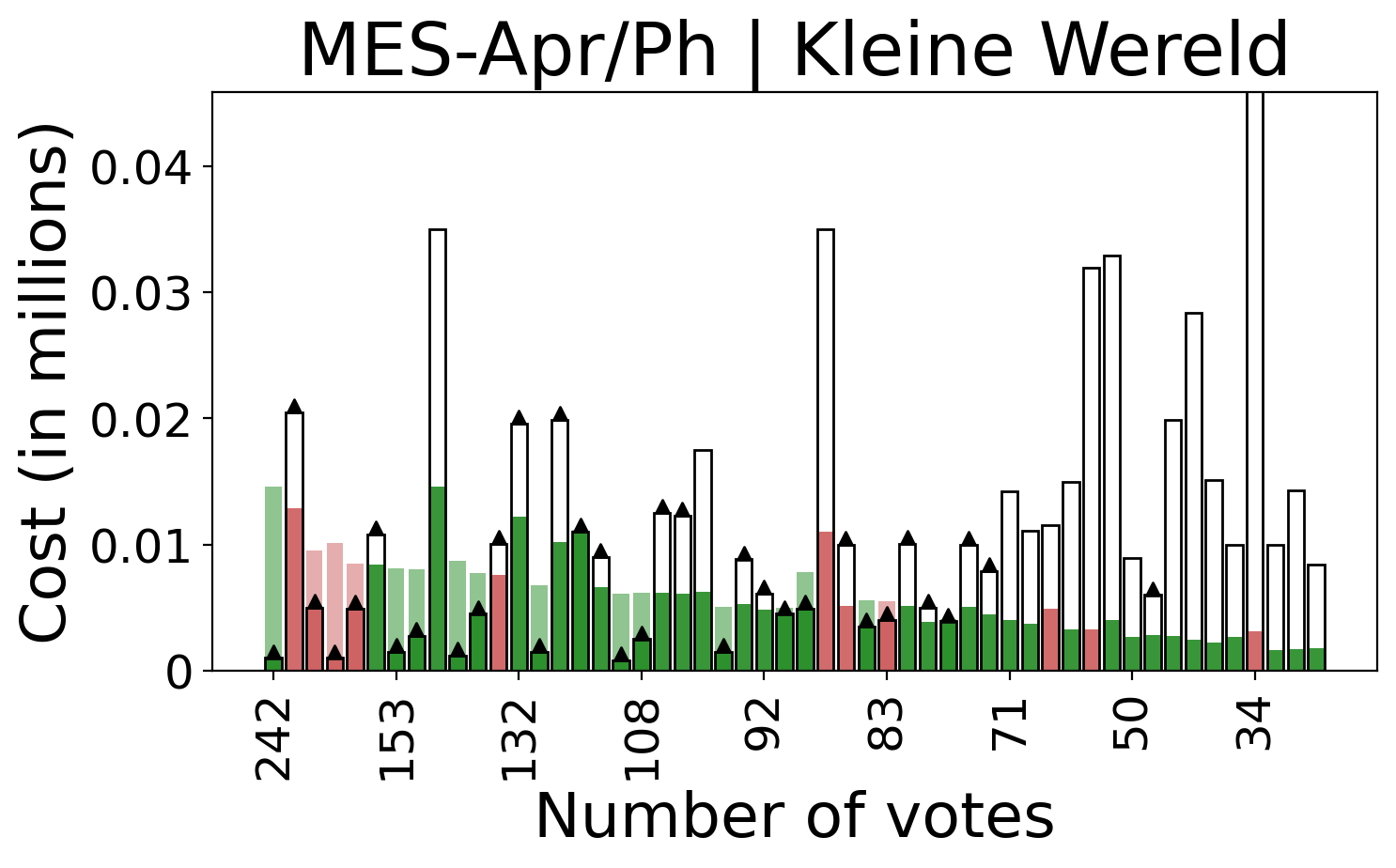}
    
    \includegraphics[width=\gameplotwidth]{./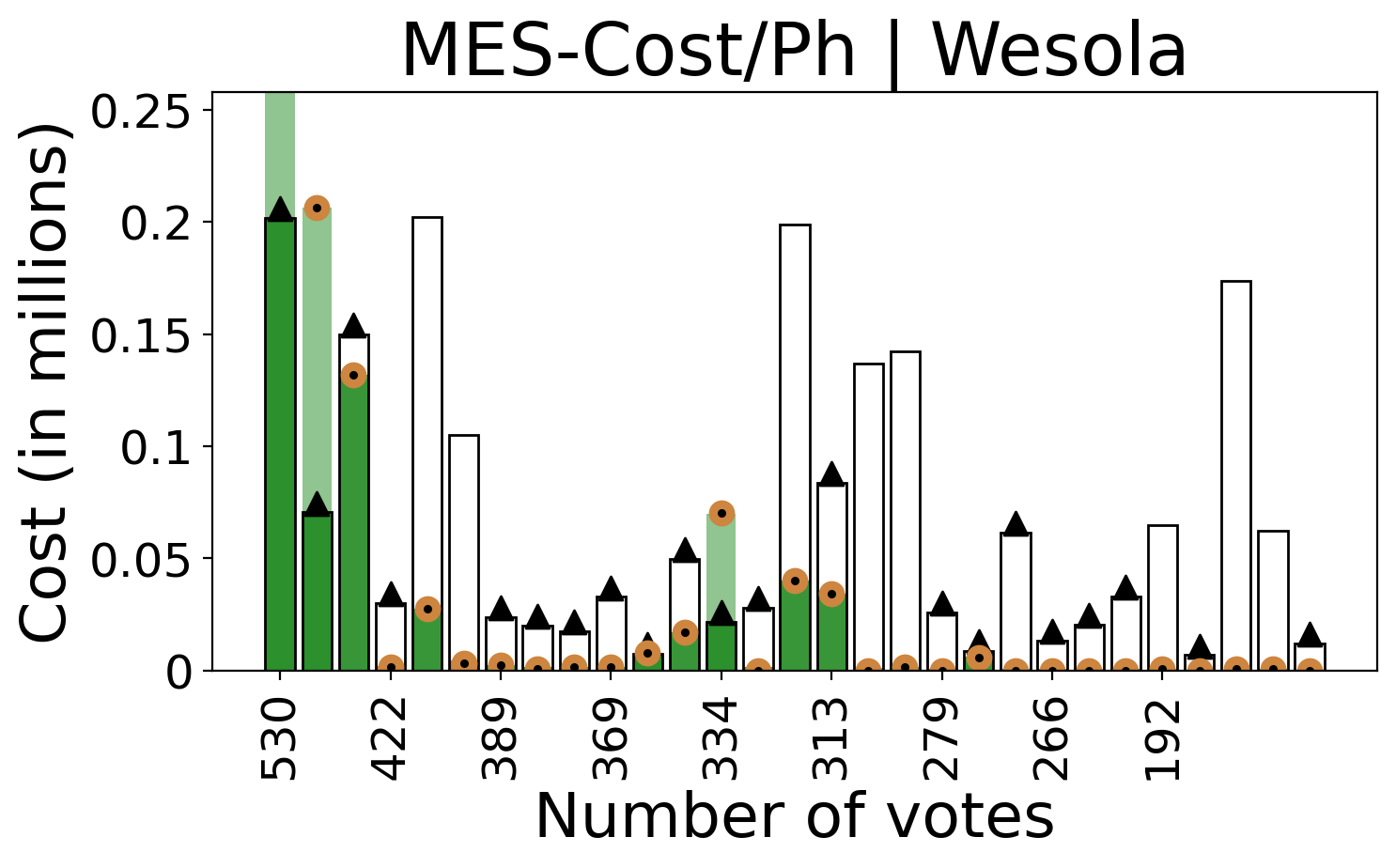}%
              \hspace{0mm}
    \includegraphics[width=\gameplotwidth]{./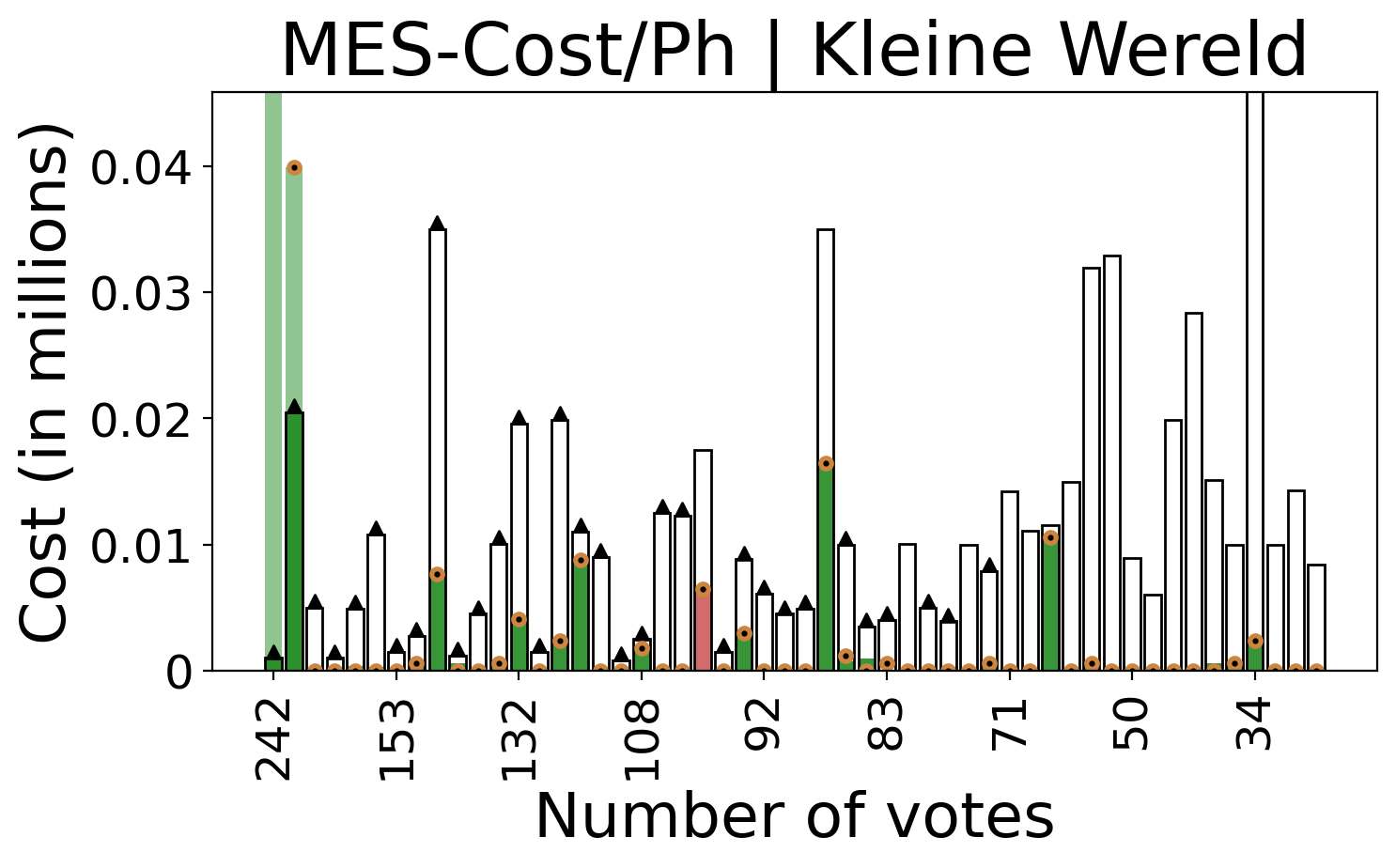}%
   
    \caption{ Strategy profiles after
      ${10 \ 000}$ iterations of our dynamics. Bars
      represent the projects (%
      in the order of their approval score,
			depicted
			on the ${x}$
      axis). The green bars show final costs of the winning projects
			(the brighter part emphasizes the
      increase, as compared to the original cost). The red bars show
      final costs of losing projects.
			Black outlines denote the original costs of the
      projects. Black triangles mark the originally winning projects. %
      Brown circles denote the equilibrium costs (for
      rules where we can compute it).}\label{fig:games}
\end{figure}

\section{Conclusion}\label{sed:conc}

We have introduced a game-theoretic framework capturing strategic cost
selection of projects in participatory budgeting scenarios. Focusing
on $\textbasicAV$, $\textAVover$, $\Phragmen$, and $\Mes$ rules, we
explored the conditions under which NE exist in such
games. Then, using real-life election data, we explored the effects of
iterative, profitable changes in projects' costs.
We believe to have set the groundwork for further theoretical and experimental
study based on our framework. We envision a number of directions for future
work. One is a natural computational question of establishing the complexity
of deciding if equilibria exist in our games. Further, studying a more
involved class of games emerging from relaxing our basic assumptions seems a
very interesting and promising research avenue. In particular, it is
compelling to verify whether the increased model complexity still leads to the
observed phenomena, or rather introduces new dynamics.

\section*{Acknowledgments}
This project has received funding from the European Research Council (ERC) under
the European Union’s Horizon 2020 research and innovation programme (grant
agreement No 101002854). Piotr Skowron was supported by the European Union (ERC,
PRO-DEMOCRATIC, 101076570).
\begin{center}
  \includegraphics[width=2cm]{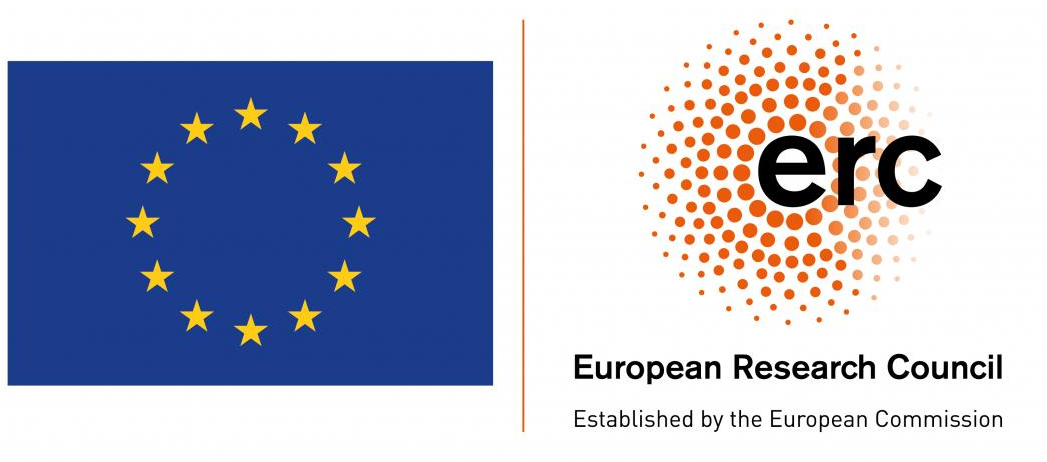}
\end{center}
\bibliographystyle{abbrvnat}
\bibliography{pb-cost}

\clearpage
\appendix

\section{Missing Proofs}
\appendixProofs

\section{Additional Experimental Results}\label{app:experiments}

Here, we present experimental results for four additional PB instances held in 2022 in different districts of Warsaw, Poland, i.e., in Bemowo, Bielany, Wilanow, and Wlochy. Details regarding these datasets are presented in~\Cref{tab:pb_details_apdx}.
    In~\Cref{fig:margins_apdx} we present the winning and losing margins in additional real-life PB instances, and in~\Cref{fig:games_apdx} we present the strategy profiles after
      $10'000$ iterations of our dynamics.

Moreover, in~\Cref{tab:margins_values} we present comparison of the original winning and losing margins and those after 10'000 iterations. We observe that in most cases the size of the margins drastically decreased.

\begin{table}[t]
  \caption{Additional PB instances that we analyze in the experiments. Vote
    Len. means the average number of approvals in a ballot. Rule
    indicates the PB rule that was used in the original election.}
     \label{tab:pb_details_apdx}%
    \centering%
  \begin{tabular}{c | c c c c c }
          Instance & $|P|$ & $|V|$ & Budget & \hspace{-0.14cm} Vote Len. \hspace{-0.1cm} & Rule\\
          \midrule
         Bemowo & 83 & 5181 & 4854279 & 10.8 &\textbasicAV \\
         Bielany & 98 & 4957 & 5258802 & 9.8 &\textbasicAV \\
         Wilanow & 35 & 2359 & 1516962 & 9.59 &\textbasicAV \\
         Wlochy & 43 & 2221 & 1719224 & 9.51 &\textbasicAV        \\  
          \midrule
    \end{tabular}
 \end{table}

\begin{table*}[t]
    \centering
    \begingroup
		\renewcommand*{\arraystretch}{1.3}
    \begin{tabular}{c | c |cc|cc}
    \toprule
          & \multirow{2}{*}{Rule} & \multicolumn{2}{c}{Original Margins}  & \multicolumn{2}{c}{10000 it. Margins} \\
          &  & Winning & Losing & Winning & Losing \\
          \midrule
\multirow{5}{*}{\rotatebox{90}{Bemowo}} & \textbasicAV & $1830 \pm 1283$ & $122 \pm 182$ & 0 & $24 \pm 37$ \\
& \textAVover & $150 \pm 104$ & $225 \pm 257$ & $1 \pm 3$ & $0.2 \pm 0.1$ \\
& \Phragmen & $141 \pm 91$ & $220 \pm 261$ & $3 \pm 3$ & $0.9 \pm 0.7$ \\
& \MESAP & $150 \pm 95$ & $223 \pm 264$ & $3 \pm 4$ & $2 \pm 2$ \\
& \MESP & $250 \pm 323$ & $205 \pm 244$ & $1 \pm 2$ & $1 \pm 2$ \\
 \midrule
\multirow{5}{*}{\rotatebox{90}{Bielany}} & \textbasicAV & $1447 \pm 1355$ & $218 \pm 205$ & 0 & $2 \pm 2$ \\
& \textAVover & $157 \pm 128$ & $171 \pm 150$ & $0.9 \pm 2$ & $0.2 \pm 0.3$ \\
& \Phragmen & $146 \pm 121$ & $176 \pm 154$ & $4 \pm 3$ & $0.9 \pm 0.9$ \\
& \MESAP & $148 \pm 121$ & $176 \pm 151$ & $3 \pm 4$ & $1 \pm 2$ \\
& \MESP & $172 \pm 193$ & $180 \pm 154$ & $3 \pm 6$ & $3 \pm 4$ \\
\midrule
\multirow{5}{*}{\rotatebox{90}{Wesola}} & \textbasicAV & $265 \pm 249$ & $64 \pm 51$ & 0 & 0 \\
& \textAVover & $86 \pm 27$ & $55 \pm 31$ & $0.3 \pm 0.4$ & $0.2 \pm 0.1$ \\
& \Phragmen & $69 \pm 31$ & $51 \pm 37$ & $2 \pm 1$ & $0.9 \pm 1$ \\
& \MESAP & $79 \pm 33$ & $43 \pm 37$ & $2 \pm 1$ & $0.7 \pm 0.7$ \\
& \MESP & $70 \pm 56$ & $65 \pm 45$ & $0.1 \pm 0.1$ & $0.4 \pm 0.5$ \\
\midrule
\multirow{5}{*}{\rotatebox{90}{Wilanow}} & \textbasicAV & $536 \pm 399$ & $87 \pm 80$ & 0 & 0 \\
& \textAVover & $87 \pm 59$ & $51 \pm 34$ & $2 \pm 2$ & $0.1 \pm 0.0$ \\
& \Phragmen & $77 \pm 52$ & $66 \pm 41$ & $2 \pm 2$ & $0.3 \pm 0.2$ \\
& \MESAP & $89 \pm 56$ & $58 \pm 37$ & $2 \pm 1$ & $0.7 \pm 0.8$ \\
& \MESP & $79 \pm 120$ & $95 \pm 64$ & $1 \pm 2$ & $2 \pm 4$ \\
\midrule
\multirow{5}{*}{\rotatebox{90}{Wlochy}} & \textbasicAV & $532 \pm 524$ & $46 \pm 35$ & 0 & 0 \\
& \textAVover & $107 \pm 49$ & $56 \pm 48$ & $0.8 \pm 1.0$ & $0.1 \pm 0.1$ \\
& \Phragmen & $83 \pm 52$ & $62 \pm 42$ & $3 \pm 3$ & $2 \pm 1$ \\
& \MESAP & $85 \pm 53$ & $64 \pm 45$ & $2 \pm 2$ & $1 \pm 1$ \\
& \MESP & $138 \pm 186$ & $56 \pm 46$ & $0.2 \pm 0.2$ & $0.4 \pm 0.5$ \\
\midrule
\multirow{5}{*}{\rotatebox{90}{Kleine Wereld}} & \textbasicAV & $117 \pm 78$ & $14 \pm 11$ & 0 & $1.0 \pm 0.0$ \\
& \textAVover & $12 \pm 8$ & $12 \pm 12$ & $0.1 \pm 0.1$ & $0.1 \pm 0.0$ \\
& \Phragmen & $10 \pm 8$ & $10 \pm 11$ & $0.5 \pm 0.6$ & $0.2 \pm 0.2$ \\
& \MESAP & $11 \pm 7$ & $11 \pm 11$ & $0.4 \pm 0.6$ & $0.2 \pm 0.2$ \\
& \MESP & $26 \pm 34$ & $11 \pm 11$ & $2 \pm 1$ & $0.1 \pm 0.0$ \\
\bottomrule
    \end{tabular}
		\caption{Comparison of winning and losing margins in the original elections
		and after 10000 iterations of the game. (All the values are in 1000
		PLN\textbackslash{}EUR). In each entry the first value denotes the average and the second
value (i.e.,\,the one after $\pm$ sign) denotes the standard
deviation.}\label{tab:margins_values}
\endgroup
\end{table*}

\newgeometry{left=2cm,right=2cm}
\newcommand{\hgap}{2pt}
\begin{figure*}[t]\centering%
     \includegraphics[width=\marginplotwidth]{./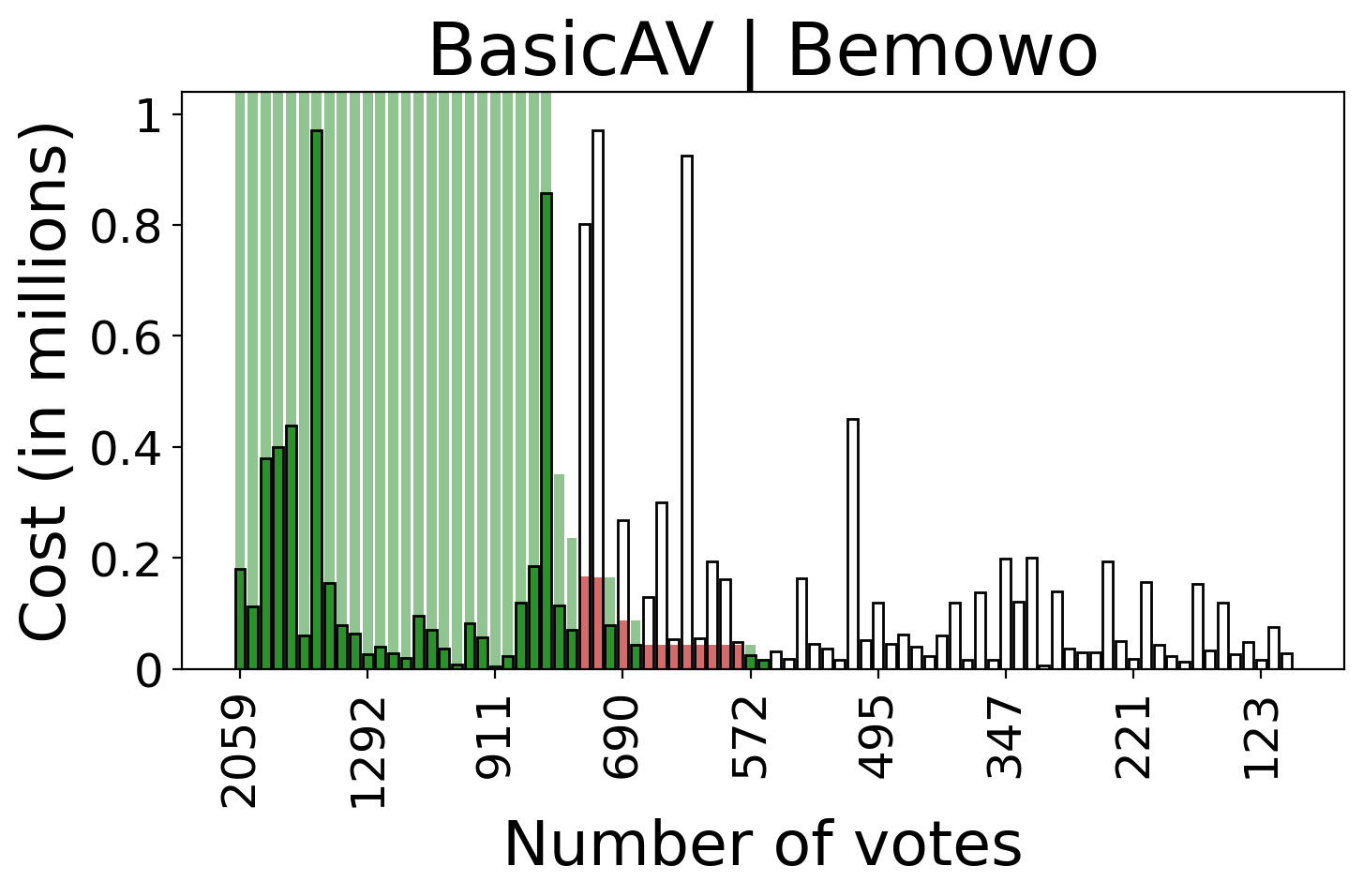}%
         \hspace{\hgap}%
     \includegraphics[width=\marginplotwidth]{./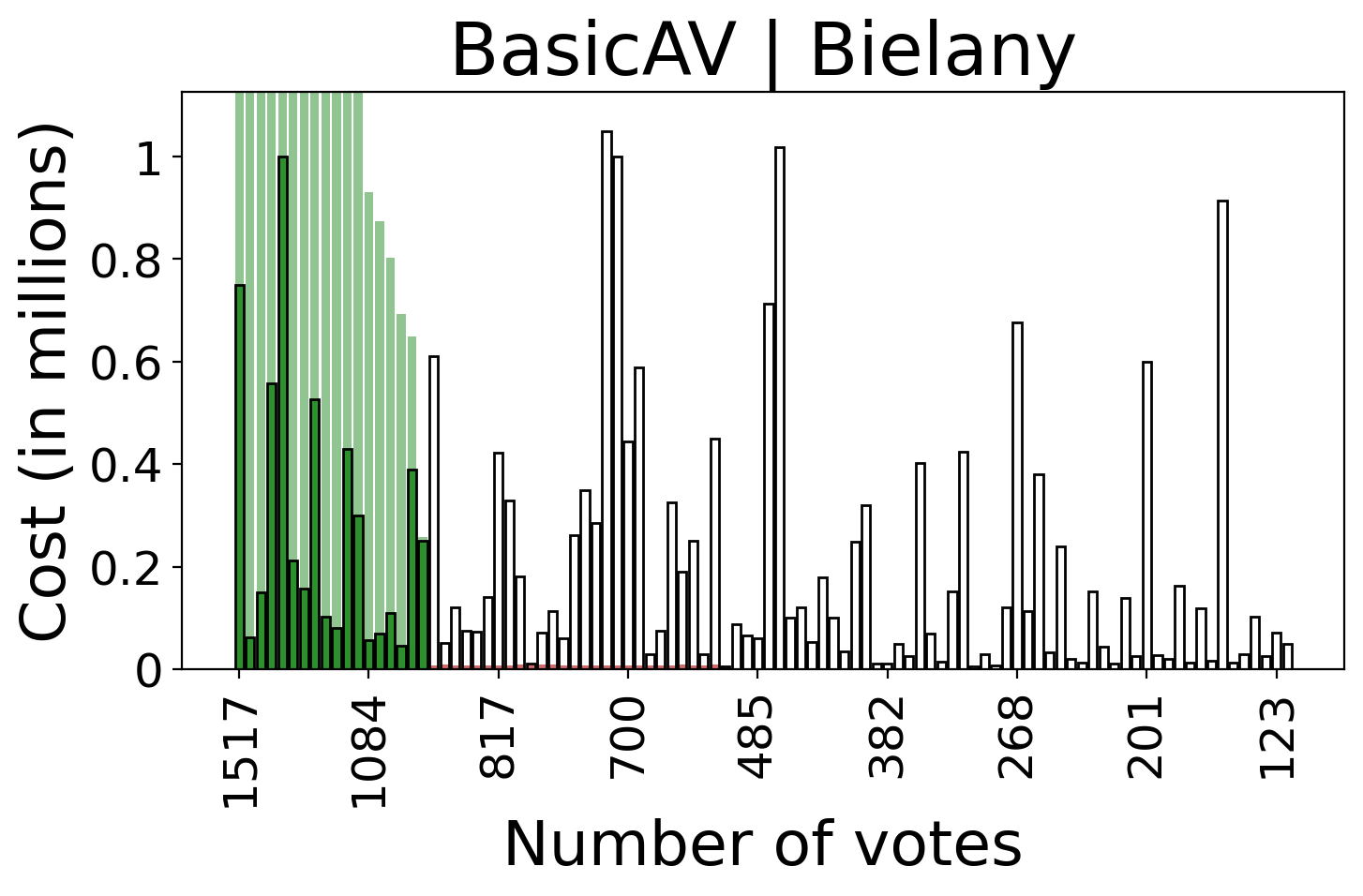}%
         \hspace{\hgap}%
     \includegraphics[width=\marginplotwidth]{./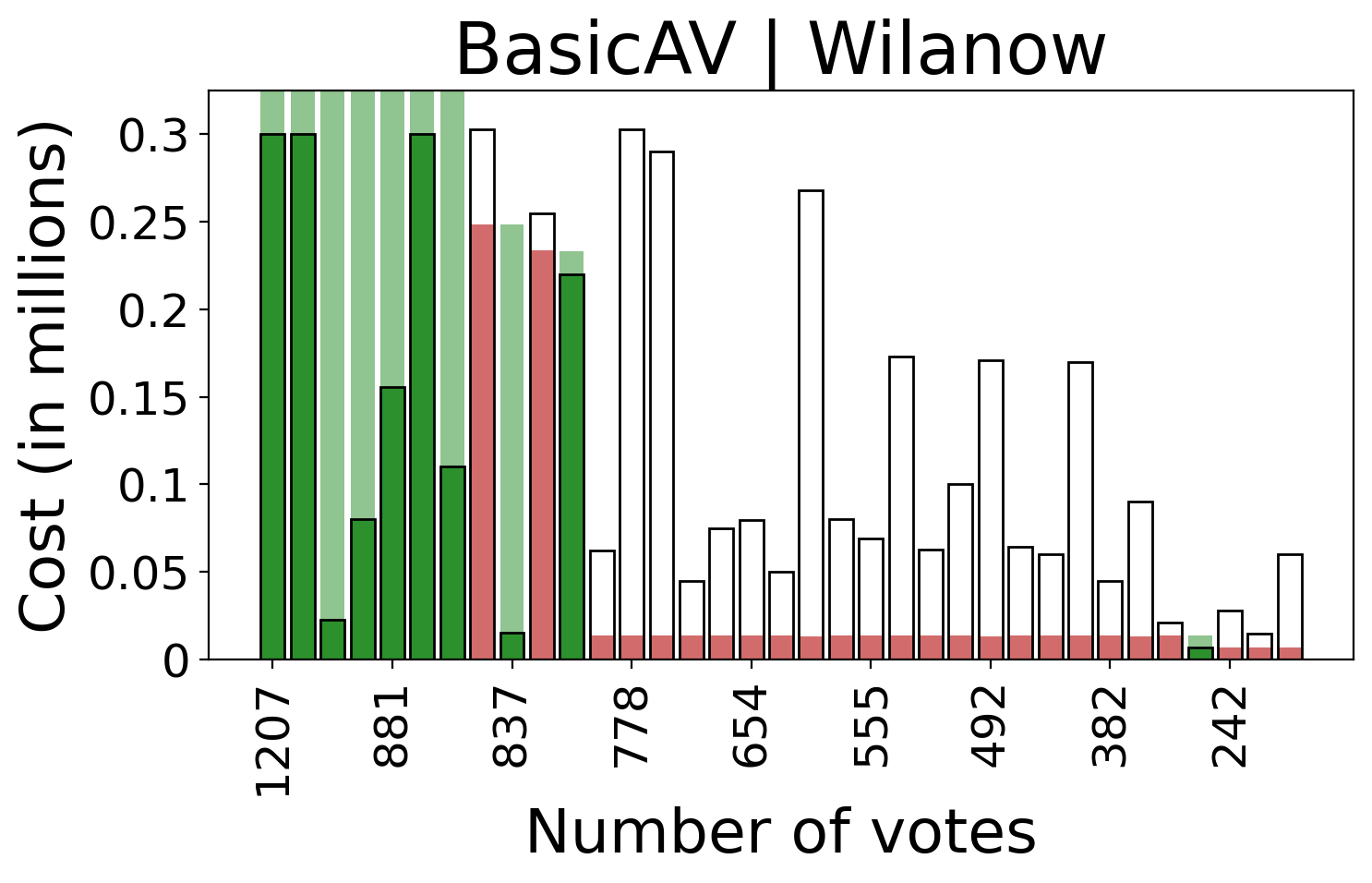}%
         \hspace{\hgap}%
     \includegraphics[width=\marginplotwidth]{./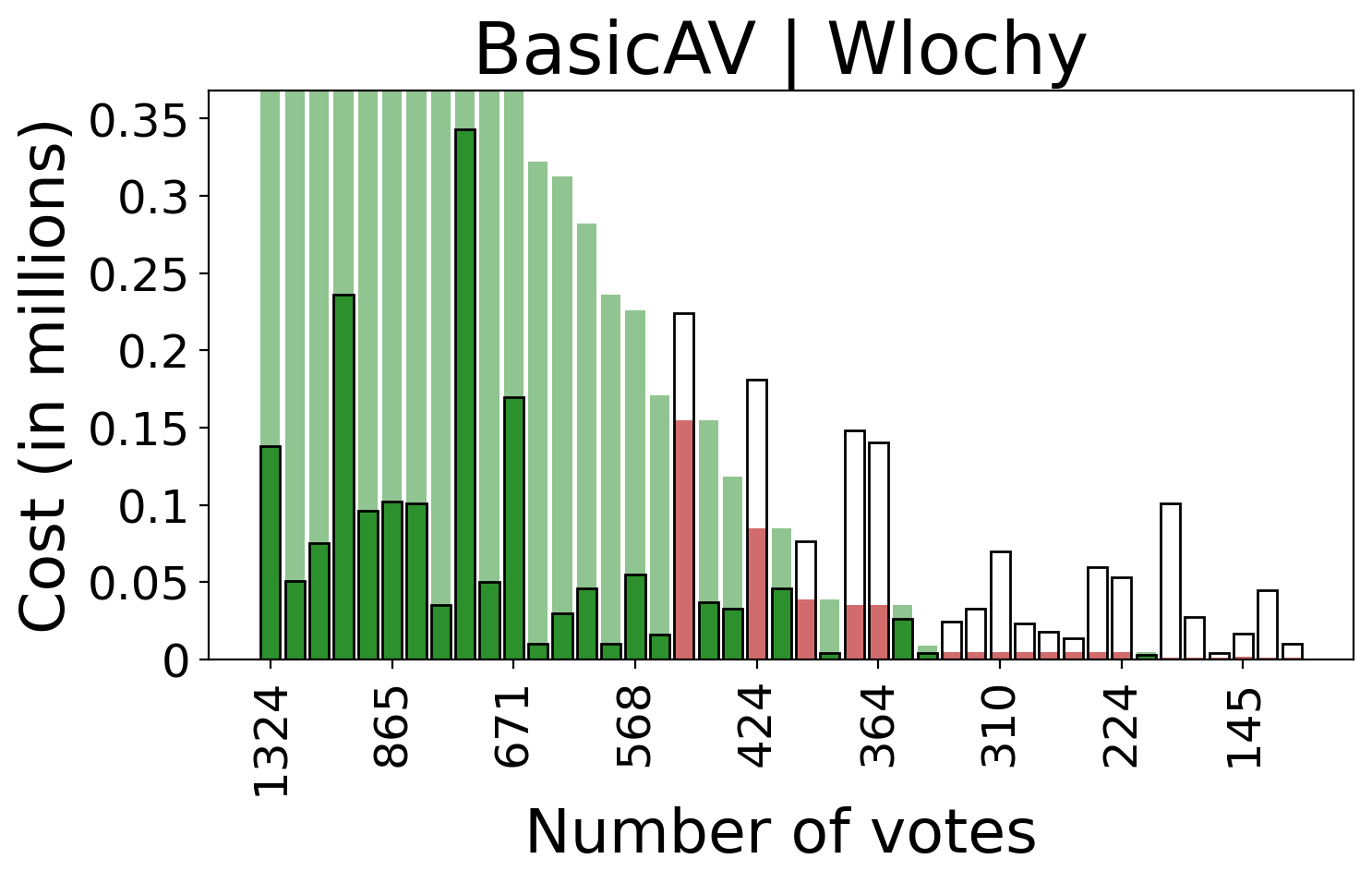}%

     \includegraphics[width=\marginplotwidth]{./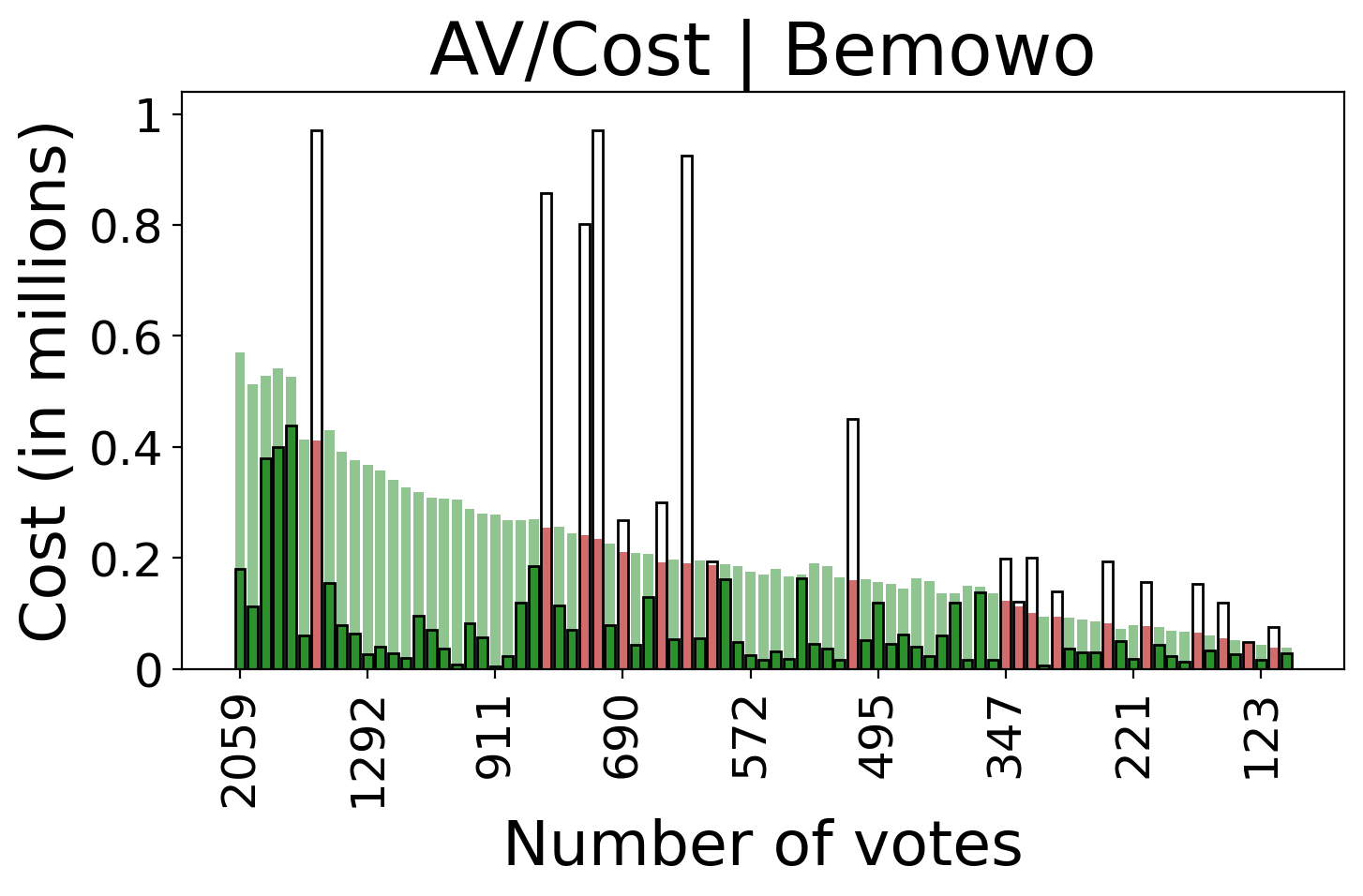}%
         \hspace{\hgap}%
     \includegraphics[width=\marginplotwidth]{./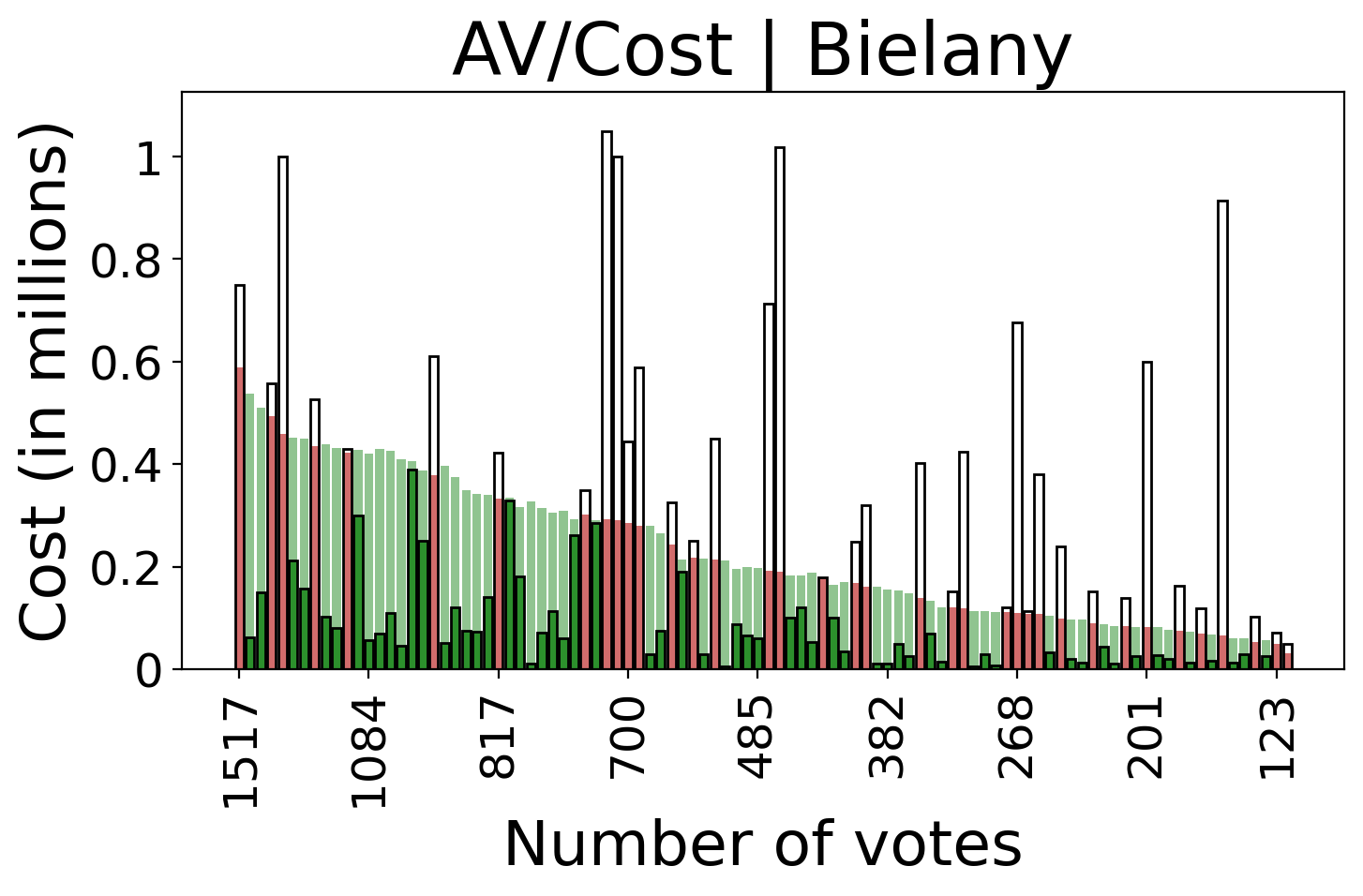}%
         \hspace{\hgap}%
     \includegraphics[width=\marginplotwidth]{./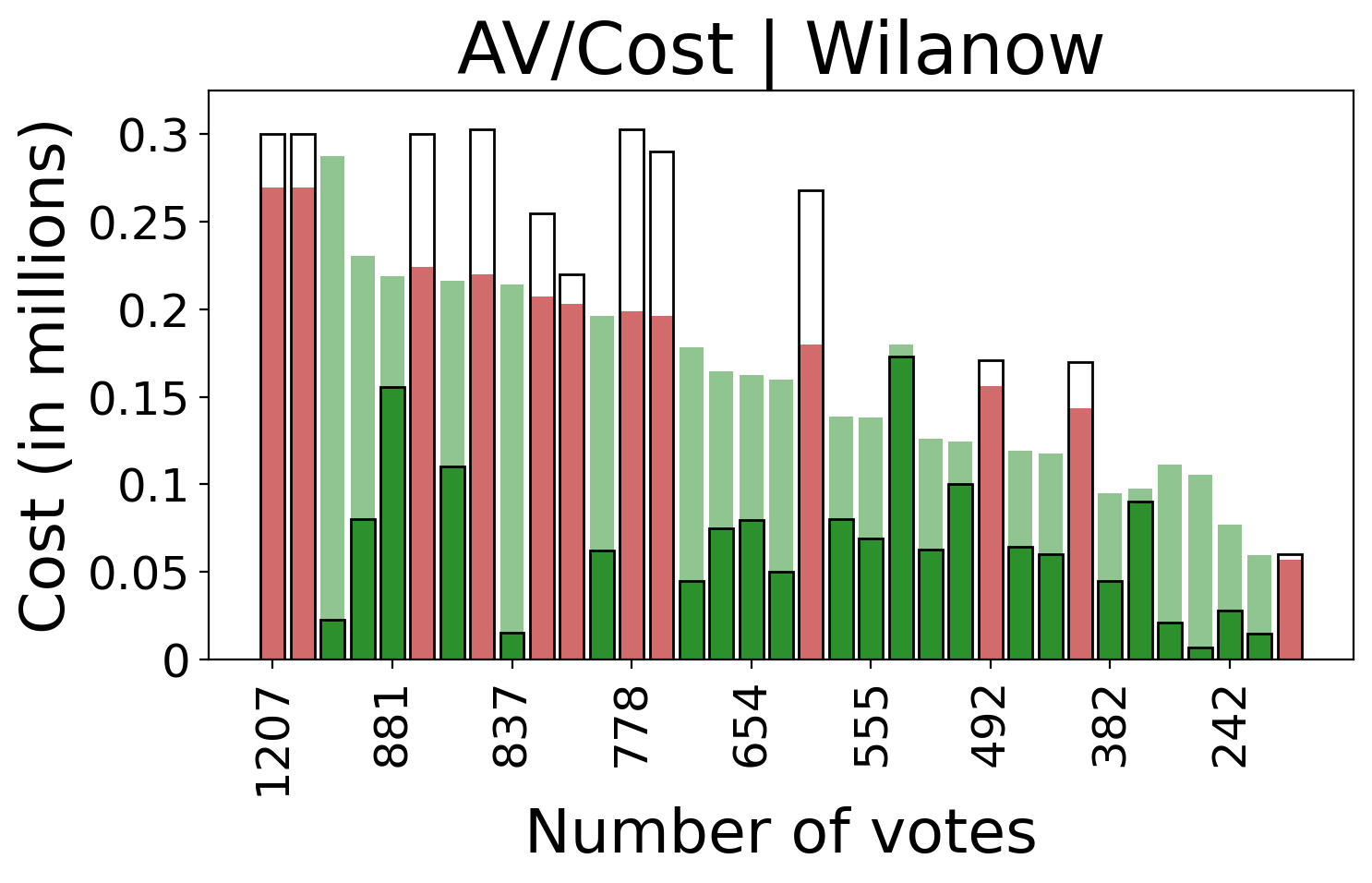}%
         \hspace{\hgap}%
     \includegraphics[width=\marginplotwidth]{./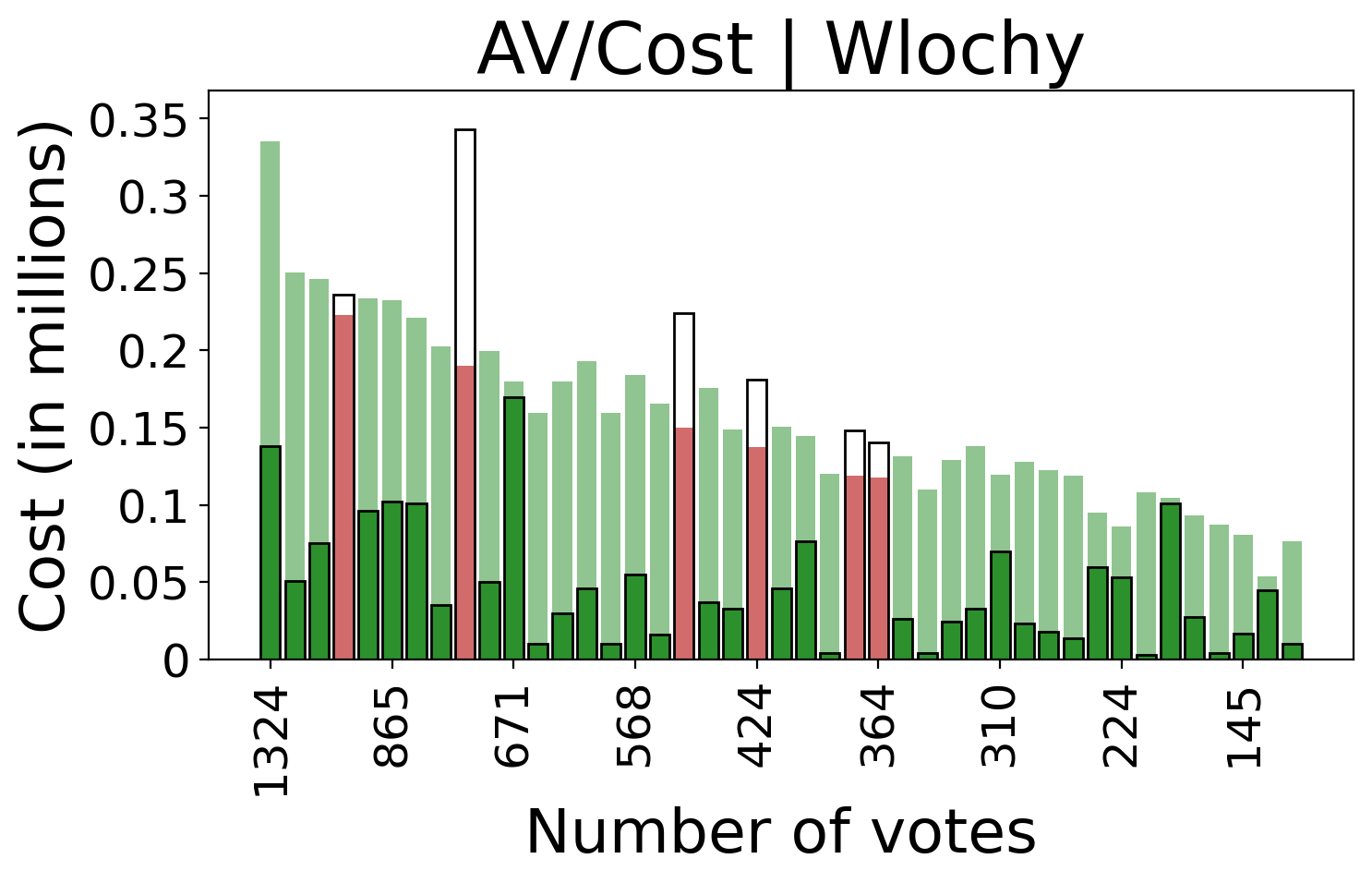}%

     \includegraphics[width=\marginplotwidth]{./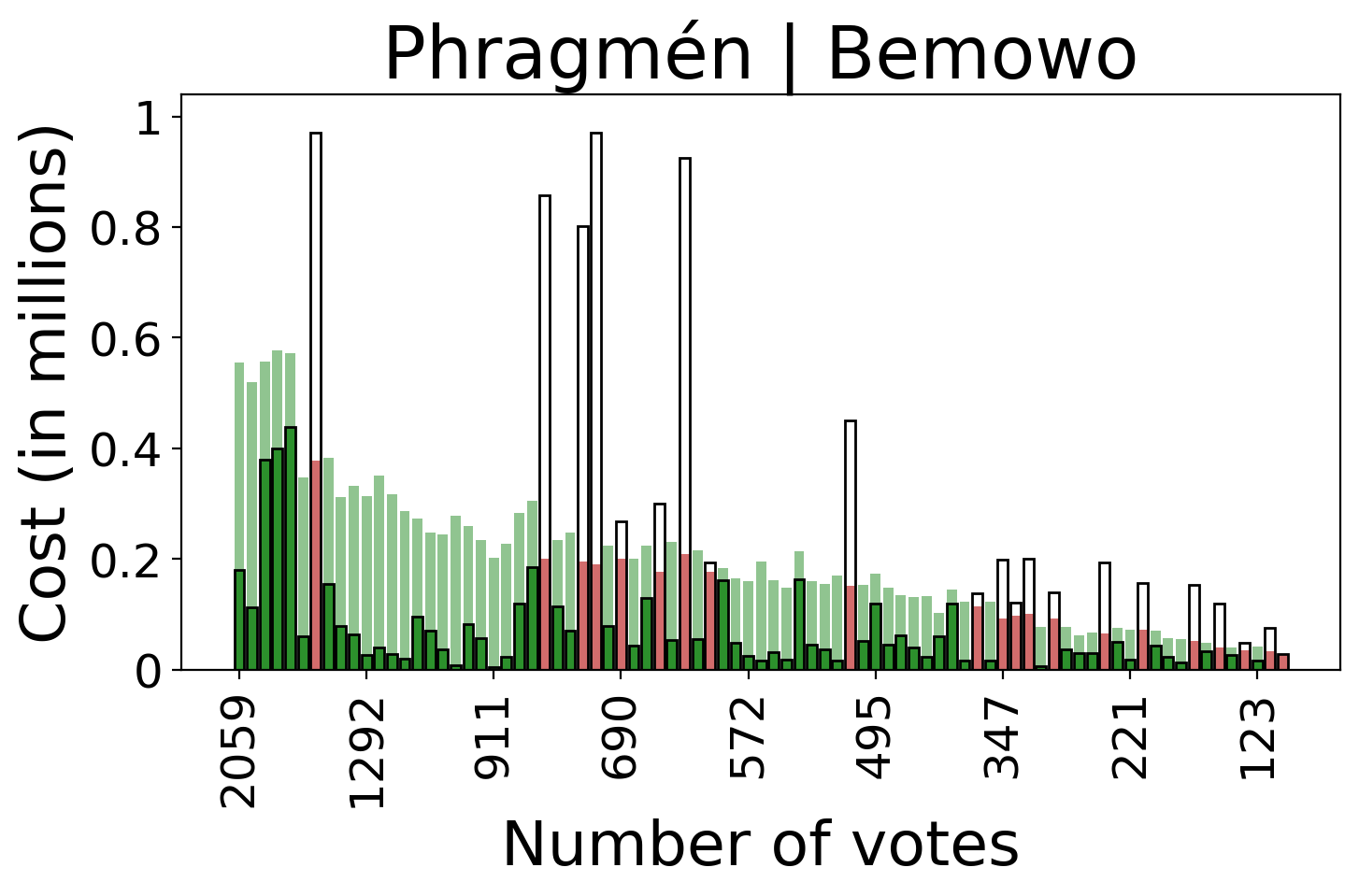}%
         \hspace{\hgap}%
     \includegraphics[width=\marginplotwidth]{./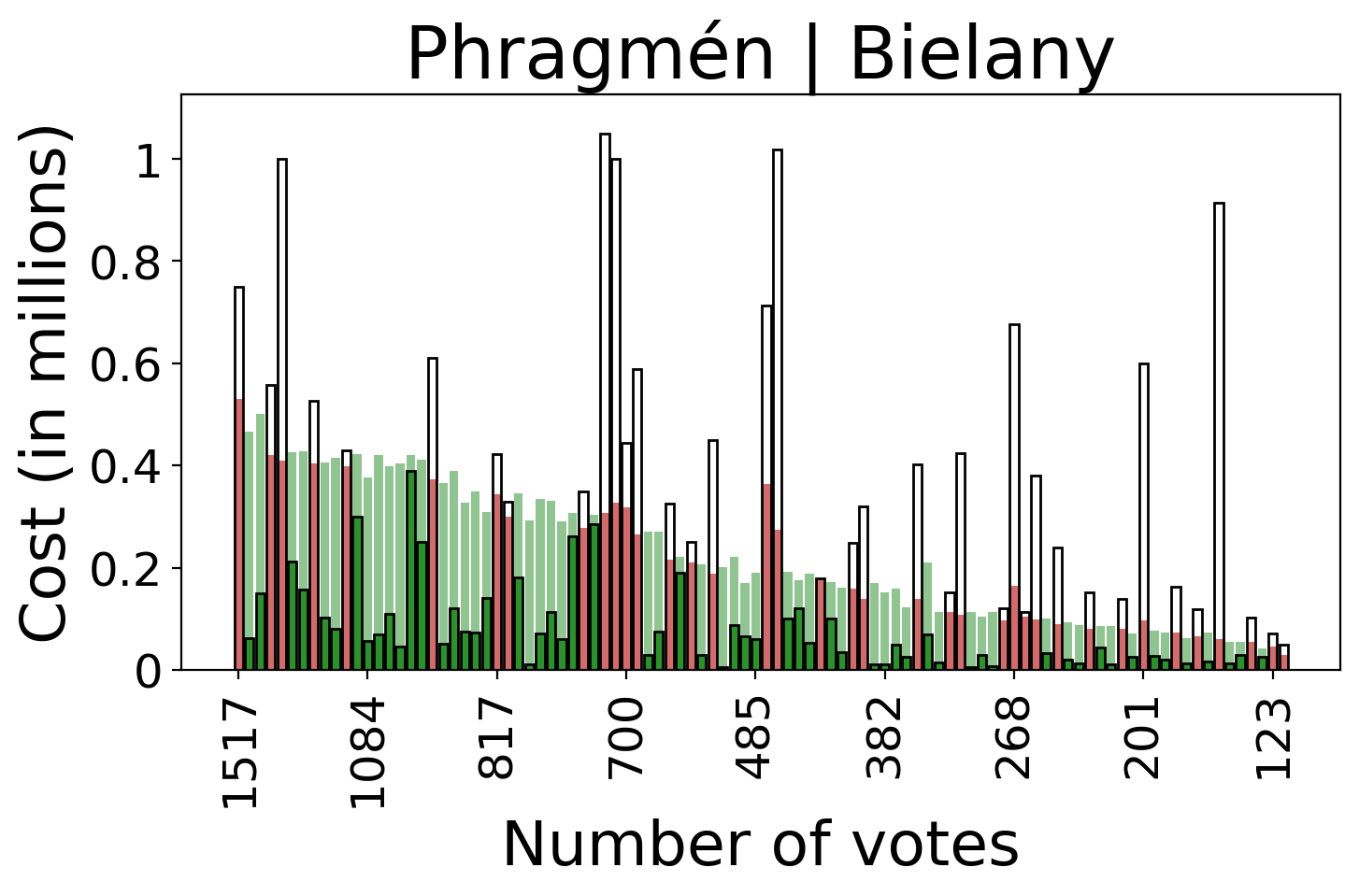}%
         \hspace{\hgap}%
     \includegraphics[width=\marginplotwidth]{./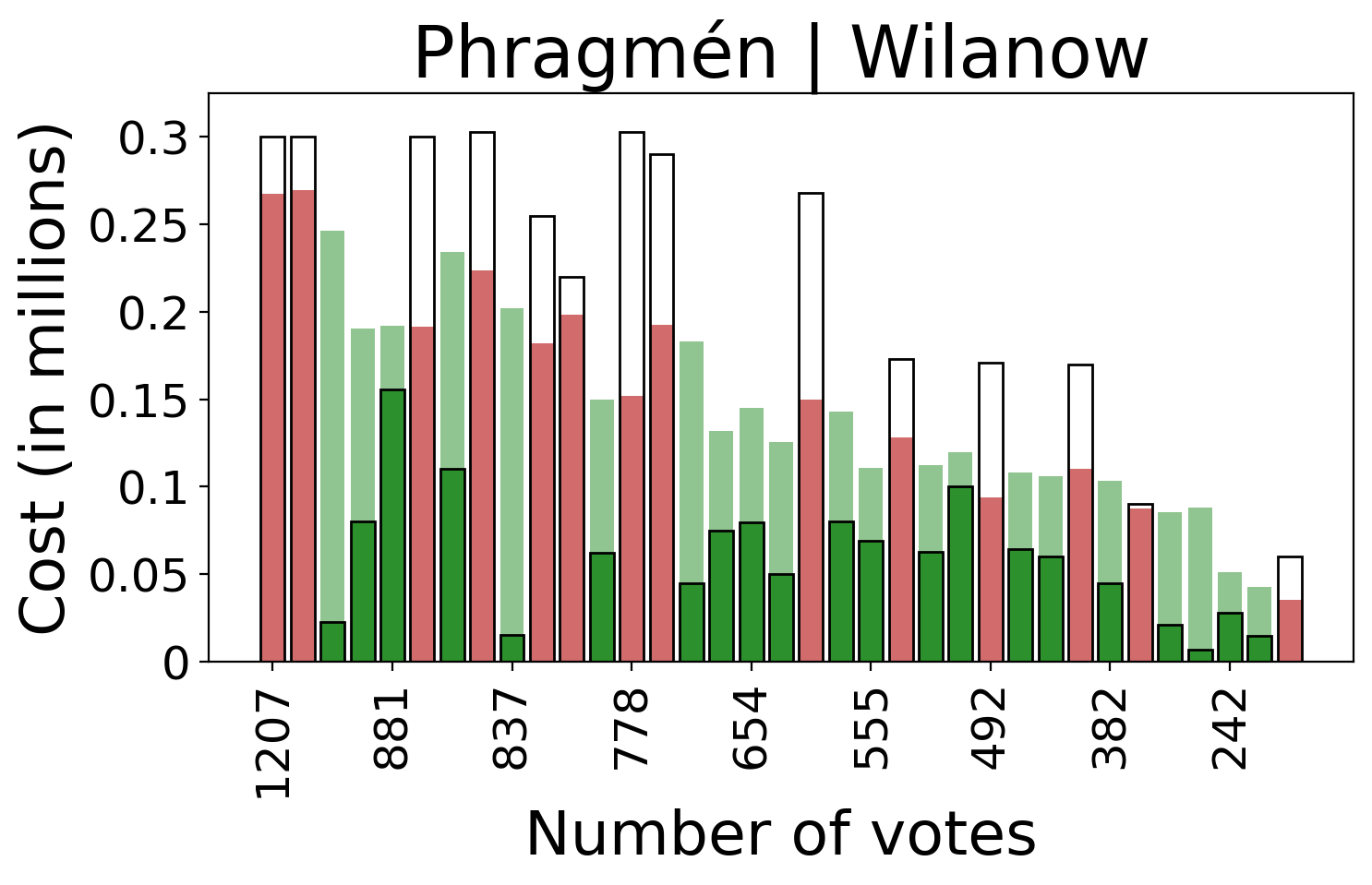}%
         \hspace{\hgap}%
     \includegraphics[width=\marginplotwidth]{./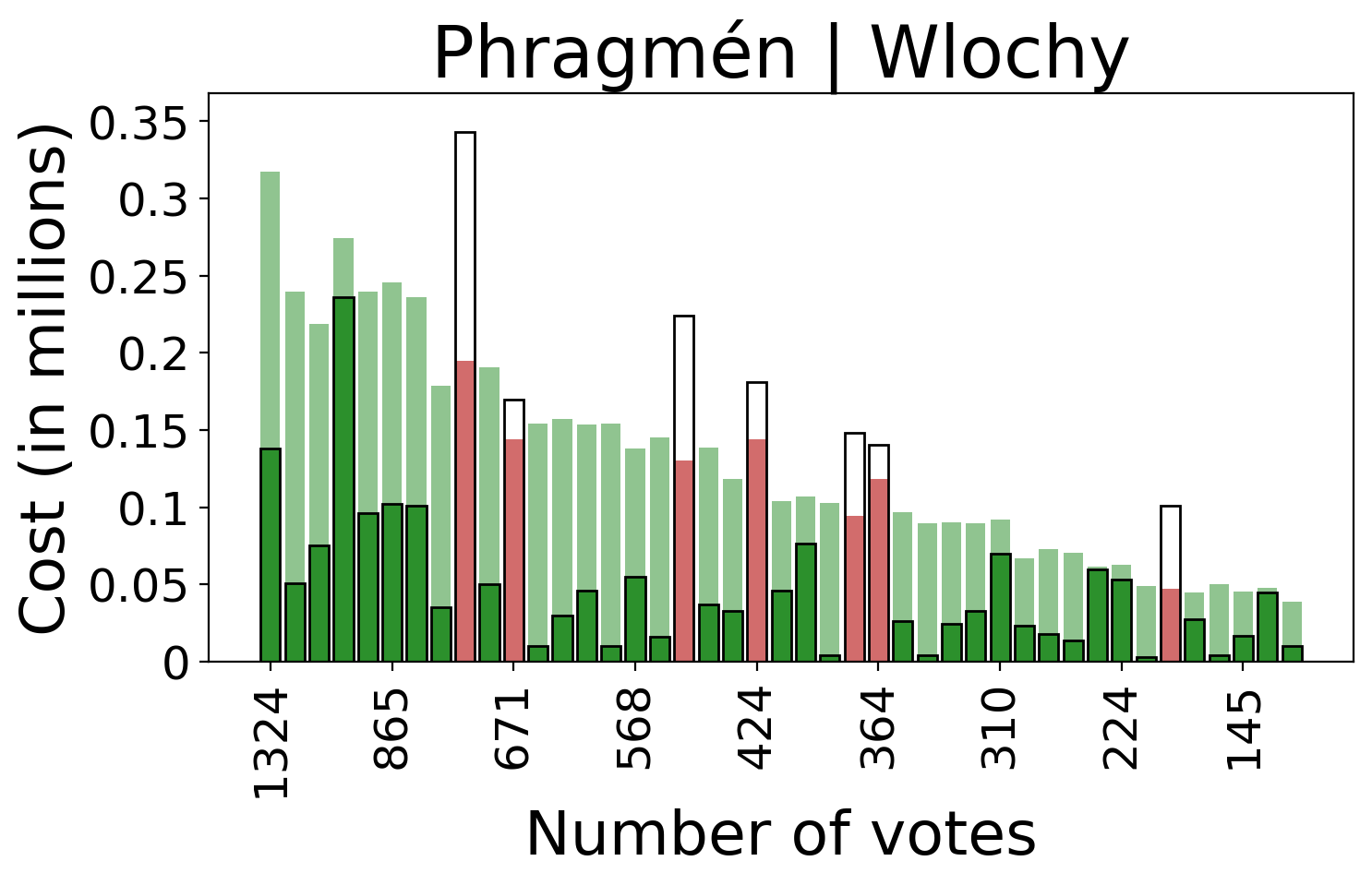}%

     \includegraphics[width=\marginplotwidth]{./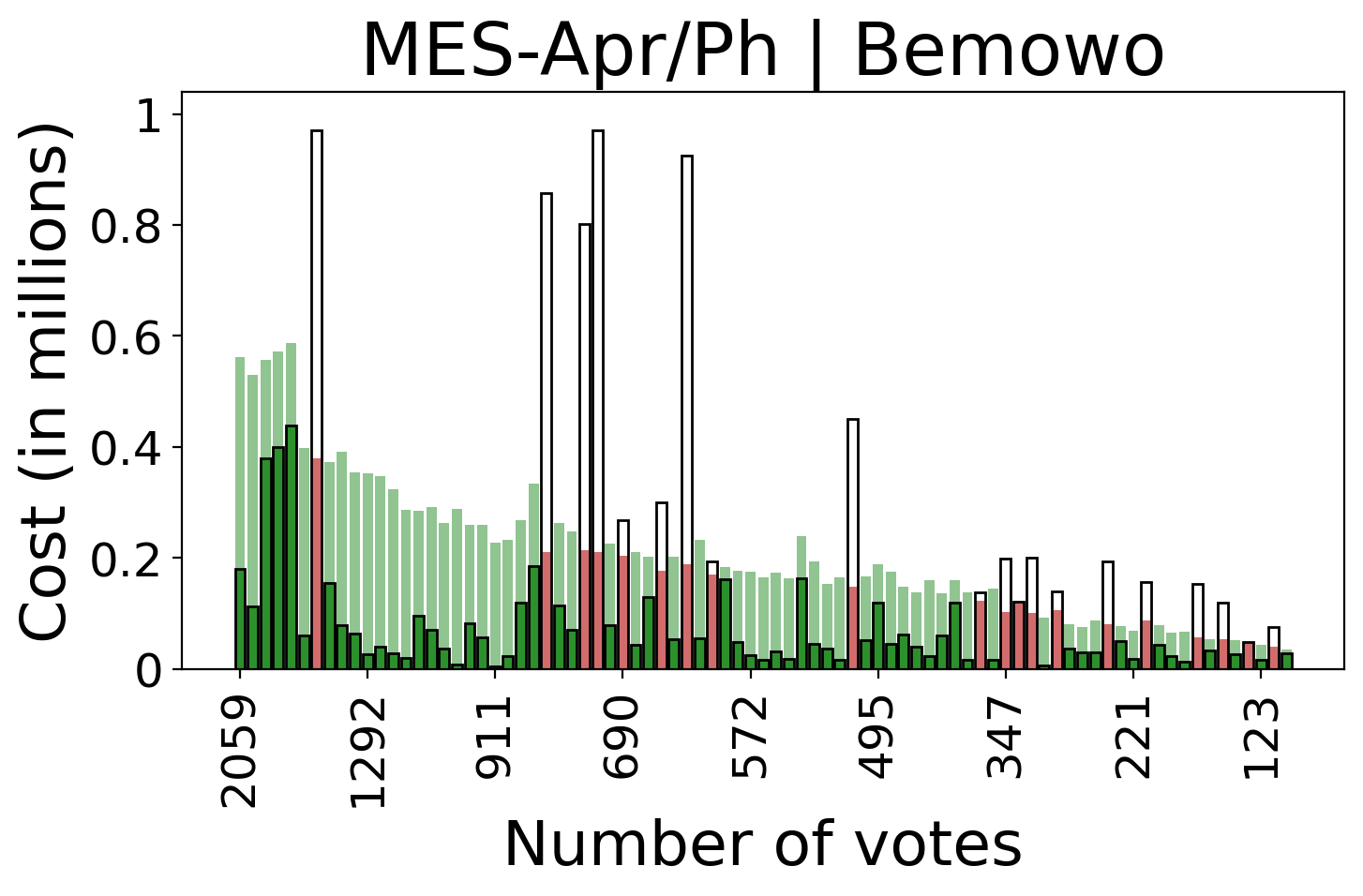}%
         \hspace{\hgap}%
     \includegraphics[width=\marginplotwidth]{./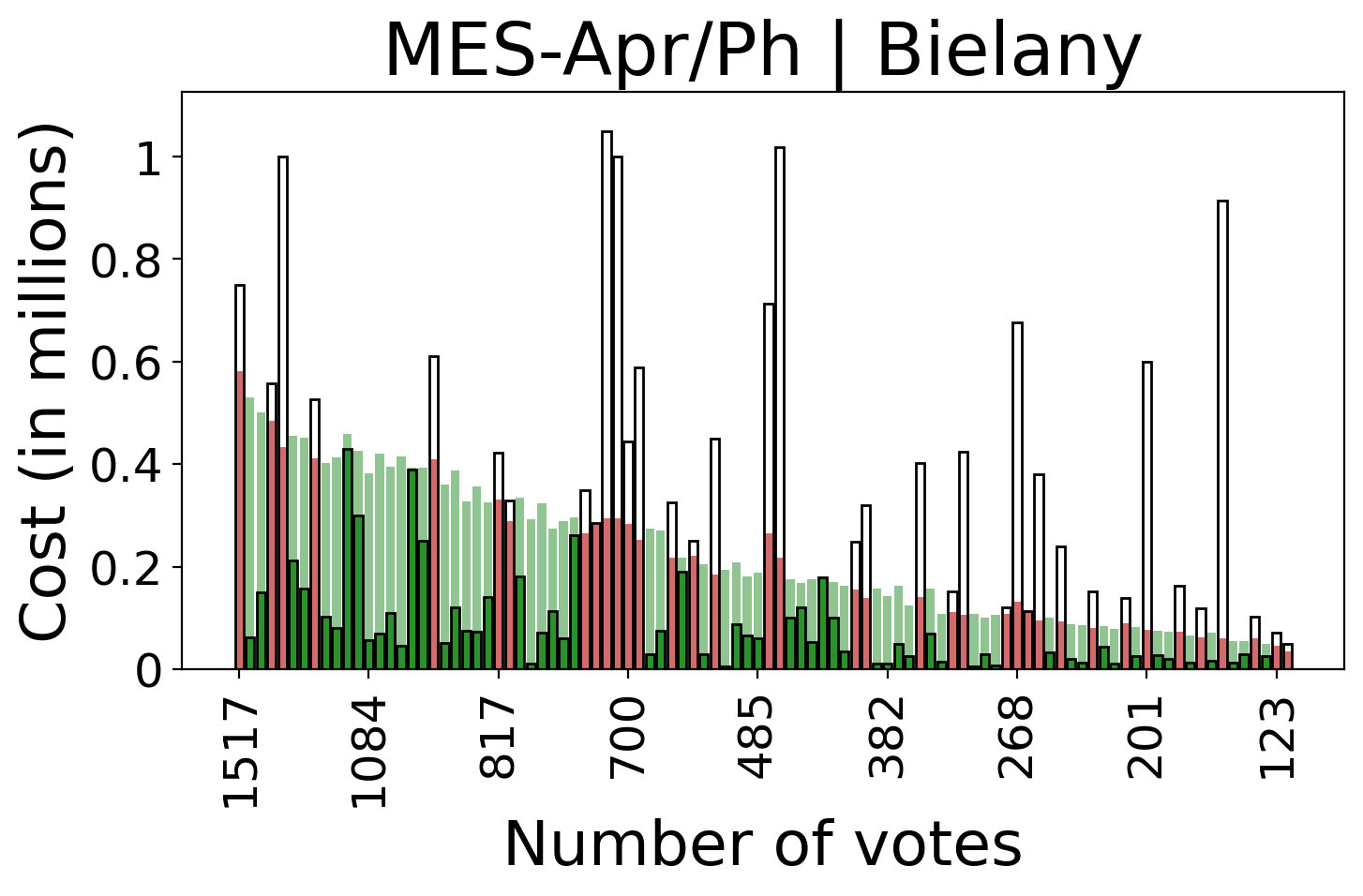}%
         \hspace{\hgap}%
     \includegraphics[width=\marginplotwidth]{./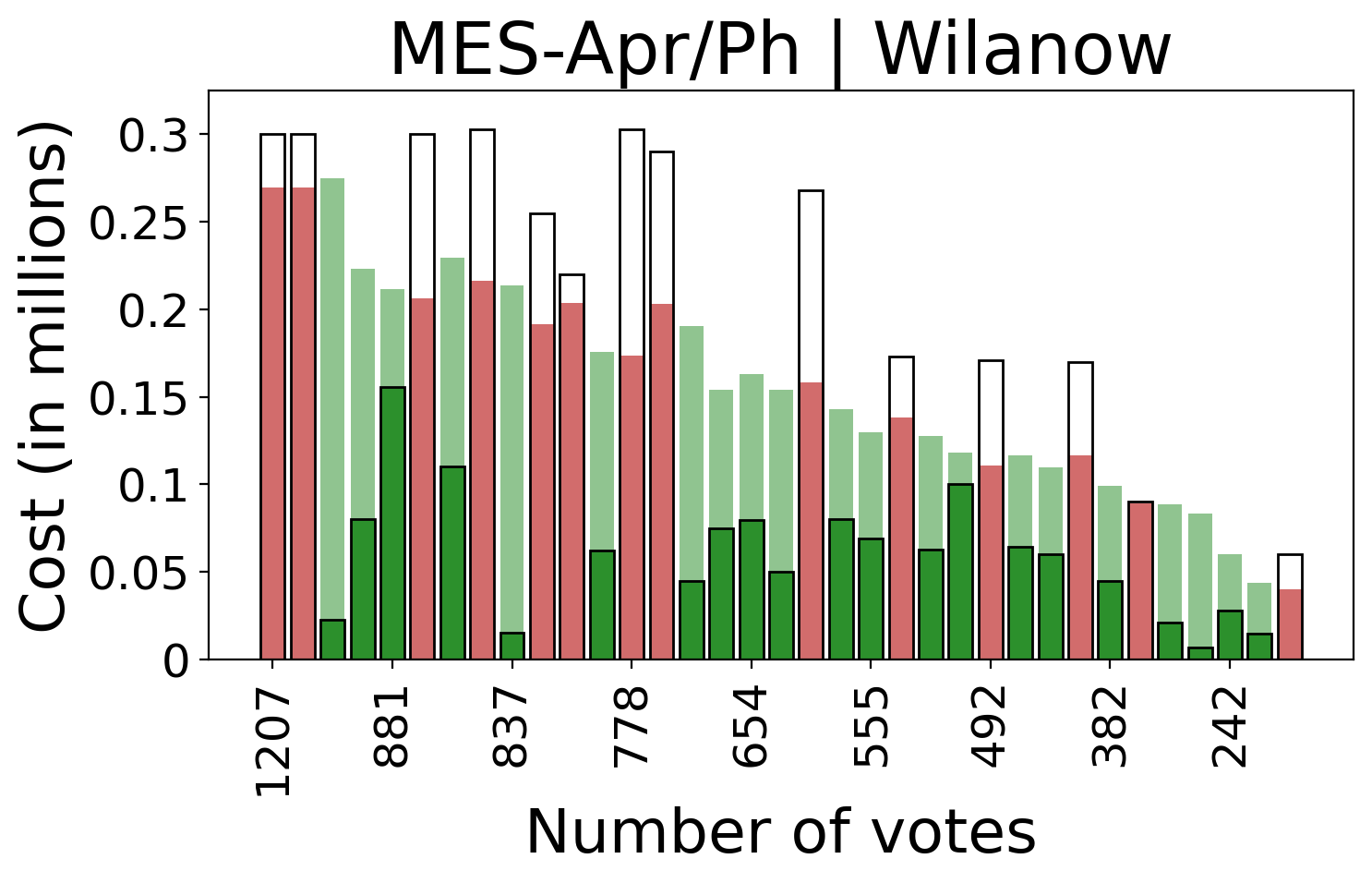}%
         \hspace{\hgap}%
     \includegraphics[width=\marginplotwidth]{./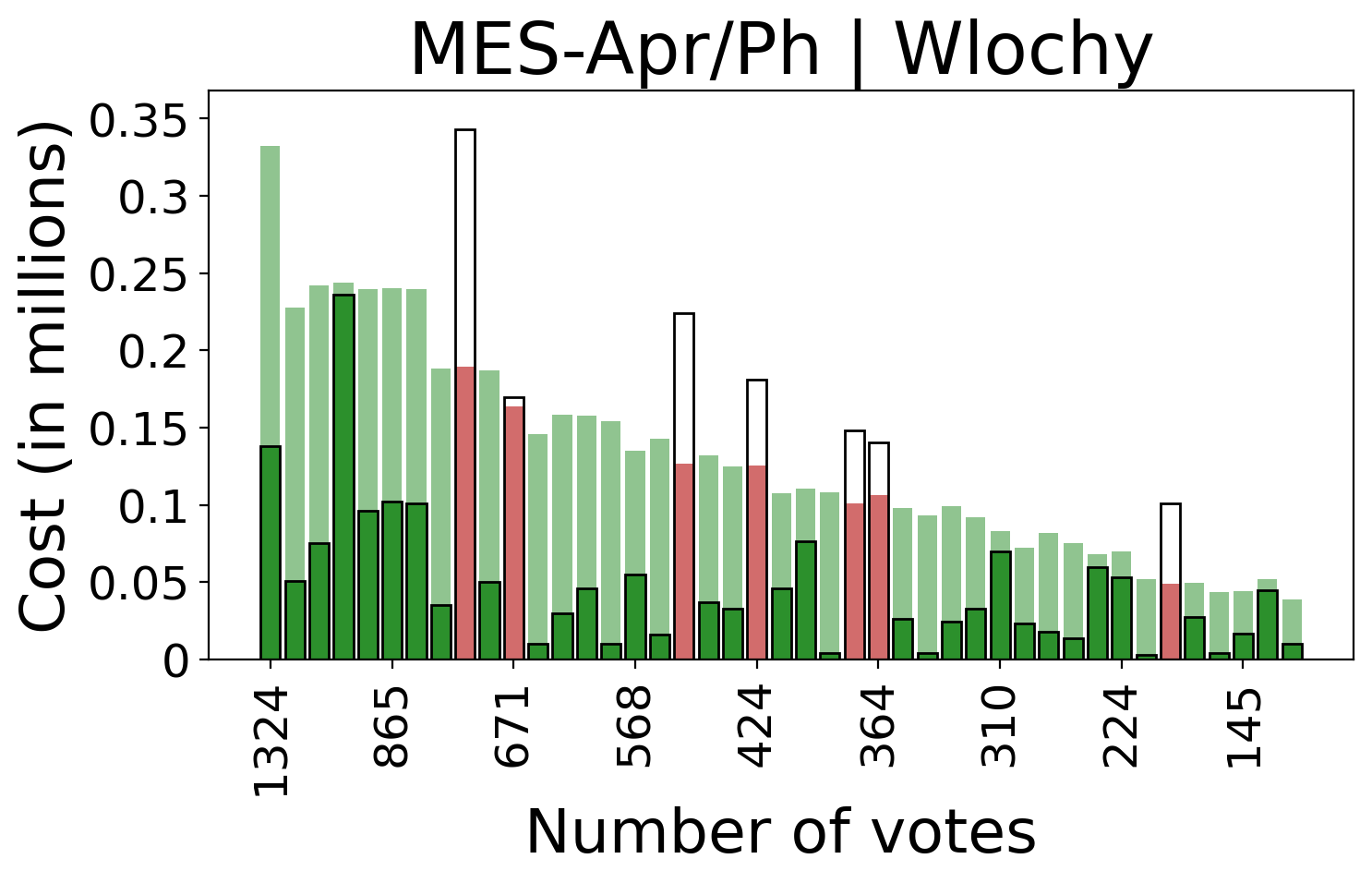}%

     \includegraphics[width=\marginplotwidth]{./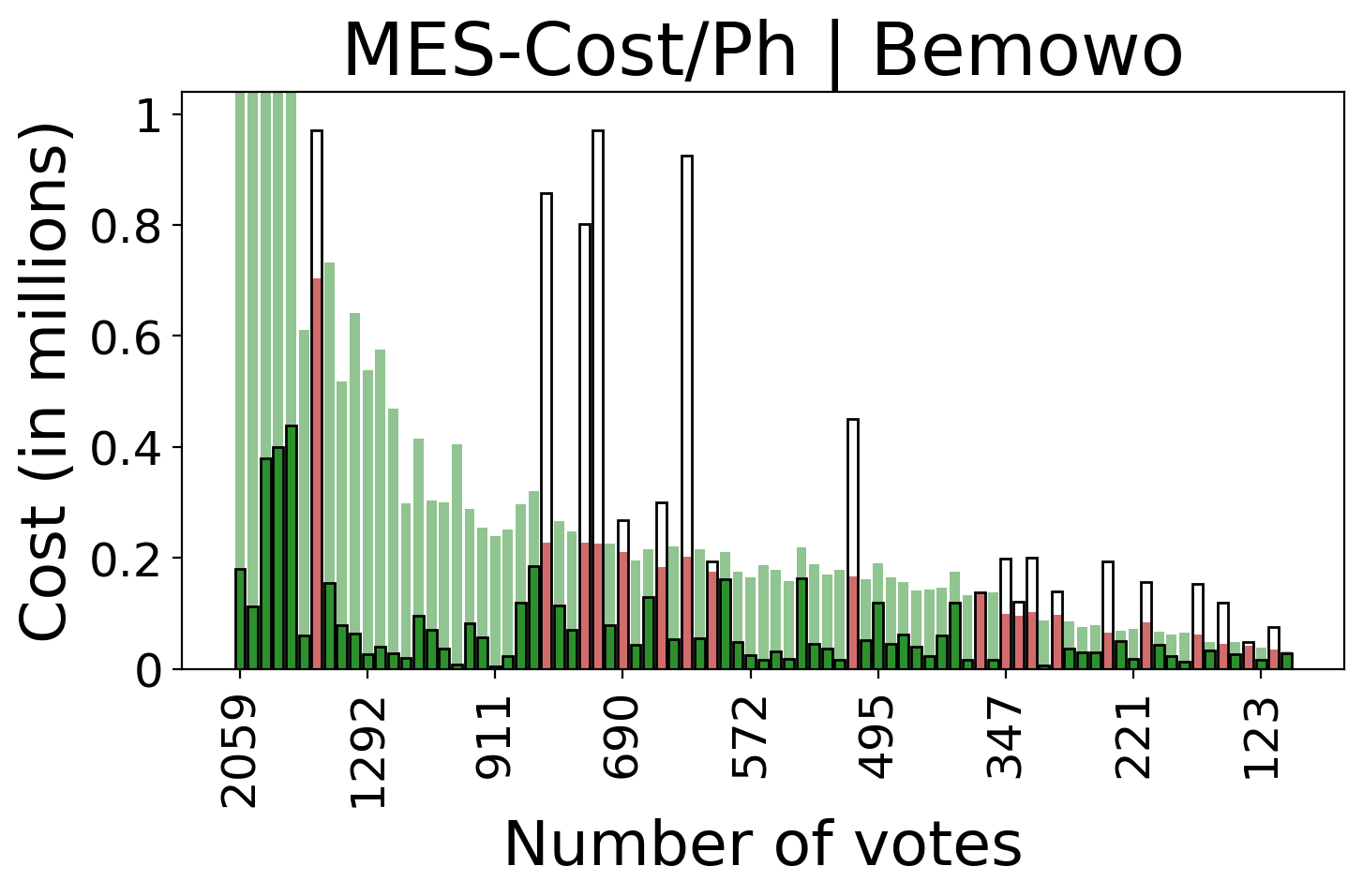}%
         \hspace{\hgap}%
     \includegraphics[width=\marginplotwidth]{./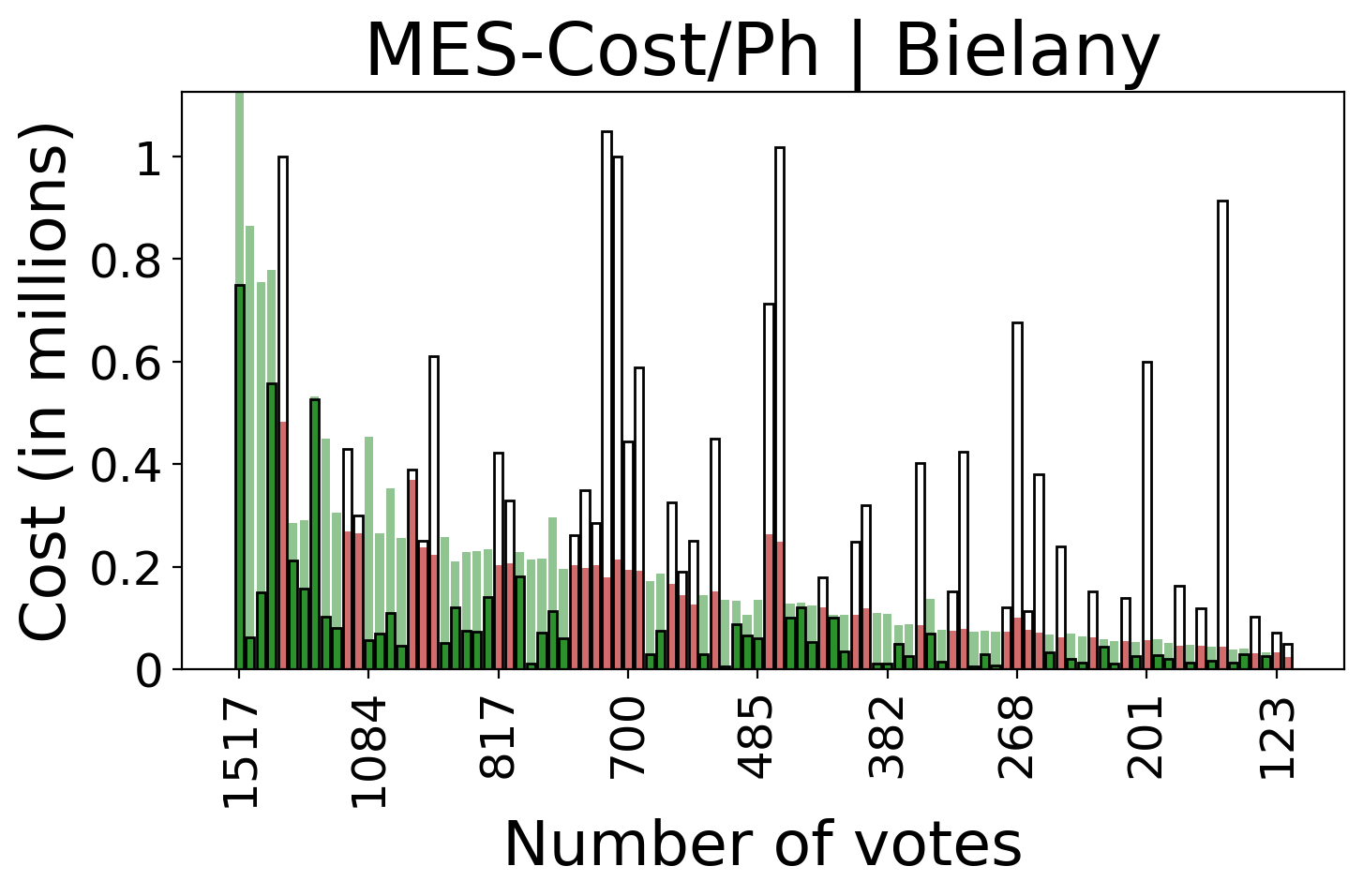}%
         \hspace{\hgap}%
     \includegraphics[width=\marginplotwidth]{./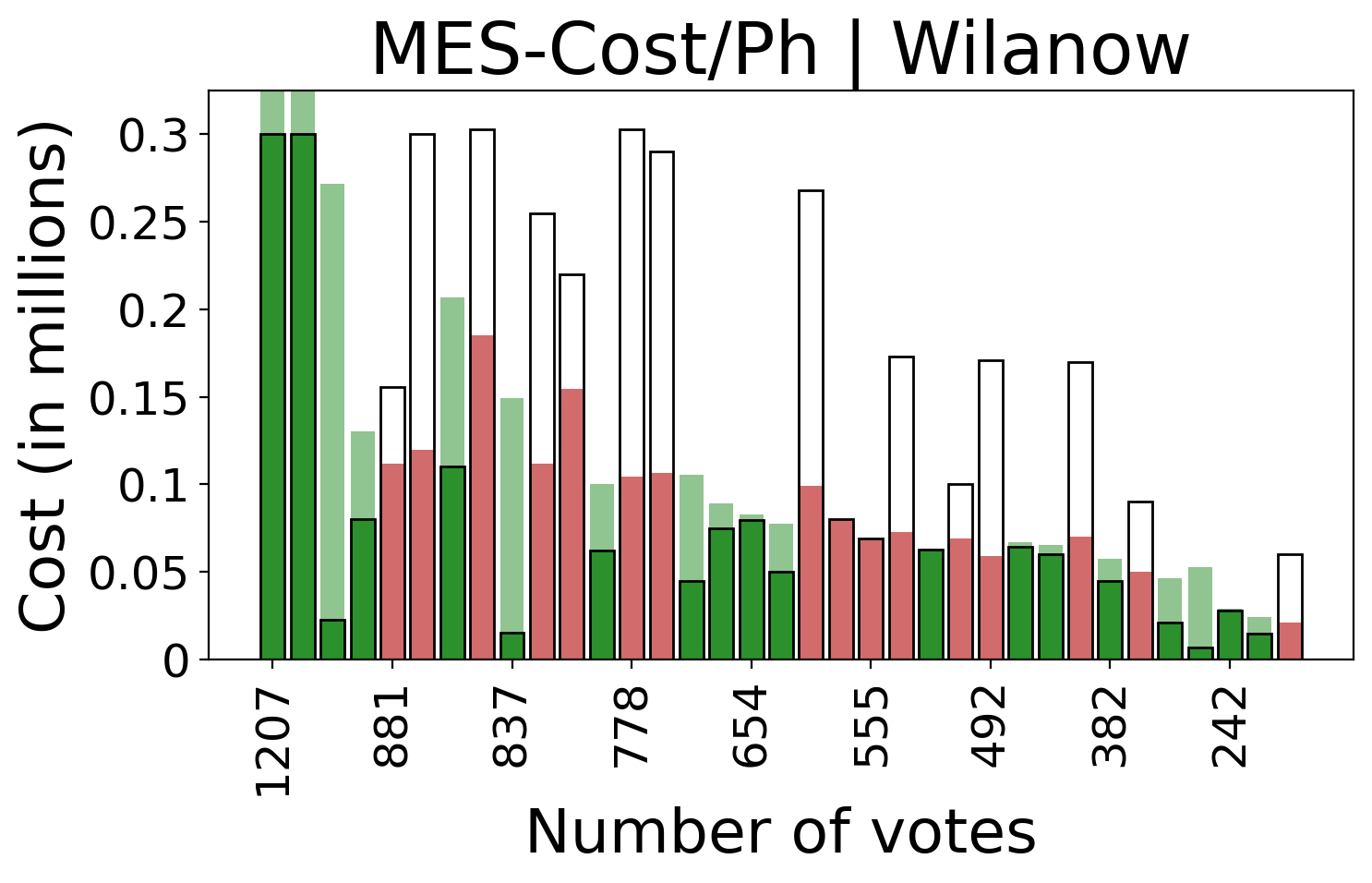}%
         \hspace{\hgap}%
     \includegraphics[width=\marginplotwidth]{./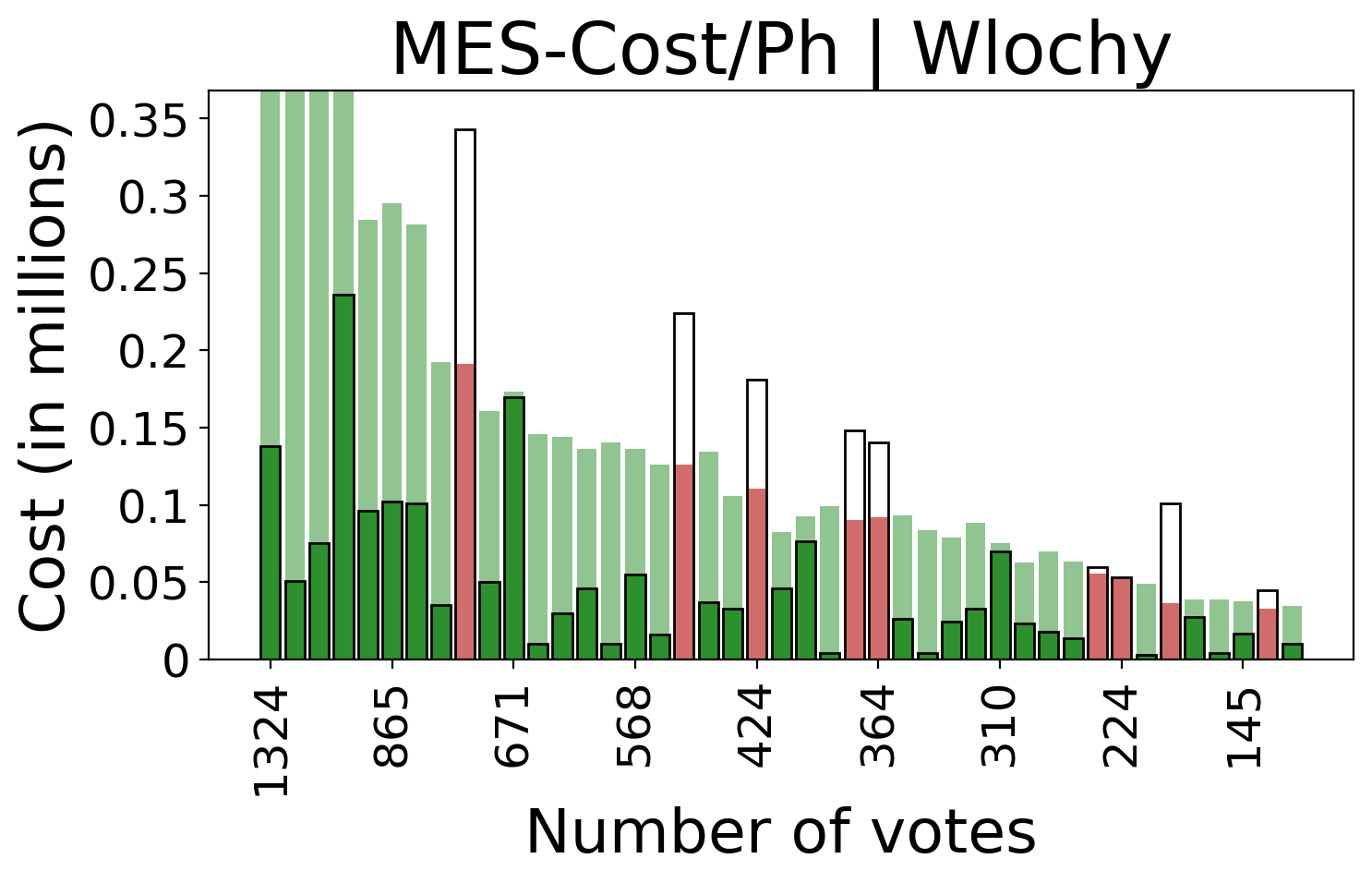}%

    \caption{Winning and losing margins in real-life PB. The green bars denote the max-winning cost (with the brighter part depicting the winning margin). The red bars denote the min-losing cost (with the white part depicting the losing margin). Black outlines denote the original costs of the projects.}\label{fig:margins_apdx}
\end{figure*}

\begin{figure*}[t]
     \centering
     \includegraphics[width=\marginplotwidth]{./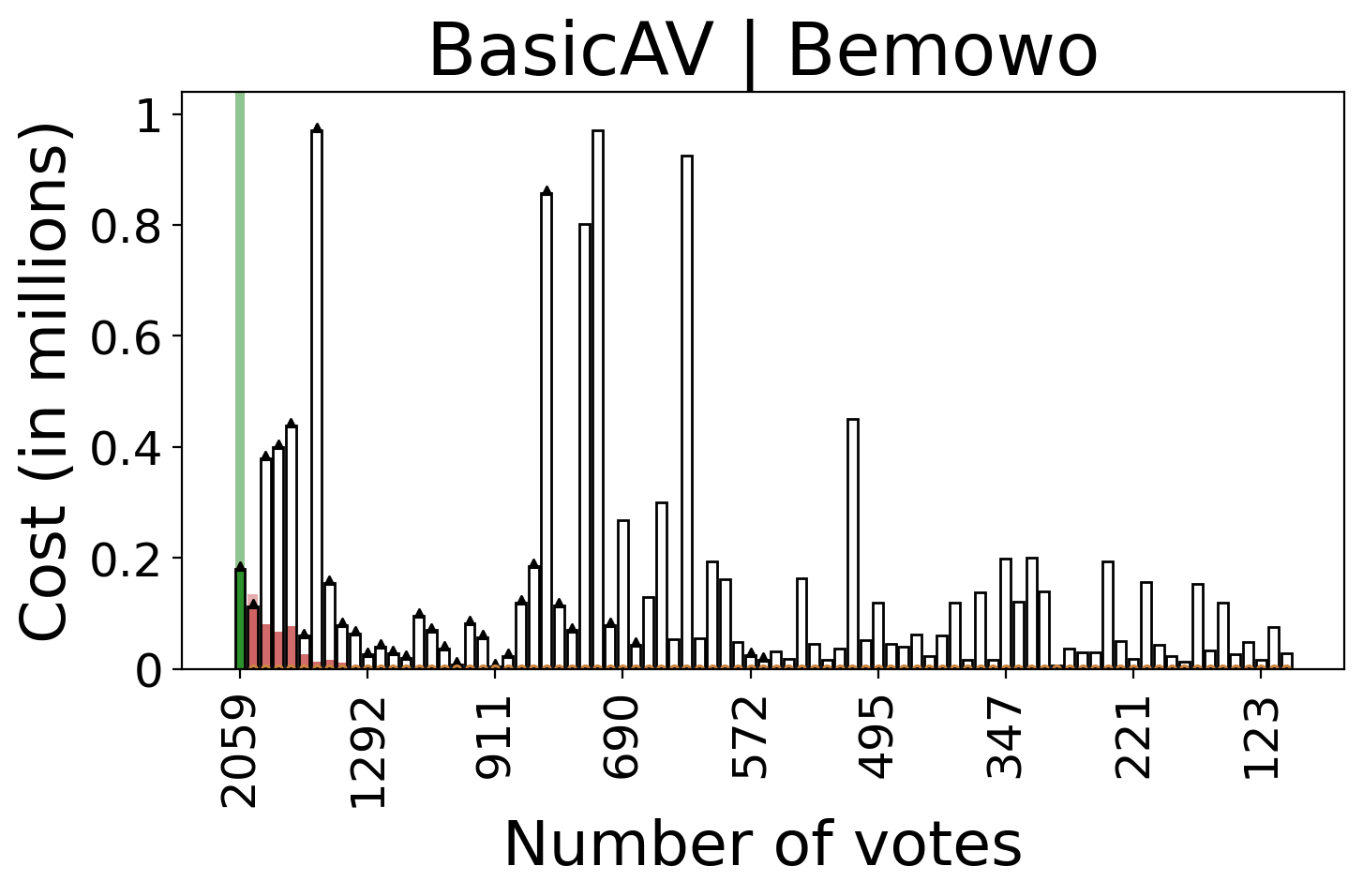}%
         \hspace{\hgap}%
     \includegraphics[width=\marginplotwidth]{./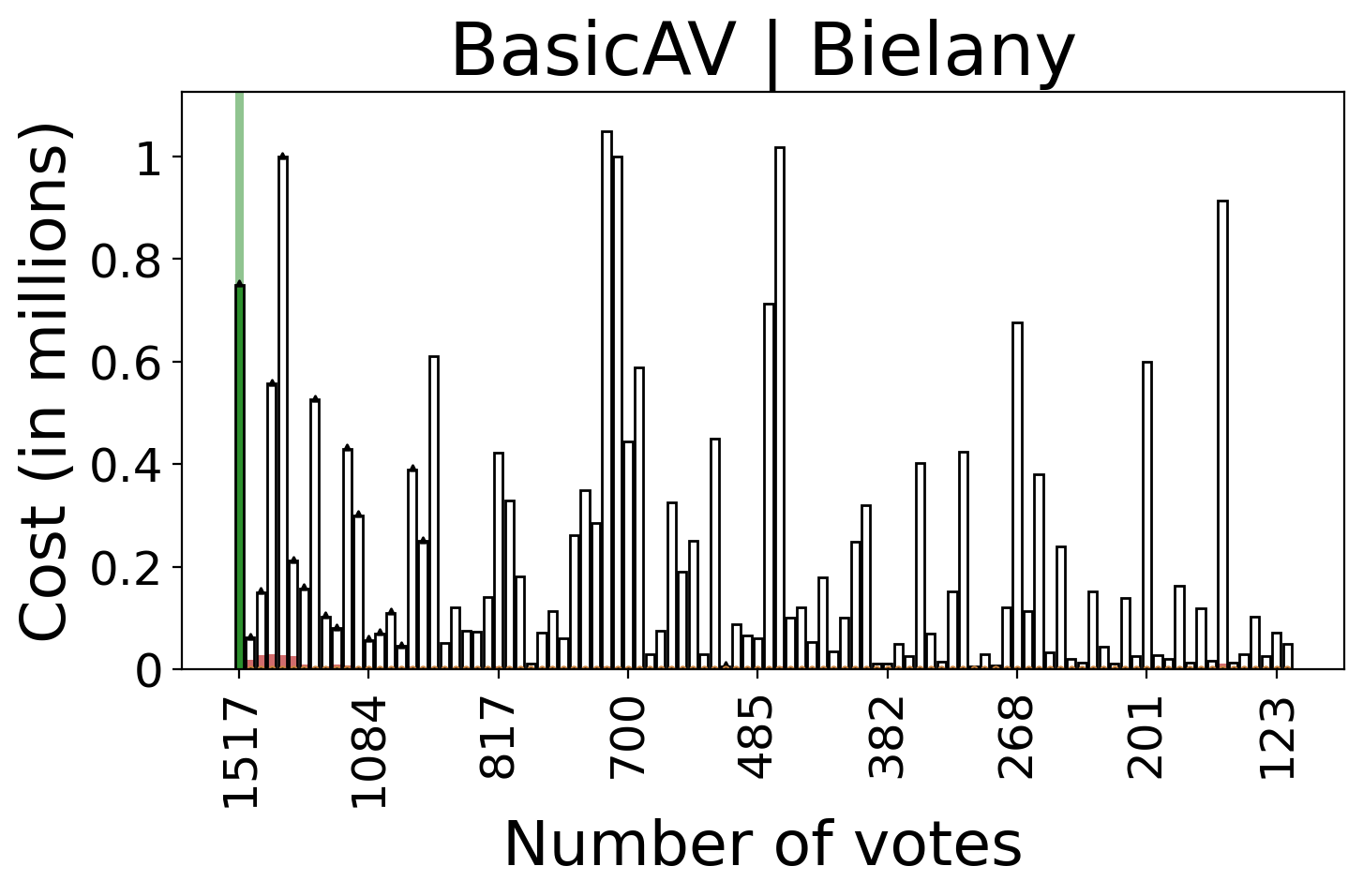}%
         \hspace{\hgap}%
     \includegraphics[width=\marginplotwidth]{./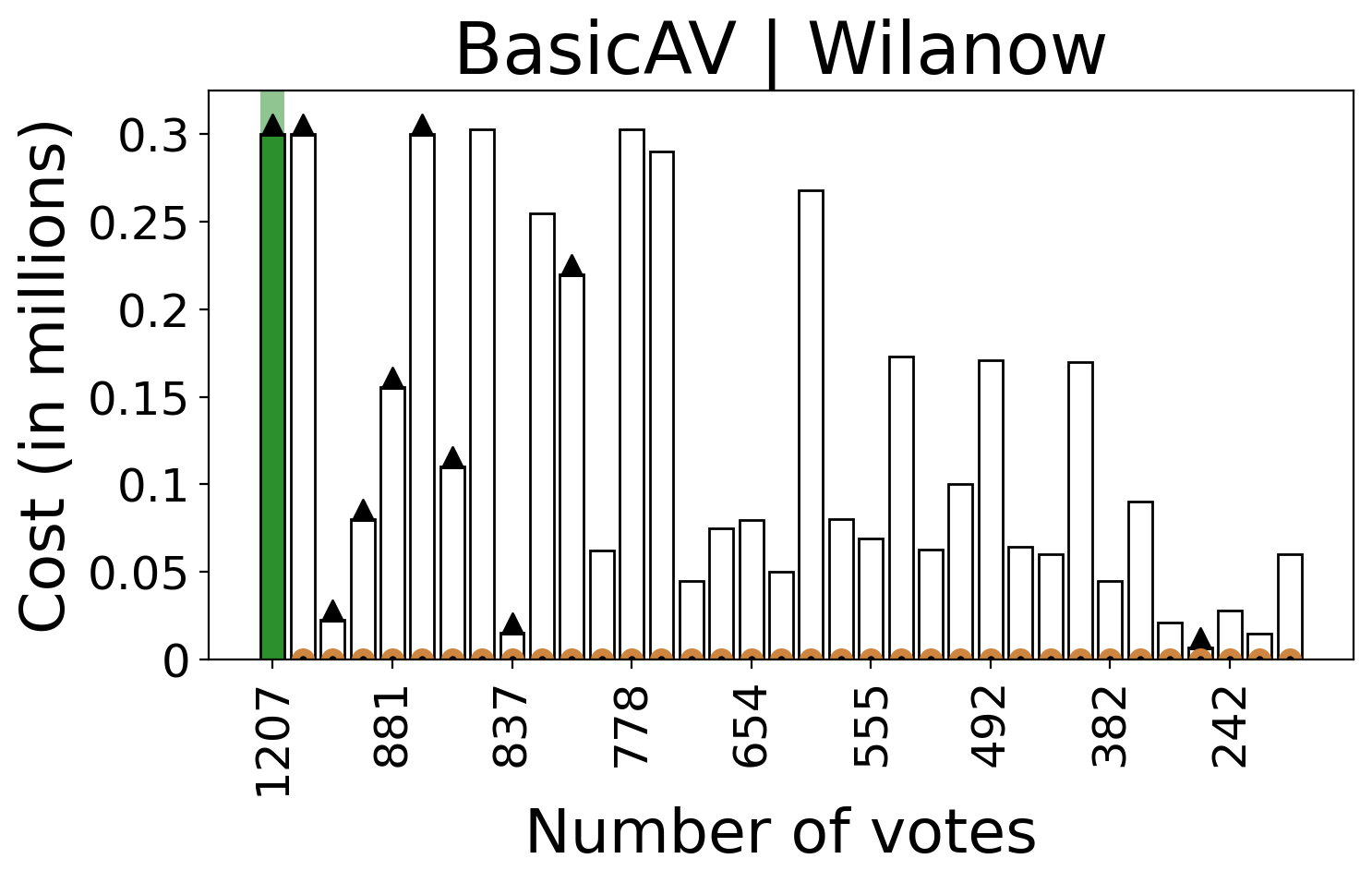}%
         \hspace{\hgap}%
     \includegraphics[width=\marginplotwidth]{./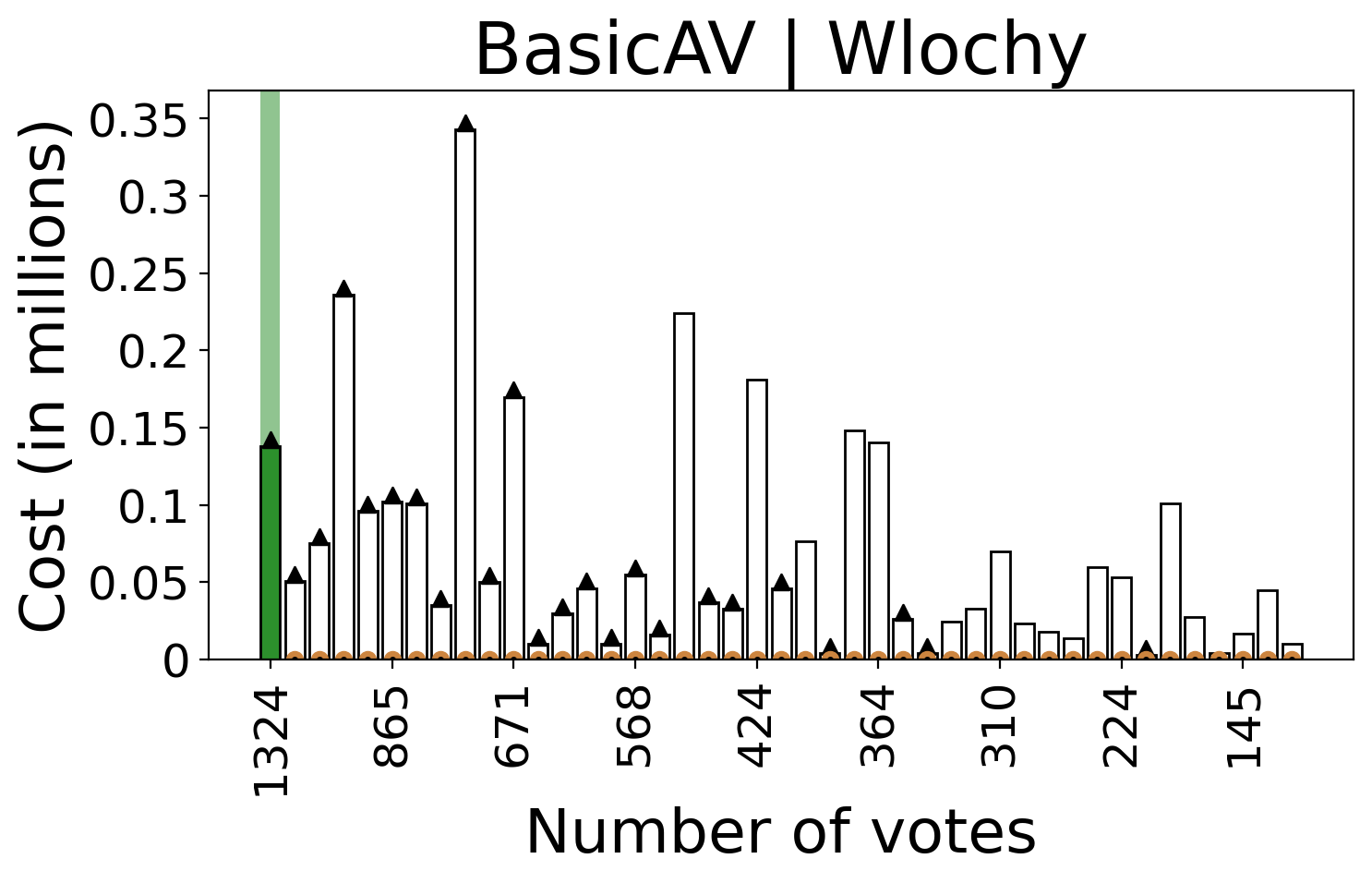}%

     \includegraphics[width=\marginplotwidth]{./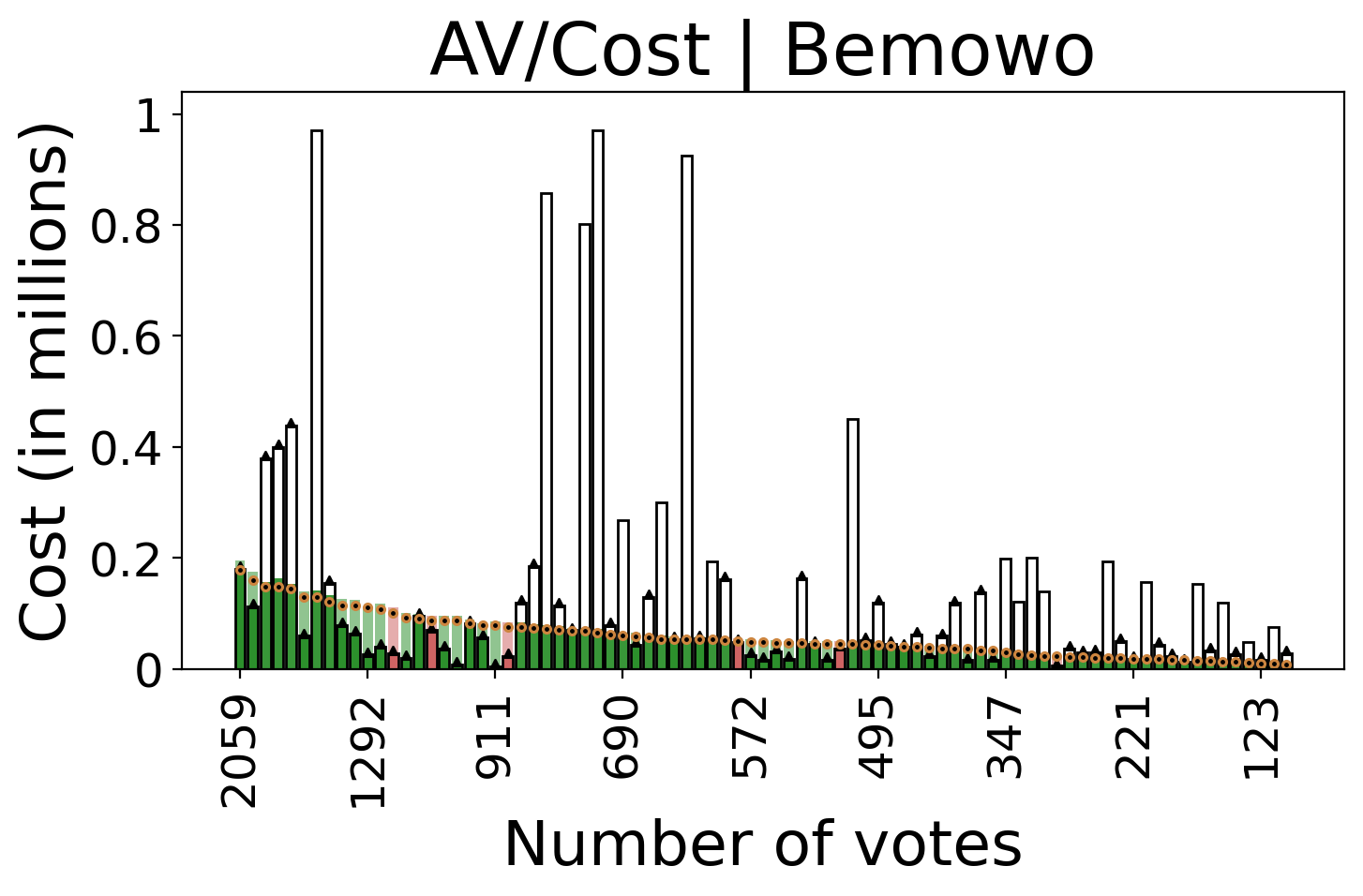}%
         \hspace{\hgap}%
     \includegraphics[width=\marginplotwidth]{./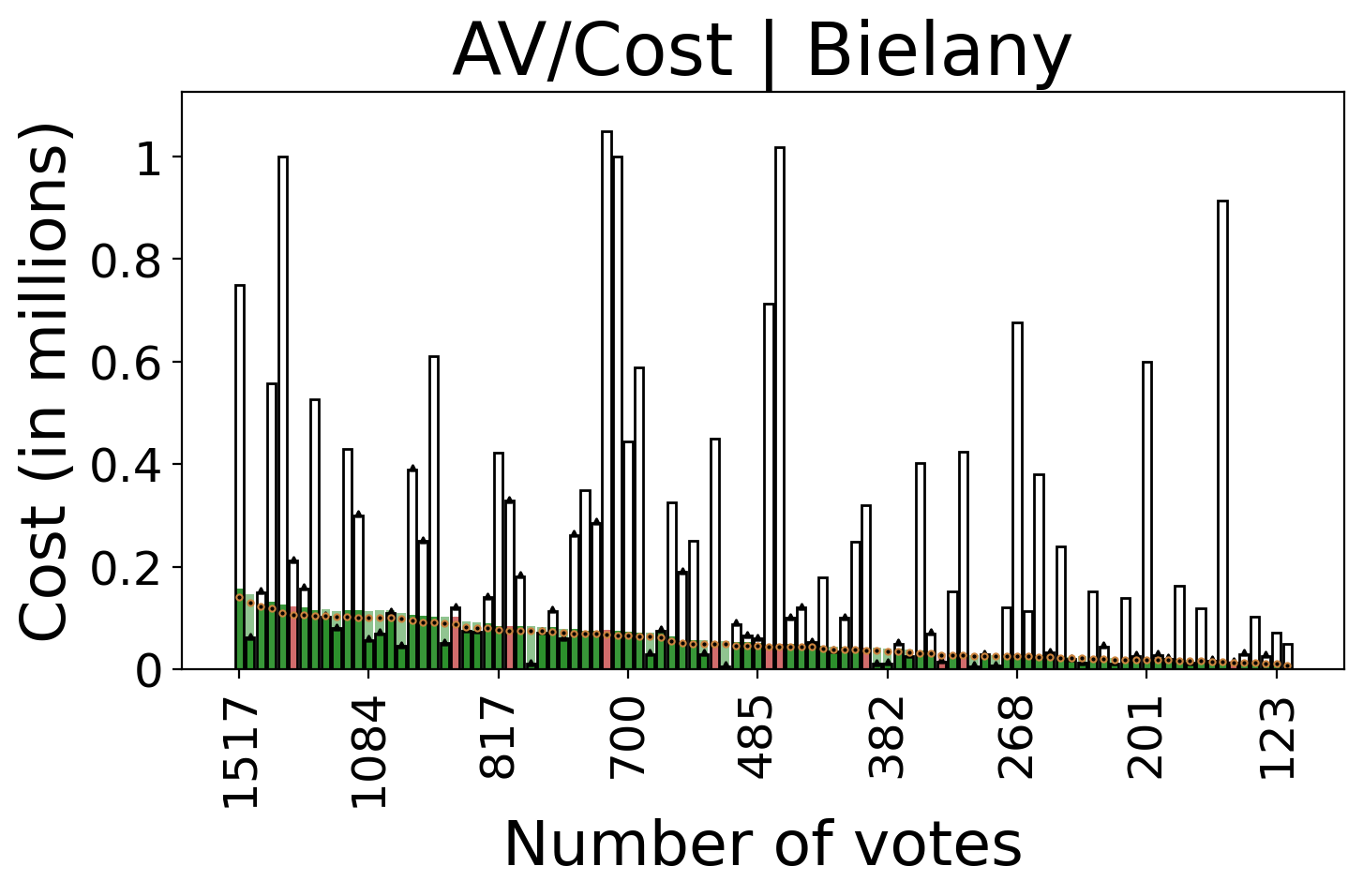}%
         \hspace{\hgap}%
     \includegraphics[width=\marginplotwidth]{./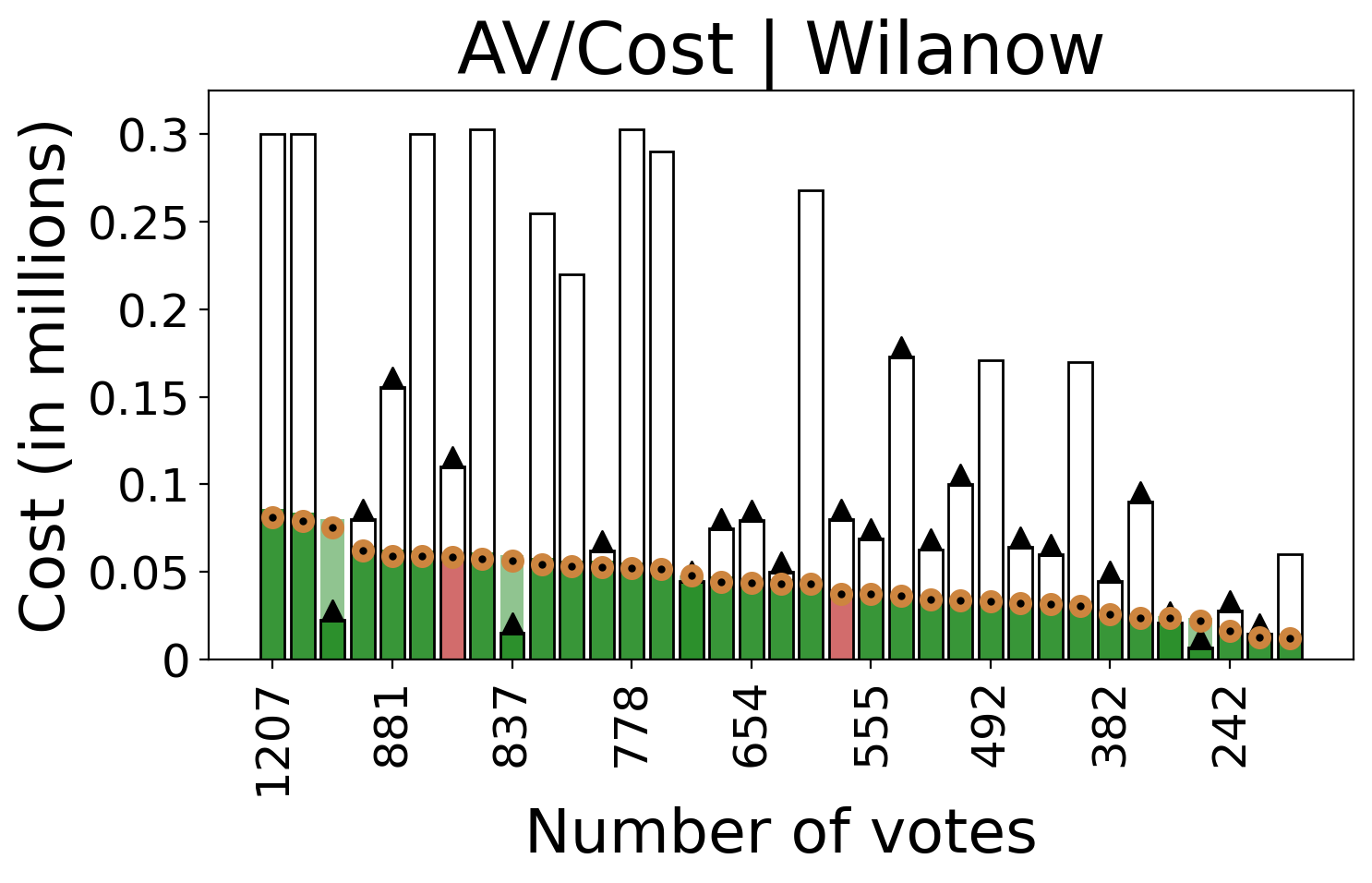}%
         \hspace{\hgap}%
     \includegraphics[width=\marginplotwidth]{./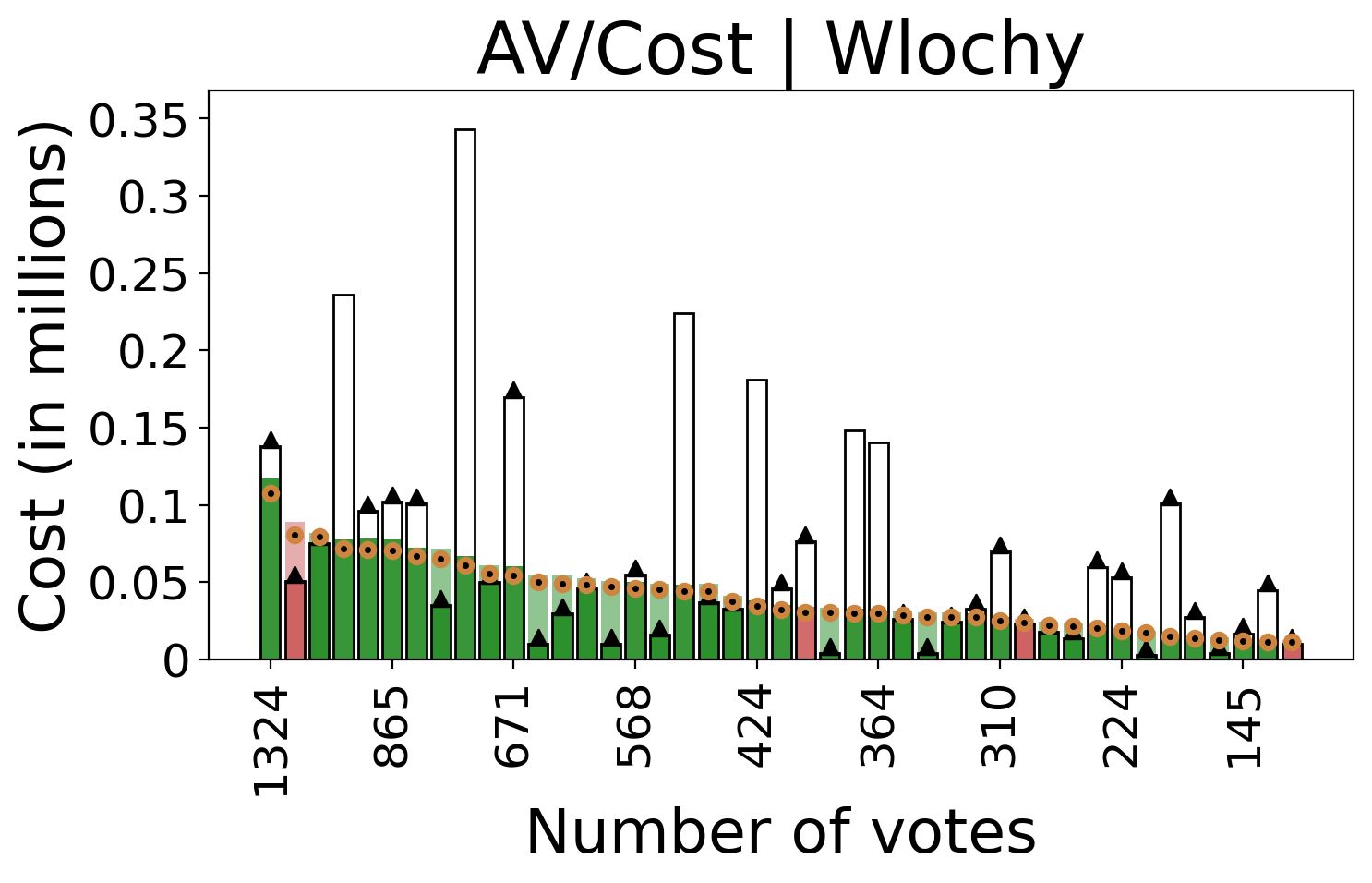}%

     \includegraphics[width=\marginplotwidth]{./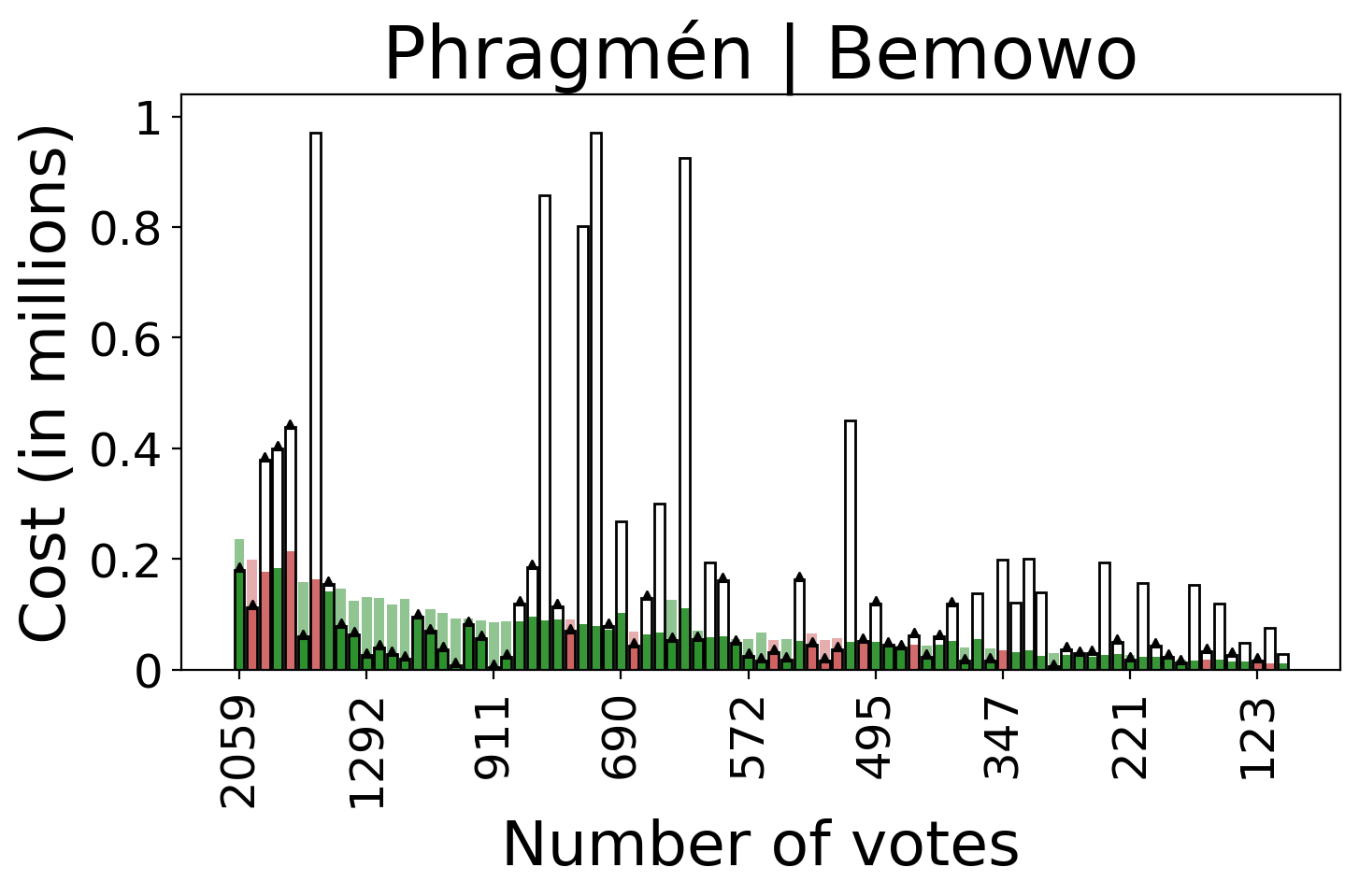}%
         \hspace{\hgap}%
     \includegraphics[width=\marginplotwidth]{./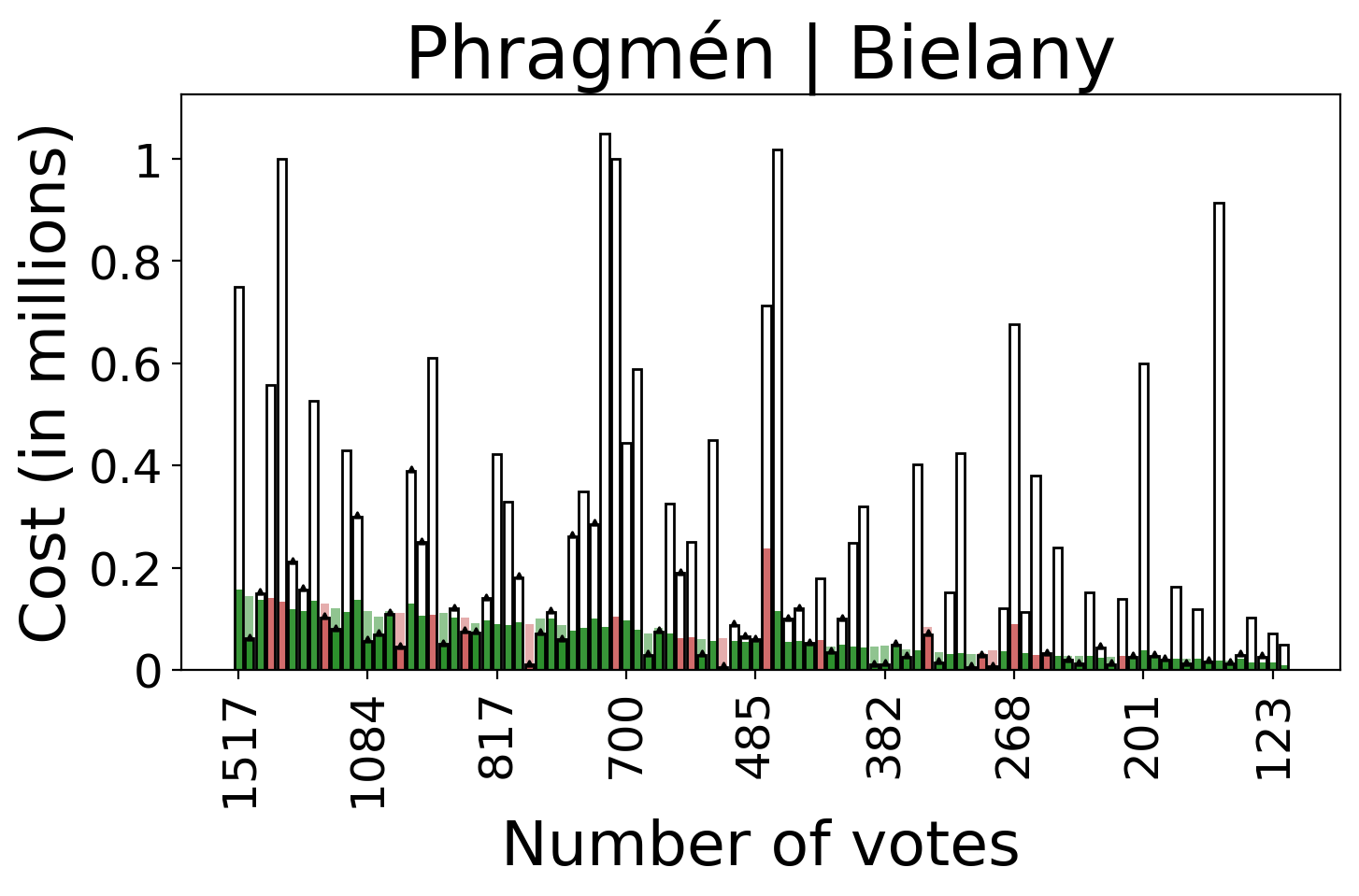}%
         \hspace{\hgap}%
     \includegraphics[width=\marginplotwidth]{./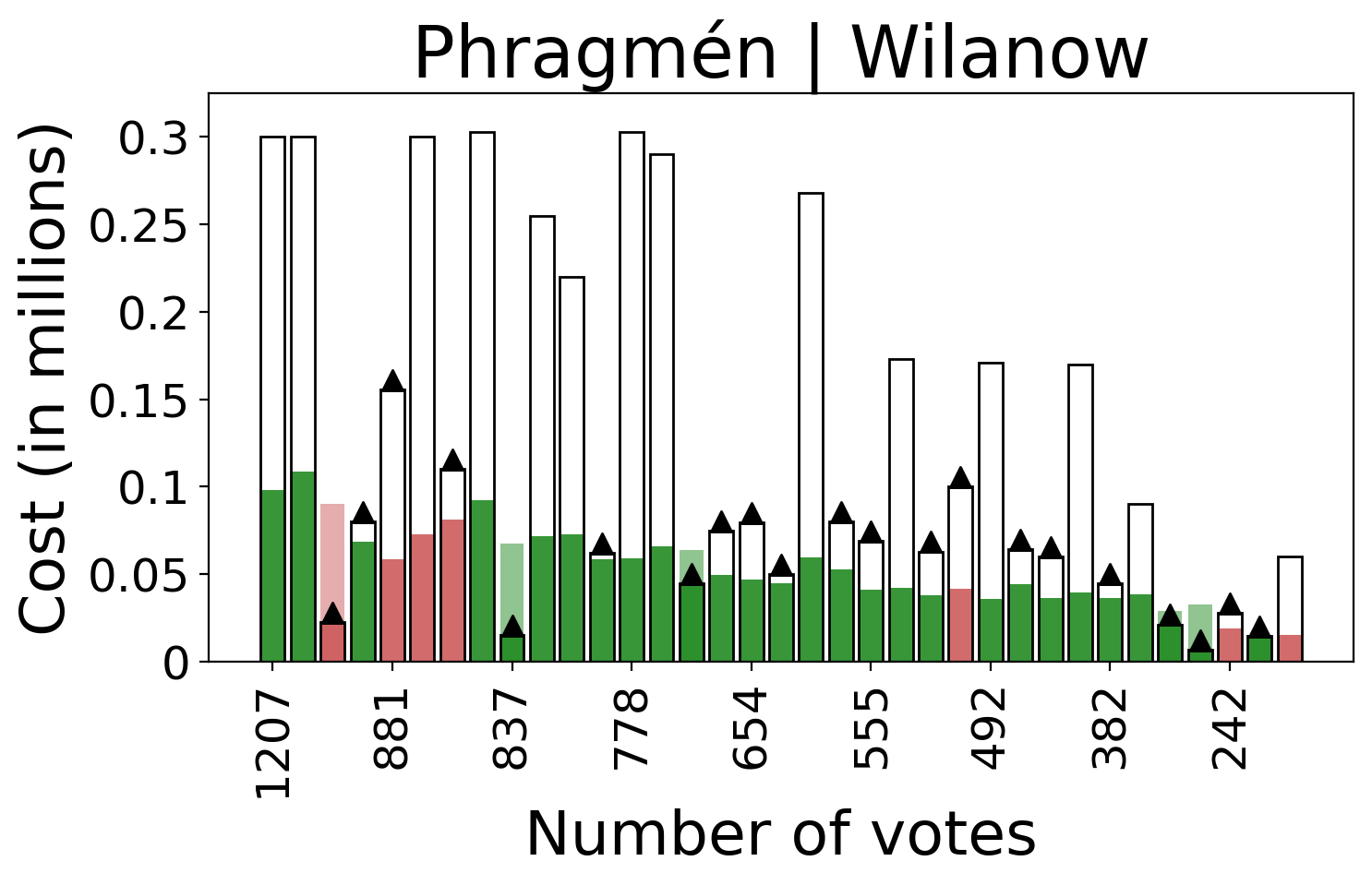}%
         \hspace{\hgap}%
     \includegraphics[width=\marginplotwidth]{./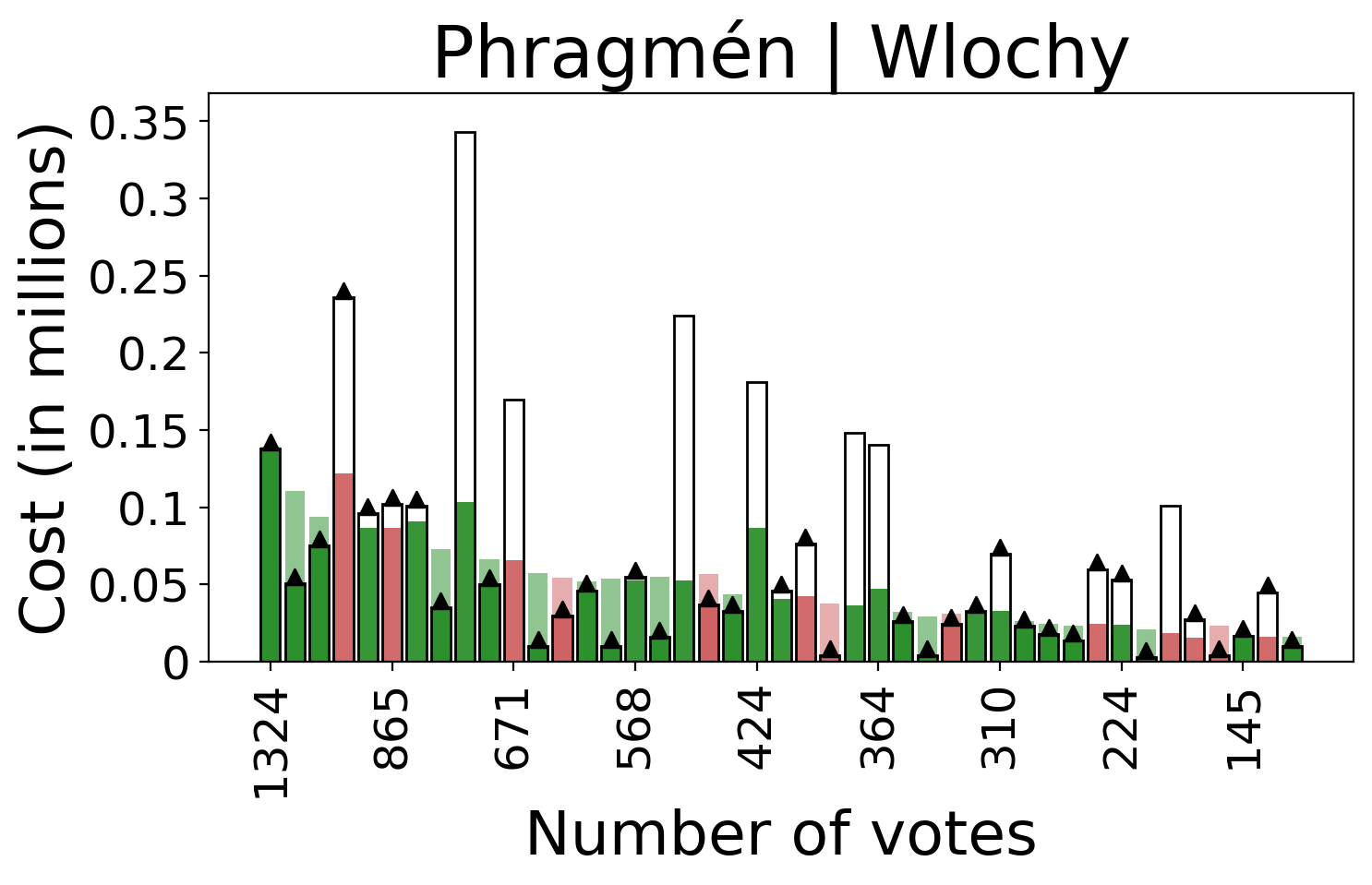}%

     \includegraphics[width=\marginplotwidth]{./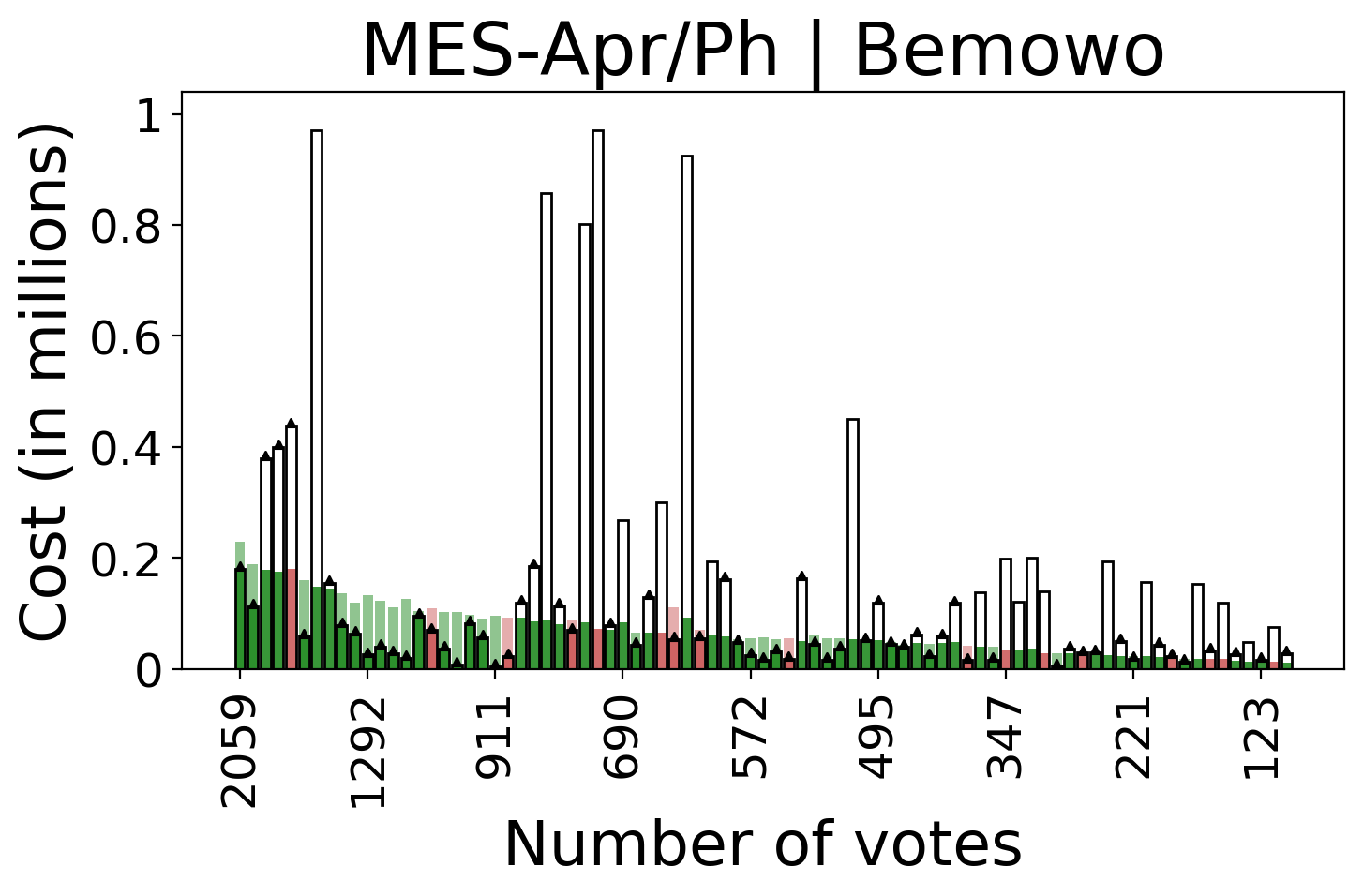}%
         \hspace{\hgap}%
     \includegraphics[width=\marginplotwidth]{./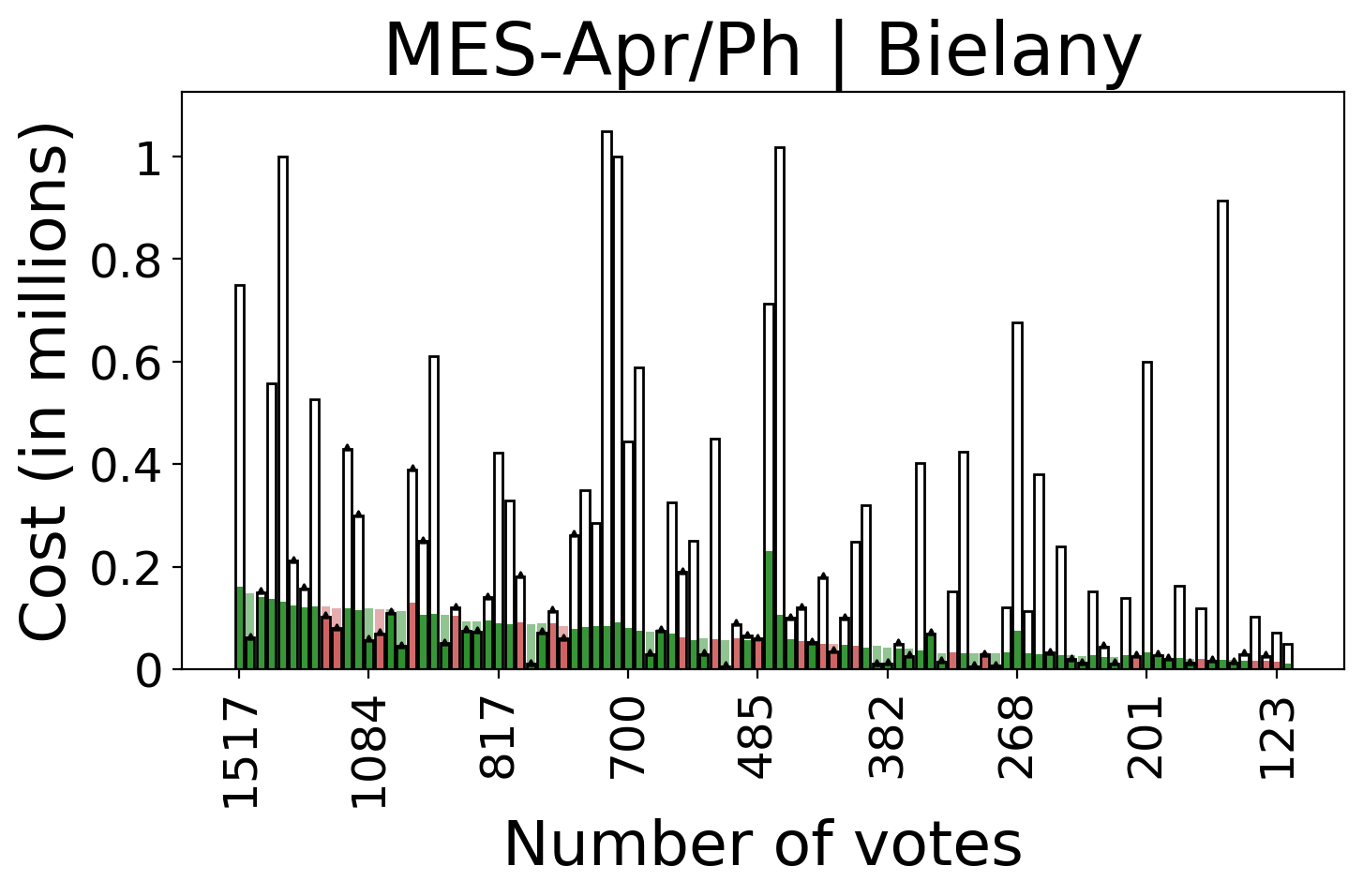}%
         \hspace{\hgap}%
     \includegraphics[width=\marginplotwidth]{./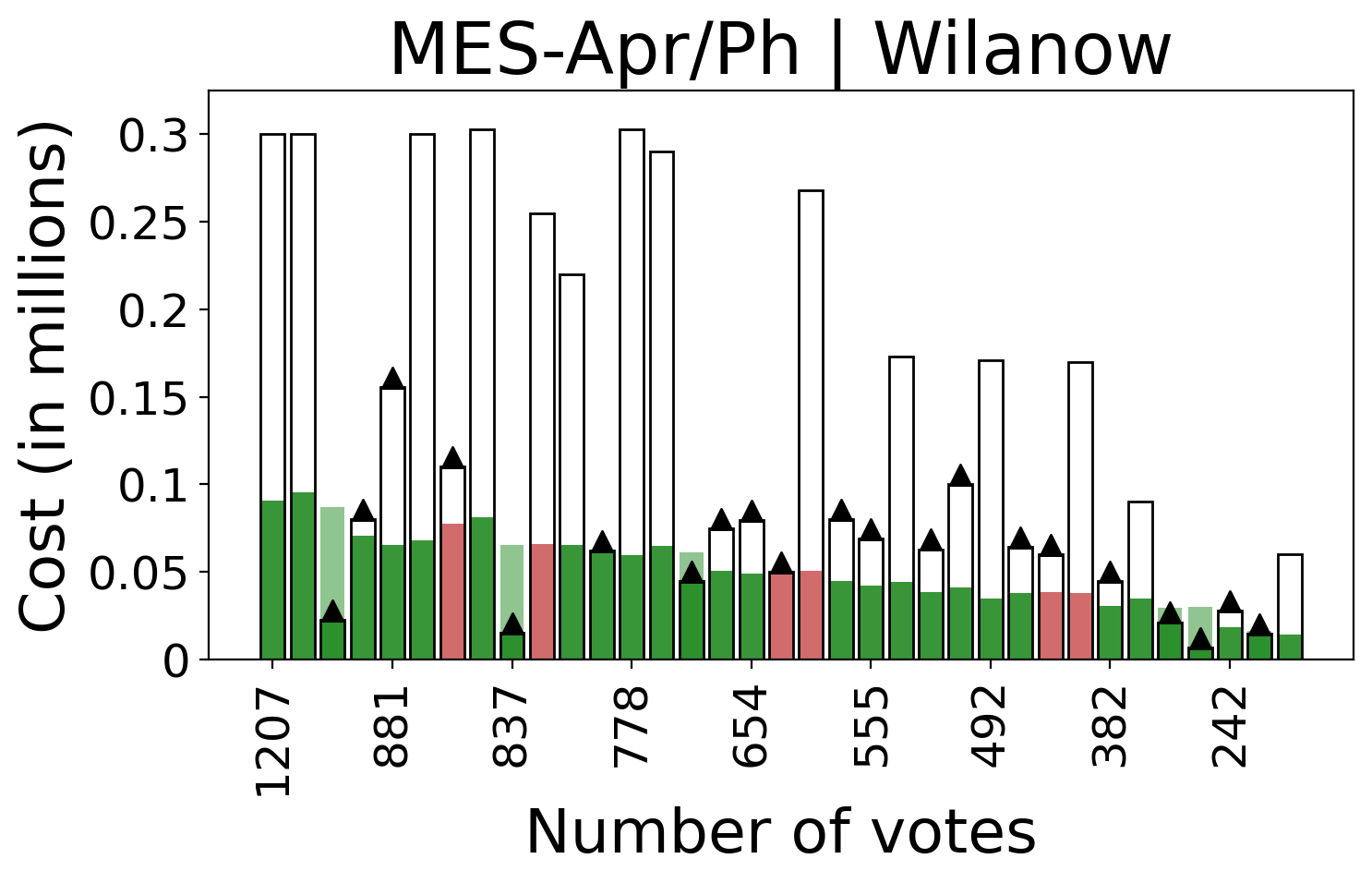}%
         \hspace{\hgap}%
     \includegraphics[width=\marginplotwidth]{./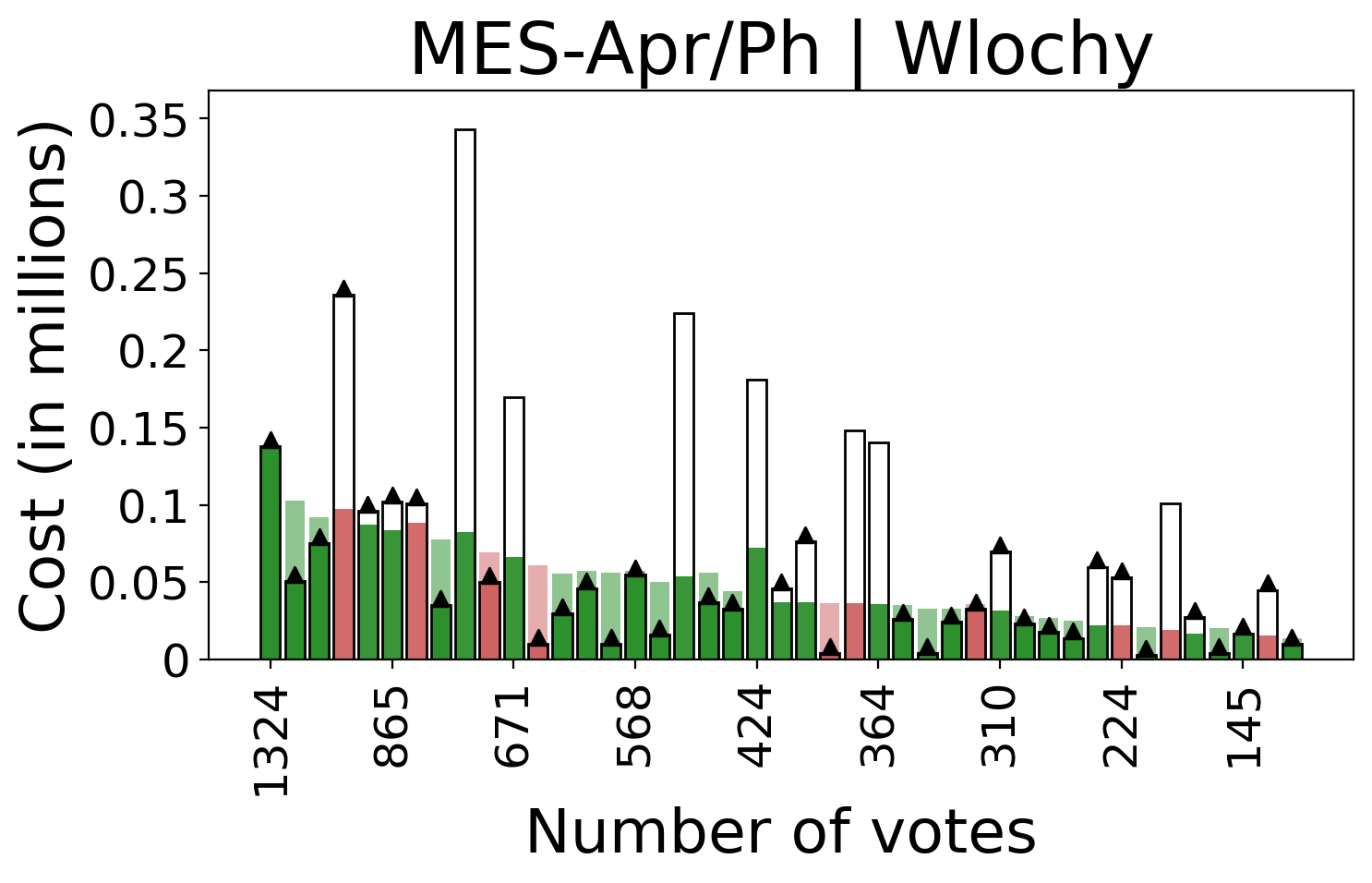}%

     \includegraphics[width=\marginplotwidth]{./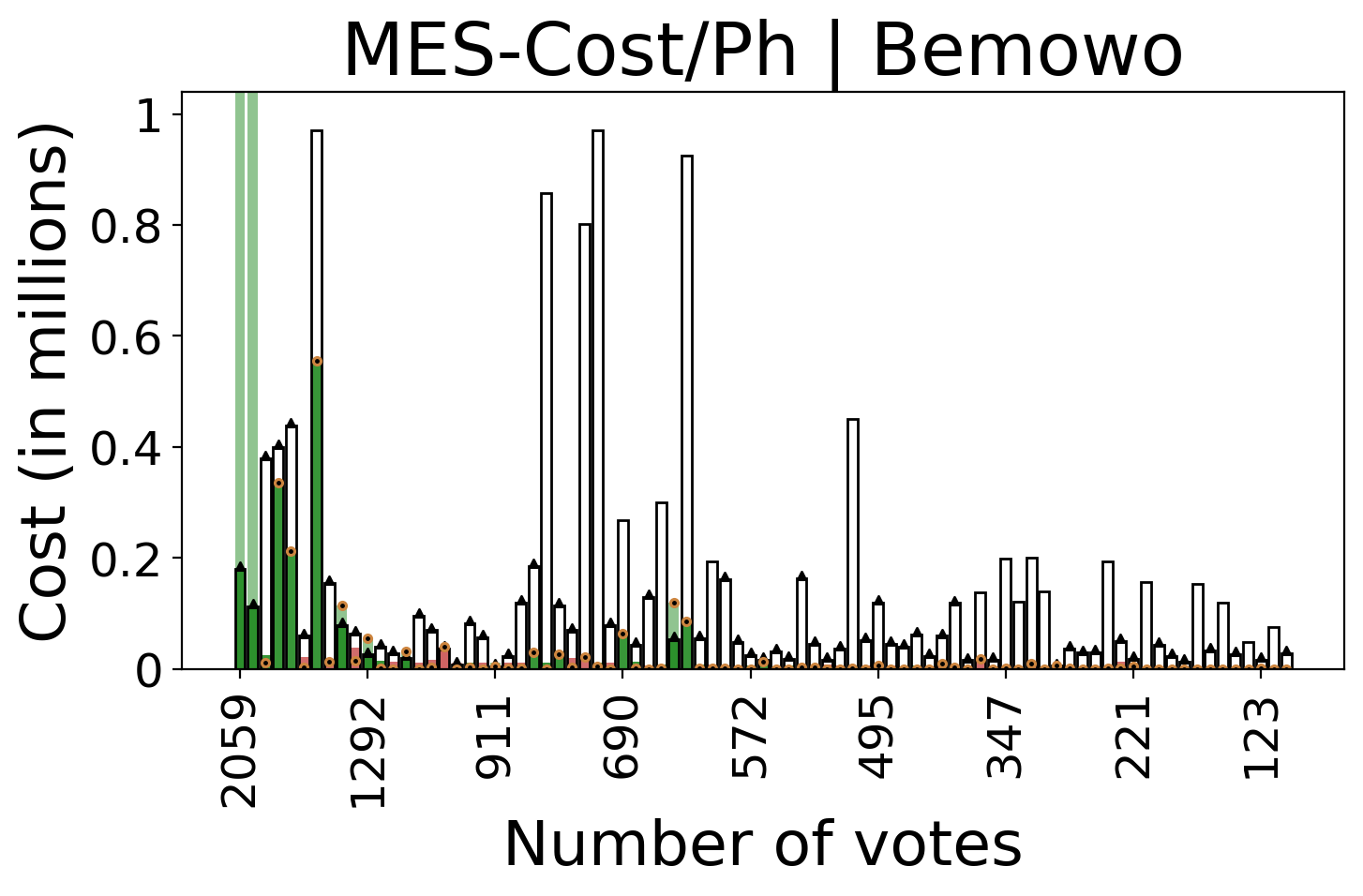}%
         \hspace{\hgap}%
     \includegraphics[width=\marginplotwidth]{./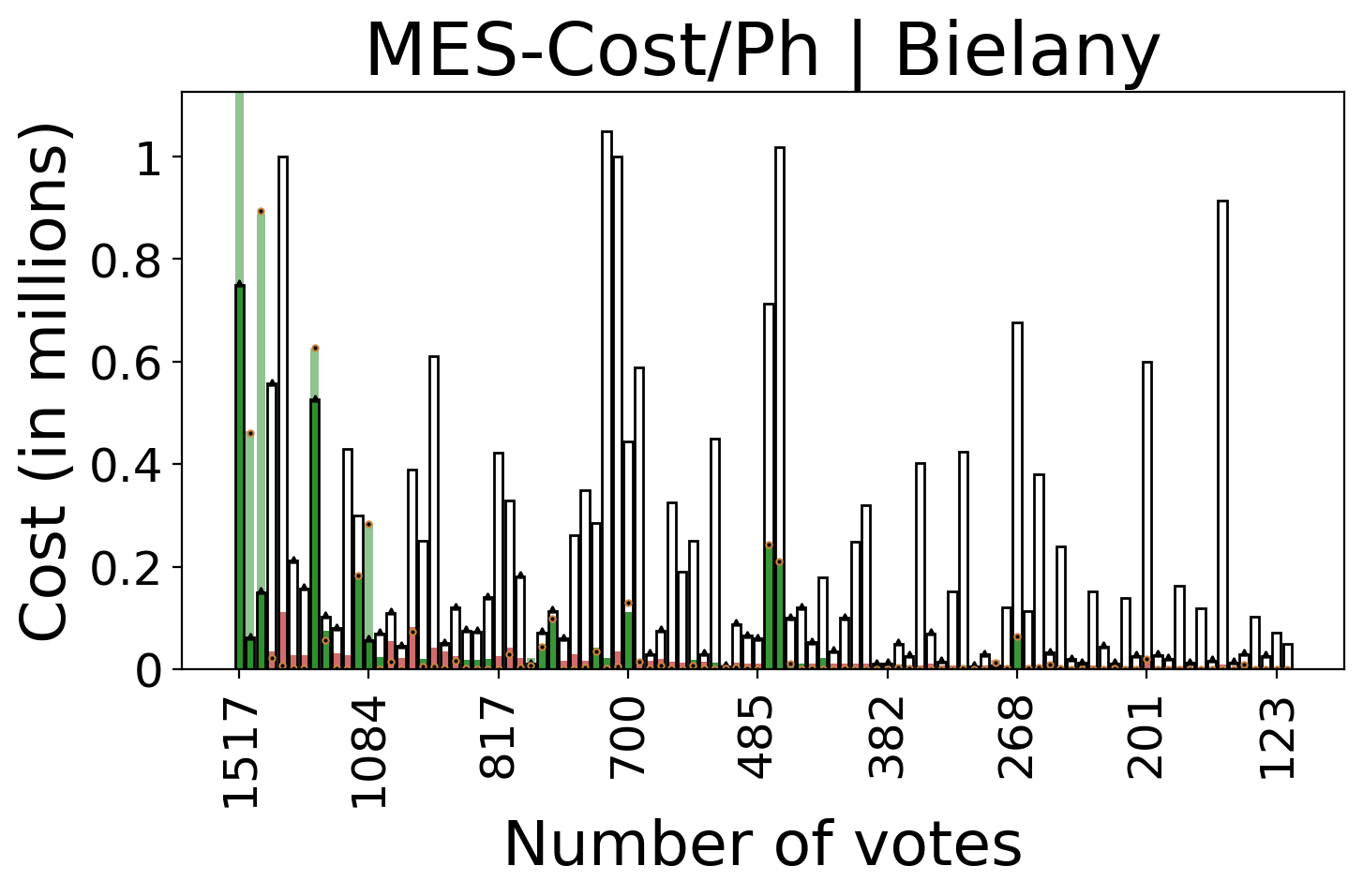}%
         \hspace{\hgap}%
     \includegraphics[width=\marginplotwidth]{./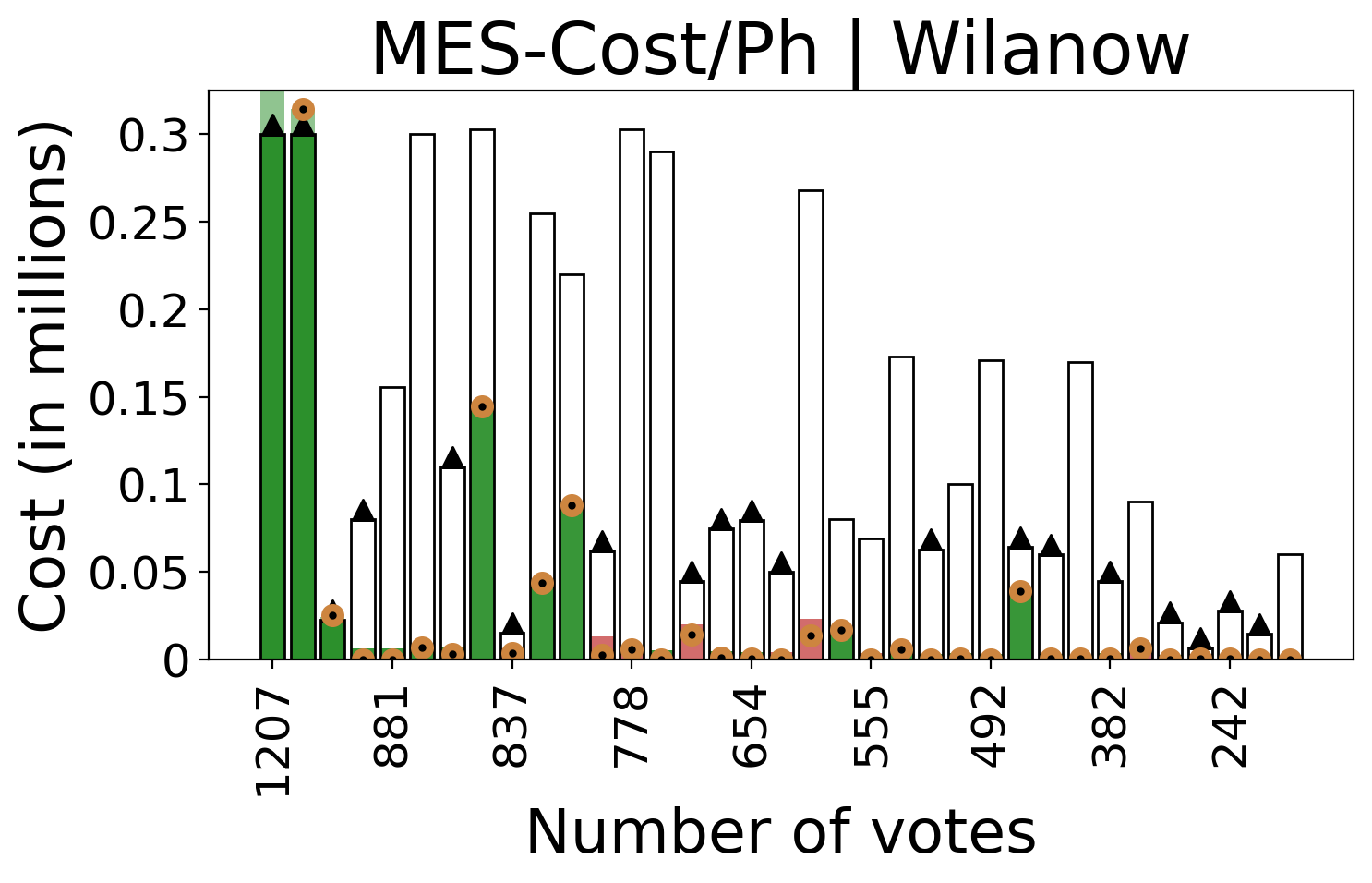}%
         \hspace{\hgap}%
     \includegraphics[width=\marginplotwidth]{./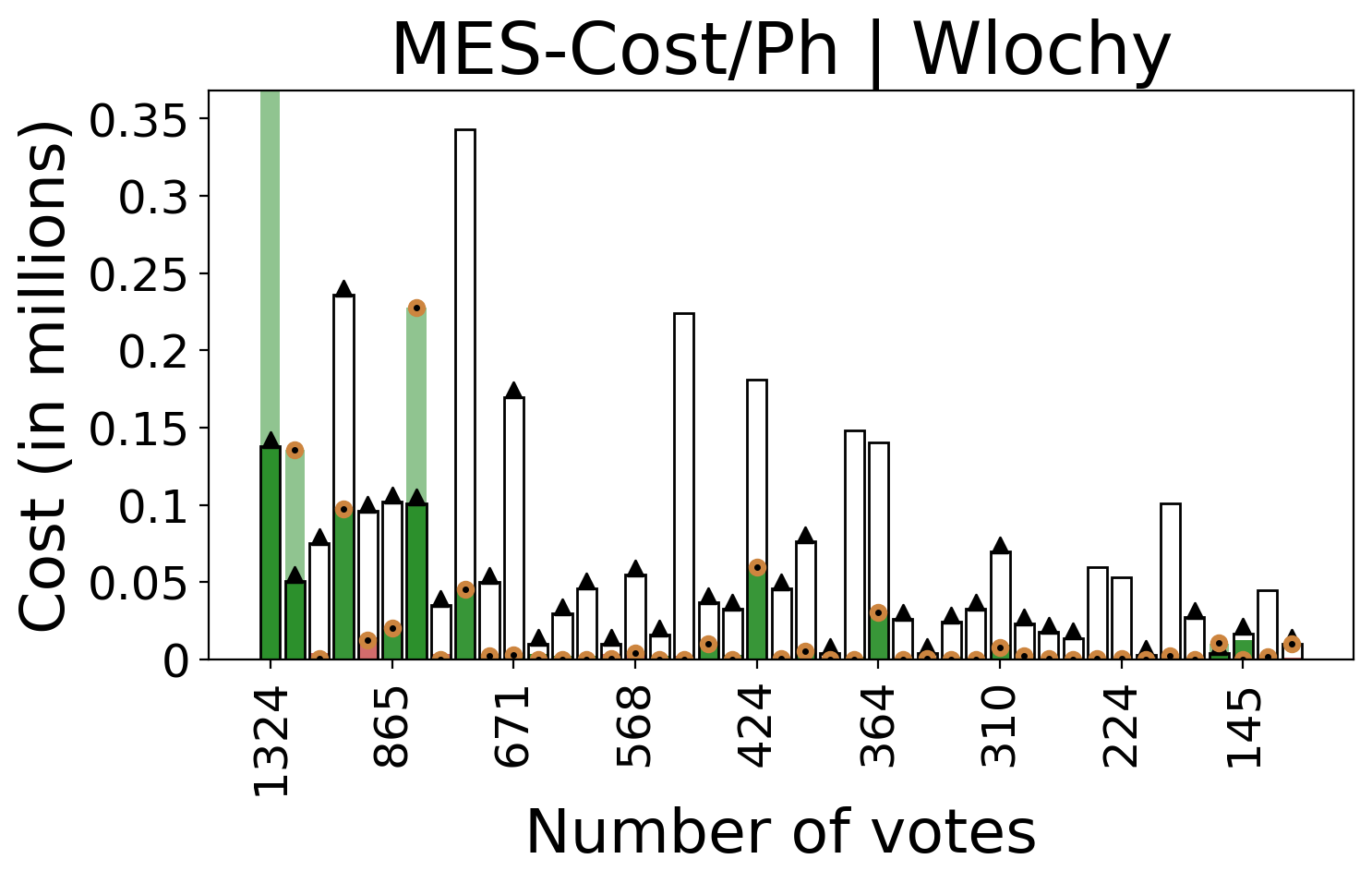}%

    \caption{ The results of the game after 10000 iterations. The green bars denote the final costs of the projects that are winning in the 10000th iteration (with the brighter part emphasizing the increase in comparison to the original costs). The red bars denote the final costs of projects that are losing in the 10000th iteration. Black outlines denote the original costs of the projects. Black triangles mark the projects that were winning in the 0th iteration (i.e., the original PB). Brown circles are denoting the NE.}\label{fig:games_apdx}
\end{figure*}

\end{document}